\newtheorem{theorem}{Theorem}
\newtheorem{corollary}[theorem]{Corollary}
\newtheorem{lemma}[theorem]{Lemma}
\theoremstyle{definition}
\newtheorem{definition}[theorem]{Definition}
\newtheorem{remark}[theorem]{Remark}
\DeclareMathOperator*{\E}{E}
\DeclareMathOperator*{\Var}{Var}
\newcommand\req[1]{(\ref{#1})}
\newcommand{\N}{\mathbb N}
\newcommand{\R}{\mathbb R}
\newcommand{\Z}{\mathbb Z}
\newcommand{\eps}{\varepsilon}
\newcommand{\abs}[1]{\left\lvert {#1} \right\rvert}
\newcommand{\norm}[2]{\left\lVert {#1} \right\rVert_{{#2}}}
\newcommand{\pnorm}[2]{\left\lVert {#1} \right\rVert_{{#2}}}
\newcommand{\pnormcond}[3]{\left\rVert {#1} \; \middle\vert \; {#3} \right\lVert_{{#2} }}
\newcommand{\ep}[2][]{\E_{{#1}}\!\left[ {#2} \right]}
\newcommand{\epcond}[3][]{\E_{{#1}}\!\left[ {#2} \, \middle| \, {#3} \right]}
\newcommand{\prb}[2][]{\Pr_{#1}\!\left[ #2 \right]}
\newcommand{\prbcond}[3][]{\Pr_{{#1}}\!\left[ {#2} \, \middle| \, {#3} \right]}
\newcommand{\varcond}[2]{\Var\!\left[ {#1} \, \middle| \, {#2} \right]}
\newcommand{\indicator}[1]{\left[ {#1} \right]}
\newcommand{\set}[1]{\left\{ {#1} \right\}}
\newcommand{\setbuilder}[2]{\left\{ {#1} \, \middle| \, {#2} \right\}}
\newcommand{\ceil}[1]{\left\lceil {#1} \right\rceil}
\newcommand{\floor}[1]{\left\lfloor {#1} \right\rfloor}
\newcommand{\xor}{\oplus}
\newcommand{\bigxor}{\bigoplus}
\newcommand{\powerset}[1]{\mathcal{P}\left( {#1} \right)}
\DeclarePairedDelimiterX{\infdivx}[2]{(}{)}{{#1} \,\delimsize\|\, {#2}}
\providecommand*{\cupdot}{%
  \mathbin{%
    \mathpalette\@cupdot{}%
  }%
}
\newcommand*{\@cupdot}[2]{%
  \ooalign{%
    $\m@th#1\cup$\cr
    \hidewidth$\m@th#1\cdot$\hidewidth
  }%
}
\providecommand*{\bigcupdot}{%
  \mathop{%
    \mathpalette\@bigcupdot{}%
  }%
}
\newcommand*{\@bigcupdot}[2]{%
  \ooalign{%
    $\m@th#1\bigcup$\cr
    \hidewidth$\m@th#1\cdot$\hidewidth
  }%
}
\newcommand{\bigmodeproduct}[1]{
  \mathop{
    \mathchoice{\vcenter{\hbox{{\huge $\m@th\mkern-2mu\times\mkern-2mu$}$_{#1}$}}}          
               {\vcenter{\hbox{{\LARGE $\m@th\mkern-2mu\times\mkern-2mu$}$_{#1}$}}}         
               {\vcenter{\hbox{{$\m@th\mkern-2mu\times\mkern-2mu$}$_{#1}$}}}                
               {\vcenter{\hbox{{\footnotesize $\m@th\mkern-2mu\times\mkern-2mu$}$_{#1}$}}}  
  }\displaylimits
}
\newcommand{\restateable}[4][theorem]{
  \ifthenelse{\equal{#1}{theorem}}{
    \begin{theorem}\label{thm:#2}
      {#3}
    \end{theorem}

    \convertchar[e]{#2}{-}{ }
    \capitalizewords[e]{\thestring}
    \noblanks[e]{\thestring}
    \retokenize[q]{\thestring}

    \expandafter\newcommand\csname restateThm\thestring\endcsname{
      {
        \def\thetheorem{\ref{thm:#2}}
        \begin{theorem}
          {#4}
        \end{theorem}
        \addtocounter{theorem}{-1}
      }
    }%
  }{\ifthenelse{\equal{#1}{lemma}}{
    \begin{lemma}\label{lem:#2}
      {#3}
    \end{lemma}

    \convertchar[e]{#2}{-}{ }
    \capitalizewords[e]{\thestring}
    \noblanks[e]{\thestring}
    \retokenize[q]{\thestring}

    \expandafter\newcommand\csname restateLem\thestring\endcsname{
      {
        \def\thetheorem{\ref{lem:#2}}
        \begin{lemma}
          {#4}
        \end{lemma}
        \addtocounter{theorem}{-1}
      }
    }%
  }{}}
}
\title{Understanding the Moments of Tabulation Hashing via Chaoses} 
\author{Jakob Bæk Tejs Houen\footnote{Research supported by Investigator Grant 16582, Basic Algorithms Research Copenhagen (BARC), from the VILLUM Foundation.}\ }
\author{Mikkel Thorup$^*$}
\affil{University of Copenhagen,\\
    \tt{\{jakn,mthorup\}@di.ku.dk}
}
\date{}
\begin{document}

\maketitle

\begin{abstract}

    Simple tabulation hashing dates back to Zobrist in 1970 and is defined as follows:
    Each key is viewed as $c$ characters from some alphabet $\Sigma$, we have $c$ fully random hash functions $h_0, \ldots, h_{c - 1} \colon \Sigma \to \set{0, \ldots, 2^{l} - 1}$, and a key $x = (x_0, \ldots, x_{c - 1})$ is hashed to $h(x) = h_0(x_0) \xor \ldots \xor h_{c - 1}(x_{c - 1})$ where $\xor$ is the bitwise XOR operation.    
    The previous results on tabulation hashing by P{\v a}tra{\c s}cu and Thorup~[J.ACM'11] and by Aamand et al.~[STOC'20] focused on proving Chernoff-style tail bounds on hash-based sums, e.g., the number keys hashing to a given value, for simple tabulation hashing, but their bounds do not cover the entire tail. 
    Thus their results cannot bound moments. 
    The paper Dahlgaard et al.~[FOCS'15] provides a bound on the moments of certain hash-based sums, but their bound only holds for constant moments, and we need logarithmic moments.

    Chaoses are random variables of the form $\sum a_{i_0, \ldots, i_{c - 1}} X_{i_0} \cdot \ldots \cdot X_{i_{c - 1}}$ where $X_i$ are independent random variables.
    Chaoses are a well-studied concept from probability theory, and tight analysis has been proven in several instances, e.g., when the independent random variables are standard Gaussian variables and when the independent random variables have logarithmically convex tails.
    We notice that hash-based sums of simple tabulation hashing can be seen as a sum of chaoses that are not independent.
    This motivates us to use techniques from the theory of chaoses to analyze hash-based sums of simple tabulation hashing.

    In this paper, we obtain bounds for all the moments of hash-based sums for simple tabulation hashing which are tight up to constants depending only on $c$.
    In contrast with the previous attempts, our approach will mostly be analytical and does not employ intricate combinatorial arguments.
    The improved analysis of simple tabulation hashing allows us to obtain bounds for the moments of hash-based sums for the mixed tabulation hashing introduced by Dahlgaard et al.~[FOCS'15]. With simple tabulation hashing, there are certain inputs for which
    the concentration is much worse than with fully random hashing. However, with mixed tabulation, we get logarithmic moment bounds
    that are only a constant factor
    worse than those with fully random hashing for any possible input.
    This is a strong addition to other powerful
    probabilistic properties of mixed tabulation hashing
    proved by Dahlgaard et al.
\end{abstract}

\newpage
\tableofcontents
\newpage

\section{Introduction}

Hashing is a ubiquitous tool of randomized algorithms which dates all the way back to the 1950s~\cite{dumey1956}.
A hash function is a random function, $h \colon U \to R$, that assigns a random hash value, $h(x) \in R$, to every key, $x \in U$.
When designing algorithms and data structures, it is often assumed that one has access to a uniformly random hash function that can be evaluated in constant time.
Even though this assumption is very useful and convenient, it is unfortunately also unrealistic.
It is thus a natural goal to find practical and efficient constructions of hash functions that provably have guarantees akin to those of uniformly random hashing.

If we want implementable algorithms with provable performance similar to
that proven assuming uniformly random hashing, then we have to find
practical and efficient constructions of hash functions with
guarantees akin to those of uniformly random hashing. An example of
this is simple tabulation hashing introduced by Zobrist in
1970~\cite{zobrist70}. The scheme is efficient and easy to implement,
and P\v{a}tra\c{s}cu and Thorup~\cite{Patrascu2012} proved that it
could replace uniformly random hashing in many algorithmic
contexts. The versatility of simple tabulation does not stem from a
single probabilistic power like $k$-independence (it is only
3-independent), but from an array of powers that have different
usages in different applications. Having one hash function with multiple powers
has many advantages. One is that we can use the same hash function
implementation for many purposes. Another is that hash functions are
often an inner-loop bottleneck, and then it is an advantage if the
same hash value can be used for multiple purposes. Also, if we have
proved that a simple hash function has some very different
probabilistic properties, then, morally, we would
expect it to possess many other properties to be
uncovered as it has happened over the years for simple tabulation (see, e.g.,
\cite{AamandKT18power-of-choice,AamandT19nonempty}). Finally, when we
hash a key, we may not even know what property is needed, e.g., with
weighted keys, we may need one property to deal with a few heavy keys,
and another property to deal with the many
light keys, but when we hash the key, we may not know if it is heavy
or light.

One of the central powers proved for simple tabulation in~\cite{Patrascu2012}
is that it has strong concentration bounds for hash-based sums (will be defined
shortly in Section \ref{sec:hash-based-sums}). The concentration holds only 
for quite limited expected values, yet this suffices for important
applications in classic hash tables. Recently, Aamand et al. \cite{aamand2020}
introduced tabulation-permutation, which is only about twice as slow
as simple tabulation, and which offers general concentration bounds
that hold for all hash-based sums regardless of the expected size.
An issue with tabulation-permutation is that it is not clear if it
possesses the other strong powers of simple tabulation.

A different way to go is to construct increasingly strong schemes, each
inheriting all the nice properties of its predecessors. In
this direction, \cite{PatrascuT13twisted} introduced twisted tabulation
strengtening simple tabulation, and \cite{Dahlgaard2015} introduced
mixed tabulation strengthening twisted tabulation. Each new scheme was
introduced to get some powers not available with the predecessor. In
particular, mixed tabulation has some selective full-randomness that
is needed for aggregating statistics over hash-based $k$-partitions.
These applications also needed concentration bounds for hash-based
sums, but \cite{Dahlgaard2015} only provided some specialized suboptimal
concentration bounds.

In this paper, we do provide strong concentration bounds for mixed
tabulation hashing which can then be used in tandem with all the other
strong properties of simple, twisted, and mixed tabulation.  In fact
our bounds are more general than the strong concentration bounds
proved in \cite{aamand2020} for tabulation-permutation. More
precisely, the concentration bounds in \cite{aamand2020} are
Chernoff-style tail bounds that hold with high probability, while what
we do is to show moment bounds that imply such tail bounds as 
special cases. Indeed the key to our results for mixed tabulation is a
much stronger understanding of the moments of simple
tabulation.

Below we proceed to describe our new mathematical understanding, including
the relevance of chaoses. We will contextualize this with other work later
in Section \ref{sec:context}.

\subsection{Moment bounds for hash-based sums}\label{sec:hash-based-sums}

In this paper, we will focus on analyzing hash-based sums.
More precisely, we consider a fixed \emph{value function}, $v \colon U \times R \to \R$, and define the random variable $X_x = v(x, h(x))$ for every key $x \in U$.
We are then interested in proving concentration bounds for the sum $X = \sum_{x \in U} X_x = \sum_{x \in U} v(x, h(x))$.
It should be noted that the randomness of $X$ derives from the hash function $h$, thus the results will depend on the strength of $h$.

This is quite a general problem, and at first glance, it might not be obvious why this is a natural construction to consider, but it does generalize a variety of well-studied constructions:
\begin{enumerate}
    \item Let $S \subseteq U$ be a set of balls and assign a weight, $w_x \in \R$, for every ball, $x \in S$. 
    The goal is to distribute the balls, $S$, into a set of bins $R = [m]$.\footnote{For a positive integer $m \in \N$ we define $[m] = \set{0, \ldots, m - 1}$.}
    For a bin, $y \in [m]$, we define the value function $v_y \colon U \times [m] \to \R$ by $v_y(x, j) = w_x \indicator{j = y} \indicator{x \in S}$,
    then $X = \sum_{x \in U} v_y(x, h(x)) = \sum_{x \in S} w_x \indicator{h(x) = y}$ will be the weight of the balls hashing to bin $y$.\footnote{For a statement $P$ we let $[P]$ be 1 if $P$ is true and $0$ otherwise.}
    \item Instead of concentrating on a single bin, we might be interested in the total weight of the balls hashing
    below some threshold $l$. This is useful for sampling, for if $h(x)$ is uniform in $[m]$, then $\Pr[h(x)<l]=l/m$.
    We then define the value function $v \colon U \times [m] \to \R$ by $v(x, j) = w_x \indicator{j < l} \indicator{x \in S}$,
    then $X = \sum_{x \in U} v(x, h(x)) = \sum_{x \in S} w_x \indicator{h(x)< l}$ will be precisely the total weight of the balls
    hashing below $l$.
\end{enumerate}
The first case appears when one tries to allocate resources, and the second case arises in streaming algorithms, see, e.g.,~\cite{aamand2020repititions}. In any case, $X$ ought to be concentrated around the mean $\mu = \ep{X}$.
If $h$ is a uniformly random hash function then this will be the case under mild assumptions about $v$ but it cannot otherwise be assumed a priori to be the case.

There are two natural ways to quantify the concentration of $X$, either we bound the tail of $X$, i.e., we bound $\prb{\abs{X - \mu} \ge t}$ for all $t \ge 0$, or we bound the central moments of $X$, i.e., we bound the $p$-th moment $\ep{\abs{X - \mu}^p}$ for all $p \ge 2$.
If we have a bound on the tail that is exponentially decreasing, we can bound the central moments of $X$ for all $p \ge 2$.
Unfortunately, some of the prior works~\cite{aamand2020,dietzfelbinger2009,thorup2013} prove bounds on the tail that are exponentially decreasing but also has an additive term of the form $n^{-\gamma}$ where $\gamma = O(1)$.
It will then only be possible to give strong bounds for the central moments of $X$ for $p = O(1)$.
This is not necessarily a fault of the hash function but a defect of the analysis.
In contrast, if we prove strong bounds for the central moments of $X$ for $p = O(\log n)$ then we can use Markov's inequality to prove a bound the tail that is exponentially decreasing but with an additive term of the form $n^{-\gamma}$ where $\gamma = O(1)$.
Thus in some sense, it is more robust to bound the moments compared to bounding the tail.

We can use the classic $k$-independent hashing framework of Wegman and Carter~\cite{wegman1981} as an easy way to obtain a hash function that has bounds on the central moments as a uniformly random hash function.
A random hash function, $h \colon U \to R$, is $k$-independent if $(h(x_0), \ldots, h(x_{k - 1}))$ is uniformly distributed in $R^k$ for any $k$ distinct keys $x_0, \ldots, x_{k - 1} \in U$.
The $p$-th central moment $\ep{\left(X - \mu\right)^p}$ of $X$ for a $k$-independent hash function $h$ is the same as the $p$-th central moment of $X$ for a fully random hash function when $p$ is an even integer less than $k$.

\subsection{Tabulation Hashing}

Simple tabulation hashing dates back to 1970 and was first introduced by Zobrist for optimizing chess computers~\cite{zobrist70}.
In simple tabulation hashing, we view the universe, $U$, to be of the form $U = \Sigma^c$ for some alphabet, $\Sigma$, and a positive integer $c$.
Let $T \colon \set{0, \ldots, c - 1} \times \Sigma \to [2^l]$ be a uniformly random table, i.e., each value is chosen independently and uniformly at random from the set $[2^l]$.
A simple tabulation hash function, $h \colon \Sigma^c \to [2^l]$, is then defined by
\begin{align*}
    h(\alpha_0, \ldots, \alpha_{c - 1})
        = \bigxor_{i = 0}^{c - 1} T(i, \alpha_i)
    \; ,
\end{align*}
where $\xor$ is the bitwise XOR-operation, i.e., addition when $[2^l]$ is identified with the Abelian group $(\Z / 2\Z)^l$.
We say that $h$ is a simple tabulation hash function with $c$ characters.
With 8- or 16-bit characters, the random table $T$ fits in cache, and then simple tabulation is
very fast, e.g., in experiments, \cite{Patrascu2012} found it to be as fast as two to three multiplications.

The moments of simple tabulation hashing have been studied in multiple
papers. Braverman et al.~\cite{BravermanCLMO10} showed that for a
fixed bin the 4th central moment is close to that achieved by truly
random hashing. Dahlgaard et al.~\cite{Dahlgaard2017} generalized this
to any constant moment $p$. Their proof works for any $p$ but with a
doubly exponential dependence on $p$, so their bound is only useful
for $p = O(1)$.  In this paper, we obtain bounds for all the moments
of hash-based sums for simple tabulation hashing which are tight up to
constants depending only on $c$.

Previous work has just treated $c$ as a constant, hidden in $O$-notation. However, $c$ does provide a fundamental trade-off between evaluation
time with $c$ lookups and the space $cU^{1/c}$. We therefore find it relevant to
elucidate how our moment bounds depend on $c$ even though we typically
choose $c=4$.

Mixed tabulation hashing was introduced by Dahlgaard et al.~\cite{Dahlgaard2015}.
As in simple tabulation hashing, we view the universe, $U$, to be of the form $U = \Sigma^c$ for some alphabet, $\Sigma$, and a positive integer $c$.
We further assume that the alphabet, $\Sigma$, has the form $\Sigma = [2^k]$.
Let $h_1 \colon \Sigma^c \to [2^l]$, $h_2 \colon \Sigma^c \to \Sigma^d$, and $h_3 \colon \Sigma^d \to [2^l]$ be independent simple tabulation hash functions.
A mixed tabulation hash function, $h \colon \Sigma^c \to [2^l]$, is then defined by
\begin{align*}
    h(x) = h_1(x) \xor h_3(h_2(x))
    \; .
\end{align*}
As in simple tabulation hashing, $\xor$ is the bitwise XOR-operation.
We call $h$ a mixed tabulation hash function with $c$ characters and $d$ derived characters.
We note that $h_1$ and $h_2$ can be combined in a single
simple tabulation hash function $\Sigma^c \to [2^l] \times \Sigma^d$, and
then $h$ is implemented with only $c+d$ lookups.

 With simple tabulation hashing, there are certain inputs for which
    the concentration is much worse than with fully random hashing. However, with mixed tabulation, even if
    we have just $d=1$ derived character, we get logarithmic moment bounds
    that, for $c = O(1)$, are only a constant factor
    worse than those with fully-random hashing for any input assuming that hash range at most polynomial in
    the key universe.

Getting within a constant factor is very convenient within algorithm
    analysis, where we typically only aim for $O$-bounds that are tight
    within a constant factor.

\subsection{Relation between Simple Tabulation and Chaoses}\label{sec:relation-chaos-simple}
A \emph{chaos} of order $c$ is a random variable of the form
\[
    \sum_{0 \le i_0 < \ldots < i_{c - 1} < n} a_{i_0, \ldots, i_{c - 1}} \prod_{j \in [c]} X_{i_j} \; ,
\]
where $(X_i)_{i \in [n]}$ are independent random variables and $(a_{i_0, \ldots, i_{c -1}})_{0 \le i_0 < \ldots < i_{c - 1} < n}$ is a multiindexed array of real numbers.
And a \emph{decoupled chaos} of order $c$ is a random variable of the form
\[
    \sum_{i_0, \ldots, i_{c - 1} \in [n]} a_{i_0, \ldots, i_{c - 1}} \prod_{j \in [c]} X_{i_j}^{(j)} \; ,
\]
where $(X_i^{(j)})_{i \in [n], j \in [c]}$ are independent random variables and $(a_{i_0, \ldots, i_{c -1}})_{i_0, \ldots, i_{c - 1} \in [n]}$ is a multiindexed array of real numbers.
Chaoses have been studied in different settings, e.g., when the variables are standard Gaussian variables~\cite{latala2006,lehec2011}, when the variables have logarithmically concave tails~\cite{Adamczak2012}, and when the variables have logarithmically convex tails~\cite{kolesko2015}.

From the definition of a chaos and simple tabulation hashing it might not be immediately clear that there is connection between the two.
But we can rewrite the expression for hash-based sums of simple tabulation hashing as follows
\begin{align*}
    \sum_{x \in \Sigma^c} v(x, h(x))
        &= \sum_{\alpha_0, \ldots, \alpha_{c - 1} \in \Sigma}
            v((\alpha_0, \ldots, \alpha_{c - 1}), h(\alpha_0, \ldots, \alpha_{c - 1}))
        \\&= \sum_{j_0, \ldots, j_{c - 1} \in [m]} \sum_{\alpha_0, \ldots, \alpha_{c - 1} \in \Sigma}
            v\left((\alpha_0, \ldots, \alpha_{c - 1}), \bigxor_{i \in [c]} j_i \right) \prod_{i \in [c]} \indicator{T(i, \alpha_i) = j_i}
    \; .
\end{align*}
We then notice that $\sum_{\alpha_0, \ldots, \alpha_{c - 1} \in \Sigma} v\left((\alpha_0, \ldots, \alpha_{c - 1}), \bigxor_{i \in [c]} j_i \right) \prod_{i \in [c]} \indicator{T(i, \alpha_i) = j_i}$ is a decoupled chaos of order $c$ for any $(j_i)_{i \in [c]}$,
thus hash-based sums of simple tabulation hashing can be seen as a sum of chaoses.
Now since the random variables, $(\indicator{T(i, \alpha_i) = j})_{j \in [m]}$, are not independent then the chaoses are not independent either which complicates the analysis.
Nonetheless, this realization inspires us to use techniques from the study of chaoses to analyze the moments of tabulation hashing,
in particular, our approach will be analytical in contrast with the combinatorial approach of the previous papers. 
We will expand further on the techniques in \Cref{sec:techniques}.


\subsection{Our Results}

When proving and stating bounds for the $p$-th moment of a random variable it is often more convenient and more instructive to do it in terms of the $p$-norm of the random variable.
The $p$-norm of a random variable is the $p$-th root of the $p$-th moment of the random variable and is formally defined as follows:
\begin{definition}[$p$-norm]
    Let $p \ge 1$ and $X$ be a random variable with $\ep{\abs{X}^p} < \infty$.
    We then define the $p$-norm of $X$ by $\pnorm{X}{p} = \ep{\abs{X}^p}^{1/p}$.
\end{definition}

Our main contributions of this paper are analyses of the moments of hash-based sums of simple tabulation hashing and mixed tabulation hashing.
To do this we first had to analyze the moments of hash-based sums of fully random hashing which as far as we are aware have not been analyzed tightly before.

\subsubsection{The Moments of Fully Random Hashing}

Previously, the focus has been on proving Chernoff-like bounds by using the moment generating function but a natural, different approach would be to use moments instead. 
Both the Chernoff bounds~\cite{chernoff1952} and the more general Bennett's inequality~\cite{Bennett1962} bound the tail using the Poisson distribution.
More precisely, let $v \colon U \times [m] \to \R$ be a value function that satisfies that $\sum_{j \in [m]} v(x, j) = 0$ and define the following two parameters $M_v$ and $\sigma_v^2$ which will be important throughout the paper as follows: 
\begin{align}
    M_v &= \max_{x \in U, j \in [m]} \abs{v(x, j)} \; , \\
    \sigma_v^2 &= \frac{\sum_{x \in U, j \in [m]} v(x, j)^2}{m} \; .
\end{align}
Bennett's inequality specialized to our setting then says that for a fully random hash function $h$
\begin{align}\begin{split}\label{eq:bennett}
    \prb{\abs{\sum_{x \in U} v(x, h(x))} \ge t}
        &\le 2\exp\!\left(-\tfrac{\sigma_v^2}{M_v^2} \mathcal{C}\!\left(\tfrac{t M_v}{\sigma_v^2} \right) \right)
        \\&\le \begin{cases}
            2\exp\!\left(-\tfrac{t^2}{3\sigma_v^2} \right) &\text{if $t \le \frac{\sigma_v^2}{M_v}$} \\
            2\exp\!\left(-\tfrac{t}{2 M_v} \log\!\left(1 + \tfrac{t M_v}{\sigma_v^2} \right) \right) &\text{if $t > \frac{\sigma_v^2}{M_v}$}
        \end{cases}
    \; ,
\end{split}\end{align}
where $\mathcal{C}(x) = (x + 1)\log(x + 1) - x$.\footnote{Here and throughout the paper $\log(x)$ will refer to the natural logarithm.}


This inspires us to try to bound the $p$-norms of $X_v$ with the $p$-norms of the Poisson distribution.
To do this we will introduce the function $\Psi_p(M, \sigma^2)$ which is quite technical but we will prove that $\Psi_p(1, \lambda)$ is equal up to a constant factor to the central $p$-norm of a Poisson distributed variable with mean $\lambda$.
One should think of $\Psi_p(M, \sigma^2)$ as a $p$-norm version of $\tfrac{\sigma_v^2}{M_v^2} \mathcal{C}\!\left(\tfrac{t M_v}{\sigma_v^2} \right)$ which appears in Bennett's inequality.
\begin{definition}
    For $p \ge 2$ we define the function $\Psi_p \colon \R_+ \times \R_{+} \to \R_{+}$ as follows
    \begin{align*}
        \Psi_p(M, \sigma^2) = \begin{cases}
            \left(\frac{\sigma^2}{p M^2}\right)^{1/p} M &\text{if $p < \log \frac{p M^2}{\sigma^2}$} \\
            \tfrac{1}{2}\sqrt{p}\sigma &\text{if $p < e^{2} \frac{\sigma^2}{M^2}$} \\
            \frac{p}{e \log \frac{p M^2}{\sigma^2}} M &\text{if $\max\!\set{\log \frac{p M^2}{\sigma^2}, e^{2} \frac{\sigma^2}{M^2}} \le p$}
        \end{cases}
        \; .
    \end{align*}
\end{definition}
\begin{remark}

    When $p$ is \emph{small} then case 1 and 2 apply while for \emph{large} $p$ case 3 applies.
    If $2 < e^2\frac{\sigma^2}{M^2}$ then we always have that $p > \log \frac{p M^2}{\sigma^2}$ for $2 \le p$, hence only case 2 and 3 apply.
    Similarly, if $e^2\frac{\sigma^2}{M^2} \le 2$ then $p \ge e^2\frac{\sigma^2}{M^2}$ for all $2 \le p$, hence only case 1 and 3 apply. 
    This shows that the cases disjoint and cover all parameter configurations.
\end{remark}

The definition $\Psi_p(M, \sigma^2)$ might appear strange but it does in fact capture the central $p$-norms of Poisson distributed random variables.
This is stated more formally in the following lemma.
\restateable[lemma]{psi-relation-poisson}{
    There exist universal constants $K_1$ and $K_2$ satisfying that for a Poisson distributed random variable, $X$, with $\lambda = \ep{X}$
    \[
        K_2 \Psi_p(1, \lambda) \le \pnorm{X - \lambda}{p} \le K_1 \Psi_p(1, \lambda)
        \; ,
    \]
    for all $p \ge 2$.
}{
    There exist universal constants $K_1$ and $K_2$ satisfying that for a Poisson distributed random variable, $X$, with $\lambda = \ep{X}$
    \[
        K_2 \Psi_p(1, \lambda) \le \pnorm{X - \lambda}{p} \le K_1 \Psi_p(1, \lambda)
        \; ,
    \]
    for all $p \ge 2$.
}

Bennett's inequality shows that we can bound the tail of $\sum_{x \in U} v(x, h(x))$ and \Cref{lem:psi-relation-poisson} shows that $\Psi_p(M, \sigma^2)$ captures the central $p$-norms of the Poisson distribution.
It is therefore not so surprising that we are to bound the $p$-norms of $\sum_{x \in U} v(x, h(x))$ using $\Psi_p(M, \sigma^2)$.
\restateable[theorem]{sampling-moments}{
    Let $h \colon U \to [m]$ be a uniformly random function, let $v \colon U \times [m] \to \R$ be a fixed value function, and
    assume that $\sum_{j \in [m]} v(x, j) = 0$ for all keys $x \in U$.
    Define the random variable $X_v = \sum_{x \in U} v(x, h(x))$.
    Then for all $p \ge 2$
    \begin{align*}
        \pnorm{X_v}{p}
            \le L \Psi_p\left(M_v, \sigma_v^2 \right)
        \; ,
    \end{align*}
    where $L \le 16 e$ is a universal constant.
}{
    Let $h \colon U \to [m]$ be a uniformly random function, let $v \colon U \times [m] \to \R$ be a fixed value function, and
    assume that $\sum_{j \in [m]} v(x, j) = 0$ for all keys $x \in U$.
    Define the random variable $X_v = \sum_{x \in U} v(x, h(x))$.
    Then for all $p \ge 2$
    \begin{align*}
        \pnorm{X_v}{p}
            \le L \Psi_p\left(M_v, \sigma_v^2 \right)
        \; ,
    \end{align*}
    where $L \le 16 e$ is a universal constant.
}

To get a further intuition for $\Psi_p(M, \sigma^2)$ is is instructive to apply Markov's inequality and compare the tail bound to Bennett's inequality.
More precisely, assume that $\pnorm{Y - \ep{Y}}{p} \le L \Psi_p(M, \sigma^2)$ for a constant $L$ and for all $p \ge 2$.
Then we can use Markov's inequality to get the following tail bound for all $t > 0$
\begin{align}\begin{split}\label{eq:psi-tail}
    \prb{\abs{Y - \ep{Y}} \ge t}
        &\le \left( \frac{\pnorm{Y - \ep{Y}}{p}}{t} \right)^p 
        \\&\le \begin{cases}
            \frac{L^2 \sigma^2}{2 t^2} &\text{if $t \le L \max\!\set{M, \tfrac{e \sigma}{\sqrt{2}}}$} \\
            \exp\!\left(- \tfrac{4 t^2}{e^2 L^2 \sigma^2} \right) &\text{if $L \tfrac{e \sigma}{\sqrt{2}} \le t \le L \tfrac{e^2 \sigma^2}{2 M}$} \\
            \exp\!\left(- \tfrac{t}{L M} \log\!\left( \tfrac{2 t M}{L \sigma^2} \right) \right) &\text{if $L \max\!\set{\tfrac{e^2 \sigma^2}{2 M}, M} \le t$} \\
        \end{cases}
    \; .
\end{split}\end{align}
In order to obtain these bounds $p$ is chosen as follows: If $t \le \max\!\set{M, \tfrac{e \sigma}{\sqrt{2}}}$ then $p = 2$ and otherwise $p$ is chosen such that $\pnorm{Y - \ep{Y}}{p} \le e^{-1} t$.
More precisely, we have that
\begin{align*}
    p = \begin{cases}
        2 &\text{if $t \le L \max\!\set{M, \tfrac{e \sigma}{\sqrt{2}}}$} \\
        \tfrac{4 t^2}{e^2 L^2 \sigma^2} &\text{if $L \tfrac{e \sigma}{\sqrt{2}} \le t \le L \tfrac{e^2 \sigma^2}{2 M}$} \\
        \tfrac{t}{L M}\log\left(\tfrac{2 t M}{L \sigma^2} \right) &\text{if $L \max\!\set{\tfrac{e^2 \sigma^2}{2 M}, M} \le t$} \\
    \end{cases}
    \; .
\end{align*}
We see that \cref{eq:psi-tail} gives the same tail bound as Bennett's inequality, \cref{eq:bennett}, up to a constant in the exponent.

We also prove a matching lower bound to \Cref{thm:sampling-moments} which shows that $\Psi_p(M, \sigma^2)$ is the correct function to consider.
\restateable[theorem]{sampling-moments-lower-bound}{
    Let $h \colon U \to [m]$ be a uniformly random function, then there exists a value function, $v \colon U \times [m] \to \R$, where $\sum_{j \in [m]} v(x, j) = 0$ for all keys $x \in U$, such that the random variable $X_v = \sum_{x \in U} v(x, h(x))$ satisfies that for all $p \le L_1 \abs{U} \log(m)$
    \begin{align*}
        \pnorm{\sum_{x \in U} v(x, h(x)) }{p}
            \ge L_2 \Psi_p\left(M_v, \sigma_v^2 \right)
        \; ,
    \end{align*}
    where $L_1$ and $L_2$ are a universal constant.
}{
    Let $h \colon U \to [m]$ be a uniformly random function, then there exists a value function, $v \colon U \times [m] \to \R$, where $\sum_{j \in [m]} v(x, j) = 0$ for all keys $x \in U$, such that the random variable $X_v = \sum_{x \in U} v(x, h(x))$ satisfies that for all $p \le L_1 \abs{U} \log(m)$
    \begin{align*}
        \pnorm{\sum_{x \in U} v(x, h(x)) }{p}
            \ge L_2 \Psi_p\left(M_v, \sigma_v^2 \right)
        \; ,
    \end{align*}
    where $L_1$ and $L_2$ are a universal constant.
}

\subsubsection{The Moments of Tabulation Hashing}
We analyze the $p$-norms of hash-based sums for simple tabulation hashing, and our analysis is the first that provides useful bounds for non-constant moments.
Furthermore, it is also the first analysis of simple tabulation hashing that does not assume that $c$ is constant.
We obtain an essentially tight understanding of this problem and show that simple tabulation hashing only works well when the range is large.
This was also noted by Aamand et al.~\cite{aamand2020} and they solve this deficiency of simple tabulation hashing by introducing a new hashing scheme, tabulation-permutation hashing.
We show that it is also possible to break the bad instances of simple tabulation hashing by using mixed tabulation hashing.

We introduce a bit of notation to make the theorems cleaner.
We will view a value function $v \colon \Sigma^c \times [m] \to \R$ as a vector, more precisely, we let
\[
    \norm{v}{q} = \left(\sum_{x \in \Sigma^c} \sum_{j \in [m]} \abs{v(x, j)}^q \right)^{1/q}
\]
for all $q \in [1, \infty]$.
For every key $x \in \Sigma^c$ we define $v[x]$ to be the sub-vector $v$ restricted to $x$, more precisely, we let
\[
    \norm{v[x]}{q} = \left(\sum_{j \in [m]} \abs{v(x, j)}^q \right)^{1/q}
\]
for all $q \in [1, \infty]$.

\paragraph{Simple Tabulation Hashing.}

Our main result for simple tabulation hashing is a version of \Cref{thm:sampling-moments}.

\restateable[theorem]{simple-tab-moments}{
    Let $h \colon \Sigma^c \to [m]$ be a simple tabulation hash function, $v \colon \Sigma^{c} \times [m] \to \R$ a value function, and assume that $\sum_{j \in [m]} v(x, j) = 0$ for all keys $x \in \Sigma^c$.
    Define the random variable $V^{\text{\normalfont simple}}_v = \sum_{x \in \Sigma^c} v(x, h(x))$.
    Then for all $p \ge 2$
    \begin{align*} 
        \pnorm{V^{\text{\normalfont simple}}_v}{p}
            \le L_1 \Psi_p\left(K_c \gamma_p^{c - 1} M_v, K_c \gamma_p^{c - 1} \sigma_v^2  \right)
        \; ,
    \end{align*}
    where $K_c = \left(L_2 c \right)^{c - 1}$, $L_1$ and $L_2$ are universal constants, and 
    \begin{align*}
        \gamma_p = \frac{\max\!\set{\log(m) + \log\!\left(\tfrac{\sum_{x \in \Sigma^c} \norm{v[x]}{2}^2}{\max_{x \in \Sigma^c} \norm{v[x]}{2}^2}\right)/c, p}}
            {\log\!\left( e^2 m \left( \max_{x \in \Sigma^c} \frac{\norm{v[x]}{1}^2}{\norm{v[x]}{2}^2} \right)^{-1} \right)}
    \end{align*}
}{
    Let $h \colon \Sigma^c \to [m]$ be a simple tabulation hash function, $v \colon \Sigma^{c} \times [m] \to \R$ a value function, and assume that $\sum_{j \in [m]} v(x, j) = 0$ for all keys $x \in \Sigma^c$.
    Define the random variable $V^{\text{\normalfont simple}}_v = \sum_{x \in \Sigma^c} v(x, h(x))$.
    Then for all $p \ge 2$
    \begin{align*} 
        \pnorm{V^{\text{\normalfont simple}}_v}{p}
            \le L_1 \Psi_p\left(K_c \gamma_p^{c - 1} M_v, K_c \gamma_p^{c - 1} \sigma_v^2  \right)
        \; ,
    \end{align*}
    where $K_c = \left(L_2 c \right)^{c - 1}$, $L_1$ and $L_2$ are universal constants, and 
    \begin{align*}
        \gamma_p = \frac{\max\!\set{\log(m) + \log\!\left(\tfrac{\sum_{x \in \Sigma^c} \norm{v[x]}{2}^2}{\max_{x \in \Sigma^c} \norm{v[x]}{2}^2}\right)/c, p}}
            {\log\!\left( e^2 m \left( \max_{x \in \Sigma^c} \frac{\norm{v[x]}{1}^2}{\norm{v[x]}{2}^2} \right)^{-1} \right)}
    \end{align*}
}

\newcommand{\theoremSimpleMoments}[1][]{
    \begin{theorem}\label{#1}
        Let $h \colon \Sigma^c \to [m]$ be a simple tabulation hash function, $v \colon \Sigma^{c} \times [m] \to \R$ a value function, and assume that $\sum_{j \in [m]} v(x, j) = 0$ for all keys $x \in \Sigma^c$.
        Define the random variable $V^{\text{\normalfont simple}}_v = \sum_{x \in \Sigma^c} v(x, h(x))$.
        Then for all $p \ge 2$
        \begin{align*} 
            \pnorm{V^{\text{\normalfont simple}}_v}{p}
                \le \Psi_p\left(K_c \gamma_p^{c - 1} M_v, K_c \gamma_p^{c - 1} \sigma_v^2  \right)
            \; ,
        \end{align*}
        where $K_c = L_1 \left(L_2 c \right)^{c - 1}$, $L_1$ and $L_2$ are universal constants, and 
        \begin{align*}
            \gamma_p = \frac{\max\!\set{\log(m) + \log\!\left(\tfrac{\sum_{x \in \Sigma^c}\norm{v[x]}{2}^2}{\max_{x \in \Sigma^c} \norm{v[x]}{2}^2}\right)/c, p}}{\log\!\left(\min_{x \in \Sigma^c} \frac{e^2 m \norm{v[x]}{2}^2}{\norm{v[x]}{1}^2}\right)}
        \end{align*}
    \end{theorem}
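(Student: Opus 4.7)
The plan is to induct on the number of characters $c$. The base case $c = 1$ coincides with fully random hashing, so \Cref{thm:sampling-moments} applies directly, and since $\gamma_p \ge 1$ the stated bound is at most a constant factor weaker. For the inductive step, I would condition on the tables $T(0, \cdot), \ldots, T(c-2, \cdot)$, leaving only $T(c-1, \cdot)$ random. Writing $s(y) = \bigxor_{i < c-1} T(i, y_i)$ for $y \in \Sigma^{c-1}$ and defining the reduced value function
\[
    \tilde v(\alpha, j) = \sum_{y \in \Sigma^{c-1}} v((y, \alpha),\, s(y) \xor j),
\]
the sum $V^{\text{\normalfont simple}}_v$ becomes $\sum_{\alpha \in \Sigma} \tilde v(\alpha,\, T(c-1, \alpha))$, which is conditionally a fully random hash-based sum on a single character. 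Applying \Cref{thm:sampling-moments} to this conditional sum bounds its conditional $p$-norm by $L\, \Psi_p(M_{\tilde v}, \sigma_{\tilde v}^2)$, and the task reduces to controlling $M_{\tilde v}$ and $\sigma_{\tilde v}^2$ in $p$-norm with respect to the remaining randomness.

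The quantity $\sigma_{\tilde v}^2 = \tfrac{1}{m} \sum_{\alpha, j} \tilde v(\alpha, j)^2$ is itself a hash-based sum on $c-1$ characters (after expanding the square) with mean $\sigma_v^2$, whose fluctuations I would control by applying the inductive hypothesis to an auxiliary tensored value function built from $v \otimes v$. The real work is in bounding $M_{\tilde v} = \max_{\alpha, j} \abs{\tilde v(\alpha, j)}$: for each fixed $(\alpha, j)$, $\tilde v(\alpha, j)$ is itself a centered hash-based sum on $c-1$ characters with value function $v_{\alpha, j}(y, t) = v((y, \alpha),\, t \xor j)$, and its parameters inherit from $v$ restricted to the $\alpha$-slice. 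I would apply the inductive hypothesis to bound a $q$-th moment of $\tilde v(\alpha, j)$, then pick $q$ slightly larger than the log of the number of $(\alpha, j)$ pairs so that a union bound absorbs the maximum, and finally integrate the resulting bound back into the outer $p$-norm. The exponent $q$ chosen this way is essentially the numerator of $\gamma_p$, while the denominator of $\gamma_p$ arises from the logarithmic scaling in the third regime of $\Psi_q$; thus one round of conditioning costs a single factor of $\gamma_p$, and the $c - 1$ rounds accumulate into $\gamma_p^{c-1}$.

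The main obstacle will be integrating the inductive bounds back through the conditional $\Psi_p$ form. Because $M_{\tilde v}$ and $\sigma_{\tilde v}^2$ are themselves random, one cannot simply plug deterministic surrogates into $\Psi_p(M_{\tilde v}, \sigma_{\tilde v}^2)$; instead I would establish a structural inequality of the form
\[
    \pnorm{\Psi_p\!\left(M_{\tilde v}, \sigma_{\tilde v}^2\right)}{p}
        \le L'\, \Psi_p\!\left(\pnorm{M_{\tilde v}}{q_1},\, \pnorm{\sigma_{\tilde v}^2}{q_2}\right)
\]
for appropriate auxiliary exponents $q_1, q_2$, exploiting the piecewise monotonicity and quasiconcavity of $\Psi_p$ in both arguments. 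Tracking constants so that the blowup per inductive step is a single multiplicative $L_2 c$ rather than worse, and correctly threading the $\gamma_p$ factor through all three regimes of $\Psi_p$ without losing additional logarithmic factors, is where the technical care will lie.
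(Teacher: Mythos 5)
Your skeleton is the right shape and matches the paper's at the top level: induction on $c$, condition so the last round of randomness makes $V^{\text{\normalfont simple}}_v$ a one-character fully random hash-based sum, invoke \Cref{thm:sampling-moments}, and then control the now-random arguments of $\Psi_p$ by the inductive hypothesis (for the max), a sum-of-squares estimate (for the variance), and a ``moment of a function'' inequality (your structural inequality is exactly \Cref{lem:fn-moment}). However, two load-bearing pieces of machinery are missing, and without them the induction does not close at the strength claimed.

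The first is the ordering of position characters. You freeze the whole tables $T(0,\cdot),\dots,T(c-2,\cdot)$ at once, so the units over which you take a maximum are the $\abs{\Sigma}$ slices $\tilde v(\alpha,\cdot)$. But there is no a priori bound on the worst slice weight $\max_\alpha \sum_{y}\norm{v[(y,\alpha)]}{2}^2$ — it can equal the entire $m\sigma_v^2$. Your union bound then forces $q\approx\log(m\abs{\Sigma})$, whereas the theorem's $\gamma_p$ has numerator $\max\!\set{\log m + \log\!\left(\sum_x\norm{v[x]}{2}^2 / \max_x\norm{v[x]}{2}^2\right)/c,\ p}$; in general one only has $\log\abs{\Sigma}\ge\log\!\left(\sum_x\norm{v[x]}{2}^2/\max_x\norm{v[x]}{2}^2\right)/c$, so your $q$ is too large and the resulting $\gamma_q^{c-2}$ overshoots $\gamma_p^{c-1}$. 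The paper avoids this by revealing the table one \emph{position character} at a time in the order produced by \Cref{lem:group-of-keys}, which guarantees every group $G_i$ satisfies $w(G_i)\le w(\Sigma^c)^{1-1/c}w_\infty(\Sigma^c)^{1/c}$; after convexity of $\Psi_{\bar p}^{\bar p}$ in its variance argument, the union-bound factor collapses to $\bigl(m\,(w/w_\infty)^{1/c}\bigr)^{1/\bar p}\le e$ with exactly the $\bar p$ in the theorem. Your coarse conditioning is effectively the paper's scheme with a fixed, unbalanced ordering, which is precisely the case the group-of-keys lemma is designed to rule out. (The paper pays for the finer conditioning by needing Hitczenko's decoupling, \Cref{lem:martingale-decoupling}, since the per-position-character martingale differences are no longer conditionally independent; your conditioning avoids Hitczenko, but loses the balance that makes the max estimate tight.)

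The second gap is in $\sigma_{\tilde v}^2$. Expanding $\sum_{\alpha,j}\tilde v(\alpha,j)^2$ gives
$\sum_\alpha\sum_{y,y'}\sum_j v((y,\alpha),s(y)\oplus j)\,v((y',\alpha),s(y')\oplus j)$,
a chaos in the $T(i,\cdot)$ whose diagonal $y=y'$ and whose dependence on $s(y)\oplus s(y')$ prevent it from being a hash-based sum of the form in the theorem statement. Applying the inductive hypothesis to a tensored value function on $\Sigma^{2(c-1)}$ would cost a factor $\gamma_p^{2c-3}$ where the target is $\gamma_p^{c-1}$. The paper instead proves a dedicated bound (\Cref{lem:simple-sum-squares}) via de la Pe\~na--style decoupling (\Cref{cor:simple-decoupling}) feeding into \Cref{lem:simple-tab-hoeffding}; some chaos estimate of this kind is unavoidable, and the inductive hypothesis alone will not produce it.
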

}


It is instructive to compare this result to \Cref{thm:sampling-moments} for fully random hashing.
Ignoring the constant $K_c$, the result for simple tabulation hashing corresponds to the result for fully random hashing if we group keys into groups of size $\gamma_p^{c - 1}$.

The definition of $\gamma_p$ is somewhat complicated because of the generality of the theorem, but we will try to explain the intuition behind it.
The expression $\max_{x \in \Sigma^c} \frac{\norm{v[x]}{1}^2}{\norm{v[x]}{2}^2}$ measures how spread out the mass of the value function is.
It was also noted in the previous analysis by Aamand et al.~\cite{aamand2020} that this measure is naturally occurring.
In fact, their result needs that $\max_{x \in \Sigma^c} \frac{\norm{v[x]}{1}^2}{\norm{v[x]}{2}^2} \le m^{1/4}$.
If we consider the example from the introduction of hashing below a threshold $l \le m$ where each key, $x \in \Sigma^c$, has weight $w_x$, then the value function, $v$, will be $v(x, j) = w_x\left(\indicator{j < l} - \tfrac{l}{m}\right)$ for $x \in \Sigma^c, j \in [m]$, and we then get that
\begin{align*}
    \max_{x \in \Sigma^c} \frac{\norm{v[x]}{1}^2}{\norm{v[x]}{2}^2}
        = 4 l\left(1 - \frac{l}{m}\right)
        \le 4 l
        \; .
\end{align*}
This correctly measures that the mass of the value function is mostly concentrated to the $l$ positions of $[m]$.

The expression $\frac{\sum_{x \in \Sigma^c} \norm{v[x]}{2}^2}{\max_{x \in \Sigma^c} \norm{v[x]}{2}^2}$ is a measure for how many keys that have significant weight.
This also showed up in the previous analyses of simple tabulation hashing~\cite{aamand2020,Patrascu2012}.
If we again consider the example from before, we get that
\begin{align*}
    \frac{\sum_{x \in \Sigma^c} \norm{v[x]}{2}^2}{\max_{x \in \Sigma^c} \norm{v[x]}{2}^2}
        = \frac{\sum_{x \in \Sigma^c} w_x^2}{\max_{x \in \Sigma^c} w_x^2}
    \; .
\end{align*}

We can summarize the example in the following corollary.

\begin{corollary} 
    Let $h \colon \Sigma^c \to [m]$ be a simple tabulation hash function, assign a weight, $w_x \in \R$, to every key, $x \in \Sigma^c$, and consider a threshold $l \le m$.
    Define the random variable $V^{\text{\normalfont simple}}_v = \sum_{x \in \Sigma^c} w_x\left(\indicator{h(x) < l} - \tfrac{l}{m} \right)$.
    Then for all $p \ge 2$
    \begin{align*}
        \pnorm{V^{\text{\normalfont simple}}_v}{p}
            \le \Psi_p\left(K_c \gamma_p^{c - 1} \max_{x \in \Sigma^c} \abs{w_x}, K_c \gamma_p^{c - 1}\left(\sum_{x \in \Sigma^c} w_x^2\right)\frac{l}{m}\left(1 - \frac{l}{m}\right) \right)
        \; ,
    \end{align*}
    where $K_c = L_1 \left(L_2 c \right)^{c - 1}$, $L_1$ and $L_2$ are universal constants, and 
    \begin{align*}
        \gamma_p = \frac{\max\!\set{\log(m) + \log\!\left(\tfrac{\sum_{x \in \Sigma^c} w_x^2}{\max_{x \in \Sigma^c} w_x^2}\right)/c, p}}{\log\!\left(\tfrac{e^2 m}{4l}\right)}
    \end{align*}
\end{corollary}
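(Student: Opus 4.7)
The plan is to apply Theorem~\ref{thm:simple-tab-moments} directly to the value function $v \colon \Sigma^c \times [m] \to \R$ defined by $v(x,j) = w_x(\indicator{j < l} - l/m)$. This makes $V^{\text{\normalfont simple}}_v = \sum_{x \in \Sigma^c} w_x(\indicator{h(x) < l} - l/m)$, matching the random variable in the corollary, and the zero-mean hypothesis $\sum_{j \in [m]} v(x, j) = w_x(l - l) = 0$ is immediate.

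Next I would read off the parameters entering the theorem. Since $\abs{\indicator{j < l} - l/m} \le 1$, we have $M_v \le \max_x \abs{w_x}$. The identity $l(1 - l/m)^2 + (m - l)(l/m)^2 = l(1 - l/m)$ gives $\sum_j v(x,j)^2 = w_x^2 \, l(1 - l/m)$, hence $\sigma_v^2 = (\sum_x w_x^2)(l/m)(1 - l/m)$, matching the corollary exactly. For the ingredients of $\gamma_p$, the analogous calculation yields $\norm{v[x]}{1} = 2 \abs{w_x} l (1 - l/m)$ and $\norm{v[x]}{2}^2 = w_x^2 l (1 - l/m)$, so that $\norm{v[x]}{1}^2 / \norm{v[x]}{2}^2 = 4l(1 - l/m)$ is independent of $x$, while $\sum_x \norm{v[x]}{2}^2 / \max_x \norm{v[x]}{2}^2 = \sum_x w_x^2 / \max_x w_x^2$.

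Substituting these into Theorem~\ref{thm:simple-tab-moments}, the numerator of $\gamma_p$ is already the claimed expression. The denominator becomes $\log(e^2 m / (4l(1 - l/m)))$, which I would replace by the smaller $\log(e^2 m / (4l))$ using $l(1 - l/m) \le l$; this only enlarges $\gamma_p$. To conclude, I would note that $\Psi_p(M, \sigma^2)$ is nondecreasing when both $M$ and $\sigma^2$ are multiplied by a common factor $\gamma \ge 1$ (a short case check of its piecewise definition, since in case~1 it scales as $\gamma^{1/p+(1-2/p)} \ge 1$, in case~2 as $\gamma^{1/2}$, and in case~3 the linear numerator dominates the logarithmic denominator), so inflating $\gamma_p^{c-1}$ only inflates the bound. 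This monotonicity check is the only even mildly nontrivial ingredient; the rest of the proof is direct substitution into the theorem.
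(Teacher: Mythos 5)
Your proposal is correct and follows the same route the paper takes: the corollary is exactly a direct substitution of the value function $v(x,j)=w_x(\indicator{j<l}-l/m)$ into Theorem~\ref{thm:simple-tab-moments}, and all your computations ($M_v\le\max_x|w_x|$, $\sigma_v^2=(\sum_x w_x^2)\tfrac{l}{m}(1-\tfrac{l}{m})$, $\norm{v[x]}{1}^2/\norm{v[x]}{2}^2=4l(1-l/m)$, $\sum_x\norm{v[x]}{2}^2/\max_x\norm{v[x]}{2}^2=\sum_x w_x^2/\max_x w_x^2$) match the calculations the paper itself performs in the surrounding discussion. The only thing worth noting is that the monotonicity of $\Psi_p(\lambda M,\lambda\sigma^2)$ in $\lambda\ge 1$ that you check case-by-case is cleaner to obtain by chaining \cref{eq:psi-growth} and \cref{eq:psi-reverse-growth} from Lemma~\ref{lem:psi-properties} (giving $\Psi_p(M,\sigma^2)\le\lambda\Psi_p(M,\sigma^2)\le\Psi_p(\lambda^2 M,\lambda^2\sigma^2)$), which sidesteps the issue that the active case of $\Psi_p$ can change as $\lambda$ grows.
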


A natural question is how close \Cref{thm:simple-tab-moments} is to being tight.
We show that if $\log(m) + \log\!\left(\tfrac{\sum_{x \in \Sigma^c} \norm{v[x]}{2}^2}{\max_{x \in \Sigma^c} \norm{v[x]}{2}^2}\right)/c = O\left( \log\!\left(1 + m\left(\max_{x \in \Sigma^c} \frac{\norm{v[x]}{1}^2}{\norm{v[x]}{2}^2}\right)^{-1} \right) \right)$ then the result is tight up to a universal constant depending only $c$.
Formally, we prove the following lemma.

\restateable[theorem]{simple-tab-moments-lower-bound}{
    Let $h \colon \Sigma^c \to [m]$ be a simple tabulation hash function, and $2 \le p \le L_1 \abs{\Sigma} \log(m)$, then there exists a value function, $v \colon U \times [m] \to \R$, where $\sum_{j \in [m]} v(x, j) = 0$ for all keys $x \in \Sigma^c$, and for which
    \begin{align*} 
        \pnorm{\sum_{x \in \Sigma^c} v(x, h(x))}{p}
            \ge K'_c \Psi_p\left(\gamma_p^{c - 1} M_v,  \gamma_p^{c - 1} \sigma_v^2  \right)
        \; ,
    \end{align*}
    where $K'_c = L_1^{c}$ and $L_1$ is a universal constant, and 
    \begin{align*}
        \gamma_p = \max\!\set{1, \frac{p}{\log\!\left(e^2 m \left(\max_{x \in \Sigma^c} \tfrac{\norm{v[x]}{1}^2}{\norm{v[x]}{2}^2} \right)^{-1} \right)}}
    \end{align*}
}{
    Let $h \colon \Sigma^c \to [m]$ be a simple tabulation hash function, and $2 \le p \le L_1 \abs{\Sigma} \log(m)$, then there exists a value function, $v \colon U \times [m] \to \R$, where $\sum_{j \in [m]} v(x, j) = 0$ for all keys $x \in \Sigma^c$, and for which
    \begin{align*} 
        \pnorm{\sum_{x \in \Sigma^c} v(x, h(x))}{p}
            \ge K'_c \Psi_p\left(\gamma_p^{c - 1} M_v,  \gamma_p^{c - 1} \sigma_v^2  \right)
        \; ,
    \end{align*}
    where $K'_c = L_1^{c}$ and $L_1$ is a universal constant, and 
    \begin{align*}
        \gamma_p = \max\!\set{1, \frac{p}{\log\!\left(e^2 m \left(\max_{x \in \Sigma^c} \tfrac{\norm{v[x]}{1}^2}{\norm{v[x]}{2}^2} \right)^{-1} \right)}}
    \end{align*}
}

\newcommand{\theoremSimpleMomentsLower}[1][]{
    \begin{theorem}\label{#1}
        Let $h \colon \Sigma^c \to [m]$ be a simple tabulation hash function, and $2 \le p \le L_1 \abs{\Sigma} \log(m)$, then there exists a value function, $v \colon U \times [m] \to \R$, where $\sum_{j \in [m]} v(x, j) = 0$ for all keys $x \in \Sigma^c$, and for which
        \begin{align} 
            \pnorm{\sum_{x \in \Sigma^c} v(x, h(x))}{p}
                \ge K'_c \Psi_p\left(\gamma_p^{c - 1} M_v,  \gamma_p^{c - 1} \sigma_v^2  \right)
            \; ,
        \end{align}
        where $K'_c = L_1^{c}$ and $L_1$ is a universal constant, and 
        \begin{align*}
            \gamma_p = \max\!\set{1, \frac{p}{\log\!\left(\min_{x \in \Sigma^c} \tfrac{e m \norm{v[x]}{2}^2}{\norm{v[x]}{1}^2}\right)}}
        \end{align*}
    \end{theorem}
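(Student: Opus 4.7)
The plan is to construct an explicit ``combinatorial cube'' bad instance and to derive the lower bound via a planted-event argument on the tabulation table. Fix parameters $a$ (the cube width) and $l \le m$ (a threshold), both to be tuned later; let $A \subseteq \Sigma$ with $\abs{A} = a$ and, on the cube $A^c \subseteq \Sigma^c$, take
\[
    v(x, j) = \indicator{x \in A^c}\!\left( \indicator{j < l} - l/m\right),
\]
so that $M_v \asymp 1$, $\sigma_v^2 \asymp a^c l / m$, and $\max_x \norm{v[x]}{1}^2/\norm{v[x]}{2}^2 \asymp l$. Hence the denominator in the definition of $\gamma_p$ is $\asymp \log(m/l)$.

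The core event is $E = \set{T(i, \alpha) = 0 : \alpha \in A,\, 0 \le i < c - 1}$, which has probability $m^{-a(c - 1)}$. Under $E$, every key $(\alpha_0, \ldots, \alpha_{c - 1}) \in A^c$ hashes to $T(c - 1, \alpha_{c - 1})$, so all $a^{c - 1}$ keys sharing a common last character collide. Consequently, conditional on $E$,
\[
    V := \sum_{x \in A^c} v(x, h(x)) = a^{c - 1} \sum_{\alpha \in A}\!\left( \indicator{T(c - 1, \alpha) < l} - l/m\right),
\]
which is $a^{c - 1}$ times a sum of $a$ independent centered Bernoullis of parameter $l/m$; by \Cref{thm:sampling-moments-lower-bound} its conditional $p$-norm is at least a constant times $a^{c - 1}\Psi_p(1, a l / m)$.

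To lift the conditional moment bound to an unconditional one I bound $\pnorm{V}{p}^p \ge \ep{V^p \indicator{E'}}$ for a carefully chosen sub-event $E' \subseteq E$ on which $V$ attains its typical ``large'' value; this costs a factor $\prb{E}^{1/p} = m^{-a(c - 1)/p}$ in the $p$-norm. For this loss to be absorbable into a constant depending only on $c$, the cube width must satisfy $a(c-1)\log m \lesssim p\log(m/l)$, i.e., $a \asymp p/\log(m/l) \asymp \gamma_p$. The hypothesis $p \le L_1\abs{\Sigma}\log m$ is exactly what ensures $a \le \abs{\Sigma}$, so that the cube fits inside $\Sigma^c$. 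The threshold $l$ is then tuned separately in each of the three regimes of $\Psi_p$: small $l$ produces the Poisson-tail regime, moderate $l$ produces the sub-Gaussian regime, and large $l$ produces the variance-dominated regime, and in each case one verifies that $a^{c - 1}\Psi_p(1, al/m) \gtrsim \Psi_p(\gamma_p^{c - 1}M_v, \gamma_p^{c - 1}\sigma_v^2)$ up to a constant depending on $c$.

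The main obstacle is the regime-by-regime case analysis: the optimal $(a, l)$ differs across the three pieces of $\Psi_p$, and in each case one must match the planted-event cost $m^{-a(c - 1)/p}$ against the moment gain so that only constants depending on $c$ remain. A secondary subtlety is the centering $-l/m$ imposed by the hypothesis $\sum_j v(x, j) = 0$: one must confirm that conditional on $E$ the surviving $a$ independent Bernoullis $\indicator{T(c-1, \alpha) < l}$ produce a centered sum whose $p$-norm is genuinely of order $a^{c-1}\Psi_p(1, al/m)$, rather than one that cancels due to the centering.
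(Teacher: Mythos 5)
Your construction (threshold value function on a cube $A^c$, plant $T(i,\alpha)=0$ for $i<c-1$) is genuinely different in shape from the paper's (single-bin value function on a rectangle $[1+\floor{\gamma_p}]^{c-1}\times\Sigma$, plant $T(i,\alpha)=T(i,0)$), and the core planted-collision idea is the same. But two of your design choices create real obstacles.

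First, the cube. After conditioning, your surviving sum is $a^{c-1}\sum_{\alpha\in A}(\indicator{T(c-1,\alpha)<l}-l/m)$, a sum over only $a\approx\gamma_p$ independent terms. To lower-bound its $p$-norm by $\Psi_p(1,al/m)$ you have to invoke \Cref{thm:sampling-moments-lower-bound} (or Lata\l{}a's bound directly) on a universe of size $a$, which requires roughly $p\lesssim a\log m$. With $a\approx\gamma_p=p/\log(m/l)$ this becomes $\log(m/l)\lesssim\log m$, a razor-thin margin whose truth depends on the unnamed universal constant in that theorem. The paper avoids this entirely by letting the last coordinate range over all of $\Sigma$: the conditional sum has $\abs{\Sigma}$ terms, so the requirement is $p\lesssim\abs{\Sigma}\log m$, which is \emph{exactly} the hypothesis. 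That hypothesis is there to make Lata\l{}a's lower bound applicable to the inner sum, not, as you suggest, to ensure the cube fits inside $\Sigma^c$ (that is a much weaker and essentially automatic constraint).

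Second, the $l$-tuning and the event-cost bookkeeping. Your reduction ``$a(c-1)\log m\lesssim p\log(m/l)$, i.e.\ $a\asymp p/\log(m/l)$'' does not follow: absorbing $\Pr[E]^{1/p}=m^{-a(c-1)/p}$ into the allowed $L_1^c$ requires $a\log m/p\lesssim\log(1/L_1)$, which with $a=\gamma_p=p/\log(m/l)$ is $\log m/\log(m/l)\lesssim\log(1/L_1)$ --- an \emph{upper} bound on $l$ (roughly $l\le m^{1-\epsilon}$). Meanwhile the inner-sum constraint from the previous paragraph is a \emph{lower} bound on $l$. These squeeze $l$ into a window that may be empty depending on the constants, and pushing $l$ up to the variance regime as you propose would violate the event-cost side. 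The paper sidesteps all of this by fixing $l=1$: the single-bin value function gives $\log(em/\tilde l)=\Theta(\log m)$ automatically, the event cost is $e^{-(c-1)}$ cleanly, and the three regimes of $\Psi_p$ are produced not by varying $v$ but by varying $p$ --- they are already handled inside \Cref{thm:sampling-moments-lower-bound}. Your regime-by-regime tuning of $l$ is both unnecessary and the source of your constraint conflict. (On your ``secondary subtlety'': it is a non-issue. $E$ constrains only $T(i,\cdot)$ for $i<c-1$, so $T(c-1,\alpha)$ is unconditionally uniform and each term $\indicator{T(c-1,\alpha)<l}-l/m$ is exactly centered.)
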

}



\paragraph{Mixed Tabulation Hashing.}
The results of simple tabulation hashing work well when the range is large and when the mass of the value function is on few coordinates.
We show that mixed tabulation hashing works well even if the range is small.

\restateable[theorem]{mixed-tab-moments}{
    Let $h \colon \Sigma^c \to [m]$ be a mixed tabulation function with $d \ge 1$ derived characters, $v \colon \Sigma^{c} \times [m] \to \R$ a value function, and assume that $\sum_{j \in [m]} v(x, j) = 0$ for all keys $x \in \Sigma^c$.
    Define the random variable $V^{\text{\normalfont mixed}}_v = \sum_{x \in \Sigma^c} v(x, h(x))$.
    For all $p \ge 2$ then
    \begin{align}
        \pnorm{V^{\text{\normalfont mixed}}_v}{p}
            \le \Psi_p\left(K_c \gamma_p^{c} M_v,  K_c \gamma_p^{c} \sigma_v^2 \right)
    \end{align}
    where $K_c = L_1 \left(L_2 c \right)^{c}$, $L_1$ and $L_2$ are universal constants, and 
    \begin{align*}
        \gamma_p = \max\!\set{1,  \frac{\log(m)}{\log(\abs{\Sigma})}, \frac{p}{\log(\abs{\Sigma} )} }
        \; .
    \end{align*}
}{
    Let $h \colon \Sigma^c \to [m]$ be a mixed tabulation function with $d \ge 1$ derived characters, $v \colon \Sigma^{c} \times [m] \to \R$ a value function, and assume that $\sum_{j \in [m]} v(x, j) = 0$ for all keys $x \in \Sigma^c$.
    Define the random variable $V^{\text{\normalfont mixed}}_v = \sum_{x \in \Sigma^c} v(x, h(x))$.
    For all $p \ge 2$ then
    \begin{align}\label{eq:mixed-tab-moments}
        \pnorm{V^{\text{\normalfont mixed}}_v}{p}
            \le \Psi_p\left(K_c \gamma_p^{c} M_v,  K_c \gamma_p^{c} \sigma_v^2 \right)
    \end{align}
    where $K_c = L_1 \left(L_2 c \right)^{c}$, $L_1$ and $L_2$ are universal constants, and 
    \begin{align*}
        \gamma_p = \max\!\set{1,  \frac{\log(m)}{\log(\abs{\Sigma})}, \frac{p}{\log(\abs{\Sigma} )} }
        \; .
    \end{align*}
}

Usually, in hashing contexts, we do not map to a much larger domain, i.e., we will usually have that $m \le \abs{U}^{\gamma}$ for some constant $\gamma \ge 1$.
If this is the case then we can obtain the following nice tail bound for mixed tabulation hashing by using Markov's inequality.
\begin{corollary}\label{cor:mixed-tail}
    Let $h \colon \Sigma^c \to [m]$ be a mixed tabulation function with $d \ge 1$ derived characters, $v \colon \Sigma^{c} \times [m] \to \R$ a value function, and assume that $\sum_{j \in [m]} v(x, j) = 0$ for all keys $x \in \Sigma^c$.
    Define the random variable $V^{\text{\normalfont mixed}}_v = \sum_{x \in \Sigma^c} v(x, h(x))$.
    If $m \le \abs{U}^{\gamma}$ for a value $\gamma \ge 1$ then for all $t \ge 0$
    \begin{align*}
        \prb{\abs{V^{\text{\normalfont mixed}}_v} \ge t}
            \le \exp\!\left(-\tfrac{\sigma_v^2}{M_v^2} \mathcal{C}\!\left(\tfrac{t M_v}{ \sigma_v^2} \right) / K_{c, \gamma} \right) + \abs{U}^{-\gamma}
        \; ,
    \end{align*}
    where $\mathcal{C}(x) = (x + 1)\log(x + 1) - x$, $K_{c, \gamma} = L_1 \left(L_2 c^2 \gamma \right)^{c}$, and $L_1$ and $L_2$ are universal constants.
\end{corollary}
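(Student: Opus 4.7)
The plan is to combine \Cref{thm:mixed-tab-moments} with Markov's inequality applied at a carefully chosen exponent $p$. Under the hypothesis $m\le\abs{U}^{\gamma}=\abs{\Sigma}^{c\gamma}$ we have $\log(m)/\log\abs{\Sigma}\le c\gamma$, so for every $p\le p_{\max}:=\gamma\log\abs{U}=c\gamma\log\abs{\Sigma}$ the parameter $\gamma_p$ appearing in \Cref{thm:mixed-tab-moments} is at most $c\gamma$ (using $c,\gamma\ge 1$). Hence $K_c\gamma_p^{c}\le K_c(c\gamma)^c$, and after absorbing universal constants we obtain
\[
\pnorm{V^{\text{\normalfont mixed}}_v}{p}\le \Psi_p\!\left(K_{c,\gamma} M_v,\; K_{c,\gamma}\sigma_v^2\right)
\]
for every $2\le p\le p_{\max}$, where $K_{c,\gamma}=L_1(L_2 c^2\gamma)^c$ with $L_1,L_2$ possibly enlarged compared to their counterparts in \Cref{thm:mixed-tab-moments}.

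Given this, Markov's inequality yields $\prb{\abs{V^{\text{\normalfont mixed}}_v}\ge t}\le(\Psi_p(K_{c,\gamma}M_v,K_{c,\gamma}\sigma_v^2)/t)^{p}$ for every admissible $p$. The next step is to pick $p$ exactly as in the three-case computation leading to \cref{eq:psi-tail}, but now with $M=K_{c,\gamma}M_v$, $\sigma^2=K_{c,\gamma}\sigma_v^2$, and $L=1$. In each of the sub-Gaussian, intermediate, and sub-Poissonian regimes the chosen $p$ makes the ratio at most $e^{-1}$, and a direct comparison with the Bennett exponent $\tfrac{\sigma_v^2}{M_v^2}\mathcal{C}\!\left(\tfrac{tM_v}{\sigma_v^2}\right)$ shows that the exponent produced is a universal constant multiple of this Bennett exponent divided by $K_{c,\gamma}$. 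Absorbing the universal constants into $K_{c,\gamma}$ by enlarging $L_1,L_2$ delivers the main exponential term $\exp\!\left(-\tfrac{\sigma_v^2}{M_v^2}\mathcal{C}(tM_v/\sigma_v^2)/K_{c,\gamma}\right)$.

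The one delicate point is what happens when $t$ is so large that the ideal $p$ from the above recipe exceeds the cap $p_{\max}$. In that regime I simply set $p=p_{\max}$; the transition occurs exactly when $\Psi_{p_{\max}}(K_{c,\gamma}M_v,K_{c,\gamma}\sigma_v^2)\le t/e$, and Markov's inequality then gives $\prb{\abs{V^{\text{\normalfont mixed}}_v}\ge t}\le e^{-p_{\max}}=\abs{U}^{-\gamma}$. Adding the exponential term from the first step and the additive $\abs{U}^{-\gamma}$ term from this fallback yields a bound valid for all $t\ge 0$, since in each regime the relevant term is a true upper bound on the probability while the other is nonnegative.

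The main obstacle is essentially bookkeeping: verifying that the universal constants produced across the three regimes of $\Psi_p$, together with the constants lost when converting the exponents of \cref{eq:psi-tail} into the Bennett exponent, can all be absorbed into $K_{c,\gamma}=L_1(L_2 c^2\gamma)^c$ by enlarging $L_1$ and $L_2$, and that the fallback to $p=p_{\max}$ yields exactly the additive $\abs{U}^{-\gamma}$ rather than something larger.
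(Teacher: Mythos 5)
Your proposal is correct and follows essentially the same route as the paper: apply \Cref{thm:mixed-tab-moments} for $2\le p\le \gamma\log\abs{U}$, observe that this range forces $\gamma_p\le c\gamma$, absorb the resulting $(c\gamma)^c$ factor into $K_{c,\gamma}$, and then run the Markov-inequality argument behind \cref{eq:psi-tail}. The one place you go beyond the paper's terse ``by the same method as in \cref{eq:psi-tail}'' is that you make explicit where the additive $\abs{U}^{-\gamma}$ term comes from: once the exponent $p$ that would make $\Psi_p(\cdot,\cdot)\le t/e$ exceeds the cap $p_{\max}=\gamma\log\abs{U}$, you fall back to $p=p_{\max}$, and since $\Psi_p$ is nondecreasing in $p$ the condition $\Psi_{p_{\max}}(\cdot,\cdot)\le t/e$ still holds, yielding $e^{-p_{\max}}=\abs{U}^{-\gamma}$. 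This is exactly the right bookkeeping, and the constant absorption you flag as the ``main obstacle'' works out just as in the derivation of \cref{eq:psi-tail}.
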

\begin{proof}
    The idea is to combine \Cref{thm:mixed-tab-moments} and Markov's inequality.
    We use \Cref{thm:mixed-tab-moments} for $2 \le p \le \gamma \log \abs{U}$ to get that
    \begin{align*}
        \pnorm{V^{\text{\normalfont mixed}}_v}{p} \le \Psi_p\left(K_c \gamma_p^{c} M_v,  K_c \gamma_p^{c} \sigma_v^2 \right)
        \; ,
    \end{align*}
    where we can bound $\gamma_p$ by
    \begin{align*}
        \gamma_p = \max\!\set{1, \frac{\log(m)}{\log(\abs{\Sigma})}, \frac{p}{\log(\abs{\Sigma} )} }
            \le c \gamma
        \; .
    \end{align*}
    So we have that
    \begin{align*}
        \pnorm{V^{\text{\normalfont mixed}}_v}{p} \le \Psi_p\left(\left(L_2 c^2 \gamma \right)^{c} M_v, \left(L_2 c^2 \gamma \right)^{c} \sigma_v^2 \right)
        \; .
    \end{align*}
    Now by the same method as in \cref{eq:psi-tail}, we get the result.
\end{proof}

\paragraph{Adding a query element}
In many cases, we would like to prove that these properties continue to hold even when conditioning on a query element.
An example would be the case where we are interested in the weight of the elements in the bin for which the query element, $q$, hashes to,
i.e., we would like that $\sum_{x \in S} w_x \indicator{h(x) = h(q)}$ is concentrated when conditioning on $q$.
Formally, this corresponds to having the value function $v \colon \Sigma^c \times [m] \times [m]$ defined by $v(x, j, k) = w_x\indicator{x \in S}\indicator{j = k}$ and then proving concentration on 
$\sum_{x \in \Sigma^c \setminus \set{q}} v(x, h(x), h(q))$ when conditioning on $q$.
We show that this holds both for simple tabulation and mixed tabulation.

\restateable[theorem]{simple-tab-query}{
    Let $h \colon \Sigma^c \to [m]$ be a simple tabulation hash function and let $q \in \Sigma^c$ be a designated query element.
    Let $v \colon \Sigma^{c} \times [m] \times [m] \to \R$ a value function, and assume that $\sum_{j \in [m]} v(x, j, k) = 0$ for all keys $x \in U$ and all $k \in [m]$.
    Define the random variable $V^{\text{\normalfont simple}}_{v, q} = \sum_{x \in \Sigma^c \setminus \set{q}} v(x, h(x), h(q))$ and the random variables
    \begin{align*}
        M_{v, q}        &= \max_{x \in \Sigma^c \setminus \set{q}, j \in [m]} \abs{v(x, j, h(q))} \; , \\
        \sigma^2_{v, q} &= \frac{1}{m} \sum_{x \in \Sigma^c \setminus \set{q}} \sum_{j \in [m]} v(x, j, h(q))^2 \; ,
    \end{align*}
    which only depend on the randomness of $h(q)$.
    Then for all $p \ge 2$
    \begin{align*} 
        \epcond{\left(V^{\text{\normalfont simple}}_{v, q}\right)^p}{h(q)}^{1/p}
            \le \Psi_p\left(K_c \gamma_p^{c - 1} M_{v, q}, K_c \gamma_p^{c - 1} \sigma_{v, q}^2  \right)
        \; ,
    \end{align*}
    where $K_c = L_1 \left(L_2 c \right)^{c - 1}$, $L_1$ and $L_2$ are universal constants, and 
    \begin{align*}
        \gamma_p = \frac{\max\!\set{\log(m) + \log\!\left(\tfrac{\sum_{x \in \Sigma^c} \norm{v[x]}{2}^2}{\max_{x \in \Sigma^c} \norm{v[x]}{2}^2}\right)/c, p}}
            {\log\!\left( e^2 m \left( \max_{x \in \Sigma^c} \frac{\norm{v[x]}{1}^2}{\norm{v[x]}{2}^2} \right)^{-1} \right)}
        \; .
    \end{align*}
}{
    Let $h \colon \Sigma^c \to [m]$ be a simple tabulation hash function and let $q \in \Sigma^c$ be a designated query element.
    Let $v \colon \Sigma^{c} \times [m] \times [m] \to \R$ a value function, and assume that $\sum_{j \in [m]} v(x, j, k) = 0$ for all keys $x \in U$ and all $k \in [m]$.
    Define the random variable $V^{\text{\normalfont simple}}_{v, q} = \sum_{x \in \Sigma^c \setminus \set{q}} v(x, h(x), h(q))$ and the random variables
    \begin{align*}
        M_{v, q}        &= \max_{x \in \Sigma^c \setminus \set{q}, j \in [m]} \abs{v(x, j, h(q))} \; , \\
        \sigma^2_{v, q} &= \frac{1}{m} \sum_{x \in \Sigma^c \setminus \set{q}} \sum_{j \in [m]} v(x, j, h(q))^2 \; ,
    \end{align*}
    which only depend on the randomness of $h(q)$.
    Then for all $p \ge 2$
    \begin{align*} 
        \epcond{\left(V^{\text{\normalfont simple}}_{v, q}\right)^p}{h(q)}^{1/p}
            \le \Psi_p\left(K_c \gamma_p^{c - 1} M_{v, q}, K_c \gamma_p^{c - 1} \sigma_{v, q}^2  \right)
        \; ,
    \end{align*}
    where $K_c = L_1 \left(L_2 c \right)^{c - 1}$, $L_1$ and $L_2$ are universal constants, and 
    \begin{align*}
        \gamma_p = \frac{\max\!\set{\log(m) + \log\!\left(\tfrac{\sum_{x \in \Sigma^c} \norm{v[x]}{2}^2}{\max_{x \in \Sigma^c} \norm{v[x]}{2}^2}\right)/c, p}}
            {\log\!\left( e^2 m \left( \max_{x \in \Sigma^c} \frac{\norm{v[x]}{1}^2}{\norm{v[x]}{2}^2} \right)^{-1} \right)}
        \; .
    \end{align*}
}

\restateable[theorem]{mixed-tab-query}{
    Let $h \colon \Sigma^c \to [m]$ be a mixed tabulation hash function and let $q \in \Sigma^c$ be a designated query element.
    Let $v \colon \Sigma^{c} \times [m] \times [m] \to \R$ a value function, and assume that $\sum_{j \in [m]} v(x, j, k) = 0$ for all keys $x \in U$ and all $k \in [m]$.
    Define the random variable $V^{\text{\normalfont simple}}_{v, q} = \sum_{x \in \Sigma^c \setminus \set{q}} v(x, h(x), h(q))$ and the random variables
    \begin{align*}
        M_{v, q}        &= \max_{x \in \Sigma^c \setminus \set{q}, j \in [m]} \abs{v(x, j, h(q))} \; , \\
        \sigma^2_{v, q} &= \frac{1}{m} \sum_{x \in \Sigma^c \setminus \set{q}} \sum_{j \in [m]} v(x, j, h(q))^2 \; ,
    \end{align*}
    which only depend on the randomness of $h(q)$.
    For all $p \ge 2$ then
    \begin{align}
        \epcond{\left(V^{\text{\normalfont simple}}_{v, q}\right)^p}{h(q)}^{1/p}
            \le \Psi_p\left(K_c \gamma_p^{c} M_{v, q},  K_c \gamma_p^{c} \sigma_{v, q}^2 \right)
    \end{align}
    where $K_c = L_1 \left(L_2 c \right)^{c}$, $L_1$ and $L_2$ are universal constants, and 
    \begin{align*}
        \gamma_p = \max\!\set{1,  \frac{\log(m)}{\log(\abs{\Sigma})}, \frac{p}{\log(\abs{\Sigma} )} }
        \; .
    \end{align*}
}{
    Let $h \colon \Sigma^c \to [m]$ be a mixed tabulation hash function and let $q \in \Sigma^c$ be a designated query element.
    Let $v \colon \Sigma^{c} \times [m] \times [m] \to \R$ a value function, and assume that $\sum_{j \in [m]} v(x, j, k) = 0$ for all keys $x \in U$ and all $k \in [m]$.
    Define the random variable $V^{\text{\normalfont simple}}_{v, q} = \sum_{x \in \Sigma^c \setminus \set{q}} v(x, h(x), h(q))$ and the random variables
    \begin{align*}
        M_{v, q}        &= \max_{x \in \Sigma^c \setminus \set{q}, j \in [m]} \abs{v(x, j, h(q))} \; , \\
        \sigma^2_{v, q} &= \frac{1}{m} \sum_{x \in \Sigma^c \setminus \set{q}} \sum_{j \in [m]} v(x, j, h(q))^2 \; ,
    \end{align*}
    which only depend on the randomness of $h(q)$.
    For all $p \ge 2$ then
    \begin{align}
        \epcond{\left(V^{\text{\normalfont simple}}_{v, q}\right)^p}{h(q)}^{1/p}
            \le \Psi_p\left(K_c \gamma_p^{c} M_{v, q},  K_c \gamma_p^{c} \sigma_{v, q}^2 \right)
    \end{align}
    where $K_c = L_1 \left(L_2 c \right)^{c}$, $L_1$ and $L_2$ are universal constants, and 
    \begin{align*}
        \gamma_p = \max\!\set{1,  \frac{\log(m)}{\log(\abs{\Sigma})}, \frac{p}{\log(\abs{\Sigma} )} }
        \; .
    \end{align*}
}


\subsection{Technical Overview}\label{sec:techniques}

\subsubsection{Fully Random Hashing}

\paragraph{Sub-Gaussian bounds}
A random variable $X$ is said to be sub-Gaussian with parameter $\sigma$ if $\pnorm{X}{p} \le \sqrt{p} \sigma$ for all $p \ge 2$.
It is a well-known fact that the sum of independent bounded random variables are sub-Gaussian.
In the context of fully random hashing, we have that
\begin{align}\label{eq:p-norm-hoeffding}
    \pnorm{\sum_{x \in U} v(x, h(x))}{p}
        \le \sqrt{4 p} \sqrt{\sum_{x \in U} \norm{v[x]}{\infty}^2 }
    \; .
\end{align}
A natural question is whether this is the best sub-Gaussian bound we can get.
If we are just interested in the contribution to a single bin, i.e., $v(x, j) = w_x([j = 0] - \tfrac{1}{m})$, then we can obtain a better sub-Gaussian bound.
By using the result of Oleszkiewicz~\cite{oleszkiewicz2003nonsymmetric}, we get that
\begin{align}\label{eq:p-norm-bernoulli}
    \pnorm{\sum_{x \in U} v(x, h(x))}{p}
        \le L \sqrt{\frac{p}{\log m}} \sqrt{\sum_{x \in U} w_x^2 }
    \; ,
\end{align}
where $L$ is a universal constant.
This shows that \cref{eq:p-norm-hoeffding} can be improved in certain situations.
We improve on this by proving a generalization of \cref{eq:p-norm-bernoulli}.
We show that
\begin{align}\label{eq:best-sub-gaussian}
    \pnorm{\sum_{x \in U} v(x, h(x))}{p}
        \le L \sqrt{\frac{p}{\log\left(\frac{e^2 m \sum_{x \in U} \norm{v[x]}{\infty}^2}{\sum_{x \in U} \norm{v[x]}{2}^2 }\right)}} \sqrt{\sum_{x \in U} \norm{v[x]}{\infty}^2}
    \; ,
\end{align}
where $L$ is a universal constant.
It is easy to check that if $v(x, j) = w_x([j = 0] - \tfrac{1}{m})$ then it reduces to \cref{eq:p-norm-bernoulli} and that it is stronger than \cref{eq:p-norm-hoeffding}.

\paragraph{Moments for general random variables}
As part of our analysis we develop a couple of lemmas for general random variables which might be of independent interest.
We prove a lemma that provides a simple bound for weighted sums of independent and identically distributed random variables.
\restateable[lemma]{gen-moment-sym}{
    Let $(X_i)_{i \in [n]}$ and $X$ be independent and identically distributed symmetric random variables, and let $(a_i)_{i \in [n]}$ be a sequence of reals.\footnote{A symmetric random variable, $X$, is a random variable that is symmetric around zero, i.e., $\prb{X \ge t} = \prb{-X \ge t}$ for all $t \ge 0$.}
    If $p \ge 2$ is an even integer then
    \begin{align*}
        \pnorm{\sum_{i \in [n]} a_{i} X_i}{p}
            \le K \sup\setbuilder{\frac{p}{s} \left( \frac{\sum_{i \in [n]} a_i^s}{p} \right)^{1/s} \pnorm{X}{s}}{2 \le s \le p}
        \; ,
    \end{align*}
    where $K \le 4 e$ is a universal constant.
}{
    Let $(X_i)_{i \in [n]}$ and $X$ be independent and identically distributed symmetric random variables, and let $(a_i)_{i \in [n]}$ be a sequence of reals.
    If $p \ge 2$ is an even integer then
    \begin{align*}
        \pnorm{\sum_{i \in [n]} a_{i} X_i}{p}
            \le K \sup\setbuilder{\frac{p}{s} \left( \frac{\sum_{i \in [n]} a_i^s}{p} \right)^{1/s} \pnorm{X}{s}}{2 \le s \le p}
        \; ,
    \end{align*}
    where $K \le 4 e$ is a universal constant.
}
If we consider Laplace distributed random variables then it is possible to show that \Cref{lem:gen-moment-sym} is tight up to a universal constant.
Thus a natural question to ask is whether \Cref{lem:gen-moment-sym} is tight, i.e., can we prove a matching lower bound.
But unfortunately, if you consider Gaussian distributed variables then we see that \Cref{lem:gen-moment-sym} is not tight.
It would be nice if there existed a simple modification of \Cref{lem:gen-moment-sym} which had a matching lower bound.

\paragraph{Moments of functions of random variables}
As part of the analysis of tabulation hashing, we will need to analyze random variables of the form $\Psi_p(X, Y)$ where $X$ and $Y$ are random variables.
More precisely, we have to bound $\pnorm{\Psi_p(X, Y)}{p}$.
It is not immediately clear how one would do this but we prove a general lemma that helps us in this regard.
\restateable[lemma]{fn-moment}{
    Let $f \colon \R_{\ge 0}^n \to \R_{\ge 0}$ be a non-negative function which is monotonically increasing in every argument, and assume that there exist positive reals $(\alpha_i)_{i \in [n]}$ and $(t_i)_{i \in [n]}$ such that for all $\lambda \ge 0$
        \begin{align*}
            f(\lambda^{\alpha_0} t_0, \ldots, \lambda^{\alpha_{n - 1}} t_{n - 1})
                \le \lambda f(t_0, \ldots, t_{n - 1})
            \; .
        \end{align*}
        Let $(X_i)_{i \in [n]}$ be non-negative random variables.
        Then for all $p \ge 1$ we have that
        \begin{align*}
            \pnorm{f(X_0, \ldots, X_{n - 1})}{p}
                \le n^{1/p} \max_{i \in [n]} \left( \frac{\pnorm{X_i}{p/\alpha_i}}{t_i} \right)^{1/\alpha_i} f(t_0, \ldots, t_{n - 1})
            \; .
        \end{align*}
}{
    Let $f \colon \R_{\ge 0}^n \to \R_{\ge 0}$ be a non-negative function which is monotonically increasing in every argument, and assume that there exist positive reals $(\alpha_i)_{i \in [n]}$ and $(t_i)_{i \in [n]}$ such that for all $\lambda \ge 0$
        \begin{align*}
            f(\lambda^{\alpha_0} t_0, \ldots, \lambda^{\alpha_{n - 1}} t_{n - 1})
                \le \lambda f(t_0, \ldots, t_{n - 1})
            \; .
        \end{align*}
        Let $(X_i)_{i \in [n]}$ be non-negative random variables.
        Then for all $p \ge 1$ we have that
        \begin{align*}
            \pnorm{f(X_0, \ldots, X_{n - 1})}{p}
                \le n^{1/p} \max_{i \in [n]} \left( \frac{\pnorm{X_i}{p/\alpha_i}}{t_i} \right)^{1/\alpha_i} f(t_0, \ldots, t_{n - 1})
            \; .
        \end{align*}
}

If we can choose $t_i = \pnorm{X_i}{p/\alpha_i}$ for all $i \in [n]$, then we get the nice expression
\begin{align*}
    \pnorm{f(X_0, \ldots, X_{n - 1})}{p}
        \le n^{1/p} f(\pnorm{X_0}{p/\alpha_0}, \ldots, \pnorm{X_{n - 1}}{p/\alpha_{n - 1}})
    \; .
\end{align*}
Now the result is natural to compare to the triangle inequality that says that $\pnorm{X + Y}{p} \le \pnorm{X}{p} + \pnorm{Y}{p}$, which corresponds to considering $f(x, y) = x + y$, and to Cauchy-Schwartz that says that $\pnorm{X Y}{p} \le \pnorm{X}{2p} \pnorm{Y}{2p}$, which corresponds to $f(x, y) = xy$.
These two examples might point to that the $n^{1/p}$ is superfluous, but by considering $f(x_0, \ldots, x_{n - 1}) = \max\!\set{x_0, \ldots, x_{n - 1}}$ and Gaussian distributed variables, it can be shown that \Cref{lem:fn-moment} is tight up to a constant factor.

\subsubsection{Tabulation Hashing}

\paragraph{Symmetrization}
The analyses of chaoses have mainly focused on two types of chaoses: Chaoses generated by non-negative random variables and chaoses generated by symmetric random variables.
It might appear strange that focus has not been on chaoses generated by mean zero random variables.
The reason is that a symmetrization argument reduces the analysis of chaoses generated by mean zero random variables to the analysis of chaoses generated by symmetric random variables.
More precisely, a standard symmetrization shows that
\begin{align*}
        2^{-c} \pnorm{\sum_{i_0, \ldots, i_{c - 1} \in [n]} a_{i_0, \ldots, i_{c - 1}} \prod_{j \in [c]} \eps_{i_j}^{(j)} X_{i_j}^{(j)}}{p}
            &\le \pnorm{\sum_{i_0, \ldots, i_{c - 1} \in [n]} a_{i_0, \ldots, i_{c - 1}} \prod_{j \in [c]} X_{i_j}^{(j)}}{p}
            \\&\le 2^c \pnorm{\sum_{i_0, \ldots, i_{c - 1} \in [n]} a_{i_0, \ldots, i_{c - 1}} \prod_{j \in [c]} \eps_{i_j}^{(j)} X_{i_j}^{(j)}}{p}
        \; ,
\end{align*}
where $(\eps_{i}^{(j)})_{i \in [n], j \in [n]}$ are independent Rademacher variables.\footnote{A Rademacher variable, $\eps$, is a random variable chosen uniformly from the set $\set{-1, 1}$, i.e., $\prb{\eps = -1} = \prb{\eps = 1} = \tfrac{1}{2}$.}

In our case, we can assume that $v(x, h(x))$ is a mean zero random variable but is not necessarily symmetric.
We can remedy this by using the same idea of symmetrization.
We define $\eps \colon \Sigma^c \to \set{-1, 1}$ to be a simple tabulation sign function, more precisely, we have a fully random table, $T_\eps \colon [c] \times \Sigma \to \set{-1, 1}$, and $\eps$ is then defined by $\eps(\alpha_0, \ldots, \alpha_{c - 1}) = \prod_{i \in [c]} T(i, \alpha_i)$.
We then prove that for all $p \ge 2$
\begin{align}\label{eq:simple-symmetrization}
    2^{-c}\pnorm{\sum_{x \in \Sigma^c} \eps(x) v(x, h(x))}{p}
        \le \pnorm{\sum_{x \in \Sigma^c} v(x, h(x))}{p}
        \le 2^c \pnorm{\sum_{x \in \Sigma^c} \eps(x) v(x, h(x))}{p}
    \; .
\end{align}
The power of symmetrization lies in the fact that we get to assume that $v$ is symmetric in the analysis without actually changing the value functions.

Somewhat surprisingly, we are able to improve the moment bound of Dahlgaard et al.~\cite{Dahlgaard2015} just by using symmetrization.
Their result has a doubly exponential dependence on the size of the moment, $p$, which stems from a technical counting argument where they bound the number of terms which does not have an independent factor when expanding the expression $\left(\sum_{x \in \Sigma^c} v(x, h(x)) \right)^p$.
It appears difficult to directly improve their counting argument but by using \cref{eq:simple-symmetrization} we are able to circumvent this.
Thus, just by using symmetrization and the insights of Dahlgaard et al.~\cite{Dahlgaard2015} we obtain the following result.
\begin{lemma}\label{lem:simple-gaussian-simplified}
    Let $h \colon \Sigma^c \to [m]$ be a simple tabulation function, $\eps \colon \Sigma^c \to \set{-1, 1}$ be a simple tabulation sign function, and $v \colon \Sigma^{c} \times [m] \to \R$ be a value function.
    Then for every real number $p \ge 2$
    \begin{align*}\begin{split}
        &\pnorm{\sum_{x \in \Sigma^{c}} v(x, h(x)) }{p}
            \le 2^c \pnorm{\sum_{x \in \Sigma^{c}} \eps(x) v(x, h(x)) }{p}
            \le \sqrt{4p}^{c} \sqrt{\sum_{x \in \Sigma^c} \norm{v[x]}{\infty}^2 }
        \; .
    \end{split}\end{align*}
\end{lemma}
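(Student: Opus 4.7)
The first inequality in the displayed chain is an immediate instance of the symmetrization bound~\eqref{eq:simple-symmetrization} just established, so the content of the lemma is the second inequality, i.e., showing that
\[
    \pnorm{\sum_{x \in \Sigma^c} \eps(x) v(x, h(x))}{p}
        \le \sqrt{p}^{\,c} \sqrt{\sum_{x \in \Sigma^c} \norm{v[x]}{\infty}^2}\,;
\]
multiplying by $2^c$ from symmetrization then yields the claimed $\sqrt{4p}^{\,c}$.

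The plan is to condition on the simple tabulation hash function $h$ (equivalently, on the table $T$ defining $h$) and to treat the remaining randomness, carried by the table $T_\eps$ defining $\eps$, as a decoupled Rademacher chaos. Writing $x = (\alpha_0, \ldots, \alpha_{c-1})$ and unfolding $\eps(x) = \prod_{i \in [c]} T_\eps(i, \alpha_i)$, the expression conditional on $T$ becomes $\sum_{\alpha_0, \ldots, \alpha_{c-1}} a_\alpha \prod_{i \in [c]} T_\eps(i, \alpha_i)$ with deterministic coefficients $a_\alpha = v(\alpha, h(\alpha))$, while $\{T_\eps(i, \alpha)\}_{i, \alpha}$ is partitioned into $c$ mutually independent groups indexed by $i$. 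This is precisely a decoupled Rademacher chaos of order $c$.

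I would then iterate the scalar Khintchine inequality, one group at a time. Freezing $T_\eps(i, \cdot)$ for $i \ge 1$ and setting $b_{\alpha_0} = \sum_{\alpha_1, \ldots, \alpha_{c-1}} a_\alpha \prod_{i \ge 1} T_\eps(i, \alpha_i)$, Khintchine applied to $T_\eps(0, \cdot)$ gives the conditional bound $\sqrt{p}\,\bigl(\sum_{\alpha_0} b_{\alpha_0}^2\bigr)^{1/2}$. Raising to the $p$-th power, taking expectation over the remaining randomness, and applying the triangle inequality in $\norm{\cdot}{p/2}$ (which uses $p \ge 2$) reduces the problem to controlling $\sum_{\alpha_0} \pnorm{b_{\alpha_0}}{p}^2$, each summand being a decoupled chaos of order $c - 1$ with coefficient mass $\sum_{\alpha_1, \ldots, \alpha_{c-1}} a_\alpha^2$ at fixed $\alpha_0$. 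An induction on $c$, with base case $c = 1$ being ordinary Khintchine, then delivers
\[
    \pnormcond{\sum_{x \in \Sigma^c} \eps(x) v(x, h(x))}{p}{T}
        \le \sqrt{p}^{\,c} \sqrt{\sum_{x \in \Sigma^c} v(x, h(x))^2} \; .
\]

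Finally, since $\abs{v(x, h(x))} \le \norm{v[x]}{\infty}$ pointwise in every realization of $h$, the deterministic bound $\sum_x v(x, h(x))^2 \le \sum_x \norm{v[x]}{\infty}^2$ lets me remove the conditioning on $T$ at no further cost, and combining with the symmetrization factor $2^c$ completes the proof. The main delicate point is the inductive step: the triangle inequality in $\norm{\cdot}{p/2}$ must be applied in such a way that the $\ell^2$-coefficient structure is preserved at each level of the iteration, so that the induction hypothesis can be invoked on $b_{\alpha_0}$ with the right $\ell^2$-mass. Once that bookkeeping is correctly carried out, the pointwise $\ell^\infty$-domination used to decondition is essentially free, since it eliminates any remaining dependence on $h$ from the bound.
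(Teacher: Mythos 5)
Your proof is correct, and it takes a genuinely different route from what the paper does. The paper treats \Cref{lem:simple-gaussian-simplified} as a special case of the machinery behind \Cref{lem:simple-tab-gaussian}: condition on $h$, expand the $q$-th power for even $q$, observe by symmetry of $\eps$ that only terms with $\bigxor_{i\in[q]} x_i = \emptyset$ survive, and then invoke the combinatorial XOR-constrained bound of \Cref{lem:weighted-sum-of-dependence} to get $\sqrt{q/2}^{\,c}\bigl(\sum_x v(x,h(x))^2\bigr)^{1/2}$ conditionally, finishing by the pointwise $\ell^\infty$-domination and extending to real $p$ via Jensen. You instead condition on $h$, recognize the remaining randomness as a decoupled Rademacher chaos of order $c$ in the tables $T_\eps(i,\cdot)$, and iterate scalar Khintchine one coordinate group at a time, with the $\norm{\cdot}{p/2}$-triangle inequality carrying the $\ell^2$-mass across the induction. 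Both arguments give the same $\sqrt{p}^{\,c}$ pre-symmetrization bound. Your route is more elementary and analytic in flavor, avoids the counting argument entirely, and handles all real $p\ge 2$ directly without a detour through even integers; the paper's route is tailored to the XOR structure of simple tabulation and generalizes more readily to the higher-order chaoses in \Cref{lem:simple-tab-gaussian}. One detail worth flagging: you need the sharp Khintchine constant $\sqrt{p}$ (or $\sqrt{p-1}$), which is standard but is \emph{not} what the paper's own \Cref{cor:sum-of-Rademacher} provides — that lemma has an extra $\sqrt{e}$ factor, which iterated $c$ times would inflate your final constant to $\sqrt{4ep}^{\,c}$. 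So you must cite the optimal Khintchine inequality rather than the paper's internal version.
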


\paragraph{General value functions}
For most applications of hashing, we are either interested in the number of balls landing in a bin or in the number of elements hashing below a threshold.
But we are studying the more general setting where we have a value function.
A natural question is whether it is possible to obtain a simpler proof for the simpler settings.
We do not believe this to be the case since the general setting of value functions will naturally show up when proving results by induction on $c$.
More precisely, let us consider the case where we are interested in the number of elements from a set, $S \subseteq \Sigma^c$, that hash to $0$.
We then want to bound $\sum_{x \in S} \left(\indicator{h(x) = 0} - \tfrac{1}{m}\right) = \sum_{x \in \Sigma^c} \indicator{x \in S}\left(\indicator{h(x) = 0} - \tfrac{1}{m}\right)$.
This can be rewritten as\footnote{For a partial key $y = (\beta_0, \ldots, \beta_{c - 2}) \in \Sigma^{c - 1}$, we let $h(y) = \bigxor_{i \in [c - 1]} T(i, \beta_i)$.}
\begin{align*}
    \sum_{x \in \Sigma^c} \indicator{x \in S}\left(\indicator{h(x) = 0} - \tfrac{1}{m}\right)
        = \sum_{\alpha \in \Sigma} \sum_{y \in \Sigma^{c - 1}} \indicator{(y, \alpha) \in S}\left(\indicator{h(y) \xor T(c - 1, \alpha) = 0} - \tfrac{1}{m}\right)
    \; .
\end{align*}
So if we define the value function $v' \colon \Sigma \times [m] \to \R$ by
\[v'(\alpha, j) = \sum_{y \in \Sigma^{c - 1}} \indicator{(y, \alpha) \in S}\left(\indicator{h \xor j = 0} - \tfrac{1}{m}\right) \; ,\]
then we get that $\sum_{x \in S} \left(\indicator{h(x) = 0} - \tfrac{1}{m}\right) = \sum_{\alpha \in \Sigma} v'(\alpha, T(c - 1, \alpha))$.
Thus, we see that general value functions are natural to consider in the context of tabulation hashing.

Instead of shying away from general value functions, we embrace them.
This force us look at the problem differently and guides us in the correct direction.
Using this insight naturally leads us to use \cref{eq:best-sub-gaussian} and we prove the following moment bound, which is strictly stronger than \Cref{lem:simple-gaussian-simplified}.
\restateable[lemma]{simple-tab-hoeffding}{
    Let $h \colon \Sigma^c \to [m]$ be a simple tabulation function, $\eps \colon \Sigma^c \to \set{-1, 1}$ be a simple tabulation sign function, and $v \colon \Sigma^c \times [m] \to \R$ be value function.
    Then for every real number $p \ge 2$
    \begin{align*}
        &\pnorm{\sum_{x \in \Sigma^c} \eps(x) v(x, h(x))}{p}
            \le \sqrt{K_c \frac{p \left(\max\!\set{p, \log(m)}\right)^{c - 1}}{ \log\!\left( 1 + \frac{m \sum_{x \in \Sigma^{c}} \norm{v[x]}{\infty}^2}{\sum_{x \in \Sigma^{c}} \norm{v[x]}{2}^2} \right)^c }} \sqrt{\sum_{x \in \Sigma^c} \norm{v[x]}{\infty}^2}
        \; ,
    \end{align*}
    where $K_c = (L c)^c$ for a universal constant $L$.
}{
    Let $h \colon \Sigma^c \to [m]$ be a simple tabulation function, $\eps \colon \Sigma^c \to \set{-1, 1}$ be a simple tabulation sign function, and $v \colon \Sigma^c \times [m] \to \R$ be value function.
    Then for every real number $p \ge 2$
    \begin{align*}
        &\pnorm{\sum_{x \in \Sigma^c} \eps(x) v(x, h(x))}{p}
            \le \sqrt{K_c p \left(\max\!\set{p, \log(m)}\right)^{c - 1} \frac{\sum_{x \in \Sigma^c} \norm{v[x]}{\infty}^2}{\log\!\left( \frac{e^2 m \sum_{x \in \Sigma^{c}} \norm{v[x]}{\infty}^2}{\sum_{x \in \Sigma^{c}} \norm{v[x]}{2}^2} \right)^c}}
        \; ,
    \end{align*}
    where $K_c = (L c)^c$ for a universal constant $L$.
}
This statement is often weaker than \Cref{thm:simple-tab-moments} but perhaps a bit surprisingly, we will use \Cref{lem:simple-tab-hoeffding} as an important step in the proof of \Cref{thm:simple-tab-moments}.

\paragraph{Sum of squares of simple tabulation hashing}
A key element when proving \Cref{thm:simple-tab-moments} is bounding the sums of squares
\begin{align}\label{eq:intro-sum-of-squares}
    \sum_{j \in [m]} \left( \sum_{x \in \Sigma^c} v(x, h(x) \xor j) \right)^2
    \; .
\end{align}
This was also one of the main technical challenges for the analysis of Aamand et al.~\cite{aamand2020}.
Instead of analyzing \cref{eq:intro-sum-of-squares}, we will analyze a more general problem: Let $v_i \colon \Sigma^c \times [m] \to \R$ be a value function $i \in [k]$, we then want to understand the random variable.
\begin{align}\label{eq:intro-simple-chaos}
    \sum_{\substack{j_0, \ldots, j_{k - 1} \in [m] \\ \bigxor_{i \in [k]} j_i = 0}}\sum_{x_0, \ldots, x_{k - 1} \in \Sigma^c} \prod_{i \in [k]} v_i(x_i, j_i \xor h(x_i))
\end{align}
If we have $k = 2$ and $v_0 = v_1$ then this corresponds to \cref{eq:intro-sum-of-squares}.
By using a decoupling argument, it is possible to reduce the analysis of \cref{eq:intro-simple-chaos} to the analysis of hash-based sums for simple tabulation hashing.
We can then use \Cref{lem:simple-tab-hoeffding} to obtain the following lemma.

\restateable[lemma]{simple-sum-squares}{
    Let $h \colon \Sigma^c \to [m]$ be a simple tabulation function, $\eps \colon \Sigma^c \to \set{-1, 1}$ be a simple tabulation sign function, and $v_i \colon \Sigma^c \times [m] \to \R$ be a value function for $i \in [k]$.
    For every real number $p \ge 2$
    \begin{align*}\begin{split}
        &\pnorm{\sum_{j \in [m]} \left( \sum_{x \in \Sigma^c} \eps(x) v(x, h(x)) \right)^2 }{p}
            \le \left( \frac{L c \max\!\set{p, \log(m)}}{\log\!\left( \frac{e^2 m \sum_{x \in \Sigma^c} \norm{v[x]}{2}^2}{\sum_{x \in \Sigma^c} \norm{v[x]}{1}^2} \right)} \right)^{c} \sum_{x \in \Sigma^c} \norm{v[x]}{2}^2 
        \; ,
    \end{split}\end{align*}
    where $L$ is a universal constant.
}{
    Let $h \colon \Sigma^c \to [m]$ be a simple tabulation function, $\eps \colon \Sigma^c \to \set{-1, 1}$ be a simple tabulation sign function, and $v_i \colon \Sigma^c \times [m] \to \R$ be a value function for $i \in [k]$.
    For every real number $p \ge 2$
    \begin{align*}\begin{split}
        &\pnorm{\sum_{j \in [m]} \left( \sum_{x \in \Sigma^c} \eps(x) v(x, h(x) \xor j) \right)^2 }{p}
            \le \left( \frac{L c \max\!\set{p, \log(m)}}{\log\!\left( \frac{e^2 m \sum_{x \in \Sigma^c} \norm{v[x]}{2}^2}{\sum_{x \in \Sigma^c} \norm{v[x]}{1}^2} \right)} \right)^{c} \sum_{x \in \Sigma^c} \norm{v[x]}{2}^2 
        \; ,
    \end{split}\end{align*}
    where $L$ is a universal constant.
}

\paragraph{Proving the main result}
The proof of \Cref{thm:simple-tab-moments} is by induction on $c$.
We will use \Cref{thm:sampling-moments} on one of the characters while fixing the other characters.
This will give us an expression of the form
\begin{align*}
    \pnorm{\Psi_p\left(\max_{\alpha \in \Sigma, j \in [m]} \abs{\sum_{y \in \Sigma^{c - 1}} v((y, \alpha), h(y) \xor j)}, \frac{\displaystyle \sum_{\alpha \in \Sigma, j \in [m]} \left( \sum_{y \in \Sigma^{c - 1}} v((y, \alpha), h(y) \xor j) \right)^2 }{m} \right)}{p}
        \; .
\end{align*}
By applying \Cref{lem:fn-moment}, we bound this by
\begin{align*}
    \Psi_p\left( \pnorm{\max_{\alpha \in \Sigma, j \in [m]} \abs{\sum_{y \in \Sigma^{c - 1}} v((y, \alpha), h(y) \xor j)} }{p}, \pnorm{\frac{\displaystyle \sum_{\alpha \in \Sigma, j \in [m]} \left( \sum_{y \in \Sigma^{c - 1}} v((y, \alpha), h(y) \xor j) \right)^2 }{m} }{p} \right)
    \; .
\end{align*}
We will bound $\pnorm{\max_{\alpha \in \Sigma, j \in [m]} \abs{\sum_{y \in \Sigma^{c - 1}} v((y, \alpha), h(y) \xor j)} }{p}$ by using the induction hypothesis, and we bound $\pnorm{\frac{\sum_{\alpha \in \Sigma, j \in [m]} \left( \sum_{y \in \Sigma^{c - 1}} v((y, \alpha), h(y) \xor j) \right)^2 }{m} }{p}$ by using \Cref{lem:simple-sum-squares}.
While this sketch is simple, the actual proof is quite involved and technical since one has to be very careful with the estimates.


\subsection{Mixed Tabulation Hashing in Context}\label{sec:context}

Our concentration bounds for mixed tabulation hashing are similar to
those Aamand et al. \cite{aamand2020} for their tabulation-permutation
hashing scheme and the schemes also have very similar efficiency,
roughly a factor 2 slower than simple tabulation and orders of
magnitude faster than any alternative with similar known concentration
bounds. We shall make a more detailed comparison with
tabulation-permutation in Section \ref{sec:tab-perm}.

As mentioned in the beginning of the introduction, the big advantage of proving concentration bounds for mixed tabulation hashing rather than for
tabulation-permutation is that mixed
tabulation hashing has many other strong probabilistic properties
that can now be used in tandem with strong concentration. This makes
mixed tabulation an even stronger candidate to replace abstract uniform
hashing in real implementations of algorithms preserving many of the
asymptotic performance guarantees.

Mixed tabulation inherits all the nice probabilistic properties known
for simple and twisted tabulation\footnote{This is not a black box reduction, but
both twisted and mixed tabulation hashing applies simple tabulation to a
some changed keys, so any statement holding for arbitrary sets of input keys is still
valid. Moreover, mixed tabulation with one derived character
corresponds to mixed tabulation applied to keys with an added 0-character head, and having more derived characters does not give worse results.}. Dahlgaard et
al.~\cite{Dahlgaard2015} introduced mixed tabulation hashing to further get
good statistics over $k$-partitions as used in classic streaming
algorithms for counting of distinct elements by Flajolet et
al. \cite{Flajolet85counting,Flajolet07hyperloglog,Heule13hyperloglog},
and for fast set similarity in large-scale machine learning by Li et
al.  \cite{li12oneperm,Shrivastava14oneperm,Shrivastava14densify}.

\paragraph*{Selective full randomness with mixed tabulation}
The main result of Dahlgaard et al.~\cite{Dahlgaard2015} for mixed
tabulation is that it has a certain kind of selective full randomness
(they did not have a word for it). An $\ell$-bit mask $M$ with
don't cares is of the form $\{0,1,\textnormal?\}^\ell$. An $\ell$-bit
string $B\in \{0,1\}^\ell$ matches $M$ if it is obtained from $M$ by
replacing each ? with a 0 or a 1. Given a hash function returning
$\ell$-bit hash values, we can use $M$ to select the set $Y$ of keys
that match $M$. Consider a mixed tabulation hash function
$h:\Sigma^c\to \{0,1\}^\ell$ using $d$ derived characters. The main
result of Dahlgaard et al.~{\cite[Theorem 4]{Dahlgaard2015}} is that
if the expected number of selected keys is less than $|\Sigma|/2$,
then, w.h.p., the free (don't care) bits of the hash values of $Y$
are fully random and independent. More formally,
  
\begin{theorem}[Dahlgaard et al.~{\cite[Theorem 4]{Dahlgaard2015}}]\label{thm:mix-indep}
  Let $h:\Sigma^c\to \{0,1\}^\ell$ be a mixed tabulation hash function
  using $d$ derived characters. Let $M$ be an $\ell$-bit
  mask with don't cares. For a given
  key set $X\subseteq \Sigma^c$, let $Y$ be the set of keys from
  $X$ with hash values matching $M$. If $\ep{|Y|} \le                                               |\Sigma|/(1+\Omega(1))$, then the free bits of the hash values in $Y$
    are fully random with probability
  $1 - O(|\Sigma|^{1-\floor{d/2}})$.
\end{theorem}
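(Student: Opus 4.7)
My plan is to condition on the combined inner hash $\tilde h = (h_1, h_2)\colon \Sigma^c \to \{0,1\}^\ell \times \Sigma^d$ (which is itself a single simple tabulation hash function) and on the masked bits of $h_3$, so that membership in $Y$ becomes determined, and then exploit the independence of the \emph{free} bits of $h_3$ on fresh derived characters. Concretely, I will identify a combinatorial event on $\tilde h$ under which $\{h_2(x) : x \in Y\}$ admits a ``peeling'' order $x_1, x_2, \ldots$ such that each $x_i$, when processed, has some coordinate $j \in [d]$ with $h_2(x_i)_j$ unique among the unprocessed keys, and argue that this event holds with probability $1 - O(|\Sigma|^{1-\floor{d/2}})$.

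If peelability holds, I reveal the free bits of $h_3$ in the reverse peeling order. When $h_3(h_2(x_i))$ is exposed, the free bits of the table entry $T_3(j, h_2(x_i)_j)$ at the peeling coordinate are fresh---not used by any already-revealed $h(x_{i+1}), \ldots, h(x_n)$---so the free bits of $h(x_i) = h_1(x_i) \oplus h_3(h_2(x_i))$ are uniform on the unmasked coordinates of $\{0,1\}^\ell$ conditional on the later hash values of $Y$. This yields the claimed full randomness and mutual independence of the free bits of $h$ on $Y$.

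The remaining probabilistic task is to bound non-peelability. If peeling fails, some non-empty $S \subseteq Y$ must form a ``$2$-core'': at every position $j$, each derived character $\alpha$ appearing in $\{h_2(x)_j : x \in S\}$ does so in at least two keys of $S$. I would enumerate $k$-element candidates $S$, bound the expected count via $\ep{|Y|^k}$ (controlled by $\ep{|Y|} \le |\Sigma|/(1+\Omega(1))$ together with standard moment bounds for hash-based sums), and upper bound the probability that a specific $S$ of size $k$ forms a $2$-core by counting the derived-character equalities it forces and using that each equality between two simple-tabulation outputs at one position has probability $|\Sigma|^{-1}$. Summing over $k$ should produce the claimed $O(|\Sigma|^{1-\floor{d/2}})$ bound.

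The main obstacle will be the combinatorial accounting of $2$-core obstructions under the limited independence of simple tabulation: for $|S| \ge 4$, specific collision patterns at different positions of $h_2$ are not genuinely independent, so the naive product-of-collisions bound must be refined by case analysis exploiting the structure of simple tabulation, and one must verify that the exponent $\floor{d/2}$ really is the bottleneck term in the sum over $k$. A secondary subtlety is ensuring that conditioning on the masked bits of $h_3$ does not bias $\tilde h$ or the derived characters, which is immediate from the independence of $h_1, h_2, h_3$.
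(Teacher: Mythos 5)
This theorem is not proved in the present paper: it is quoted verbatim as Theorem~4 of Dahlgaard et al.~\cite{Dahlgaard2015}, and the authors invoke it as a black box (in tandem with their new \Cref{cor:mixed-tail}) rather than re-deriving it. So there is no in-paper proof to compare against; what you have written is a sketch of the argument from the cited FOCS'15 paper.

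As a sketch, your high-level plan is sound and does track the original: condition on $(h_1,h_2)$ and on the masked bits of $h_3$ so that $Y$ is fixed, reduce full randomness of the free output bits on $Y$ to a linear-independence (peelability suffices) condition on the derived keys $h_2(Y)\subseteq\Sigma^d$, and then bound the failure probability via an obstruction-counting argument. Two things are worth flagging. First, peelability is strictly stronger than the linear independence you actually need; using the 2-core as your obstruction is therefore sound (it gives an upper bound on the bad event) but potentially lossy -- you should confirm it is not so lossy that the exponent $\floor{d/2}$ is destroyed. Second, and more importantly, you have named but not carried out the part of the argument where all the work lives: bounding the expected number of $k$-subsets of $Y$ whose derived keys form a 2-core, under the limited independence of $h_2$. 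Simple tabulation allows dependent 4-sets in the input whose derived characters are forced to collide in pairs, and these dependencies are exactly what drives the exponent down to $\floor{d/2}$ rather than $d/2$ or better; dismissing this as ``refined by case analysis'' is deferring the entire substance of Dahlgaard et al.'s proof. As written, your proposal is a correct-looking plan and identifies the right obstacles, but it does not constitute a proof.
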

The above result is best possible in that since we only have $O(|\Sigma|)$
randomness in the tables, we cannot hope for full randomness of
an asymptotically larger set $Y$.

In the applications from \cite{Dahlgaard2015},
we also want the size of the set $Y$ to be concentrated
around its mean and by \Cref{cor:mixed-tail}, the concentration is
essentially as strong as with fully random hashing and it holds
for any $d\geq 1$.

In \cite{Dahlgaard2015} they only proved weaker concentration
bounds for the set $Y$ selected in \Cref{thm:mix-indep}.
Based on the concentration bounds for simple tabulation by P\v{a}tra\c{s}cu
and Thorup \cite{Patrascu2012}, they proved that if the set $Y$
from \ref{thm:mix-indep} had
$\E[Y]\in [|\Sigma|/8, 3|\Sigma|/4]$, then within the
  same probability of   $1 - O(|\Sigma|^{1-\floor{d/2}})$, it has
  \begin{equation}\label{eq:old-mix-tab-Y}
  |Y|=\E[Y]\left(1\pm O\left(\sqrt{\frac{\log |\Sigma|(\log\log|\Sigma|)^2}
    {|\Sigma|}}\right)\right).
  \end{equation}
  With \Cref{cor:mixed-tail}, for $\E[Y]=\Theta(|\Sigma|)$, we immediately tighten \req{eq:old-mix-tab-Y}
  to the cleaner
  \begin{equation}\label{eq:mix-tab-Y}
    |Y|=\E[Y]\left(1\pm O\left(
    \sqrt{\frac{\log |\Sigma|}{|\Sigma|}}\right)\right).
  \end{equation}
  While the improvement is ``only'' a factor $(\log\log|\Sigma|)^2$,
  the important point here is that \req{eq:mix-tab-Y} is the
  asymptotic bound we would get with fully-random hashing. Also,
  while Dahlgaard et al.\ only proved \req{eq:old-mix-tab-Y} for the
  special case of $\E[Y]\in [|\Sigma|/8, 3|\Sigma|/4]$, our
  \req{eq:mix-tab-Y} is just a special case of \Cref{cor:mixed-tail}
  which holds for arbitrary values of $\E[Y]$ and arbitrary value functions.

  Dahlgaard et al. presented some very nice applications of mixed
  tabulation to problems in counting and machine learning and machine
  learning. The way they use \Cref{thm:mix-indep} is rather
  subtle.

  \subsubsection{Mixed Tabulation Hashing Versus Tabulation-Permutation Hashing}\label{sec:tab-perm}

  As mentioned earlier, our new concentration bounds are similar
to those proved  by 
Aamand et al. \cite{aamand2020} for their tabulation-permutation hashing
scheme. However, now we also have moment bounds covering
the tail, and we have the first understanding of what happens when
$c$ is not constant. It is not clear if this new understanding
applies to tabulation-permutation. As discussed above, the advantage of 
having the concentration bounds for mixed tabulation hashing is that we can
use them in tandem with the independence result from Theorem \ref{thm:mix-indep}, which does not hold for tabulation-permutation.

Tabulation-permutation is similar to mixed tabulation hashing in its resource
consumption. Consider the mapping $\Sigma^c\to\Sigma^c$. Tabulation-permutation first uses simple tabulation $h:\Sigma^c\to\Sigma^c$. Next it applies
a random permutation $\pi_i:\Sigma\stackrel{1-1}{\to}\Sigma$ to each
output character $h(x)_i$, that is, $x\mapsto(\pi_1(h(x)_1),\ldots,\pi_c(h(x)_c)$. Aamand et al.~\cite{aamand2020} also suggest tabulation-1permutation
hashing, which only permutes the most significant character. This
scheme does not provide concentration for all value functions, but it
does work if we select keys from intervals.

Aamand et al.~\cite{aamand2020} already made a thorough experimental
and theoretical comparison between tabulation-permutation, mixed
tabulation, and many other schemes. In this comparison, mixed
tabulation played the role of a similar scheme with not as strong
known concentration bounds.  In the experiments, mixed tabulation hashing with
$c$ derived characters performed similar to tabulation-permutation in
speed. Here we proved stronger concentration bounds for mixed
tabulation even with a single character, where it should perform
similar to tabulation-1permutation (both use $c+1$ lookups). Both
mixed tabulation hashing and tabulation-permutation hashing were orders of magnitude
faster than any alternative with similar known concentration
bounds. We refer to~\cite{aamand2020,Dahlgaard2017} for more details.  In particular,
\cite{Dahlgaard2017} compares mixed tabulation with popular cryptographic hash
functions that are both slower and have no guarantees in these
algorithmic contexts.

One interesting advantage of mixed tabulation hashing over
tabulation-permutation hashing is that mixed tabulation hashing, like simple
tabulation hashing, only needs randomly filled character tables.
In contrast, tabulation-permutation needs tables that represent
permutations. Thus, all we need to run mixed tabulation hashing is a
pointer to some random bits.  These could be in read-only memory
shared across different applications.  Read-only memory is much less
demanding than standard memory since there can be no write-conflicts,
so we could imagine some special large, fast, and cheap read-only
memory, pre-filled with random bits, e.g., generated by a
quantum-device. This would open up for larger characters, e.g., 16- or
32-bit characters, and it would free up the cache for other
applications.

\section{Preliminaries}

In this section, we will introduce the notation which will be used throughout the paper.
We will start by introducing the notation from probability theory that we need and afterwards we will introduce notation that will help in the reasoning about tabulation hashing.

We will use the following basic mathematical notation: We define $\N$ the set of non-negative integers, for $n \in \N$ we shall define $[n] = \set{0, \ldots, n - 1}$, in particular $[n] = \emptyset$, and for an event $\mathcal{A}$ we shall define $\indicator{\mathcal{A}}$ to be the indicator on $\mathcal{A}$, i.e.,  $\indicator{\mathcal{A}} = 1$ if $\mathcal{A}$ is true and $\indicator{\mathcal{A}} = 0$ otherwise.

If $n \in \N$ is non-negative integer, $(X_i)_{i \in [n]}$ are real variables, and $j \in [n + 1]$ we shall define $X_{< j} = \sum_{\substack{i \in [n], i<j}} X_i = \sum_{i \in [j]} X_i$.
Similarly, for sets $(A_i)_{i \in [n]}$ and $j \in [n + 1]$ we shall define $A_{< j} = \bigcup_{i \in [j]} A_i$.    

We will be using the following version of Stirling's approximation~\cite{Maria1965} which holds for all integers $n$,
\begin{align}\label{eq:stirling}
    \Gamma(n + 1) = n! \le e \sqrt{n} \left( \frac{n}{e} \right)^n
        \; .
\end{align}

\subsection{Probability Theory}

In the following, we introduce the necessary notions of probability theory.
We will assume that we are given a probability space $(\Omega, \mathcal{F}, P)$ throughout the paper but we will often not state it explicitly.
We will be working with martingales and we shall therefore need notation and concepts from probability theory of a fairly general and abstract character.
For an introduction to measure and probability theory, see, for instance,~\cite{schilling2005}.

\begin{definition}
    Let $(X_i)_{i \in [n]}$ be random variables on the probability space $(\Omega, \mathcal{F}, P)$.
    We denote by $\mathcal{G} = \sigma((X_i)_{i \in [n]}) \subseteq \mathcal{F}$ the smallest $\sigma$-algebra where $(X_i)_{i \in [n]}$ are all $\mathcal{G}$-measurable.
\end{definition}

\begin{definition}[Conditional expectation]
    Let $X$ be a random variable on the probability space $(\Omega, \mathcal{F}, P)$, and let $\mathcal{G} \subseteq \mathcal{F}$ be a sub $\sigma$-algebra.
    If $\ep{\abs{X}} < \infty$ we can define the random variable $\epcond{X}{\mathcal{G}}$ to be the conditional expectation of $X$ given $\mathcal{G}$.
    It shall be $\mathcal{G}$-measurable and for all $G \in \mathcal{G}$ we have that $\ep{\mathds{1}_{G}\epcond{X}{\mathcal{G}}} = \ep{\mathds{1}_{G}X}$.

    We define the conditional expectation of $X$ given a random variable $Y$ by $\epcond{X}{Y} = \epcond{X}{\sigma(Y)}$.
\end{definition}

\begin{definition}[Filtration and adapted sequence]
    On a probability space $(\Omega, \mathcal{F}, P)$ a sequence of $(\mathcal{F}_i)_{i \in [n]}$ of sub $\sigma$-algebras is called a filtration if $\mathcal{F}_0 \subseteq \ldots \subseteq \mathcal{F}_{n - 1} \subseteq \mathcal{F}$.

    We say that a sequence of random variables $(X_i)_{i \in [n]}$ is adapted to a filtration $(\mathcal{F}_i)_{i \in [n]}$ if $X_i$ is $\mathcal{F}_i$-measurable for all $i \in [n]$.
    We call $(X_i, \mathcal{F}_i)_{i \in [n]}$ an adapted sequence.
\end{definition}

\begin{definition}[Martingale and martingale difference]
    We call an adapted sequence $(X_i, \mathcal{F}_i)_{i \in [n]}$ a martingale sequence if $\epcond{X_i}{\mathcal{F}_{i - 1}} = X_{i - 1}$ for all $i \in [n]$. (We define $\mathcal{F}_{-1} = \set{\emptyset, \Omega}$ and $X_{-1} = 0$.) 

    We call an adapted sequence $(Y_i, \mathcal{F}_i)_{i \in [n]}$ a martingale difference sequence if $\epcond{Y_i}{\mathcal{F}_{i - 1}} = 0$ for all $i \in [n]$. (We define $\mathcal{F}_{-1} = \set{\emptyset, \Omega}$.)

    It should be noted that if $(Y_i, \mathcal{F}_i)_{i \in [n]}$ is a martingale difference sequence then $(Y_{< i + 1}, \mathcal{F}_{i})_{i \in [n]}$ is a martingale sequence.
    Similarly, if $(X_i, \mathcal{F}_i)_{i \in [n]}$ is a martingale sequence then $(X_i - X_{i - 1}, \mathcal{F}_i)_{i \in [n]}$ is a martingale difference sequence.
\end{definition}


\begin{definition}[$p$-norm]
    Let $p \ge 1$ and $X$ be a random variable with $\ep{\abs{X}^p} < \infty$.
    We then define the $p$-norm of $X$ by $\pnorm{X}{p} = \ep{\abs{X}^p}^{1/p}$.

    Let $p \ge 1$, $\mathcal{G}$ be a sub $\sigma$-algebra, and $X$ be a random variable with $\epcond{\abs{X}^p}{\mathcal{G}} < \infty$.
    We then define the $p$-norm of $X$ conditioned on $\mathcal{G}$ by $\pnormcond{X}{p}{\mathcal{G}} = \epcond{\abs{X}^p}{\mathcal{G}}^{1/p}$.

    Similarly to conditional expectation, we will condition on random variables. 
    Let $p \ge 1$ and let $X$ and $Y$ be a random variables.
    We then define the $p$-norm of $X$ conditioned on $Y$ by $\pnormcond{X}{p}{Y} = \epcond{\abs{X}^p}{\sigma(Y)}^{1/p}$
\end{definition}

Now an important observation is that the $p$-norm is a seminorm which follows by the Minkowski inequality.

\begin{lemma}[Triangle inequality(Minkowski inequality)]
    Let $p \ge 1$ and let $X$ and $Y$ be random variables with $\ep{\abs{X}^p} < \infty$ and $\ep{\abs{Y}^p} < \infty$. Then $\ep{\abs{X + Y}^{p}} < \infty$ and $\pnorm{X + Y}{p} \le \pnorm{X}{p} + \pnorm{Y}{p}$.
\end{lemma}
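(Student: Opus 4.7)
The plan is to prove Minkowski's inequality via the standard Hölder-based argument, after first establishing the integrability claim. For $p = 1$ the statement reduces to the pointwise inequality $\abs{X+Y} \le \abs{X}+\abs{Y}$ together with monotonicity and linearity of expectation, so I focus on $p > 1$. The main obstacle is really bookkeeping: one has to dispose of integrability and the possibility that $\pnorm{X+Y}{p} = 0$ before the division step in the Hölder argument can be performed.

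First I would establish that $\ep{\abs{X+Y}^p} < \infty$. The pointwise bound
\begin{align*}
    \abs{X+Y}^p \le (2\max\!\set{\abs{X},\abs{Y}})^p \le 2^p(\abs{X}^p + \abs{Y}^p)
\end{align*}
together with monotonicity and linearity of expectation gives $\ep{\abs{X+Y}^p} \le 2^p(\ep{\abs{X}^p}+\ep{\abs{Y}^p}) < \infty$, which settles the integrability claim. If $\pnorm{X+Y}{p} = 0$ the inequality $\pnorm{X+Y}{p} \le \pnorm{X}{p}+\pnorm{Y}{p}$ holds trivially, so from here on I assume $\pnorm{X+Y}{p} > 0$.

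Next I would invoke Hölder's inequality, which I would state (or cite from a standard reference such as~\cite{schilling2005}) as a separate preliminary fact: for conjugate exponents $p, q \ge 1$ with $\tfrac{1}{p}+\tfrac{1}{q} = 1$ and random variables $U, V$ with $\ep{\abs{U}^p}, \ep{\abs{V}^q} < \infty$, one has $\ep{\abs{UV}} \le \pnorm{U}{p}\pnorm{V}{q}$. Applying this with $q = p/(p-1)$ and the pointwise factorization
\begin{align*}
    \abs{X+Y}^p = \abs{X+Y}\cdot\abs{X+Y}^{p-1} \le \abs{X}\cdot\abs{X+Y}^{p-1} + \abs{Y}\cdot\abs{X+Y}^{p-1},
\end{align*}
I would bound each of the two resulting expectations by Hölder, using that $\ep{\abs{X+Y}^{(p-1)q}} = \ep{\abs{X+Y}^p} < \infty$ (which is where the integrability lemma above is used). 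This yields
\begin{align*}
    \ep{\abs{X+Y}^p} \le \bigl(\pnorm{X}{p} + \pnorm{Y}{p}\bigr) \cdot \ep{\abs{X+Y}^p}^{(p-1)/p}.
\end{align*}

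Finally I would finish by dividing both sides by $\ep{\abs{X+Y}^p}^{(p-1)/p}$, which is legitimate because we have already reduced to $\pnorm{X+Y}{p} > 0$ and we know the quantity is finite. The remaining factor on the left is $\ep{\abs{X+Y}^p}^{1/p} = \pnorm{X+Y}{p}$, giving the desired bound $\pnorm{X+Y}{p} \le \pnorm{X}{p} + \pnorm{Y}{p}$. The only genuine subtlety is that all three steps — integrability, the Hölder application, and the division — require separate care, but none is deep.
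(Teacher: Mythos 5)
The paper states this lemma without proof --- it is cited as a standard preliminary fact from measure/probability theory (cf.\ the reference to~\cite{schilling2005}) rather than something proved in the paper. Your proposal is the textbook H\"older-based proof of Minkowski's inequality, and it is correct: the $p=1$ case is the pointwise triangle inequality; the crude bound $\abs{X+Y}^p \le 2^p(\abs{X}^p + \abs{Y}^p)$ gives integrability; the split $\abs{X+Y}^p \le \abs{X}\cdot\abs{X+Y}^{p-1} + \abs{Y}\cdot\abs{X+Y}^{p-1}$ followed by H\"older with conjugate exponent $q = p/(p-1)$ (noting $(p-1)q = p$) yields $\ep{\abs{X+Y}^p} \le (\pnorm{X}{p}+\pnorm{Y}{p})\,\ep{\abs{X+Y}^p}^{(p-1)/p}$; and the division step is justified precisely because you have already established finiteness and dispensed with the degenerate $\pnorm{X+Y}{p}=0$ case. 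No gaps.
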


\subsection{Tabulation Hashing}
We will need to reason about the individual characters of a key, $x \in \Sigma^c$, and for that, we need some notation.
Most of the definitions are taken from the paper by Aamand et~al.~\cite{aamand2020}.

\begin{definition}[Position characters]
    Let $\Sigma$ be an alphabet and $c > 0$ a positive integer.
    We call an element $(i, y) \in [c] \times \Sigma$ a position character of $\Sigma^c$.
\end{definition}

We will view a key $x = (y_0, \ldots, y_{c - 1}) \in \Sigma^c$ as a set of $c$ position characters, $\set{(0, y_0), \ldots (c - 1, y_{c - 1})} \subseteq [c] \times \Sigma$.
Let $h \colon \Sigma^c \to [2^l]$ be a simple tabulation hash function and let $T \colon [c] \times \Sigma \to [2^l]$ be the random function used to define $h$.
We will then overload the notation and for a set of position characters $y \subseteq [c] \times \Sigma$, define $h(y) = \bigxor_{\alpha \in y} T(y)$.
We note that this definition agrees with our correspondence between keys $x = (y_0, \ldots, y_{c - 1}) \in \Sigma^c$, and set of position characters, $\set{(0, y_0), \ldots (c - 1, y_{c - 1})} \subseteq [c] \times \Sigma$, that is, $h(x) = h(\set{(0, y_0), \ldots (c - 1, y_{c - 1})})$.
We have thus extended the domain of $h$ to $\powerset{[c] \times \Sigma}$.
For sets of position characters $x_1, x_2 \in \powerset{[c] \times \Sigma}$ we will write $x_1 \xor x_2$ for the symmetric difference, i.e., $x_1 \xor x_2 = (x_1 \cup x_2) \setminus (x_1 \cap x_2)$.
We note that with the extended domain for $h$ then $h(x_1 \xor x_2) = h(x_1) \xor h(x_2)$.




We will prove several of our statements by induction on the position characters and for this reason, we need the following definition.

\begin{definition}[Group of keys]
    Let $\set{\alpha_0, \ldots, \alpha_{r - 1}} = [c] \times \Sigma^c$ be an enumeration of the position characters of $\Sigma^c$.
    For each $i \in [r]$ we denote by $G_i \subseteq \Sigma^c$ the $i$'th group of keys with respect to the ordering of position characters, and define it by $G_i = \setbuilder{x \in \Sigma^c}{\alpha_i \in x, x \subseteq \set{\alpha_0, \ldots, \alpha_{i - 1}}}$.
\end{definition}

We will need the following lemmas by Aamand et~al.~\cite{aamand2020}.
\begin{lemma}[\cite{aamand2020}]\label{lem:group-of-keys}
    Let $w \colon \Sigma^c \to \R_{\ge 0}$ be a weight function, then there exists an ordering $\set{\alpha_0, \ldots, \alpha_{r - 1}}$ of the position characters of $\Sigma^c$, such that for all $i \in [r]$,
    \[
        \sum_{x \in G_i} w(x) \le \left( \max_{x \in \Sigma^c} w(x) \right)^{1/c} \left( \sum_{x \in \Sigma^c} w(x) \right)^{1 - 1/c}
        \; .
    \]
\end{lemma}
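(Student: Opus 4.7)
The plan is to construct the ordering greedily in reverse. Define $W(R) = \sum_{x \subseteq R} w(x)$ for any $R \subseteq [c] \times \Sigma$, and write $W = W([c]\times\Sigma)$ and $M = \max_{x} w(x)$. Setting $R_i = \{\alpha_0, \ldots, \alpha_i\}$, the group $G_i$ is precisely the set of keys contained in $R_i$ but not in $R_{i-1}$, so
\[
\sum_{x \in G_i} w(x) \;=\; W(R_i) - W(R_{i-1}) \;=\; \sum_{\substack{x \subseteq R_i \\ \alpha_i \in x}} w(x)\,,
\]
which I denote $L(\alpha_i, R_i)$. If I can show that for every non-empty $R \subseteq [c] \times \Sigma$ there exists $\alpha \in R$ with $L(\alpha, R) \le M^{1/c} W(R)^{1 - 1/c}$, the lemma follows immediately: start from $R = [c]\times\Sigma$, iteratively pick such an $\alpha$, assign it the role of $\alpha_{|R|-1}$, and replace $R$ with $R \setminus \{\alpha\}$. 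Since $W(R) \le W$ throughout, every index satisfies the claimed bound.

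To establish this existence statement, let $T_j = \{\sigma \in \Sigma : (j, \sigma) \in R\}$ for each $j \in [c]$, so that the keys contained in $R$ are exactly the elements of $T_0 \times \cdots \times T_{c-1}$. The central observation is the volume bound
\[
W(R) \;\le\; M \prod_{j=0}^{c-1} |T_j| \;\le\; M\,|T_{j^*}|^c\,,
\]
where $j^*$ maximizes $|T_j|$; hence $|T_{j^*}| \ge (W(R)/M)^{1/c}$. On the other hand, summing the identity $\sum_{\sigma \in T_{j^*}} L((j^*, \sigma), R) = W(R)$ and averaging, there exists $\sigma^* \in T_{j^*}$ with $L((j^*, \sigma^*), R) \le W(R)/|T_{j^*}|$. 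Combining the two estimates yields
\[
L((j^*, \sigma^*), R) \;\le\; \frac{W(R)}{(W(R)/M)^{1/c}} \;=\; M^{1/c}\, W(R)^{1 - 1/c}\,,
\]
as required. The edge case where some $T_j$ is empty is harmless, as then no key sits inside $R$, so $W(R) = 0$ and every $\alpha$ works.

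The main point -- and the step that I expect to require the most care -- is that one must not naively minimize $L(\alpha, R)$ over \emph{all} $\alpha \in R$. Such a uniform averaging only produces the bound $c\,W(R)/|R|$, which is too weak because it ignores the product structure of the key space. Restricting the averaging to the position $j^*$ whose slice $T_{j^*}$ is largest, and then invoking the volume bound $W(R) \le M \prod_j |T_j|$ together with the AM-GM-type inequality $\prod_j |T_j| \le |T_{j^*}|^c$, is precisely what produces the exponent $1 - 1/c$ that the lemma requires.
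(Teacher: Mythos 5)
Your proof is correct. The backward greedy construction, combined with averaging over the largest slice $T_{j^*}$ and the volume bound $W(R) \le M \prod_j |T_j| \le M\,|T_{j^*}|^c$, is essentially the same argument used in Aamand et al.~\cite{aamand2020} for this lemma.
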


\begin{lemma}\label{lem:weighted-sum-of-dependence}
    Let $k \in \N$ be a positive integer, and $w_0, \ldots, w_{k - 1} \colon \Sigma^{c} \to \R$ be weight functions.
    Then,
    \begin{align}\label{eq:weighted-sum-of-dependence}
        \sum_{\substack{x_{0}, \ldots, x_{k - 1} \in \Sigma^{c} \\ \bigxor_{i \in [k]} x_i = \emptyset}}
                \prod_{j \in [k]} w_i(x_i)
            \le \sqrt{\tfrac{k}{2}}^{k c} \prod_{i \in [k]} \sqrt{\sum_{x \in \Sigma^{c}} w_i(x)^2}
        \; .
    \end{align}
\end{lemma}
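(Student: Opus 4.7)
The plan is to induct on $c$, exploiting that the constraint $\bigxor_{i \in [k]} x_i = \emptyset$ decomposes coordinate-by-coordinate once we view each key as a set of position characters. Writing $x_i = (x_i', y_i)$ with $x_i' \in \Sigma^{c-1}$ and $y_i \in \Sigma$, the position character $(c-1, y_i)$ is disjoint from every position character appearing in any $x_j'$, so $\bigxor_{i \in [k]} x_i = \emptyset$ is equivalent to $\bigxor_{i \in [k]} x_i' = \emptyset$ together with the requirement that each character of $\Sigma$ appear an even number of times among $y_0, \ldots, y_{k-1}$. The latter forces $k$ to be even, so if $k$ is odd the LHS vanishes and the bound is trivial; I therefore assume $k = 2m$. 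I will also replace each $w_i$ by $\abs{w_i}$, which only enlarges the LHS and preserves the RHS, so from now on $w_i \ge 0$.

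For the inductive step, I would fix an admissible last-coordinate sequence $(y_0, \ldots, y_{k-1})$ and apply the induction hypothesis to the weight functions $\tilde{w}_i(x') = w_i(x', y_i)$, yielding the inner bound $\sqrt{\tfrac{k}{2}}^{k(c-1)} \prod_{i \in [k]} \sqrt{\sum_{x' \in \Sigma^{c-1}} w_i(x', y_i)^2}$. Summing over admissible $(y_i)$ then reduces the problem to the $c = 1$ version of the lemma applied to the new weights $W_i(y) = \sqrt{\sum_{x' \in \Sigma^{c-1}} w_i(x', y)^2}$. Since $\sum_{y \in \Sigma} W_i(y)^2 = \sum_{x \in \Sigma^c} w_i(x)^2$, assembling the two bounds recovers exactly $\sqrt{\tfrac{k}{2}}^{kc} \prod_{i} \sqrt{\sum_x w_i(x)^2}$, closing the induction.

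For the $c = 1$ base case I would enlarge the sum to one over perfect matchings of $[k]$. Every character-even sequence $(y_0, \ldots, y_{k-1})$ admits at least one perfect matching $\pi$ whose paired indices agree (pair up the occurrences of each character), so by non-negativity
\[
\sum_{\substack{y_0, \ldots, y_{k-1} \in \Sigma \\ \text{each character even}}} \prod_{i \in [k]} w_i(y_i) \le \sum_{\pi} \prod_{\set{i,j} \in \pi} \innerprod{w_i}{w_j} \le (2m-1)!! \prod_{i \in [k]} \sqrt{\sum_{y \in \Sigma} w_i(y)^2},
\]
where the last step is pairwise Cauchy-Schwarz and $(2m-1)!!$ counts the perfect matchings of $[k]$. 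It remains to show $(2m-1)!! \le m^m = \sqrt{\tfrac{k}{2}}^{k}$, which I would prove by pairing factors $(2i-1)$ with $2m - 2i + 1$ in the double factorial product and applying AM-GM: $(2i-1)(2m-2i+1) \le m^2$ (with minor care when $m$ is odd to handle the middle factor $m$ that pairs with itself).

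The main obstacle is justifying the passage from the character-even sum to the matching sum: it crucially uses non-negativity of the weights (so multiplicities in the matching sum cannot hurt) together with the fact that every character-even sequence is counted at least once by some matching. Once these two points are in place, the induction bookkeeping and the AM-GM estimate $(2m-1)!! \le m^m$ are routine.
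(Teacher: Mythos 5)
Your proof is correct. The paper itself does not reprove this lemma — it is cited from Aamand et al.~\cite{aamand2020} — but the self-contained argument you give is the standard one for statements of this type and recovers the constant exactly: decompose the constraint $\bigxor_{i} x_i = \emptyset$ coordinate by coordinate (so the last-coordinate condition is ``each character appears with even multiplicity''), induct on $c$ with the $\ell^2$-aggregated weights $W_i(y) = \bigl(\sum_{x'} w_i(x',y)^2\bigr)^{1/2}$ carrying the reduction to $c=1$, and in the one-character base case pass from the character-even sum to a sum over perfect matchings. The key points you flag are indeed the ones that need care and they all check out: replacing $w_i$ by $|w_i|$ can only increase the left side and leaves $\sum_x w_i(x)^2$ unchanged; compatibility with a matching forces all multiplicities even, while every character-even sequence admits at least one compatible matching, so nonnegativity gives the domination by the matching sum; and $(2m-1)!! \le m^m$ follows from the pairing $(2i-1)(2m-2i+1) \le m^2$, with the lone middle factor equal to $m$ when $m$ is odd. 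Since $m = k/2$, this yields $(2m-1)!! \le \sqrt{k/2}^{k}$, and the induction multiplies in $\sqrt{k/2}^{k(c-1)}$, matching $\sqrt{k/2}^{kc}$ as required.
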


\section{Moment Inequalities}

The goal of this chapter is to establish a series of technical lemmas concerning moments which will be crucial in the later part of the paper.
An important tool will be the function $\Psi_p \colon \R_+ \times \R_+ \to \R_+$ which gives a qualitative way of measuring how close the centered moments of sums of weighted Bernoulli variables resembles the central moments of the Poisson distribution.

\begin{definition}
    For $p \ge 2$ we define the function $\Psi_p \colon \R_+ \times \R_{+} \to \R_{+}$ as follows,
    \begin{align*}
        \Psi_p(M, \sigma^2) = \begin{cases}
            \left(\frac{\sigma^2}{p M^2}\right)^{1/p} M &\text{if $p < \log \frac{p M^2}{\sigma^2}$}\\
            \tfrac{1}{2}\sqrt{p}\sigma &\text{if $p < e^{2} \frac{\sigma^2}{M^2}$}\\
            \frac{p}{e \log \frac{p M^2}{\sigma^2}} M &\text{if $\max\!\set{\log \frac{p M^2}{\sigma^2}, e^{2} \frac{\sigma^2}{M^2}} \le p$}
        \end{cases}
        \; .
    \end{align*}
\end{definition}
\begin{remark}
    From the definition it is not clear that $\Psi_p$ is well-defined,
    but if $2 \le p < e^2\frac{\sigma^2}{M^2}$ then $\log \frac{p M^2}{\sigma^2} < 2$, hence at most one of the first two cases are satisfied at any given time.
    This shows that $\Psi_p$ is indeed well-defined.
\end{remark}

We show in \Cref{lem:psi-relation-poisson} that, up to constant, $\Psi_p(1, \lambda)$ is equal to the $p$-norm of a variable distributed as a centered Poisson variable with parameter $\lambda$.
This implies that, up to constant, $\Psi_p(M, \sigma^2)$ is equal to $M$ times the $p$-norm of a variable distributed as a centered Poisson variable with parameter $\frac{\sigma^2}{M^2}$.

When we later prove our concentration results for simple tabulation and mixed tabulation, we will need to bound $\pnorm{\Psi_p(X, Y)}{p}$ for random variables $X$ and $Y$.
Now to handle this we will develop a general lemma that bounds the moments of a function of random variables by the moments of the random variables.
This will be the focus of \Cref{sec:moments-from-tails}.


\subsection{Moments of Poisson Distributed Variables}

We start this section by proving a number of properties of the $\Psi_p$-function which we will use extensively.
Afterward, we will establish the connection between $\Psi_p$-function and the $p$-norm of Poisson distributed variables.

\begin{lemma}\label{lem:psi-properties}
    Let $p \ge 2$ then the function $\Psi_p$ satisfies the following properties:
    \begin{enumerate}
        \item For all positive reals $M > 0$ and $\sigma > 0$,
            \begin{align}
                \Psi_p(M, \sigma^2) = M \sup\setbuilder{\frac{p}{s} \left( \frac{\sigma^2}{p M^2} \right)^{1/s} }{2 \le s \le p}
                    \; . \label{eq:psi-relation}
            \end{align}
        \item For all positive reals $M > 0$ and $\sigma > 0$,
            \begin{align}
                \Psi_p(M, \sigma^2) \le \max\!\set{\tfrac{1}{2}\sqrt{p}\sigma, \tfrac{1}{2e} p M}
                    \; . \label{eq:psi-bernstein}
            \end{align}
        \item For all positive reals $M > 0$ and $\sigma > 0$,
            \begin{align}
                \tfrac{1}{2}\sqrt{p}\sigma \le \Psi_p(M, \sigma^2)
                    \; . \label{eq:psi-lower-bound}
            \end{align}
        \item For all positive reals $M > 0$ and $\sigma > 0$ with $e^{2} \frac{\sigma^2}{M^2} \le p$,
            \begin{align}
                \Psi_p(M, \sigma^2) \le \frac{p}{e \log \frac{p M^2}{\sigma^2}} M
                    \; . \label{eq:psi-far-out}
            \end{align}
        \item For all positive reals $M > 0$ and $\sigma > 0$ and all $\lambda \ge 1$,
            \begin{align}
                \Psi_p(\lambda M, \lambda \sigma^2) \le \lambda \Psi_p(M, \sigma^2)
                    \; . \label{eq:psi-growth}
            \end{align}
        \item For all positive reals $M > 0$ and $\sigma > 0$ and all $\lambda \ge 1$,
            \begin{align}
                \lambda \Psi_p(M, \sigma^2) \le \Psi_p(\lambda^2 M, \lambda^2 \sigma^2)
                    \; . \label{eq:psi-reverse-growth}
            \end{align}
        \item If $f : \R_+ \to \R_+$ is an increasing function, where $p \mapsto f(1/p)$ is log-convex
            and where there exists positive reals $K > 0, M > 0$, and $\sigma > 0$ such that
            $f(p) \le K \Psi_p(M, \sigma^2)$ for all even integers $p \ge 2$, then
            \begin{align}
                f(p) \le 2 K \Psi_p(M, \sigma^2)
                \; , \label{eq:psi-integers-to-reals}
            \end{align}
            for all reals $p \ge 2$.
    \end{enumerate}
\end{lemma}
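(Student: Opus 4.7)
The plan is to treat all seven items as consequences of the single variational expression in Property~1, so the first step is to establish that. Fix $M,\sigma>0$ and set $u=\sigma^2/(pM^2)$. Writing $g(s)=(p/s)\,u^{1/s}$, a direct calculation of $(\log g)'(s)$ gives a unique critical point at $s^{*}=\log(pM^{2}/\sigma^{2})=-\log u$, at which $g(s^{*})=p/(e s^{*})$. A case split on whether $s^{*}<2$, $s^{*}\in[2,p]$, or $s^{*}>p$ reproduces exactly the three branches in the definition of $\Psi_p$ (the third case has maximum at the right endpoint $s=p$, giving the case-1 expression $u^{1/p}M$; the first case has maximum at the left endpoint $s=2$, giving $\tfrac12\sqrt p\sigma$; the middle gives $pM/(e\log(pM^2/\sigma^2))$). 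This is Property~1.

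For Properties~2--4, I would read everything off the formulas. Property~3 is $g(2)\le\sup g$, which is exactly Property~1 applied at $s=2$. Property~4 is $g(p)\le g(s^{*})$, valid whenever $s^{*}$ is defined, i.e.\ whenever $e^{2}\sigma^{2}/M^{2}\le p$. Property~2 is a case split: in case~2 it is the definition; in case~3 the denominator $e\log(pM^2/\sigma^2)\ge 2e$, so $\Psi_p\le pM/(2e)$; and in case~1, $pM^{2}/\sigma^{2}>e^{p}$ forces $u^{1/p}<1/e$, so $\Psi_p<M/e\le pM/(2e)$ because $p\ge2$. Properties~5 and~6 I would prove through Property~1 by factoring $\lambda$ out of the $M$ in front and noting that the remaining $s$-dependent factor picks up $\lambda^{-1/s}\le1$ (in Property~5) or, in Property~6, the factor $\lambda^{2-2/s}\ge\lambda$ which holds for $s\ge2$.

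Property~7 is the only item that requires real work, so I would save it for last. The idea is to interpolate with log-convexity: for any real $p\ge2$, choose the consecutive even integers $p_{-}\le p\le p_{+}=p_{-}+2$ and the weight $\lambda\in[0,1]$ making $1/p=\lambda/p_{-}+(1-\lambda)/p_{+}$. The hypothesis that $p\mapsto f(1/p)$ is log-convex then yields $f(p)\le f(p_{-})^{\lambda}f(p_{+})^{1-\lambda}\le K\,\Psi_{p_{-}}^{\lambda}\Psi_{p_{+}}^{1-\lambda}$. The task reduces to showing
\[
\Psi_{p_{-}}^{\lambda}\Psi_{p_{+}}^{1-\lambda}\le 2\,\Psi_{p}(M,\sigma^{2}).
\]
To prove this I would first check that $q\mapsto\log\Psi_{1/q}$ is convex in $q=1/p$ within each of the three cases (direct computation: the second derivative is $1/q$ in case~1, $1/(2q^{2})$ in case~2, and $q^{-2}[1-(c-\log q)^{-1}+(c-\log q)^{-2}]$ in case~3 with $c=\log(M^{2}/\sigma^{2})$, all positive under the case hypotheses), and that the derivatives agree across the transitions $s^{*}=2$ and $s^{*}=p$, giving a globally convex function of $q$. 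Convexity immediately yields $\Psi_p\le\Psi_{p_{-}}^{\lambda}\Psi_{p_{+}}^{1-\lambda}$, so the ratio is at least $1$, and the remaining job is to upper-bound the convexity gap by $\log 2$. Since $q_{-}-q_{+}=2/(p_{-}p_{+})\le 1/(2p_{-})$ and the second derivative of $\log\Psi_{1/q}$ is $O(1/q^{2})=O(p_{+}^{2})$ on $[q_{+},q_{-}]$, the Taylor-remainder form of the convexity gap is at most $O(1/p_{-}^{2})$, well below $\log 2$; a tidy case split covering the three intervals (and the one or two case-transitions that can occur inside $[p_{-},p_{+}]$) turns this estimate into the constant-$2$ bound.

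The hard part will be this very last step: the statement of the gap estimate is innocuous but the case split is fiddly because, as the numerical example $M=1$, $\sigma^{2}=0.01$ already shows, $\Psi_{p_{+}}/\Psi_{p}$ can genuinely exceed $2$ (so the naive monotonicity bound $f(p)\le f(p_{+})\le K\Psi_{p_{+}}$ fails), which is precisely why the log-convex interpolation must be used rather than $f$ being increasing alone. Outside this step, all computations are short and mechanical.
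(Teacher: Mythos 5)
Your Properties 1--6 follow the same route as the paper: Property~1 is the variational expression for $\Psi_p$, and the others are read off from it; the minor difference is that you extract Properties 5--6 by factoring $\lambda$ through the variational formula rather than via the homogeneity identity $\Psi_p(\lambda M,\lambda^2\sigma^2)=\lambda\Psi_p(M,\sigma^2)$ that the paper uses, but that is cosmetic.

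Property~7 is where you genuinely diverge, and your route is sound. The paper interpolates between $q$ (the largest even integer $\le p$) and $2q$, then argues by cases on whether $p$ lands in case~1 of the $\Psi$-definition: in the ``large-$p$'' regime it uses monotonicity of $f$ plus the (stated but unproved) claim $\Psi_{2p}\le 2\Psi_p$; in the ``small-$p$'' regime it does a more delicate interpolation producing a $\sqrt 2$ or $2$ factor. You instead interpolate between the two consecutive even integers $p_-\le p\le p_+=p_-+2$, reducing everything to a single uniform estimate: that $q\mapsto\log\Psi_{1/q}$ is convex and $C^1$ (you verified the sign of $h''$ in each case and that $h'$ matches across both transitions $s^*=2$ and $s^*=p$; only one transition can occur for a given $(M,\sigma)$, per the Remark following the definition), and that the convexity gap obeys a Taylor bound $\tfrac18\sup|h''|\,(q_--q_+)^2 < \tfrac18\,p_+^2\,(2/(p_-p_+))^2 = 1/(2p_-^2)\le 1/8 < \log 2$. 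This is correct, and it actually gives a constant much sharper than $2$. Your approach also makes explicit --- with the example $M=1$, $\sigma^2=0.01$ --- why the naive chain $f(p)\le f(p_+)\le K\Psi_{p_+}$ cannot work, since $\Psi_{p_+}/\Psi_p$ can exceed $2$; the paper relies on this point implicitly via the $q,2q$ interpolation but never illustrates it. The one place you defer detail (``a tidy case split\dots turns this estimate into the constant-$2$ bound'') is exactly the arithmetic above, which does go through, so I see no genuine gap --- just a sketch where the constants should eventually be written out, keeping in mind that on an interval containing a case transition $h$ is only piecewise $C^2$, so the $\tfrac18\sup|h''|\,\ell^2$ bound must be justified for $C^1$-and-piecewise-$C^2$ convex functions (the integral form of the remainder does this without fuss).
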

\begin{proof}
    \item
    \paragraph*{Proof of \cref{eq:psi-relation}.}
    Let $\alpha = \frac{\sigma^2}{p M^2}$ and define the function $f(s) = \frac{1}{s} \alpha^{1/s}$.
    Taking the derivative we get that $f'(s) = \frac{1}{s^2}\left( -\frac{\log \alpha}{s} \alpha^{1/s} - \alpha^{1/s} \right)$.
    From this it is clear that $f$ is maximized at the point $s^* = \min\set{\max\!\set{2, \log 1/\alpha}, p}$ on the interval $[2, p]$.
    This implies that
    \begin{align*}
        M \sup\setbuilder{\frac{p}{s} \left( \frac{\sigma^2}{p M^2} \right)^{1/s} }{2 \le s \le p}
            &= M \sup\setbuilder{p f(s) }{2 \le s \le p}
            \\&= M p f(s^*)
            \\&= M p \begin{cases}
                \frac{1}{p} \left( \frac{\sigma^2}{p M^2} \right)^{1/p} &\text{if $p < \log \frac{p M^2}{\sigma^2}$} \\
                \frac{1}{2} \sqrt{\frac{\sigma^2}{p M^2}} &\text{if $\log \frac{p M^2}{\sigma^2} < 2$} \\
                \frac{1}{\log \frac{p M^2}{\sigma^2}} e^{-1} &\text{if $2 \le \log \frac{p M^2}{\sigma^2} \le p$}
            \end{cases}
            \\&= \begin{cases}
                \left(\frac{\sigma^2}{p M^2}\right)^{1/p} M &\text{if $p < \log \frac{p M^2}{\sigma^2}$}\\
                \frac{1}{2}\sqrt{p}\sigma &\text{if $p < e^{2} \frac{\sigma^2}{M^2}$}\\
                \frac{p}{e \log \frac{p M^2}{\sigma^2}} M &\text{if $\max\!\set{\log \frac{p M^2}{\sigma^2}, e^{2} \frac{\sigma^2}{M^2}} \le p$}
            \end{cases}
    \end{align*}
    which proves the claim.

    \paragraph*{Proof of \cref{eq:psi-bernstein}.}
    We note that if $p \le \log \frac{p M^2}{\sigma^2}$ then
    \[
        \left( \frac{\sigma^2}{p M^2} \right)^{1/p} M \le \frac{1}{e} M \le \frac{1}{2e} p M
        \; ,    
    \]
    and if $p > e^{2} \frac{\sigma^2}{M^2}$ then,
    \[
        \frac{p}{e \log \frac{p M^2}{\sigma^2}} M
            \le \frac{1}{2e} p M
        \; .
    \]
    This shows the upper bound.

    \paragraph*{Proof of \cref{eq:psi-lower-bound}.}
    The lower bound follows from \cref{eq:psi-relation} since,
    \begin{align*}
        \Psi_p(M, \sigma^2)
            = M \sup\setbuilder{\frac{p}{s} \left( \frac{\sigma^2}{p M^2} \right)^{1/s} }{2 \le s \le p}
            \ge M \frac{p}{2} \left( \frac{\sigma^2}{p M^2} \right)^{1/2}
            = \frac{1}{2} \sqrt{p} \sigma
        \; .
    \end{align*}

    \paragraph*{Proof of \cref{eq:psi-far-out}.}
    From \cref{eq:psi-relation} we know that $\Psi_p(M, \sigma^2) = M \sup\setbuilder{\frac{p}{s} \left( \frac{\sigma^2}{p M^2} \right)^{1/s} }{2 \le s \le p}$.
    We then get the upper bound,
    \begin{align*}
        \Psi_p(M, \sigma^2) 
            \le M \sup\setbuilder{\frac{p}{s} \left( \frac{\sigma^2}{p M^2} \right)^{1/s} }{2 \le s}
        \; .
    \end{align*}
    Now using the same method as in the proof of \cref{eq:psi-relation}, we get that the expression is maximized at $s^* = \frac{\sigma^2}{p M^2} \ge 2$
    and we get that
    \begin{align*}
        \Psi_p(M, \sigma^2) 
            \le \frac{p}{e \log \frac{p M^2}{\sigma^2}} M
        \; .
    \end{align*}

    \paragraph*{Proof of \cref{eq:psi-growth}.}
    We first notice that $\lambda\Psi_p(M, \sigma^2) = \Psi_p(\lambda M, \lambda^2 \sigma^2)$.
    Since $y \mapsto \Psi_p(\lambda M, y)$ is monotonically increasing then,
    \begin{align*}
        \Psi_p(\lambda M, \lambda \sigma^2)
            \le \Psi_p(\lambda M, \lambda^2 \sigma^2)
            = \lambda\Psi_p(M, \sigma^2)
        \; .
    \end{align*}

    \paragraph*{Proof of \cref{eq:psi-reverse-growth}.}
    We will again use that $\lambda \Psi_p(M, \sigma^2) = \Psi_p(\lambda M, \lambda^2 \sigma^2)$.
    This time we will use that $x \mapsto \Psi_p(x, \lambda^2 \sigma^2)$ is monotonically increasing,
    \begin{align*}
        \lambda \Psi_p(M, \sigma^2)
            = \Psi_p(\lambda M, \lambda^2 \sigma^2)
            \le \Psi_p(\lambda^2 M, \lambda^2 \sigma^2)
        \; .
    \end{align*}

    \paragraph*{Proof of \cref{eq:psi-integers-to-reals}.}
    Let $p \ge 2$ be a real and define let $2 \le q$ be the largest even integer with $q \le p$, that is, $q$ is the unique even integer satisfying that $q \le p < 2q$.
    Now let $\theta \ge [0, 1]$ be defined by the equation $\tfrac{1}{p} = \theta \tfrac{1}{q} + (1 - \theta) \tfrac{1}{2q}$.
    By the log-convexity of $p \mapsto f(1/p)$ we get that $f(p) \le f(q)^{\theta} f(2q)^{1 - \theta}$.
    We then have to consider two different cases.

    If $\log \frac{p M^2}{\sigma^2} < p$ then it is easy to check that $\Psi_{2p}(M, \sigma^2) \le 2 \Psi_{p}(M, \sigma^2)$,
    hence we get that
    \begin{align*}
        f(p)
            \le f(q)^{\theta} f(2q)^{1 - \theta} 
            \le f(2q)
            \le K \Psi_{2q}(M, \sigma^2)
            \le K \Psi_{2p}(M, \sigma^2)
            \le 2K \Psi_{p}(M, \sigma^2)
        \; .
    \end{align*}

    If $p \le \log \frac{p M^2}{\sigma^2}$ then we also know that $q \le \log \frac{q M^2}{\sigma^2}$
    and $\Psi_q(M, \sigma^2) = \left(\frac{\sigma^2}{q M^2}\right)^{1/q} M  \le \sqrt{2} \left(\frac{\sigma^2}{p M^2}\right)^{1/q} M$,
    where we have used that $p < 2q$ and $2 \le q$.
    Let $p < q'$ be defined by $q' = \log \frac{q'}{M^2}{\sigma^2}$.
    If $2q \le q'$ then we have that $\Psi_{2q}(M, \sigma^2) = \left(\frac{\sigma^2}{2q M^2}\right)^{1/2q} M \le \left(\frac{\sigma^2}{p M^2}\right)^{1/2q} M$
    and we get that
    \begin{align*}
        f(p)
            \le f(q)^{\theta} f(2q)^{1 - \theta} 
            \le \sqrt{2} K \left(\frac{\sigma^2}{q M^2}\right)^{\theta/q} \left(\frac{\sigma^2}{2q M^2}\right)^{(1 - \theta)/2q} M
            = \sqrt{2} K \Psi_q(M, \sigma^2)
        \; .
    \end{align*}
    If $q \le q' \le 2q$ then 
    \[
        \Psi_{2q}(M, \sigma^2)
            \le \Psi_{2q'}(M, \sigma^2)
            \le 2 \Psi_{q'}(M, \sigma^2)
            = 2 \left( \frac{\sigma^2}{q' M^2} \right)^{1/q'} M
            \le 2 \left( \frac{\sigma^2}{p M^2} \right)^{1/2q} M
        \; .
    \]  
    Combining these two facts give us that
    \begin{align*}
        f(p)
            \le f(q)^{\theta} f(2q)^{1 - \theta} 
            \le 2 K \left(\frac{\sigma^2}{q M^2}\right)^{\theta/q} \left(\frac{\sigma^2}{2q M^2}\right)^{(1 - \theta)/2q} M
            = 2 K \Psi_q(M, \sigma^2)
        \; .
    \end{align*}
    This finishes the proof of \Cref{lem:psi-properties}.
\end{proof}

We are now ready to establish the connection between the $\Psi_p$-function and the $p$-norms of Poisson distributed random variables.

\restateLemPsiRelationPoisson

For the proof, we need the following result by Latała~\cite{Latala1997} that gives a tight bound for $p$-norms of sums of independent and identically distributed symmetric variables.
\begin{lemma}[Latała~\cite{Latala1997}]\label{lem:Latala}
    If $(X_i)_{i \in [n]}$ are independent and identically distributed symmetric variables and $p \ge 2$ then,
    \begin{align*}
        \pnorm{\sum_{i \in [n]} X_i}{p}
            \le K_1 \sup\setbuilder{\frac{p}{s}\left( \frac{n}{p} \right)^{1/s} \pnorm{X_0}{s}}{\max\!\set{2, \tfrac{p}{n}} \le s \le p}
        \; ,
    \end{align*}
    and 
    \begin{align*}
        \pnorm{\sum_{i \in [n]} X_i}{p}
            \ge K_2 \sup\setbuilder{\frac{p}{s}\left( \frac{n}{p} \right)^{1/s} \pnorm{X_0}{s}}{\max\!\set{2, \tfrac{p}{n}} \le s \le p}
        \; .
    \end{align*}
    Here $K_1$ and $K_2$ are universal constants.
\end{lemma}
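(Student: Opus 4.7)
The plan is to prove both bounds via a truncation argument parametrized by a scale $s \in [\max\set{2, p/n}, p]$. For each such $s$, one chooses a threshold $t_s$ for which the decomposition $X_i = X_i \indicator{\abs{X_i} \le t_s} + X_i \indicator{\abs{X_i} > t_s}$ splits $\sum_i X_i$ into a bounded ``sub-Gaussian'' part and a rare-event ``Poissonian'' part. The supremum in the statement will emerge as the optimum of this truncation tradeoff, with $p/s$ playing the role of the expected number of exceedances and the quantity $\frac{p}{s}(n/p)^{1/s}\pnorm{X_0}{s}$ encoding the contribution at scale $s$.

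For the upper bound, I would control the truncated part by a symmetric Rosenthal/Bernstein inequality, giving $\pnorm{\sum_i X_i \indicator{\abs{X_i}\le t_s}}{p} \lesssim \sqrt{p n \ep{X_0^2 \indicator{\abs{X_0}\le t_s}}} + p t_s$, and the tail part by the fact that the count $N_s = \abs{\set{i : \abs{X_i} > t_s}}$ is a sum of i.i.d.\ Bernoullis whose $L^p$ norm is $\lesssim p/s$, combined with the conditional distribution of $\abs{X_0}$ given $\abs{X_0} > t_s$. The crucial step is to choose $t_s$ so that both the truncated second moment and the tail product collapse to a constant multiple of $\frac{p}{s}(n/p)^{1/s}\pnorm{X_0}{s}$; this is achieved by taking $t_s$ as the appropriate quantile and invoking Hölder on the moments of $X_0$ at exponent $s$. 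Taking the infimum over $s$ recovers the supremum in the statement (as the reciprocal relationship between the two sides of the balance equation inverts).

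For the lower bound, I would use order statistics together with the symmetry of the $X_i$. Writing $X_i = \eps_i \abs{X_i}$ with i.i.d.\ Rademacher signs $\eps_i$ and conditioning on the magnitudes, a one-variable symmetrization (isolating the $k$-th largest term) gives $\pnormcond{\sum_i X_i}{p}{(\abs{X_i})_i} \ge \abs{X_{(k)}}/2$ pointwise, hence $\pnorm{\sum_i X_i}{p} \gtrsim \pnorm{X_{(k)}}{p}$ for every $k \le n$. For $k = \ceil{p/s}$, a Paley--Zygmund argument gives $\pnorm{X_{(k)}}{p} \gtrsim t$ whenever $n\prb{\abs{X_0} > t} \ge 2k$; inverting this via Markov's inequality yields $\pnorm{X_{(k)}}{p} \gtrsim \frac{p}{s}(n/p)^{1/s}\pnorm{X_0}{s}$ for each admissible $s$, and taking the sup completes the lower bound. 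The constraint $s \ge p/n$ is exactly what is needed so that $k = \ceil{p/s} \le n$ is feasible.

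The main obstacle will be sharpness of the upper bound. A naive single-level truncation, combined with standard Bernstein, does not give the stated sup: one has to verify that $\ep{X_0^2 \indicator{\abs{X_0}\le t_s}}$ and $t_s$ can simultaneously be bounded in terms of $\pnorm{X_0}{s}$ at precisely the rate that collapses both terms into $\frac{p}{s}(n/p)^{1/s}\pnorm{X_0}{s}$. This amounts to showing that the function $s \mapsto \frac{p}{s}(n/p)^{1/s}\pnorm{X_0}{s}$ has the unimodal profile for which the truncation calculus closes, which is itself a nontrivial log-convexity consequence of Hölder applied across scales $s \in [2, p]$. Absent this structural fact, one is forced into an iterated truncation in the spirit of Latała's original argument rather than a clean one-step bound.
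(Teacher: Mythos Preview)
The paper does not prove this lemma; it is quoted from Lata{\l}a's original work and used as a black box. So there is no ``paper's own proof'' to compare your proposal against. That said, your sketch has a genuine gap in the lower bound.

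Your order-statistic argument gives $\pnorm{\sum_i X_i}{p} \ge \pnorm{X_{(k)}}{p}$ for every $k$, which is correct. But the claimed inversion $\pnorm{X_{(k)}}{p} \gtrsim \frac{p}{s}(n/p)^{1/s}\pnorm{X_0}{s}$ for $k = \ceil{p/s}$ is false in general. Take $X_0$ Rademacher: then $\abs{X_{(k)}} = 1$ deterministically, so $\pnorm{X_{(k)}}{p} = 1$ for all $k$, whereas at $s = 2$ the right-hand side is $\tfrac{p}{2}(n/p)^{1/2} = \tfrac{1}{2}\sqrt{np}$, which is the correct order of $\pnorm{\sum_i \eps_i}{p}$ but far exceeds any single order statistic. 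The point is that the supremum in Lata{\l}a's bound captures two regimes: a ``heavy-tail'' regime where a few large terms dominate (and order statistics are relevant), and a ``Gaussian'' regime where many moderate terms add up via square-root cancellation. Your lower bound mechanism only sees the first regime. A correct lower bound needs, in addition to the order-statistic piece, a Paley--Zygmund or hypercontractivity argument applied to the \emph{sum} (or to a truncated sum) to recover the $\sqrt{np}\,\pnorm{X_0}{2}$ contribution at $s = 2$.

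Your upper-bound outline is broadly in the right direction and you correctly flag the obstacle: a single truncation plus Rosenthal does not collapse cleanly to the stated supremum, and Lata{\l}a's original proof indeed proceeds by a more delicate balance (essentially optimizing the truncation level jointly with $s$ and exploiting the log-convexity of $s \mapsto \pnorm{X_0}{s}^s$). Your honest acknowledgment that ``one is forced into an iterated truncation'' is accurate; without that refinement the upper bound remains a plan rather than a proof.
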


\begin{proof}[Proof of \Cref{lem:psi-relation-poisson}]
    We will use the standard fact that the Poisson distribution is the limit of a binomial distribution, with fixed mean $\lambda$, as the number of trials go to infinity.
    Let $\left(Y^{(n)}_i\right)_{i \in [n]}$ be independent Bernoulli variables with $\prb{Y^{(n)}_i = 1} = \tfrac{\lambda}{n}$ for $n \ge 1$.
    We then get that
    \begin{align*}
        \pnorm{X - \lambda}{p}
            = \lim_{n \to \infty} \pnorm{\sum_{i \in [n]} \left(Y^{(n)}_i - \tfrac{\lambda}{n} \right)}{p}
        \; .
    \end{align*}

    We let $(\eps_i)_{i \in \N}$ be independent Rademacher variables.
    We will argue that
    \begin{align}\label{eq:symmetrization}
        \tfrac{1}{2} \pnorm{\sum_{i \in [n]} \eps_i \left(Y^{(n)}_i - \tfrac{\lambda}{n} \right)}{p} 
            \le \pnorm{\sum_{i \in [n]} \left(Y^{(n)}_i - \tfrac{\lambda}{n} \right)}{p}
            \le 2 \pnorm{\sum_{i \in [n]} \eps_i \left(Y^{(n)}_i - \tfrac{\lambda}{n} \right)}{p}
        \; .
    \end{align}
    We defer the proof of \cref{eq:symmetrization} to the end.
    Using \cref{eq:symmetrization} it is enough to show that $\lim_{n \to \infty} \pnorm{\sum_{i \in [n]} \eps_i \left(Y^{(n)}_i - \tfrac{\lambda}{n} \right)}{p}$ is at most a constant away from $\Psi_p(1, \lambda)$.
    We use \Cref{lem:Latala} to get that
    \begin{align*}
        &\lim_{n \to \infty} \pnorm{\sum_{i \in [n]} \eps_i \left(Y^{(n)}_i - \tfrac{\lambda}{n} \right)}{p}
            \\&\qquad\qquad\le \lim_{n \to \infty} K_1 \sup\setbuilder{\frac{p}{s}\left( \frac{n}{p} \right)^{1/s} \pnorm{Y^{(n)}_0}{s}}{\max\!\set{2, \tfrac{p}{n}} \le s \le p}
            \\&\qquad\qquad= \lim_{n \to \infty} K_1 \sup\setbuilder{\frac{p}{s}\left( \frac{n}{p} \left(\tfrac{\lambda}{n}\left(1 - \tfrac{\lambda}{n}\right)^s + \left(1 - \tfrac{\lambda}{n}\right) \tfrac{\lambda^s}{n^s} \right) \right)^{1/s} }{\max\!\set{2, \tfrac{p}{n}} \le s \le p}
            \\&\qquad\qquad= \lim_{n \to \infty} K_1 \sup\setbuilder{\frac{p}{s}\left( \frac{\lambda}{p} \left(\left(1 - \tfrac{\lambda}{n}\right)^s + \left(1 - \tfrac{\lambda}{n}\right) \tfrac{\lambda^{s - 1}}{n^{s - 1}} \right) \right)^{1/s} }{\max\!\set{2, \tfrac{p}{n}} \le s \le p}
            \\&\qquad\qquad= K_1 \sup\setbuilder{\frac{p}{s}\left( \frac{\lambda}{p}  \right)^{1/s} }{2 \le s \le p}
            \\&\qquad\qquad= K_1 \Psi_p(1, \lambda)
        \; .
    \end{align*}
    The last equality follows by \cref{eq:psi-relation}.
    The proof of the lower bound is analogous.
    
    Now we just need to prove \cref{eq:symmetrization}.
    We first consider the lower bound.
    Fixing $(\eps_i)_{i \in \N}$ and using the triangle inequality we get that
    \begin{align*}
        &\pnormcond{\sum_{i \in [n]} \eps_i \left(Y^{(n)}_i - \tfrac{\lambda}{n} \right)}{p}{(\eps_i)_{i \in \N}}
            \\&\qquad\qquad= \pnormcond{\sum_{i \in [n]} [\eps_i = 1] \left(Y^{(n)}_i - \tfrac{\lambda}{n} \right) - \sum_{i \in [n]} [\eps_i = -1] \left(Y^{(n)}_i - \tfrac{\lambda}{n} \right)}{p}{(\eps_i)_{i \in \N}}
            \\&\qquad\qquad\le \pnormcond{\sum_{i \in [n]} [\eps_i = 1] \left(Y^{(n)}_i - \tfrac{\lambda}{n} \right)}{p}{(\eps_i)_{i \in \N}} + \pnormcond{\sum_{i \in [n]} [\eps_i = -1] \left(Y^{(n)}_i - \tfrac{\lambda}{n} \right)}{p}{(\eps_i)_{i \in \N}}
        \; .
    \end{align*}
    Now we use Jensen's inequality on each of the terms.
    \begin{align*}
        &\pnormcond{\sum_{i \in [n]} [\eps_i = 1] \left(Y^{(n)}_i - \tfrac{\lambda}{n} \right)}{p}{(\eps_i)_{i \in \N}}
            \\&\qquad\qquad= \pnormcond{\sum_{i \in [n]} [\eps_i = 1] \left(Y^{(n)}_i - \tfrac{\lambda}{n} \right) + \sum_{i \in [n]} [\eps_i = -1] \ep{Y^{(n)}_i - \tfrac{\lambda}{n}}}{p}{(\eps_i)_{i \in \N}}
            \\&\qquad\qquad\le \pnormcond{\sum_{i \in [n]} \left(Y^{(n)}_i - \tfrac{\lambda}{n} \right)}{p}{(\eps_i)_{i \in \N}}
        \; .
    \end{align*}
    Unfixing $(\eps_i)_{i \in \N}$ we get that
    \begin{align*}
        \pnorm{\sum_{i \in [n]} \eps_i \left(Y^{(n)}_i - \tfrac{\lambda}{n} \right)}{p} 
            \le 2\pnorm{\sum_{i \in [n]} \left(Y^{(n)}_i - \tfrac{\lambda}{n} \right)}{p}
        \; ,
    \end{align*}
    which establishes the lower bound of \cref{eq:symmetrization}.

    For the upper bound of \cref{eq:symmetrization} we first define $\left(Z^{(n)}_i\right)_{i \in [n]}$ to be independent copies of $\left(Y^{(n)}_i\right)_{i \in [n]}$.
    We then use Jensen's inequality to get that
    \begin{align*}
        \pnorm{\sum_{i \in [n]} \left(Y^{(n)}_i - \tfrac{\lambda}{n} \right)}{p}
            &= \pnorm{\sum_{i \in [n]} \left(\left(Y^{(n)}_i - \tfrac{\lambda}{n}\right) - \ep{Z^{(n)}_i - \tfrac{\lambda}{n}}\right)}{p}
            \\&\le \pnorm{\sum_{i \in [n]} \left( \left(Y^{(n)}_i - \tfrac{\lambda}{n}\right) - \left(Z^{(n)}_i - \tfrac{\lambda}{n}\right) \right)}{p}
        \; .
    \end{align*}
    We then note that due to independence $\left(Y^{(n)}_i - \tfrac{\lambda}{n}\right) - \left(Z^{(n)}_i - \tfrac{\lambda}{n}\right)$ is a symmetric variable, thus it has the same distribution as $\eps_i\left(\left(Y^{(n)}_i - \tfrac{\lambda}{n}\right) - \left(Z^{(n)}_i - \tfrac{\lambda}{n}\right)\right)$.
    We use this and the triangle inequality to finish the upper bound,
    \begin{align*}
        \pnorm{\sum_{i \in [n]} \left( \left(Y^{(n)}_i - \tfrac{\lambda}{n}\right) - \left(Z^{(n)}_i - \tfrac{\lambda}{n}\right) \right)}{p}
            &= \pnorm{\sum_{i \in [n]} \eps_i\left( \left(Y^{(n)}_i - \tfrac{\lambda}{n}\right) - \left(Z^{(n)}_i - \tfrac{\lambda}{n}\right) \right)}{p}
            \\&\le 2 \pnorm{\sum_{i \in [n]} \eps_i \left(Y^{(n)}_i - \tfrac{\lambda}{n} \right)}{p}
        \; .
    \end{align*}
    This finishes the proof of \Cref{lem:psi-relation-poisson}.
\end{proof}


\subsection{Moments of General Random Variables}

We start by proving a lemma that bounds the moments of weighted sums of independent and identically distributed variables.
The lemma is similar to \Cref{lem:Latala} by Latała~\cite{Latala1997} but it is not tight for all distributions.

\begin{lemma}\label{lem:gen-moment-sym}
    Let $(X_i)_{i \in [n]}$ be independent and identically distributed symmetric random variables, and let $(a_i)_{i \in [n]}$ be a sequence of integers.
    If $p \ge 2$ is an even integer then,
    \begin{align*}
        \pnorm{\sum_{i \in [n]} a_{i} X_i}{p}
            \le K \sup\setbuilder{\frac{p}{s} \left( \frac{\sum_{i \in [n]} a_i^s}{p} \right)^{1/s} \pnorm{X}{s}}{2 \le s \le p}
        \; ,
    \end{align*}
    where $K \le 4 e$ is a universal constant.
\end{lemma}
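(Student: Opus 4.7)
The plan is to use the multinomial expansion together with the symmetry of the $X_i$. Since $p$ is an even integer, we have
\begin{align*}
    \pnorm{\sum_{i \in [n]} a_i X_i}{p}^p
        = \sum_{\substack{k_0 + \ldots + k_{n-1} = p \\ k_i \ge 0}} \binom{p}{k_0, \ldots, k_{n-1}} \prod_i a_i^{k_i} \ep{X_i^{k_i}},
\end{align*}
and because each $X_i$ is symmetric, $\ep{X^{k_i}} = 0$ whenever $k_i$ is odd. Thus only multi-indices with all nonzero $k_i$ being even integers $\ge 2$ contribute. I would reindex by the support $J = \set{i : k_i > 0}$ with $|J| = m$, so that $m \le p/2$, and by the sequence of exponents $(k_j)_{j \in J}$ with $\sum_{j \in J} k_j = p$ and $k_j \in 2\Z_{\ge 1}$.

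The heart of the argument is to reduce to the \emph{equal-exponent} case where every $k_j$ equals the same value $s = p/m$. Fix the support $J$ of size $m$. By an AM--GM / Schur-convexity argument (using that $k \mapsto \log \pnorm{X}{k}^k / \log k!$ behaves convexly by H\"older's inequality on the moments of $X$), the sum over exponent patterns $(k_j)_{j \in J}$ of $\binom{p}{k_0,\ldots,k_{n-1}} \prod_{j \in J} a_j^{k_j} \ep{X^{k_j}}$ is dominated, up to a combinatorial factor $\binom{p/2-1}{m-1}$ absorbable into an $e^{O(p)}$ constant, by the symmetric term with every $k_j = s$. Summing over $J$ of size $m$ gives $\le \frac{1}{m!}(\sum_i a_i^s)^m$, and then
\begin{align*}
    \frac{p!}{(s!)^{p/s}\, (p/s)!} \Bigl(\sum_i a_i^s\Bigr)^{p/s} \pnorm{X}{s}^p
        \le \Bigl[\frac{Ke\, p}{s}\Bigl(\tfrac{\sum_i a_i^s}{p}\Bigr)^{1/s} \pnorm{X}{s}\Bigr]^p,
\end{align*}
by Stirling applied to each factorial. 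Taking a $p$-th root and maximizing over $s$ recovers the stated supremum. Finally, the supremum over even divisors $s$ of $p$ in $[2,p]$ is within a constant of the supremum over all real $s \in [2,p]$, since the function $s \mapsto \frac{p}{s}(\sum a_i^s / p)^{1/s} \pnorm{X}{s}$ varies smoothly (its logarithm is convex in $1/s$, much as in the proof of \cref{eq:psi-integers-to-reals}).

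The main obstacle will be bookkeeping the constants so that $K \le 4e$ is truly absolute rather than drifting with $p$: both the Stirling estimate on $p!/(s!)^{p/s}(p/s)!$ and the reduction from unequal to equal exponents introduce $e^{O(p)}$ factors which must be collected into a single $K^p$ after taking the $p$-th root. A cleaner alternative route, which I would fall back on if the combinatorial bookkeeping becomes unwieldy, is to use the symmetry of $X_i$ to write $\sum_i a_i X_i \stackrel{d}{=} \sum_i \eps_i a_i X_i$ with Rademachers $\eps_i$, condition on $(X_i)$ and apply Hitczenko's order-statistic bound for weighted Rademacher sums, then take the $p$-norm over the $(X_i)$ and optimize the split threshold; this directly produces the $\sup_s$ form of the bound with explicit constants.
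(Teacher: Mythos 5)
Your reduction-to-equal-exponents step is where the argument breaks. You assert that, for a fixed support $J$ with $|J|=m$, the sum of multinomial terms over all even exponent patterns $(k_j)_{j\in J}$ with $\sum_{j\in J}k_j=p$ is controlled, up to an $e^{O(p)}$ factor, by the single balanced pattern $k_j\equiv p/m$. That is false in general: the weights enter as $\prod_j a_j^{k_j}$, and if some $a_j$ are small, the balanced pattern is heavily penalized. Concretely, take $n=2$, $a_1=1$, $a_2=\eps$, $X$ standard Gaussian, $J=\set{1,2}$. The ratio of the lopsided pattern $(p-2,2)$ to the balanced one $(p/2,p/2)$,
\begin{align*}
\frac{\binom{p}{2}\,\eps^2\,\ep{X^{p-2}}\ep{X^2}}{\binom{p}{p/2}\,\eps^{p/2}\,\ep{X^{p/2}}^2},
\end{align*}
grows like $(2\eps)^{-p/2}$ up to polynomial factors and thus diverges as $p\to\infty$ whenever $\eps<1/2$; no $e^{O(p)}$ fudge factor can absorb this. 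The Schur-concavity of the multinomial coefficient pulls toward balance, the log-convexity of $s\mapsto\ep{X^s}$ pulls toward extreme patterns, and $\prod_j a_j^{k_j}$ can tip either way, so there is no a priori dominant pattern. (The lemma is still true in this example, because the $\sup$ over $s$ eventually picks up the $J=\set{1}$ contribution, but the intermediate domination you rely on is not available.)

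The paper avoids any cross-pattern comparison by a self-referential identity. Grouping the expansion of $S^p$, $S=\sum_ia_iX_i$, by the index occupying the first of $p$ slots and its multiplicity $s$ gives the exact identity
\begin{align*}
\pnorm{S}{p}^p=\sum_{i\in[n]}\sum_{s=1}^p\binom{p-1}{s-1}\,\ep{(a_iX_i)^s}\,\ep{\Bigl(\textstyle\sum_{j\ne i}a_jX_j\Bigr)^{p-s}}.
\end{align*}
Symmetry kills odd $s$; two applications of Jensen bound the last factor by $\pnorm{S}{p}^{p-s}$, yielding the polynomial inequality $\pnorm{S}{p}^p\le\sum_s\binom{p-1}{s-1}\bigl(\sum_ia_i^s\bigr)\ep{X^s}\pnorm{S}{p}^{p-s}$ in the unknown $\pnorm{S}{p}$. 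The root bound \Cref{lem:folklore} then gives $\pnorm{S}{p}\le2\max_s\bigl(\binom{p-1}{s-1}(\sum_ia_i^s)\ep{X^s}\bigr)^{1/s}$, and Stirling on $\binom{p-1}{s-1}^{1/s}$ produces the $\frac{p}{s}(\cdot/p)^{1/s}$ shape with $K\le4e$. The crucial mechanism is that all other exponent patterns are absorbed recursively into $\pnorm{S}{p}^{p-s}$ rather than compared to a reference pattern, which is exactly the comparison you cannot make. Your fallback route — Rademacher symmetrization, conditioning on the $X_i$, then a weighted-Khintchine/Hitczenko estimate — is a legitimate but substantially heavier machine (it is essentially Latała's route to the sharp \Cref{lem:Latala}) and would need to be carried through in full to stand on its own.
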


In the proof, we will need the following folklore result.
We provide a proof of the result for completeness.
\begin{lemma}\label{lem:folklore}
    Let $n \ge 0$ be a positive integer and let $a_1, \ldots, a_{n} \in \R$ be real numbers.
    If $x \in \R$ is a real number satisfying,
    \[
        x^n \le \sum_{i = 1}^{n} a_{i} x^{n - i} \; .
    \]
    Then,
    \[
        x \le 2 \max_{1 \le i \le n} \abs{a_i}^{1/i} \; .
    \]
\end{lemma}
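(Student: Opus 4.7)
The plan is to argue by contradiction, reducing to a bound on a geometric series. Set $B = \max_{1 \le i \le n} \abs{a_i}^{1/i}$, so that $\abs{a_i} \le B^i$ for every $i \in \set{1, \ldots, n}$. The claim is trivially satisfied when $x \le 0$ (since $B \ge 0$), so I only need to handle the case $x > 0$, and in fact I can assume $x > 2B$ and derive a contradiction.

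Assuming $x > 2B$, I would divide the hypothesis $x^n \le \sum_{i=1}^n a_i x^{n-i}$ through by $x^n$, which is legal since $x > 0$. This yields
\[
    1 \le \sum_{i=1}^n a_i x^{-i} \le \sum_{i=1}^n \abs{a_i} x^{-i} \le \sum_{i=1}^n \left(\frac{B}{x}\right)^i .
\]
Setting $r = B/x$, we have $r < 1/2$, so $\sum_{i=1}^n r^i \le \sum_{i=1}^\infty r^i = \frac{r}{1-r} < 1$, contradicting the previous display. Hence $x \le 2B$, which is the desired conclusion.

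This is essentially Cauchy's classical bound for the roots of a polynomial, so I do not expect any real obstacle; the only mild subtlety is to handle the trivial sign cases ($x \le 0$, or $B = 0$ in which case the hypothesis forces $x \le 0$ and the bound holds trivially) before invoking the geometric series estimate. The constant $2$ arises precisely from the threshold $r < 1/2$ that makes the geometric tail bounded by $1$; any constant strictly greater than $1$ would work in place of $2$ with the same argument, which confirms the bound is of the right flavour.
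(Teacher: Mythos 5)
Your proof is correct, and it takes a genuinely different route from the paper's. The paper proceeds by first proving, via induction on $n$, the inequality $\sum_{i=1}^n \abs{a_i}\abs{x}^{n-i} \le \max_{1\le i\le n} 2^i \abs{a_i}\abs{x}^{n-i}$ (essentially distributing the slack from $a+b\le 2\max\set{a,b}$ term by term), and then concludes that for some $i$ one has $x^n \le 2^i\abs{a_i}\abs{x}^{n-i}$, hence $x\le 2\abs{a_i}^{1/i}$. You instead normalize by $x^n$ and bound the resulting sum $\sum_{i=1}^n \abs{a_i}x^{-i}$ by the geometric series $\sum_{i\ge 1}(B/x)^i$, which is strictly less than $1$ once $x>2B$; this is exactly Cauchy's classical bound for polynomial roots. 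The two arguments produce the same constant $2$ from the same underlying fact (a geometric series with ratio $1/2$ sums to $1$), but yours is shorter, avoids the inductive step, and makes the provenance of the constant transparent, which you correctly observe when noting that any constant strictly greater than $1$ would work. Your handling of the degenerate cases $x\le 0$ and $B=0$ is also careful and complete, so there is no gap.
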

\begin{proof}
    The proof follows by noticing that
    \begin{align*}
        x^n \le \sum_{i = 1}^{n} a_{i} x^{n - i}
            \le \sum_{i = 1}^n \abs{a_i} \abs{x}^{n - i}
        \; .
    \end{align*}
    We will now show that $\sum_{i = 1}^n \abs{a_i} \abs{x}^{n - i} \le \max_{i = 1}^{n} 2^i \abs{a_i} \abs{x}^{n - i}$ by induction on $n$.
    The result is clearly true for $n = 1$.
    Now assume that the result holds for integers less than $n$ then,
    \begin{align*}
        \sum_{i = 1}^n \abs{a_i} \abs{x}^{n - i}
            \le \max\!\set{2 a_n, 2 \sum_{i = 1}^{n - 1} \abs{a_i} \abs{x}^{n - i}}
            \le \max_{i = 1}^{n} 2^i \abs{a_i} \abs{x}^{n - i}
        \; .
    \end{align*}
    We then have that $x^n \le \max_{i = 1}^{n} 2^i \abs{a_i} \abs{x}^{n - i}$.
    This is equivalent with the statement that there exists an integer $1 \le i \le n$ with $x^n \le 2^i \abs{a_i} \abs{x}^{n - i}$.
    This implies that there exists an integer $1 \le i \le n$ with $x \le 2 \abs{a_i}^{1/i}$.
    This is again equivalent with $x \le \max_{i = 1}^n \abs{a_i}^{1/i}$ which is what we wanted to prove.
\end{proof}

We now turn to the proof of \Cref{lem:gen-moment-sym}.
\begin{proof}[Prood of \Cref{lem:gen-moment-sym}]
    Since $p$ is an even integer then,
    \begin{align*}
        \pnorm{\sum_{i \in [n]} a_{i} X_i}{p}^p
            &= \sum_{i \in [n]} \sum_{s = 1}^p \binom{p - 1}{s - 1} \ep{(a_i X_i)^s \left(\sum_{j \in [n]\setminus\set{i}} a_j X_j \right)^{p - s}}
            \\&= \sum_{i \in [n]} \sum_{s = 1}^p \binom{p - 1}{s - 1} \ep{(a_i X_i)^s} \ep{\left(\sum_{j \in [n]\setminus\set{i}} a_j X_j \right)^{p - s}}
        \; .
    \end{align*}
    The first step follows by noticing that when we expand $(\sum_{i \in [n]} a_{i} X_i)^p$ then for each term the first factor will give an $i \in [n]$.
    Now if $i$ has multiplicity $s \ge 1$ in the term then there are $\binom{p - 1}{s - 1}$ ways to choose the other factors for $i$.

    We note that since the variables are symmetric then $\ep{(a_i X_i)^s} = 0$ for $s$ odd.
    So in the following, we assume that $s$ is even, which implies that $p - s$ is even.
    Now we use Jensen's inequality to obtain,
    \begin{align*}
        \ep{\left(\sum_{j \in [n]\setminus\set{i}} a_j X_j \right)^{p - s}}
            &= \ep{\left(\sum_{j \in [n]\setminus\set{i}} a_j X_j +   a_i \ep{X_i} \right)^{p - s}}
            \le \ep{\left(\sum_{j \in [n]} a_j X_j \right)^{p - s}}
        \; .
    \end{align*}
    Another usage of Jensen's inequality gives us that
    \begin{align*}
        \ep{\left(\sum_{j \in [n]} a_j X_j \right)^{p - s}}
            = \pnorm{\sum_{j \in [n]} a_j X_j}{p - s}^{p - s}
            \le \pnorm{\sum_{j \in [n]} a_j X_j}{p}^{p - s}
        \; .
    \end{align*}
    Combining these we get that
    \begin{align*}
        \pnorm{\sum_{i \in [n]} a_{i} X_i}{p}^p
            &\le \sum_{i \in [n]} \sum_{s = 2}^p \binom{p - 1}{s - 1} \ep{(a_i X_i)^s} \pnorm{\sum_{j \in [n]} a_j X_j}{p}^{p - s}
            \\&= \sum_{s = 2}^p \binom{p - 1}{s - 1} \sum_{i \in [n]} a_i^s \ep{X_i^s} \pnorm{\sum_{j \in [n]} a_j X_j}{p}^{p - s}
            \\&= \sum_{s = 2}^p \binom{p - 1}{s - 1} \ep{X^s} \pnorm{\sum_{j \in [n]} a_j X_j}{p}^{p - s} \sum_{i \in [n]} a_i^s
        \; .
    \end{align*}
    Now using \Cref{lem:folklore} we get that
    \begin{align*}
        \pnorm{\sum_{i \in [n]} a_{i} X_i}{p}
            &\le \sup\setbuilder{ 2 \binom{p - 1}{s - 1}^{1/s} \left(\sum_{i \in [n]} a_i^s\right)^{1/s} \pnorm{X}{s}}{2 \le s \le p}
        \; .
    \end{align*}
    Now using Stirling's approximation, we get that
    $\binom{p - 1}{s - 1} = \binom{p}{s} \frac{s}{p} \le \left( \frac{e p}{s} \right)^{s} \frac{s}{p}$.
    Plugging this estimate into our equation gives us that
    \begin{align*}
        \pnorm{\sum_{i \in [n]} a_{i} X_i}{p}
            &\le \sup\setbuilder{ 2 \frac{e p}{s} \frac{s^{1/s}}{p^{1/s}} \left(\sum_{i \in [n]} a_i^s\right)^{1/s} \pnorm{X}{s}}{2 \le s \le p}
            \\&\le 4 e \sup\setbuilder{ \frac{p}{s} \left(\frac{\sum_{i \in [n]} a_i^s}{p}\right)^{1/s} \pnorm{X}{s}}{2 \le s \le p}
        \; .
    \end{align*}
\end{proof}

We will not be using the result directly instead we will use the following corollary where we further simplify the expression by bounding only in terms of the largest weight and the Euclidean norm.

\begin{corollary}\label{cor:max-moments-sym}
    Let $(X_i)_{i \in [n]}$ be independent and identically distributed symmetric random variables, and let $(a_i)_{i \in [n]}$ be a sequence of reals.
    If $p \ge 2$ is an even integer then,
    \begin{align*}
        \pnorm{\sum_{i \in [n]} a_{i} X_i}{p}
            \le K \max_{i \in [n]} \abs{a_{i}} \sup\setbuilder{\frac{p}{s} \left( \frac{\sum_{i \in [n]} a_i^2}{p\max_{i \in [n]} \abs{a_i}^2} \right)^{1/s} \pnorm{X}{s}}{2 \le s \le p}
        \; ,
    \end{align*}
    where $K = 4 e$ is a universal constant.
\end{corollary}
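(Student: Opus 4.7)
The plan is to derive the corollary directly from \Cref{lem:gen-moment-sym} by estimating the $\ell^s$-norm of the weight sequence $(a_i)_{i \in [n]}$ in terms of its $\ell^2$-norm and its $\ell^\infty$-norm. The key elementary inequality I would use is that for every $s \ge 2$ and every $i \in [n]$,
\begin{align*}
    \abs{a_i}^s = \abs{a_i}^{s-2}\, a_i^2 \le \Bigl(\max_{j \in [n]} \abs{a_j}\Bigr)^{s-2} a_i^2 \; ,
\end{align*}
which is valid precisely because $s - 2 \ge 0$. Summing over $i$ gives
\begin{align*}
    \sum_{i \in [n]} a_i^s \le \Bigl(\max_{j \in [n]} \abs{a_j}\Bigr)^{s-2} \sum_{i \in [n]} a_i^2 \; .
\end{align*}

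I would then apply \Cref{lem:gen-moment-sym} and substitute this bound. Taking $s$-th roots and dividing and multiplying by appropriate powers of $\max_{j \in [n]} \abs{a_j}$ yields
\begin{align*}
    \frac{p}{s} \left( \frac{\sum_{i \in [n]} a_i^s}{p} \right)^{1/s} \pnorm{X}{s}
        \le \max_{i \in [n]} \abs{a_i} \cdot \frac{p}{s} \left( \frac{\sum_{i \in [n]} a_i^2}{p \max_{i \in [n]} \abs{a_i}^2} \right)^{1/s} \pnorm{X}{s} \; ,
\end{align*}
for every $s \in [2, p]$, and taking the supremum over such $s$ on both sides gives the claim with the same constant $K = 4e$ coming from \Cref{lem:gen-moment-sym}.

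Since the estimate is a pointwise inequality on the weight vector followed by a direct substitution, there is no real obstacle; the only very mild point to watch is ensuring $s \ge 2$ (so that the exponent $s-2$ is nonnegative and the case $a_i = 0$ is handled trivially), which is exactly the range in \Cref{lem:gen-moment-sym}.
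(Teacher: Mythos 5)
Your proposal is correct and follows exactly the same route as the paper's own proof: both apply \Cref{lem:gen-moment-sym} and then substitute the pointwise bound $\abs{a_i}^s \le \bigl(\max_{j\in[n]}\abs{a_j}\bigr)^{s-2} a_i^2$, valid for $s \ge 2$, into the supremum. Nothing to add.
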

\begin{proof}
    This follows from \Cref{lem:gen-moment-sym} and the fact that $a_i^s \le a_i^2 \max_{i \in [n]} \abs{a_i}^{s - 2}$ for all $i \in [n], s \ge 2$.
    \begin{align*}
        \pnorm{\sum_{i \in [n]} a_{i} X_i}{p}
            &\le K \sup\setbuilder{\frac{p}{s} \left( \frac{\sum_{i \in [n]} a_i^s}{p} \right)^{1/s} \pnorm{X}{s}}{2 \le s \le p}
            \\&\le K \sup\setbuilder{\frac{p}{s} \left( \frac{\sum_{i \in [n]} a_i^2 \max_{i \in [n]} \abs{a_i}^{s - 2}}{p} \right)^{1/s} \pnorm{X}{s}}{2 \le s \le p}
            \\&\le K \max_{i \in [n]} \abs{a_{i}} \sup\setbuilder{\frac{p}{s} \left( \frac{\sum_{i \in [n]} a_i^2}{p\max_{i \in [n]} \abs{a_i}^2} \right)^{1/s} \pnorm{X}{s}}{2 \le s \le p}
        \; .
    \end{align*}
\end{proof}

We will now use \Cref{cor:max-moments-sym} to bound the sum of different types of random variables with $\Psi_p$-function.
We start by looking at Bernoulli-Rademacher variables.

\begin{lemma}\label{lem:bernoulli-moments}
    Let $(X_i)_{i \in [n]}$ be independent Bernoulli-Rademacher variables with parameter $\alpha$, that is, $\prb{X_i = 1} = \prb{X_i = -1} = \tfrac{1}{2} - \prb{X_i = 0} = \tfrac{\alpha}{2}$,
    and let $(a_i)_{i \in [n]}$ be a sequence of reals.

    If $p \ge 2$ is an even integer then,
    \begin{align*}
        \pnorm{\sum_{i \in [n]} a_{i} X_i}{p}
            \le 4e \Psi_p\left(\max_{i \in [n]} \abs{a_i}, \alpha \sum_{i \in [n]} a_i^2 \right)
        \; .
    \end{align*}

    And if $p \ge 2$ is a real number then,
    \begin{align*}
        \pnorm{\sum_{i \in [n]} a_{i} X_i}{p}
            \le 8 e \Psi_p\left(\max_{i \in [n]} \abs{a_i}, \alpha \sum_{i \in [n]} a_i^2 \right)
        \; .
    \end{align*}
\end{lemma}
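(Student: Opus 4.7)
The plan is a direct two-step reduction: first handle even integer $p$ by plugging the Bernoulli-Rademacher distribution into \Cref{cor:max-moments-sym}, then bootstrap to real $p$ via \cref{eq:psi-integers-to-reals}.

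For the even integer case, I would observe that for a Bernoulli-Rademacher variable $X$ with parameter $\alpha$ we have $|X|^s = \mathds{1}[X \neq 0]$ for every $s \geq 1$, hence $\pnorm{X}{s} = \alpha^{1/s}$. Applying \Cref{cor:max-moments-sym} with the $a_i$'s gives
\begin{align*}
\pnorm{\sum_{i \in [n]} a_i X_i}{p}
 \le 4e \max_{i \in [n]} |a_i| \sup\!\setbuilder{\frac{p}{s}\left(\frac{\sum_{i \in [n]} a_i^2}{p \max_{i \in [n]} a_i^2}\right)^{\!1/s} \alpha^{1/s}}{2 \le s \le p}.
\end{align*}
Pulling $\alpha$ inside the bracket rewrites the ratio as $\frac{\alpha \sum a_i^2}{p \max a_i^2}$, and then \cref{eq:psi-relation} identifies the right-hand side with $4e\, \Psi_p(\max_i |a_i|,\, \alpha \sum_i a_i^2)$. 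This settles the first statement with constant $4e$.

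To extend from even integers to arbitrary real $p \geq 2$, I would invoke \cref{eq:psi-integers-to-reals} applied to $f(p) = \pnorm{\sum_i a_i X_i}{p}$, which is clearly increasing in $p$. The hypothesis to check is the log-convexity of $p \mapsto f(1/p)$, i.e., log-convexity of $p \mapsto p \log \ep{|\sum_i a_i X_i|^{1/p}}$. This is a standard consequence of the log-convexity of $r \mapsto \log \ep{|Y|^r}$ (Lyapunov's inequality, a one-line Hölder argument): writing $g(p) = p\,\phi(1/p)$ for $\phi(r) = \log \ep{|Y|^r}$, a direct computation yields $g''(p) = p^{-3}\phi''(1/p) \geq 0$. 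With log-convexity in hand, \cref{eq:psi-integers-to-reals} turns the $4e$ bound at even integers into an $8e$ bound at all real $p \geq 2$.

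There is no real obstacle here; the only thing to be slightly careful about is the log-convexity verification, which reduces cleanly to Lyapunov's moment inequality, and the bookkeeping that $\pnorm{X}{s} = \alpha^{1/s}$ holds uniformly in $s$ (not just even $s$), which is what makes the supremum in \Cref{cor:max-moments-sym} collapse exactly onto the expression for $\Psi_p$ in \cref{eq:psi-relation}.
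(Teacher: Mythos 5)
Your proposal matches the paper's proof essentially verbatim: it uses $\pnorm{X}{s} = \alpha^{1/s}$ inside \Cref{cor:max-moments-sym}, identifies the resulting supremum with $\Psi_p$ via \cref{eq:psi-relation}, and then extends to real $p$ via \cref{eq:psi-integers-to-reals} with the Hölder/Lyapunov log-convexity argument. The paper invokes Hölder's inequality more tersely for the log-convexity of $p \mapsto \|\cdot\|_{1/p}$, while you make the perspective-transform computation $g''(p)=p^{-3}\phi''(1/p)$ explicit, but this is only a difference in level of detail, not in method.
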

\begin{proof}
    We note that $\pnorm{X}{s} = \alpha^{1/s}$ for all $s \ge 2$.
    Let $p \ge 2$ be an even integer then using \Cref{cor:max-moments-sym} we then get that
    \begin{align*}
        \pnorm{\sum_{i \in [n]} a_{i} X_i}{p}
            &\le 4 e \max_{i \in [n]} \abs{a_i} \sup\setbuilder{\frac{p}{s} \left( \frac{\sum_{i \in [n]} a_i^2}{p \max_{i \in [n]} \abs{a_i}^2} \right)^{1/s} \alpha^{1/s} }{2 \le s \le p}
            \\&= 4 e \max_{i \in [n]} \abs{a_i} \sup\setbuilder{\frac{p}{s} \left( \frac{\alpha \sum_{i \in [n]} a_i^2}{p \max_{i \in [n]} \abs{a_i}^2} \right)^{1/s} }{2 \le s \le p}
        \; .
    \end{align*}
    Now \cref{eq:psi-relation} proves the first claim.
    By Hölder's inequality we have that $p \mapsto \pnorm{\sum_{i \in [n]} a_{i} X_i}{1/p}$ is log-convex
    and Jensen's inequality implies that $p \mapsto \pnorm{\sum_{i \in [n]} a_{i} X_i}{p}$ is increasing,
    thus \cref{eq:psi-integers-to-reals} proves the second claim.
\end{proof}

We are now almost ready to prove \Cref{thm:sampling-moments} for fully random hash functions.
This will be a principal lemma in the sequel when we prove concentration results for tabulation hashing.
But first, we need to prove a symmetrization lemma for fully random functions.
\begin{lemma}\label{lem:sampling-symmetrize}
    Let $h \colon U \to [m]$ be a uniformly random function, let $v \colon U \times [m] \to \R$ be a fixed value function, and assume that $\sum_{j \in [m]} v(x, j) = 0$ for all keys $x \in U$.
    Let $\eps \colon U \to \set{-1, 1}$ be a uniformly random sign function.
    Define the random variable $X_v = \sum_{x \in U} v(x, h(x))$.
    Then for all $p \ge 2$,
    \begin{align*}
        2^{-1}\pnorm{\sum_{x \in U} \eps(x) v(x, h(x))}{p}
            \le \pnorm{\sum_{x \in U} v(x, h(x))}{p}
            \le 2\pnorm{\sum_{x \in U} \eps(x) v(x, h(x))}{p}
        \; .
    \end{align*}
\end{lemma}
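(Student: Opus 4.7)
The plan is to adapt the symmetrization argument already used to establish \cref{eq:symmetrization} in the proof of \Cref{lem:psi-relation-poisson}. The hypothesis $\sum_{j \in [m]} v(x,j) = 0$ is exactly what guarantees $\ep{v(x, h(x))} = 0$ for every $x$, since $h(x)$ is uniform on $[m]$. This mean-zero property is the engine that makes symmetrization work in both directions.

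For the \emph{upper bound}, I would introduce an independent copy $h' \colon U \to [m]$ of the uniformly random function $h$. Since $\ep{v(x, h'(x))} = 0$, Jensen's inequality (pulling $h'$ inside the norm) gives
\begin{align*}
    \pnorm{\sum_{x \in U} v(x, h(x))}{p}
    &= \pnorm{\sum_{x \in U} \bigl(v(x, h(x)) - \ep{v(x, h'(x))}\bigr)}{p}
    \le \pnorm{\sum_{x \in U} \bigl(v(x, h(x)) - v(x, h'(x))\bigr)}{p}.
\end{align*}
The variable $v(x, h(x)) - v(x, h'(x))$ is symmetric (swapping $h$ and $h'$ is distribution-preserving), so multiplying the $x$-th summand by an independent sign $\eps(x) \in \{-1,+1\}$ does not change the joint distribution of the sum. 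The triangle inequality then yields
\begin{align*}
    \pnorm{\sum_{x \in U} \eps(x)\bigl(v(x, h(x)) - v(x, h'(x))\bigr)}{p}
    \le 2 \pnorm{\sum_{x \in U} \eps(x) v(x, h(x))}{p}.
\end{align*}

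For the \emph{lower bound}, I would condition on $\eps$ and partition $U$ into $U^+ = \{x : \eps(x) = +1\}$ and $U^- = \{x : \eps(x) = -1\}$. Applied conditionally on $\eps$, the triangle inequality gives
\begin{align*}
    \pnormcond{\sum_{x \in U} \eps(x) v(x, h(x))}{p}{\eps}
    \le \pnormcond{\sum_{x \in U^+} v(x, h(x))}{p}{\eps}
      + \pnormcond{\sum_{x \in U^-} v(x, h(x))}{p}{\eps}.
\end{align*}
For each of these two terms, I add back the terms from the opposite set using that $\ep{v(x,h(x))} = 0$, and then apply Jensen's inequality (pulling the expectation over the opposite set inside the $p$-norm) to bound each conditional norm by $\pnorm{\sum_{x \in U} v(x,h(x))}{p}$. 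Taking the expectation over $\eps$ (using $\pnorm{Y}{p}^p = \ep{\ep{\abs{Y}^p \mid \eps}}$) removes the conditioning and produces the factor of $2$.

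The proof is essentially mechanical once the mean-zero property is in hand; the only subtlety is being careful that in the Jensen step for the lower bound we push the expectation over the \emph{right} independent randomness (that of $h$ restricted to the opposite sign set) into the norm, and that symmetry in the upper bound is invoked jointly across all coordinates rather than marginally.
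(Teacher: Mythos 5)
Your proposal matches the paper's proof essentially step for step: the upper bound via an independent copy $h'$, Jensen, joint symmetrization across all coordinates, and the triangle inequality; and the lower bound via conditioning on $\eps$, splitting by sign, triangle inequality, and a conditional Jensen step that re-inserts the mean-zero opposite-sign terms. The only nitpick is that the factor of $2$ in the lower bound comes from the triangle-inequality split into two terms (already present before unfixing $\eps$), not from taking the expectation over $\eps$; but that is a phrasing issue, not a gap.
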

\begin{proof}
    We first consider the lower bound.
    Fixing $\eps$ and using the triangle inequality we get that
    \begin{align*}
        \pnormcond{\sum_{x \in U} \eps(x) v(x, h(x))}{p}{\eps}
            &= \pnormcond{\sum_{x \in U} [\eps(x) = 1] v(x, h(x)) - \sum_{x \in U} [\eps(x) = -1] v(x, h(x))}{p}{\eps}
            \\&\le \pnormcond{\sum_{x \in U} [\eps(x) = 1] v(x, h(x))}{p}{\eps} + \pnormcond{\sum_{x \in U} [\eps(x) = -1] v(x, h(x))}{p}{\eps}
        \; .
    \end{align*}
    Now we use Jensen's inequality on each of the terms.
    \begin{align*}
        \pnormcond{\sum_{x \in U} [\eps(x) = 1] v(x, h(x))}{p}{\eps}
            &= \pnormcond{\sum_{x \in U} [\eps(x) = 1] v(x, h(x)) + \sum_{x \in U} [\eps(x) = 1] \ep{v(x, h(x))}}{p}{\eps}
            \\&\le \pnormcond{\sum_{x \in U} v(x, h(x))}{p}{\eps}
        \; .
    \end{align*}
    Unfixing $\eps$ we get that
    \begin{align*}
        \pnorm{\sum_{x \in U} \eps(x) v(x, h(x))}{p} 
            \le 2\pnorm{\sum_{x \in U} v(x, h(x))}{p}
        \; ,
    \end{align*}
    which establishes the lower bound.

    For the upper bound, we first define $h' \colon U \to [m]$ to be an independent copy of $h$.
    We then use Jensen's inequality to get that
    \begin{align*}
        \pnorm{\sum_{x \in U} v(x, h(x))}{p}
            &= \pnorm{\sum_{x \in U} v(x, h(x)) - \sum_{x \in U} \ep{v(x, h'(x))}}{p}
            \\&\le \pnorm{\sum_{x \in U} \left( v(x, h(x)) - v(x, h'(x)) \right)}{p}
        \; .
    \end{align*}
    We then note that due to the independence $v(x, h(x)) - v(x, h'(x))$ is a symmetric variable, thus it has the same distribution as $\eps(x)\left(v(x, h(x)) - v(x, h'(x))\right)$.
    We use this and the triangle inequality to finish the upper bound,
    \begin{align*}
        \pnorm{\sum_{x \in U} \left( v(x, h(x)) - v(x, h'(x)) \right)}{p}
            &= \pnorm{\sum_{x \in U} \eps(x)\left( v(x, h(x)) - v(x, h'(x)) \right)}{p}
            \\&\le 2 \pnorm{\sum_{x \in U} \eps(x) v(x, h(x))}{p}
        \; ,
    \end{align*}
    which establishes the upper bound.
\end{proof}

\restateThmSamplingMoments
\begin{proof}
    We start by using \Cref{lem:sampling-symmetrize} to get that
    \begin{align*}
        \pnorm{\sum_{x \in U} v(x, h(x))}{p}
            \le 2\pnorm{\sum_{x \in U} \eps(x) v(x, h(x))}{p}
        \; .
    \end{align*}

    Let $(Y^{(j)}_x)_{x \in U, j \in [m]}$ be independent Bernoulli-Rademacher variables with parameter $\tfrac{1}{m}$.
    The idea of the proof is to show that for $p \ge 2$ then,
    \begin{align*}
        \pnorm{\sum_{x \in U} \eps(x) v(x, h(x))}{p}   
            \le \pnorm{\sum_{x \in U, j \in [m]} v(x, j)Y^{(j)}_x}{p}
        \; .     
    \end{align*}
    This is nontrivial to do for general $p$.
    Instead, we will focus on $p$ being an even integer.

    Let $q \ge 2$ be an even integer. Then for all $x \in U$ we have that
    \begin{align*}
        \pnorm{\eps(x) v(x, h(x))}{q}
            = \left( \frac{\sum_{j \in [m]} v(x, j)^q}{m} \right)^{1/q}
        \; .
    \end{align*}
    But it is easy to check that
    \begin{align*}
        \pnorm{\sum_{j \in [m]} v(x, j)Y^{(j)}_x}{q}
            \ge \left( \frac{\sum_{j \in [m]} v(x, j)^q}{m} \right)^{1/q}
            \; .
    \end{align*}
    We can now show what we want,
    \begin{align*}
        \ep{\left(\sum_{x \in U} \eps(x) v(x, h(x))\right)^p}
            &= \sum_{\sum_{x \in U} q_x = p} \binom{p}{(q_x)_{x \in U}} \prod_{x \in U} \ep{\left(\eps(x) v(x, h(x))\right)^{q_x}}
            \\&= \sum_{\substack{\sum_{x \in U} q_x = p \\ \forall x \in U : \text{$q_x$ is even}}} \binom{p}{(q_x)_{x \in U}} \prod_{x \in U} \ep{\left(\eps(x) v(x, h(x))\right)^{q_x}}
            \\&\le \sum_{\substack{\sum_{x \in U} q_x = p \\ \forall x \in U : \text{$q_x$ is even}}} \binom{p}{(q_x)_{x \in U}} \prod_{x \in U} \ep{\left(\sum_{j \in [m]} v(x, j)Y^{(j)}_x\right)^{q_x}}
            \\&= \sum_{\sum_{x \in U} q_x = p} \binom{p}{(q_x)_{x \in U}} \prod_{x \in U} \ep{\left(\sum_{j \in [m]} v(x, j)Y^{(j)}_x\right)^{q_x}}
            \\&= \ep{\left(\sum_{x \in U, j \in [m]} v(x, j)Y^{(j)}_x\right)^{p}}
    \end{align*}
    Now using \Cref{lem:bernoulli-moments} we get that
    \begin{align*}
        \pnorm{\sum_{x \in U} \eps(x) v(x, h(x))}{p}
            &\le \pnorm{\sum_{x \in U, j \in [m]} v(x, j)Y^{(j)}_x}{p}
            \\&\le 4e \Psi_p\left(\max_{x \in U, j \in [m]} \abs{v(x, j)},  \frac{\sum_{x \in U, j \in [m]} v(x, j)^2}{m} \right)
        \; .
    \end{align*}
    for all even integers $p \ge 2$.
    Now we use \cref{eq:psi-integers-to-reals} as in the proof of \Cref{lem:bernoulli-moments} which proves the second claim.
\end{proof}

We also need the standard fact that a sum of independent sub-Gaussian is also sub-Gaussian.
We include a proof for completeness.

\begin{lemma}\label{lem:sum-of-sub-guassian}
    Let $(X_i)_{i \in [n]}$ be a sequence of independent symmetric random variables.
    Let $p \ge 2$ and assume that there exists a sequence of real numbers $(a_i)_{i \in [n]}$ such that for all even integers $2 \le q \le p$ and all $i \in [n]$ it holds that
    \[
        \pnorm{X_i}{q} \le \sqrt{q} a_i \; .
    \]
    Then the sum of the random variables satisfies,
    \[
        \pnorm{\sum_{i \in [n]} X_i}{p}
            \le \sqrt{p} \sqrt{2e \sum_{i \in [n]} a_i^2}
        \; .
    \]
\end{lemma}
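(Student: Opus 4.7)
The plan is to first prove the bound for even integer $p$ by a direct multinomial expansion exploiting the symmetry of the $X_i$, and then to extend to arbitrary real $p \ge 2$ by rounding up to the next even integer and applying Jensen's inequality.

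For even integer $p$, I would expand $\ep{(\sum_{i \in [n]} X_i)^p}$ via the multinomial theorem. Since each $X_i$ is symmetric, every term in which some $X_i$ appears with an odd exponent vanishes, so only compositions with all $q_i = 2 r_i$ survive, where $\sum_i r_i = p/2$. The hypothesis $\pnorm{X_i}{2 r_i} \le \sqrt{2 r_i}\, a_i$ then gives $\ep{X_i^{2 r_i}} \le (2 r_i)^{r_i} a_i^{2 r_i}$ term by term. Using the elementary inequality $(2r)! \ge r!\, r^r$, which follows from $(r+1)(r+2)\cdots(2r) \ge r^r$, each ratio $(2 r_i)^{r_i}/(2 r_i)!$ is bounded by $2^{r_i}/r_i!$. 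Substituting and applying the multinomial theorem in the $r_i$ collapses the estimate to
\[
    \ep{\Bigl(\sum_{i \in [n]} X_i\Bigr)^p} \le \frac{p!}{(p/2)!} \Bigl(2 \sum_{i \in [n]} a_i^2\Bigr)^{p/2}.
\]
Bounding $p!/(p/2)! \le p^{p/2}$ (a product of $p/2$ positive integers, each at most $p$) then yields $\pnorm{\sum_{i \in [n]} X_i}{p} \le \sqrt{2 p \sum_{i \in [n]} a_i^2} \le \sqrt{p}\, \sqrt{2e \sum_{i \in [n]} a_i^2}$ since $1 \le e$.

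To extend to a general real $p \ge 2$, let $q = 2 \lceil p/2 \rceil$ be the smallest even integer with $q \ge p$, so that $p \le q \le p + 2 \le 2p$. By monotonicity of $p$-norms and the even-integer bound applied at $q$,
\[
    \pnorm{\sum_{i \in [n]} X_i}{p} \le \pnorm{\sum_{i \in [n]} X_i}{q} \le \sqrt{2 q \sum_{i \in [n]} a_i^2} \le 2 \sqrt{p \sum_{i \in [n]} a_i^2} \le \sqrt{p}\, \sqrt{2 e \sum_{i \in [n]} a_i^2},
\]
the last step using $2 \le \sqrt{2e}$.

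The main difficulty is controlling constants tightly enough to produce the factor $\sqrt{2e}$: symmetry of the $X_i$ is essential because without it the odd-exponent terms would survive and make the calculation much messier, and the success of the combinatorial step hinges on the clean identity $(2r)! \ge r!\, r^r$, which exactly cancels the $(2r)^r$ produced by the moment bound and leaves a multinomial sum in the $r_i$. Everything else is bookkeeping, and the extension from even integer $p$ to real $p$ consumes the slack $2 < \sqrt{2e}$ in the final constant.
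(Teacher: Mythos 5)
Your proof is correct, and it takes a genuinely different route from the paper. The paper proves the even-integer case by a Gaussian comparison: it observes $\sqrt{q/(2e)} \le \pnorm{g}{q} \le \sqrt{q/2}$ for a standard Gaussian $g$, so the hypothesis gives $\pnorm{X_i}{q} \le \sqrt{2e}\,a_i\pnorm{g_i}{q}$ term by term in the multinomial expansion, then uses the rotational stability of Gaussians ($\sum_i b_i g_i \sim \sqrt{\sum b_i^2}\,g$) to collapse the sum, and finally applies the Gaussian moment bound; this yields the constant $e$ for even $p$. You instead work the multinomial expansion directly: the moment hypothesis turns $\ep{X_i^{2r_i}}$ into $(2r_i)^{r_i} a_i^{2r_i}$, the elementary estimate $(2r)! \ge r!\,r^r$ converts the reciprocal even-index factorials in the multinomial coefficient into a clean multinomial sum over the $r_i$, the multinomial theorem collapses it to $\bigl(2\sum_i a_i^2\bigr)^{p/2}$, and the residual prefactor $p!/(p/2)!$ is bounded by $p^{p/2}$. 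This is more elementary (no Gaussians or double factorials needed), and it actually produces a slightly sharper constant $2$ rather than $e$ for even integer $p$, leaving plenty of slack for the Jensen step to general real $p$. The Gaussian route is arguably the ``morally right'' explanation — the lemma is a sub-Gaussian closure statement, and comparison to an actual Gaussian makes that transparent — but your combinatorial argument is self-contained and correct.
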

\begin{proof}
    The main idea of the proof is to compare the random variables $(X_i)_{i \in [n]}$ with a sequence of independent Gaussian $(g_i)_{i \in [n]}$, and then exploit that the sum of Gaussian variables is a Gaussian variable.
    We will use the standard fact that for all even integers $2 \le q$, Gaussian variables satisfies $\pnorm{g_i}{q} = \left( (q - 1)!! \right)^{1/q}$.
    A simple lower bound for this follows by using Stirling's approximation,
    \[
        \left( (q - 1)!! \right)^{1/q}
            \ge \left( \left(q/2\right)! \right)^{1/q}
            \ge \left( \left(\frac{q}{2e}\right)^{q/2} \right)^{1/q}
            = \sqrt{\frac{q}{2e}}
        \; .
    \]
    For an upper bound we note that by the AM-GM inequality we have that $(q - 2i - 1)(2i + 1) \le \left( \tfrac{q}{2} \right)^2$ so we get that
    \[
        \left( (q - 1)!! \right)^{1/q}
            \le \left( \left(\frac{q}{2}\right)^{q/2} \right)^{1/q}
            = \sqrt{\frac{q}{2}}
        \; .
    \]
    Now the lower bound gives us the estimate,
    \[
        \pnorm{X_i}{q}
            \le \sqrt{q} a_i
            \le \sqrt{2e} \pnorm{g_i}{q}
        \; .
    \]

    We start by proving the case where $p \ge 2$ is an even integer.
    We then get that
    \begin{align*}
        \ep{\left(\sum_{i \in [n]} X_i \right)^p}
            &= \sum_{\sum_{i \in [n]} q_i = p} \binom{p}{(q_i)_{i \in [n]}} \prod_{i \in [n]} \ep{\left(X_i\right)^{q_i}}
            \\&= \sum_{\substack{\sum_{i \in [n]} q_i = p \\ \forall i \in [n] : \text{$q_i$ is even}}} \binom{p}{(q_i)_{i \in [n]}} \prod_{i \in [n]} \ep{\left( X_i \right)^{q_i}}
            \\&\le \sum_{\substack{\sum_{i \in [n]} q_i = p \\ \forall i \in [n] : \text{$q_i$ is even}}} \binom{p}{(q_i)_{i \in [n]}} \prod_{i \in [n]} \ep{\left(\sqrt{2e} a_i g_i\right)^{q_i}}
            \\&= \sum_{\sum_{i \in [n]} q_i = p} \binom{p}{(q_i)_{i \in [n]}} \prod_{i \in [n]} \ep{\left(\sqrt{2e} a_i g_i\right)^{q_i}}
            \\&= \ep{\left(\sum_{i \in [n]} \sqrt{2e} a_i g_i \right)^{p}}
        \; .
    \end{align*}
    Now we use that if $g$, $g'$, and $g''$ are independent standard Gaussian variables then $a g + b g'$ is distributed as $\sqrt{a^2 + b^2}g''$.
    This give us that
    \begin{align*}
        \pnorm{\sum_{i \in [n]} X_i}{p}
            \le \pnorm{\sum_{i \in [n]} \sqrt{2e} a_i g_i}{p}
            = \pnorm{\sqrt{\sum_{i \in [n]} 2e a_i^2} g}{p}
            \le \sqrt{p} \sqrt{e \sum_{i \in [n]} a_i^2}
        \; . 
    \end{align*}
    If $p \ge 2$ is not an even integer then let $p' \ge p$ be the smallest even integer larger than $p$.
    We note that $p' \le 2p$ and using Jensen's inequality we get that
    \begin{align*}
        \pnorm{\sum_{i \in [n]} X_i}{p}
            \le \pnorm{\sum_{i \in [n]} X_i}{p'}
            \le \sqrt{p} \sqrt{e\sum_{i \in [n]} a_i^2}
            \le \sqrt{p} \sqrt{2e\sum_{i \in [n]} a_i^2}
        \; .
    \end{align*}
    This finishes the proof.
\end{proof}

We can use this lemma to prove another useful bound for uniformly random functions. 
\begin{lemma}\label{lem:sampling-hoeffding}
    Let $h \colon U \to [m]$ be a uniformly random function, let $\eps \colon U \to \set{-1, 1}$ be a uniformly random sign function, and let $v : U \times [m] \to \R$ be a fixed value function.
    Then for all $p \ge 2$,
    \begin{align*}
        \pnorm{\sum_{x \in U} \eps(s) v(x, h(x))}{p}
            \le L \sqrt{\frac{p}{\log\!\left( \frac{e^2 m \sum_{x \in \Sigma^{c}} \norm{v[x]}{\infty}^2}{\sum_{x \in \Sigma^{c}} \norm{v[x]}{2}^2} \right)} \sum_{x \in \Sigma^{c}} \norm{v[x]}{\infty}^2} 
        \; ,
    \end{align*}
    where $L \le e$ is a universal constant.
\end{lemma}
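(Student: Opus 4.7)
I will apply Lemma~\ref{lem:sum-of-sub-guassian} to the random variables $X_x := \eps(x) v(x, h(x))$, which are independent across $x \in U$ and each symmetric because $\eps(x)$ is Rademacher and independent of $h(x)$.

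First I establish the per-$x$ sub-Gaussian exponent. For every even $q \geq 2$ one has $\eps(x)^q = 1$, so
\[
\pnorm{X_x}{q}^q = \frac{1}{m}\sum_{j \in [m]} |v(x,j)|^q \;\leq\; \frac{\norm{v[x]}{\infty}^{q-2}\,\norm{v[x]}{2}^2}{m} \;=\; \frac{M_x^q}{r_x},
\]
where I write $M_x := \norm{v[x]}{\infty}$, $S_x := \norm{v[x]}{2}^2$ and $r_x := mM_x^2/S_x \geq 1$. Hence $\pnorm{X_x}{q} \leq M_x r_x^{-1/q}$, and setting
\[
a_x := M_x \cdot \max_{q \in [2,p] \text{ even}} \frac{r_x^{-1/q}}{\sqrt{q}}
\]
guarantees $\pnorm{X_x}{q} \leq \sqrt q\, a_x$ for every even $q \in [2,p]$, as required by Lemma~\ref{lem:sum-of-sub-guassian}. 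Maximising $q \mapsto -\tfrac{1}{2}\log q - \tfrac{\log r_x}{q}$ places the interior critical point at $q^\star = 2\log r_x$, and a short case check (interior versus the two endpoints $q=2$, $q=p$) gives the uniform estimate
\[
a_x^2 \;\leq\; \frac{M_x^2}{\log\max(e^2,\, r_x)}.
\]

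The central step is to aggregate: I will show $\sum_x a_x^2 \leq C\, T/\log(e^2 R)$, where $T = \sum_x M_x^2$ and $R = mT/\sum_x S_x$. The key observation is that $R$ is exactly the weighted harmonic mean of the $r_x$ with weights $M_x^2/T$, so $R$ is large only when most of the $M_x^2$-mass sits on indices with large $r_x$. Quantitatively, I split at a threshold $r^\star := \sqrt{e^2 R}$: the harmonic identity $\sum_x M_x^2/r_x = T/R$ forces $\sum_{r_x \leq r^\star} M_x^2 \leq r^\star T/R = O(T/\sqrt{R})$, so the low-$r_x$ part contributes only $O(T/\sqrt R)$ to $\sum_x a_x^2$; for $r_x > r^\star$ one has $\log\max(e^2, r_x) \geq \tfrac{1}{2}\log(e^2 R)$, so the high-$r_x$ part contributes at most $2T/\log(e^2 R)$. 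Since $1/\sqrt{R} = O(1/\log(e^2 R))$ uniformly in $R \geq 1$, summing the two pieces yields the claim.

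Feeding this into Lemma~\ref{lem:sum-of-sub-guassian} gives, for every even integer $p \geq 2$,
\[
\pnorm{\sum_x X_x}{p} \;\leq\; \sqrt{p}\,\sqrt{2e\sum_x a_x^2} \;\leq\; L\sqrt{\frac{p\, T}{\log(e^2 R)}},
\]
and the extension to real $p \geq 2$ is the usual trick: pass to the next even integer $p' \in [p, 2p]$ and use $\pnorm{\,\cdot\,}{p} \leq \pnorm{\,\cdot\,}{p'}$, absorbing the resulting $\sqrt 2$ into $L$. The real obstacle is the aggregate bound in the third paragraph; a naive Jensen step goes the wrong way because $r \mapsto 1/\log\max(e^2, r)$ is decreasing and convex on its upper branch, so one genuinely needs the threshold-splitting idea to convert the harmonic-mean constraint on $(r_x)$ into a usable upper bound on the $M_x^2$-mass that can sit on small $r_x$.
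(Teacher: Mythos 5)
Your proof is correct, and it diverges from the paper's in a substantive way precisely at the step you flag as the "real obstacle" — the aggregation $\sum_x a_x^2 \lesssim T/\log(e^2 R)$. The paper handles this via its Lemma~\ref{lem:sum-concave}: set $a_i = M_i^2$, $b_i = S_i/m$ and apply Jensen to the function $f(r) = r/\log(e^2 r)$, which is \emph{concave} on $[1,\infty)$, with probability weights proportional to $b_i$ (not $M_i^2$). This gives $\sum_i M_i^2/\log(e^2 r_i) \le T/\log(e^2 R)$ with no loss, and together with a per-$x$ sub-Gaussian estimate of the form $a_x^2 \le (e/2) M_x^2/\log(e^2 r_x)$ it yields the stated $L \le e$. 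So your remark that "one genuinely needs the threshold-splitting idea" is overstated: the naive Jensen you considered (convex $1/\log\max(e^2,r)$, weights $M_x^2/T$) does point the wrong way, but there is a non-naive Jensen — different function, different weights — that works and is in fact what the paper does. Your threshold-splitting at $r^\star = e\sqrt R$ is a legitimate and more elementary alternative (it only uses the harmonic identity $\sum_x M_x^2/r_x = T/R$ and a union of two bounds), but it is lossy: tracing your constants gives roughly $L \approx \sqrt{2e(e+2)}$ rather than $L \le e$, so the cosmetic bound on $L$ in the statement is not recovered. This has no effect downstream, since every use of this lemma in the paper absorbs the constant into $K_c$. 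Two minor remarks: your per-$x$ step optimizes $r_x^{-1/q}/\sqrt q$ over $q$ exactly, whereas the paper uses the quicker one-line bound $y^{1/q} \le \sqrt{eq/(2\log(e^2/y))}$ for $y \le 1$ — both are fine; and your final "pass to the next even integer" step is redundant, because Lemma~\ref{lem:sum-of-sub-guassian} is already stated for all real $p \ge 2$ (the $\sqrt{2e}$ in its conclusion already pays for that rounding).
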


Before we prove \Cref{lem:sampling-hoeffding} we need the following technical lemma.

\begin{lemma}\label{lem:sum-concave}
    Let $(a_i)_{i \in [n]}$ and $(b_i)_{i \in [n]}$ be two sequences of positive integers.
    If $\frac{a_i}{b_i} \ge 1$ for all $i \in [n]$ then,
    \begin{align}\label{eq:sum-concave}
        \sum_{i \in [n]} \frac{a_i}{\log\!\left( \frac{e^2 a_i}{b_i} \right)}
            \le \frac{\sum_{i \in [n]} a_i}{\log\!\left( \frac{e^2 \sum_{i \in [n]} a_i}{\sum_{i \in [n]} b_i} \right)}
        \; .
    \end{align}
\end{lemma}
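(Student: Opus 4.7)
The plan is to prove Lemma \ref{lem:sum-concave} by recognizing that the claimed inequality is exactly the superadditivity of the two-variable function
\[
    g(a, b) = \frac{a}{\log\!\left( \frac{e^2 a}{b} \right)}
\]
on the convex cone $C = \set{(a, b) \in \R_+^2 : a \ge b}$. The standard route is to establish two ingredients: (i) $g$ is positively homogeneous of degree $1$ on $C$, and (ii) $g$ is concave on $C$. Together these imply superadditivity via $g(x+y) = 2 g\!\left(\tfrac{x+y}{2}\right) \ge 2 \cdot \tfrac{1}{2}(g(x) + g(y)) = g(x) + g(y)$, and iterating gives $\sum_{i \in [n]} g(a_i, b_i) \le g\!\left(\sum_i a_i, \sum_i b_i\right)$. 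The hypothesis $a_i \ge b_i$ ensures $\sum_i a_i \ge \sum_i b_i$, so the point $(\sum_i a_i, \sum_i b_i)$ stays in $C$ and the right-hand side is well-defined.

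Homogeneity is immediate: for $\lambda > 0$, $g(\lambda a, \lambda b) = \lambda a / \log(e^2 \lambda a / (\lambda b)) = \lambda g(a, b)$. For concavity I would compute the Hessian directly. Writing $L = L(a,b) = 2 + \log(a/b) = \log(e^2 a/b)$, so that $g = a/L$, a direct calculation (using $\partial_a L = 1/a$ and $\partial_b L = -1/b$) yields
\[
    g_{aa} = \frac{2 - L}{a L^3}, \qquad g_{bb} = \frac{a(2 - L)}{b^2 L^3}, \qquad g_{ab} = \frac{L - 2}{b L^3}.
\]
On $C$ we have $a \ge b$, hence $L \ge 2$, so $g_{aa} \le 0$ and $g_{bb} \le 0$. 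Moreover
\[
    g_{aa} g_{bb} - g_{ab}^2 = \frac{(2 - L)^2}{b^2 L^6} - \frac{(L - 2)^2}{b^2 L^6} = 0,
\]
which is consistent with homogeneity forcing the Hessian to be singular along the radial direction. A $2 \times 2$ symmetric matrix with non-positive diagonal entries and zero determinant is negative semidefinite, so $g$ is concave on $C$.

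With (i) and (ii) in hand, the superadditivity argument sketched above completes the proof of \cref{eq:sum-concave}. The main obstacle is simply carrying out the Hessian computation correctly; the rest of the argument is the standard fact that concavity plus positive homogeneity of degree $1$ on a convex cone implies superadditivity. I would present the computation compactly, perhaps by noting at the outset that the singularity of the Hessian is forced by Euler's identity $a g_a + b g_b = g$, so that only the sign of one diagonal entry (say $g_{aa}$) really needs to be checked.
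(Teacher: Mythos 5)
Your proposal is correct, and all the computations check out. Setting $L = \log(e^2 a/b)$, the partial derivatives are $g_a = (L-1)/L^2$ and $g_b = a/(b L^2)$, from which $g_{aa} = (2-L)/(aL^3)$, $g_{bb} = a(2-L)/(b^2 L^3)$, and $g_{ab} = (L-2)/(bL^3)$ as you state, and the Hessian determinant is indeed identically zero. On the cone $\set{a \ge b > 0}$ one has $L \ge 2$, so the diagonal entries are non-positive and the determinant is zero, which does characterize a negative-semidefinite $2\times 2$ symmetric matrix. Combined with degree-$1$ positive homogeneity, superadditivity follows and the inequality drops out.

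This is a genuinely different way of organizing the argument from the paper's. The paper introduces $r_i = a_i/b_i$, defines a random variable $R$ with $\prb{R = r_i} = b_i/\sum_j b_j$, and rewrites the two sides of \cref{eq:sum-concave} as $\left(\sum_j b_j\right)\ep{f(R)}$ and $\left(\sum_j b_j\right) f(\ep{R})$ where $f(r) = r/\log(e^2 r)$; the inequality then reduces to Jensen's inequality using that $f$ is concave on $[1,\infty)$ and $R \ge 1$ almost surely. Your $g(a,b) = b\,f(a/b)$ is precisely the perspective of the paper's $f$, so the underlying mathematical fact (concavity of $f$ on $[1,\infty)$) is the same, but you access it by a direct Hessian computation in two variables rather than by passing to a one-variable function and invoking Jensen. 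The paper's one-variable reduction is a bit slicker and avoids the Hessian algebra; your version is more self-contained and, as you note, the vanishing determinant is a built-in sanity check via Euler's identity $a g_a + b g_b = g$, which forces the Hessian to annihilate the radial direction.
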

\begin{proof}
    We define the sequence $(r_i)_{i \in [n]}$ by $r_i = \frac{a_i}{b_i}$ for all $i \in [n]$, define the random variable $R$ by $\prb{R = r_i} = \frac{b_i}{\sum_{j \in [n]} b_i}$, and the function $f \colon \R_+ \to \R_+$ by $f(r) = \frac{r}{\log\!\left(e^2 r \right)}$.
    Now we note that
    \begin{align*}
        \sum_{i \in [n]} \frac{a_i}{\log\!\left( \frac{e^2 a_i}{b_i} \right)}
            &= \sum_{i \in [n]} b_i r_i{\log\!\left( e^2 r_i \right)}
            = \sum_{i \in [n]} b_i f(r_i)
            = \left(\sum_{j \in [n]} b_i\right) \ep{f(R)}
        \; , \\
        \frac{\sum_{i \in [n]} a_i}{\log\!\left( \frac{e^2 \sum_{i \in [n]} a_i}{\sum_{i \in [n]} b_i} \right)}
            &= \frac{\sum_{i \in [n]} b_i r_i}{\log\!\left( \frac{e^2 \sum_{i \in [n]} b_i r_i}{\sum_{i \in [n]} b_i} \right)}
            = \left(\sum_{j \in [n]} b_i\right) f(\ep{R})
        \; .
    \end{align*}
    Thus we get that \cref{eq:sum-concave} is equivalent with showing that $\ep{f(R)} \le f(\ep{R})$.
    It easy to check that $f$ is concave on the interval $[1, \infty)$ and since $R \ge \min_{i \in [n]} r_i = \min_{i \in [n]} \frac{a_i}{b_i} \ge 1$, Jensen's inequality implies the result.
\end{proof}

\begin{proof}[Proof of \Cref{lem:sampling-hoeffding}]
    Let $x \in U$ be fixed and consider $2 \le q \le p$.
    We then have that
    \begin{align*}
        \pnorm{\eps(x)v(x, h(x))}{q}
            &= \left( \frac{\sum_{j \in [m]} \abs{v(x, h(x))}^q}{m} \right)^{1/q}
            \\&\le \left( \frac{\sum_{j \in [m]} v(x, h(x))^2}{m \max_{j \in [m]} v(x, h(x))^2} \right)^{1/q} \max_{j \in [m]} \abs{v(x, h(x))}
            \\&= \left( \frac{\norm{v[x]}{2}^2}{m \norm{v[x]}{\infty}^2} \right)^{1/q} \norm{v[x]}{\infty}
        \; .
    \end{align*}
    Now a simple estimate give us that $y^{1/q} \le e (y/e^2)^{1/q} \le e \sqrt{\frac{q}{2e \log \tfrac{e^2}{y}}} = \sqrt{\frac{e q}{2 \log \tfrac{e^2}{y}}}$ for all $y \le 1$ and all $q \ge 2$.
    Clearly, $\frac{\norm{v[x]}{2}^2}{m \norm{v[x]}{\infty}^2} \le 1$, hence we get that
    \begin{align*}
        \pnorm{\eps(x)v(x, h(x))}{q}
            \le \sqrt{\frac{e q}{2 \log\!\left( \frac{e^2 m \norm{v[x]}{\infty}^2}{\norm{v[x]}{2}^2} \right) }} \norm{v[x]}{\infty}
        \; .
    \end{align*}
    This shows that $\eps(x)v(x, h(x))$ is sub-Gaussian hence we can use \Cref{lem:sum-of-sub-guassian} to get that
    \begin{align*}
        \pnorm{\sum_{x \in U} \eps(s) v(x, h(x))}{p}
            \le e \sqrt{p} \sqrt{\sum_{x \in U} \frac{ \norm{v[x]}{\infty}^2}{\log\!\left( \frac{e^2 m \norm{v[x]}{\infty}^2}{\norm{v[x]}{2}^2} \right)} }
        \; .
    \end{align*}
    Now an application of \Cref{lem:sum-concave} finishes the proof.
\end{proof}

We end the section by bounding the simple case of weighted sums of Rademacher variables.
We will need the lemma later and it is known as Khintchine's inequality.
For completeness we include a proof the lemma.

\begin{lemma}[Khintchine's inequality]\label{cor:sum-of-Rademacher}
    Let $(\eps_i)_{i \in [n]}$ be a sequence of independent Rademacher variables, and let $(a_i)_{i \in [n]}$ be a sequence of real numbers.
    For all $p \ge 2$ we have that
    \[
        \pnorm{\sum_{i \in [n]} a_i \eps_i}{p}
            \le \sqrt{p} \sqrt{e \sum_{i \in [n]} a_i^2}
        \; .
    \]
\end{lemma}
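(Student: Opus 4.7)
The plan is to derive Khintchine's inequality as a direct application of \Cref{lem:sum-of-sub-guassian}. The Rademacher variables $\eps_i$ are symmetric, so the weighted variables $a_i \eps_i$ are symmetric as well, and trivially $\pnorm{a_i \eps_i}{q} = \abs{a_i}$ for every $q \ge 1$. The task is therefore to express $a_i \eps_i$ as a sub-Gaussian variable in the precise form required by \Cref{lem:sum-of-sub-guassian}, namely as $\pnorm{X_i}{q} \le \sqrt{q}\, b_i$ for all even integers $2 \le q \le p$, with $b_i$ chosen so that the resulting bound has the constant $\sqrt{e}$ rather than $\sqrt{2e}$.

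The key calibration is to take $b_i = \abs{a_i}/\sqrt{2}$. For any $q \ge 2$ we then have
\[
    \sqrt{q}\, b_i = \sqrt{q/2}\, \abs{a_i} \ge \abs{a_i} = \pnorm{a_i \eps_i}{q},
\]
so the hypothesis of \Cref{lem:sum-of-sub-guassian} is satisfied for the sequence $(a_i \eps_i)_{i \in [n]}$ with sub-Gaussian parameters $b_i$. Applying the lemma gives
\[
    \pnorm{\sum_{i \in [n]} a_i \eps_i}{p}
        \le \sqrt{p}\,\sqrt{2 e \sum_{i \in [n]} b_i^2}
        = \sqrt{p}\,\sqrt{2 e \cdot \tfrac{1}{2} \sum_{i \in [n]} a_i^2}
        = \sqrt{p}\,\sqrt{e \sum_{i \in [n]} a_i^2},
\]
which is exactly the desired inequality and holds for every real $p \ge 2$ (not merely even integers), since \Cref{lem:sum-of-sub-guassian} already handles the reduction from general $p$ to even integers internally.

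There is no real obstacle here; the only subtlety worth flagging in the writeup is the choice $b_i = \abs{a_i}/\sqrt{2}$ instead of the naive $b_i = \abs{a_i}$, which is what turns the factor $\sqrt{2 e}$ supplied by the sub-Gaussian sum lemma into the sharper $\sqrt{e}$ advertised in the statement. No further probabilistic machinery is needed.
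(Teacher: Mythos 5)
Your proposal is correct and matches the paper's own proof essentially verbatim: the paper also observes $\pnorm{a_i\eps_i}{q} = \abs{a_i} \le \tfrac{\sqrt{q}}{\sqrt{2}}\abs{a_i}$, i.e. chooses the sub-Gaussian parameters $\abs{a_i}/\sqrt{2}$, and invokes \Cref{lem:sum-of-sub-guassian} to obtain $\sqrt{p}\sqrt{2e\sum_i a_i^2/2} = \sqrt{p}\sqrt{e\sum_i a_i^2}$. No differences worth flagging.
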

\begin{proof}
    We note that for all $i \in [n]$ and all $q \ge 2$ we have that $\pnorm{a_i \eps_i}{q} = \abs{a_i} \le \tfrac{\sqrt{q}}{\sqrt{2}} \abs{a_i}$.
    We now use \Cref{lem:sum-of-sub-guassian} to get that
    \[
        \pnorm{\sum_{i \in [n]} a_i \eps_i}{p}
            \le \sqrt{p} \sqrt{2e \sum_{i \in [n]} \tfrac{a_i^2}{2}}
            = \sqrt{p} \sqrt{e \sum_{i \in [n]} a_i^2}
        \; .
    \]
    This finishes the proof.
\end{proof}






\subsection{Moments of Functions of Random Variables}\label{sec:moments-from-tails}
The goal of this section is to prove \Cref{lem:fn-moment}.
\restateLemFnMoment
\begin{proof}
    We define $\lambda = \max_{i \in [n]} \left( \frac{X_i}{t_i} \right)^{1/\alpha_i}$ and note that $X_i \le \lambda^{\alpha_i} t_i$ for all $i \in [n]$.
    Since $f$ is an increasing function then $f(X_0, \ldots, X_{n - 1}) \le f(\lambda^{\alpha_0} t_0, \ldots, \lambda^{\alpha_{n - 1}} t_{n - 1})$.
    We can then use the condition on $f$ to get that
    \begin{align*}
        \pnorm{f(X_0, \ldots, X_{n - 1})}{p}
            \le \pnorm{f(\lambda^{\alpha_0} t_0, \ldots, \lambda^{\alpha_{n - 1}} t_{n - 1})}{p}
            \le \pnorm{\lambda}{p} f(t_0, \ldots, t_{n - 1})
        \; .
    \end{align*}
    Now we just need to prove that $\pnorm{\lambda}{p} \le n^{1/p} \max_{i \in [n]} \left( \frac{\pnorm{X_i}{p/\alpha_i}}{t_i} \right)^{1/\alpha_i}$.
    We note that $\lambda = \max_{i \in [n]} \left( \frac{X_i}{t_i} \right)^{1/\alpha_i} \le \left(\sum_{i \in [n]} \left( \frac{X_i}{t_i} \right)^{p/\alpha_i} \right)^{1/p}$.
    Hence we get that
    \begin{align*}
        \pnorm{\lambda}{p}
            &\le \left(\sum_{i \in [n]} \ep{\left( \frac{X_i}{t_i} \right)^{p/\alpha_i}} \right)^{1/p}
            \\&\le \left(n \max_{i \in [n]} \ep{\left( \frac{X_i}{t_i} \right)^{p/\alpha_i}} \right)^{1/p}
            \\&= n^{1/p} \max_{i \in [n]} \left( \frac{\pnorm{X_i}{p/\alpha_i}}{t_i} \right)^{1/\alpha_i}
        \; ,
    \end{align*}
    which finishes the proof.
\end{proof}


\subsection{Decoupling of Adapted Sequences}
In the paper, we will need to analyse sums of martingale differences which are not independent.
This poses a problem because the lemmas of the previous section assumes that random variables are independent.
We will handle this issue by using a powerful result of Hitczenko~\cite{Hitczenko1994} to reduce the sums of martingale differences to a sum of independent variables.
Before the theorem, we need a bit of notation.

\begin{definition}
    Let $(X_i)_{i \in [n]}$ and $(Y_i)_{i \in [n]}$ be two sequences of random variables adapted
    to a filtration $(\mathcal{F}_i)_{i \in [n]}$. Then $(X_i)_{i \in [n]}$ and $(Y_i)_{i \in [n]}$ are
    \emph{tangent} with respect to $(\mathcal{F}_i)_{i \in [n]}$ if
    $(X_i \mid \mathcal{F}_{i - 1})$ and $(Y_i \mid \mathcal{F}_{i - 1})$ has the
    same distribution for all $i \in [n]$.
\end{definition}

\begin{definition}
    Let $(X_i)_{i \in [n]}$ be a sequence of random variables adapted
    to a filtration $(\mathcal{F}_i)_{i \in [n]}$ and let $\mathcal{G} \subseteq \mathcal{F}_{n - 1}$ be a $\sigma$-algebra.
    Then $(X_i)_{i \in [n]}$ satisfies the \emph{conditional independence condition} with respect to $\mathcal{G}$
    if $(X_i \mid \mathcal{F}_{i - 1})$ and $(X_i \mid \mathcal{G})$ have the same distribution for
    all $i \in [n]$, and $(X_i)_{i \in [n]}$ are conditionally independent given $\mathcal{G}$.
\end{definition}

\begin{definition}
    Let $(X_i)_{i \in [n]}$ and $(Y_i)_{i \in [n]}$ be two sequences of random variables
    which are tangent with respect to the filtration $(\mathcal{F}_i)_{i \in [n]}$.
    Let $\mathcal{G} \subseteq \mathcal{F}_{n - 1}$ be a $\sigma$-algebra.
    If $(Y_i)_{i \in [n]}$ satisfies the conditional independence condition with respect to $\mathcal{G}$
    then we say that $(Y_i)_{i \in [n]}$ is a \emph{decoupled sequence tangent} to $(X_i)_{i \in [n]}$.
\end{definition}

We can now state the theorem of Hitczenko~\cite{Hitczenko1994}.

\begin{theorem}[Hitczenko~\cite{Hitczenko1994}]\label{thm:Hitczenko-martingale-decoupling}
    There exists a universal constant $0 < M < \infty$ such that,
    for all $p \ge 1$ and all sequences of random variables $(X_i)_{i \in [n]}$
    and $(Y_i)_{i \in [n]}$ where $(Y_i)_{i \in [n]}$ is a decoupled sequence tangent to $(X_i)_{i \in [n]}$, then
    \begin{align*}
        \pnorm{\sum_{i \in [n]} X_i}{p}
            \le M \pnorm{\sum_{i \in [n]} Y_i}{p}
        \; .
    \end{align*}
\end{theorem}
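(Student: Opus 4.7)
My plan is to follow the standard approach to decoupling tangent sequences via a reduction to the square function. First, I would split each $X_i$ into its predictable part $\epcond{X_i}{\mathcal{F}_{i - 1}}$ and its martingale difference part $\tilde X_i = X_i - \epcond{X_i}{\mathcal{F}_{i - 1}}$, and similarly for $Y$. Tangency gives $\epcond{X_i}{\mathcal{F}_{i - 1}} = \epcond{Y_i}{\mathcal{F}_{i - 1}}$ almost surely, so the sums of predictable parts coincide exactly and can be peeled off by the triangle inequality. This reduces the problem to comparing the martingale parts $\sum_{i \in [n]} \tilde X_i$ and $\sum_{i \in [n]} \tilde Y_i$, which are now tangent \emph{martingale} difference sequences.

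For the martingale parts, the Burkholder--Davis--Gundy inequality makes $\pnorm{\sum_{i \in [n]} \tilde X_i}{p}$ comparable to the $p$-norm of the square function $\left(\sum_{i \in [n]} \tilde X_i^2\right)^{1/2}$, and similarly for $\tilde Y$. Tangency forces $\epcond{\tilde X_i^2}{\mathcal{F}_{i - 1}} = \epcond{\tilde Y_i^2}{\mathcal{F}_{i - 1}}$, so the conditional square functions agree at every step, and a martingale comparison argument shows the unconditional square functions have comparable $p$-norms. The conditional independence of $(Y_i)_{i \in [n]}$ given $\mathcal{G}$ then closes the loop: conditionally on $\mathcal{G}$, the $\tilde Y_i$ are independent mean-zero variables, since the conditional-independence condition gives $\epcond{Y_i}{\mathcal{G}} = \epcond{Y_i}{\mathcal{F}_{i - 1}}$. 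Hence Rosenthal's inequality applied conditionally on $\mathcal{G}$ and then integrated sharply bounds $\pnorm{\sum_{i \in [n]} \tilde Y_i}{p}$ in terms of the conditional square function and the maximum.

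The main obstacle is obtaining a constant $M$ that is independent of $p$: naively composing Burkholder--Davis--Gundy with Rosenthal accumulates $p$-dependent factors that would spoil the universal bound. Hitczenko's route bypasses BDG and instead establishes a direct good-$\lambda$ inequality between the maximal partial-sum processes. Letting $\tau$ be the first time $\max_k \abs{\sum_{i < k} X_i}$ exceeds a threshold $\lambda$, one uses the one-step tangency to couple the post-$\tau$ increments of $X$ and $Y$ pathwise, while the conditional independence of $(Y_i)$ given $\mathcal{G}$ lets a Lévy-type maximal inequality control the tail of the $Y$-sum uniformly in $p$. Integrating the resulting estimate of the form $\prb{\max_k \abs{\sum_{i < k} X_i} > \beta \lambda,\; \max_k \abs{\sum_{i < k} Y_i} \le \delta \lambda} \le \varepsilon(\beta, \delta) \prb{\max_k \abs{\sum_{i < k} X_i} > \lambda}$ produces the universal constant $M$, after which Doob's maximal inequality transfers the bound from the maximal function to the terminal sum.
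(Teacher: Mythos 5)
The paper states this result as a citation to Hitczenko~\cite{Hitczenko1994} with no proof given, so there is no internal argument to compare against; I assess your sketch on its own. Your opening reduction is correct: tangency gives $\epcond{X_i}{\mathcal{F}_{i-1}} = \epcond{Y_i}{\mathcal{F}_{i-1}}$ almost surely, so the predictable parts cancel and the triangle inequality leaves tangent martingale-difference sequences. You also correctly identify the $p$-uniformity of the constant $M$ as the real obstacle, and correctly observe that composing Burkholder--Davis--Gundy with Rosenthal accumulates $p$-dependent factors.

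The problem is that the mechanism you then substitute does not actually escape this. Writing $X^* = \max_k \abs{\sum_{i<k} X_i}$ and $Y^* = \max_k \abs{\sum_{i<k} Y_i}$, the good-$\lambda$ inequality you propose,
\[
\prb{X^* > \beta\lambda,\ Y^* \le \delta\lambda} \le \varepsilon(\beta,\delta)\,\prb{X^* > \lambda},
\]
integrates (multiply by $p\lambda^{p-1}$ and integrate in $\lambda$) to $(\beta^{-p} - \varepsilon)\pnorm{X^*}{p}^p \le \delta^{-p}\pnorm{Y^*}{p}^p$. Positivity of the left factor requires $\varepsilon < \beta^{-p}$, which forces $\delta$ --- and hence the resulting constant $\delta^{-1}(\beta^{-p}-\varepsilon)^{-1/p}$ --- to degrade as $p$ grows. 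Good-$\lambda$ estimates of this shape are precisely what underlie BDG and Doob-type bounds and precisely why their constants are $p$-dependent; invoking one here reintroduces the problem you set out to avoid. To obtain a constant independent of $p$ one needs a strictly stronger, uniform-in-$t$ distributional comparison, a tail domination $\prb{X^* > t} \le c_1\,\prb{c_2 Y^* > t}$ for universal $c_1, c_2$, which integrates to $\pnorm{X^*}{p} \le c_1^{1/p} c_2 \pnorm{Y^*}{p} \le c_1 c_2 \pnorm{Y^*}{p}$ uniformly over $p \ge 1$. Establishing that comparison --- via stopping times together with anti-concentration (Paley--Zygmund-type) lower bounds on the conditionally independent sum --- is the actual content of Hitczenko's argument and does not follow from the standard good-$\lambda$ machinery you invoke. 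A secondary slip: you claim tangency lets you ``couple the post-$\tau$ increments of $X$ and $Y$ pathwise,'' but tangency identifies conditional laws, not sample paths, so the post-stopping-time step must likewise be a distributional comparison rather than a pathwise coupling.
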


Instead of using the result directly, we will instead use the following consequence of the theorem.
\begin{lemma}\label{lem:martingale-decoupling}
    Let $(X_i, \mathcal{F}_i)_{i \in [n]}$ be a filtered sequence.
    Assume there exists a sequence of random variables $(Y_i)_{i \in [n]}$ satisfying the following:
    \begin{enumerate}
        \item $(X_i \mid \mathcal{F}_{i - 1})$ and $(Y_i \mid \mathcal{F}_{i - 1})$ have the same distribution for every $i \in [n]$.
        \item The sequence $(Y_i)_{i \in [n]}$ is conditionally independent given $\mathcal{F}_{n - 1}$.
        \item $(Y_i \mid \mathcal{F}_{i - 1})$ and $(Y_i \mid \mathcal{F}_{n - 1})$ have the same distribution for every $i \in [n]$.
        \item $(X_i \mid \mathcal{F}_{i - 1})$ and $(X_i \mid \sigma(\mathcal{F}_{i - 1}, (Y_j)_{j \in [i + 1]})$ have the same distribution for every $i \in [n]$.
    \end{enumerate}
    Then for all $p \ge 1$,
    \begin{align*}
        \pnorm{\sum_{i \in [n]} X_i}{p}
            \le M \pnorm{\sum_{i \in [n]} Y_i}{p}
        \; ,
    \end{align*}
    where $M$ is a universal constant.
\end{lemma}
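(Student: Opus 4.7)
\medskip
\noindent\textbf{Proof plan.} The plan is to reduce \Cref{lem:martingale-decoupling} to the theorem of Hitczenko, \Cref{thm:Hitczenko-martingale-decoupling}, by exhibiting a single filtration to which both $(X_i)_{i \in [n]}$ and $(Y_i)_{i \in [n]}$ are adapted, with respect to which the two sequences are tangent and $(Y_i)_{i \in [n]}$ is decoupled. The natural choice is the enlarged filtration $\mathcal{H}_i = \sigma(\mathcal{F}_i, Y_0, \ldots, Y_i)$. With this definition $X_i$ is $\mathcal{F}_i$-measurable hence $\mathcal{H}_i$-measurable, and $Y_i$ is $\mathcal{H}_i$-measurable by construction, so both sequences are adapted. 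The $\sigma$-algebra with respect to which $(Y_i)$ will be shown to be decoupled is $\mathcal{G} = \mathcal{F}_{n-1} \subseteq \mathcal{H}_{n-1}$.

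The first step is to verify tangency with respect to $(\mathcal{H}_i)$, i.e., $(X_i \mid \mathcal{H}_{i-1})$ and $(Y_i \mid \mathcal{H}_{i-1})$ have the same distribution. For the $X$-side this is immediate from assumption 4: $\mathcal{H}_{i-1} = \sigma(\mathcal{F}_{i-1}, (Y_j)_{j \in [i]})$, and adding the $(Y_j)_{j \in [i]}$ does not change the conditional distribution of $X_i$. For the $Y$-side one computes $\prb{Y_i \in A \mid \mathcal{H}_{i-1}}$ by first conditioning on the larger $\sigma$-algebra $\sigma(\mathcal{F}_{n-1}, (Y_j)_{j \in [i]})$: by the conditional independence hypothesis 2 the $(Y_j)_{j < i}$ drop out, leaving $\prb{Y_i \in A \mid \mathcal{F}_{n-1}}$, and by hypothesis 3 this quantity is $\mathcal{F}_{i-1}$-measurable. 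Taking the inner $\mathcal{H}_{i-1}$-expectation therefore yields $\prb{Y_i \in A \mid \mathcal{H}_{i-1}} = \prb{Y_i \in A \mid \mathcal{F}_{i-1}}$, which by hypothesis 1 equals $\prb{X_i \in A \mid \mathcal{F}_{i-1}}$, establishing tangency.

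The second step is to verify the conditional independence condition for $(Y_i)$ with respect to $\mathcal{G} = \mathcal{F}_{n-1}$ viewed as a sub-$\sigma$-algebra of $\mathcal{H}_{n-1}$. We must check that $(Y_i \mid \mathcal{H}_{i-1}) \stackrel{d}{=} (Y_i \mid \mathcal{G})$ and that $(Y_i)_{i \in [n]}$ is conditionally independent given $\mathcal{G}$. The distributional identity follows from the computation above combined with hypothesis 3 (which gives $(Y_i \mid \mathcal{F}_{i-1}) \stackrel{d}{=} (Y_i \mid \mathcal{F}_{n-1})$), and conditional independence given $\mathcal{F}_{n-1}$ is exactly hypothesis 2. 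Having verified both conditions, $(Y_i)$ is a decoupled sequence tangent to $(X_i)$ in the sense of the definitions, and \Cref{thm:Hitczenko-martingale-decoupling} yields $\pnorm{\sum_{i \in [n]} X_i}{p} \le M \pnorm{\sum_{i \in [n]} Y_i}{p}$ with the universal constant $M$ from that theorem.

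The only delicate point is the tangency verification for $(Y_i)$, which conflates two different conditionings: the $\mathcal{H}_{i-1}$-conditioning involves the past $Y_j$'s, which are not in $\mathcal{F}_{i-1}$. The key is to enlarge the conditioning once to $\sigma(\mathcal{F}_{n-1}, (Y_j)_{j < i})$, use hypothesis 2 to strip the $(Y_j)$'s, use hypothesis 3 to reduce $\mathcal{F}_{n-1}$ down to $\mathcal{F}_{i-1}$, and then observe that the resulting $\mathcal{F}_{i-1}$-measurable quantity is unaffected by taking the inner $\mathcal{H}_{i-1}$-expectation. The remaining verifications are routine measurability bookkeeping, so applying Hitczenko's theorem at the end is immediate.
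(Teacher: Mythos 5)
Your proof is correct and takes essentially the same route as the paper's: enlarge the filtration to $\mathcal{H}_i = \sigma(\mathcal{F}_i, Y_0, \ldots, Y_i)$, verify tangency of $(X_i)$ and $(Y_i)$ with respect to $(\mathcal{H}_i)$ by the same chain of conditional-expectation identities (tower property, hypothesis 2 to strip the past $Y_j$'s, hypothesis 3 to reduce $\mathcal{F}_{n-1}$ to $\mathcal{F}_{i-1}$, hypothesis 1 to swap $Y_i$ for $X_i$, and hypothesis 4 for the $X$-side), verify the conditional independence condition with $\mathcal{G} = \mathcal{F}_{n-1}$, and apply Hitczenko's theorem. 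The only minor cosmetic gap is that hypothesis 4 mentions $(Y_j)_{j \in [i+1]}$ while $\mathcal{H}_{i-1}$ only contains $(Y_j)_{j \in [i]}$; as you observe this is harmless since the stated equality of conditional distributions on the larger $\sigma$-algebra implies it on the intermediate one by the tower property.
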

\begin{proof}
    We define the filtration $(\mathcal{H}_i)_{i \in [n]}$ by $\mathcal{H}_i = \sigma(\mathcal{F}_i, (Y_j)_{j \in [i + 1]})$ for $i \in [n]$.
    We then clearly have that $(X_i)_{i \in [n]}$ and $(Y_i)_{i \in [n]}$ are adapted to $(\mathcal{H}_i)_{i \in [n]}$.
    We will also see that they are tangent.
    Let $A \subseteq \R$ then,
    \begin{align*}
        \prbcond{Y_i \in A}{\mathcal{H}_{i - 1}}
            &= \epcond{\indicator{Y_i \in A}}{\mathcal{H}_{i - 1}}
            \\&= \epcond{\epcond{\indicator{Y_i \in A}}{\sigma(\mathcal{F}_{n - 1}, (Y_j)_{j \in [i]})}}{\mathcal{H}_{i - 1}}
            \\&= \epcond{\epcond{\indicator{Y_i \in A}}{\mathcal{F}_{n - 1}}}{\mathcal{H}_{i - 1}}
            \\&= \epcond{\epcond{\indicator{Y_i \in A}}{\mathcal{F}_{i - 1}}}{\mathcal{H}_{i - 1}}
            \\&= \epcond{\epcond{\indicator{X_i \in A}}{\mathcal{F}_{i - 1}}}{\mathcal{H}_{i - 1}}
            \\&= \prbcond{X_i \in A}{\mathcal{F}_{i - 1}}
            \\&= \prbcond{X_i \in A}{\mathcal{H}_{i - 1}}
    \end{align*}
    The first equality uses the power property of conditional expectation,
    the second equality uses the conditional independence property,
    the next two equalities follow by the equivalences of distributions,
    the second last equality follows by $\mathcal{F}_{i - 1} \subseteq \mathcal{H}_{i - 1}$,
    and the last equality follows by the equivalences of distributions.

    We have that $(Y_i)_{i \in [n]}$ are conditionally independent given $\mathcal{F}_{n - 1} \in \mathcal{H}_{n - 1}$, hence $ (Y_i)_{i \in [n]}$ is a decoupled sequence tangent to $(X_i)_{i \in [n]}$.
    Now \Cref{thm:Hitczenko-martingale-decoupling} give us the result.
\end{proof}

\section{Strong Concentration for Tabulation Hashing}


The goal of this chapter is to prove strong concentration results for tabulation based hashing.
The chapter is divided into three parts:
In the first part we generalize some of the results by Aamand. et~al.~\cite{aamand2020} to the case where we have partial keys, and we prove some auxiliary results which will be used in the later parts.
In the second part we improve the analysis of simple tabulation and provide a moment bound which holds for all moments.
In order to prove this result we first have to bound a technical quantity which will show up as a conditional variance in the proof.
Finally, in the last part we prove moment bounds for mixed tabulation.
They can be thought versions of Khintchine's inequality and Chernoff bound for mixed tabulation.

One of the main insights we use that differs from the previous analyses is that we work with symmetrized versions of simple tabulation and mixed tabulation.
We will in their respective section argue that this assumption is valid.



We need the following simple lemma that compares the growth rate of powers and logarithms.
This will be used extensively. 

\begin{lemma}\label{lem:log-vs-poly}
    Let $a > 0$ and $b > 0$ be positive reals.
    It then holds that for all $x > 1$,
    \begin{align*}
        \frac{x^a}{\log(x)^b}
            \ge \left( \frac{e}{b/a} \right)^b
        \; .
    \end{align*} 
\end{lemma}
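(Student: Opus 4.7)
The plan is to find the global minimum of $g(x) = \dfrac{x^a}{\log(x)^b}$ on the open interval $(1,\infty)$ by elementary calculus and verify that it equals $(ea/b)^b$.

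First I would take the logarithmic derivative to linearize the expression:
\[
    \frac{g'(x)}{g(x)} = \frac{d}{dx}\bigl(a\log x - b\log\log x\bigr) = \frac{a}{x} - \frac{b}{x\log x} = \frac{1}{x\log x}\bigl(a\log x - b\bigr).
\]
Since $x>1$ gives $x\log x > 0$, the sign of $g'$ matches the sign of $a\log x - b$. Hence $g'<0$ on $(1, e^{b/a})$ and $g'>0$ on $(e^{b/a},\infty)$, so $g$ attains its unique global minimum on $(1,\infty)$ at $x^{*} = e^{b/a}$ (note $b/a>0$ so $x^*>1$).

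Next I would evaluate $g$ at $x^{*}$:
\[
    g(x^{*}) = \frac{(e^{b/a})^{a}}{(b/a)^{b}} = \frac{e^{b}}{(b/a)^{b}} = \left(\frac{e}{b/a}\right)^{b} = \left(\frac{ea}{b}\right)^{b},
\]
which matches the claimed lower bound exactly. Observing that $g(x)\to+\infty$ as $x\to 1^{+}$ (since $\log x\to 0^{+}$) and as $x\to\infty$ (by standard growth comparison) confirms there is no competing infimum at the boundary, so the inequality holds for every $x>1$ with equality precisely at $x=e^{b/a}$.

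There is no real obstacle here; the only minor care point is confirming that $x^{*}$ lies in the admissible range and that it is a minimum rather than a maximum, both of which follow directly from the sign analysis above.
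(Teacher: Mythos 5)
Your proof is correct and takes essentially the same approach as the paper: both are elementary calculus minimization arguments that locate the unique critical point at $x^{*}=e^{b/a}$ and evaluate there. The paper first factors the expression as $\bigl(x^{a/b}/\log x\bigr)^{b}$ and minimizes the base, whereas you take the logarithmic derivative of the whole quantity, but this is a cosmetic difference leading to the identical computation.
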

\begin{proof}
    We write $\frac{x^a}{\log(x)^b} = \left( \frac{x^{a/b}}{\log(x)} \right)^b$ so we just need to minimize $\frac{x^{a/b}}{\log(x)}$.
    Taking the derivative we get,
    \begin{align*}
        \frac{d}{dx} \frac{x^{a/b}}{\log(x)}
            = \frac{\tfrac{a}{b}x^{a/b - 1}\log(x) - x^{a/b - 1}}{\log(x)}
        \; .
    \end{align*}
    From this it is clear that $\frac{x^{a/b}}{\log(x)}$ is minimized at $\hat{x} = e^{b/a}$.
    We then get that
    \begin{align*}
        \left( \frac{x^{a/b}}{\log(x)} \right)^b
            \ge \left( \frac{\hat{x}^{a/b}}{\log(\hat{x})} \right)^b
            = \left( \frac{e}{b/a} \right)^b
        \; .
    \end{align*}
\end{proof}


\subsection{Improved Analysis for Simple Tabulation}

We start the section by arguing why we can assume that the simple tabulation functions are symmetrized.

\begin{lemma}\label{lem:simple-sym}
    Let $h \colon \Sigma^c \to [m]$ be a simple tabulation function, $v \colon \Sigma^{c} \times [m] \to \R$ a value function, and assume that $\sum_{j \in [m]} v(x, j) = 0$ for all keys $x \in \Sigma^c$.
    Then for every $p \ge 2$,
    \begin{align*}
        2^{-c} \pnorm{\sum_{x \in \Sigma^c} \eps(x) v(x, h(x))}{p}
            \le \pnorm{\sum_{x \in \Sigma^c} v(x, h(x))}{p}
            \le 2^c \pnorm{\sum_{x \in \Sigma^c} \eps(x) v(x, h(x))}{p}
        \; ,
    \end{align*}
    where $\eps \colon \Sigma^c \to \set{-1, 1}$ a simple tabulation sign function.
\end{lemma}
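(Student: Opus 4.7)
The plan is to reduce to the fully random case (\Cref{lem:sampling-symmetrize}) by symmetrizing one character at a time, picking up a factor of $2$ per character for a total of $2^c$.

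Let $(\eps_i)_{i \in [c]}$ be independent, uniformly random sign functions $\eps_i \colon \Sigma \to \set{-1, 1}$, also independent of the tables $(T_i)_{i \in [c]}$ defining $h$. For $I \subseteq [c]$ set
\begin{align*}
    X_I = \sum_{x \in \Sigma^c} \left(\prod_{i \in I} \eps_i(x_i) \right) v(x, h(x)) \; ,
\end{align*}
so that $X_\emptyset = \sum_x v(x, h(x))$, and $X_{[c]} = \sum_x \eps(x) v(x, h(x))$ with $\eps(x) = \prod_i \eps_i(x_i)$ a simple tabulation sign function. It suffices to establish the one-character step: for every $I \subseteq [c]$ and every $i \notin I$,
\begin{align*}
    \tfrac{1}{2} \pnorm{X_{I \cup \set{i}}}{p} \le \pnorm{X_I}{p} \le 2 \pnorm{X_{I \cup \set{i}}}{p} \; .
\end{align*}
Iterating this step $c$ times (adding one character of $I$ at a time) yields the lemma.

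To prove the one-character step, I would condition on every source of randomness except the table $T_i$ and the sign function $\eps_i$. Writing a key $x \in \Sigma^c$ as $(y, \alpha)$ with $\alpha = x_i$ and letting $h_{-i}(y) = \bigxor_{k \ne i} T_k(y_k)$, I can rewrite
\begin{align*}
    X_I = \sum_{\alpha \in \Sigma} g(\alpha, T_i(\alpha)) \quad \text{where} \quad g(\alpha, j) = \sum_{y} \Bigl(\prod_{k \in I} \eps_k(x_k)\Bigr) v(x, h_{-i}(y) \xor j) \; ,
\end{align*}
and under the conditioning $g$ is deterministic. The zero-sum condition on $v$ passes to $g$, since for each fixed $y$ the map $j \mapsto h_{-i}(y) \xor j$ is a bijection on $[m]$, so $\sum_j g(\alpha, j) = 0$ for every $\alpha$. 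Conditionally, $T_i$ is a uniformly random function $\Sigma \to [m]$ and $\eps_i$ is an independent uniform sign function on $\Sigma$, so \Cref{lem:sampling-symmetrize} applies (its proof goes through verbatim in the conditional setting) and gives
\begin{align*}
    \tfrac{1}{2} \pnormcond{\textstyle\sum_\alpha \eps_i(\alpha) g(\alpha, T_i(\alpha))}{p}{\mathcal{G}} \le \pnormcond{\textstyle\sum_\alpha g(\alpha, T_i(\alpha))}{p}{\mathcal{G}} \le 2 \pnormcond{\textstyle\sum_\alpha \eps_i(\alpha) g(\alpha, T_i(\alpha))}{p}{\mathcal{G}} \; .
\end{align*}
Since $\sum_\alpha \eps_i(\alpha) g(\alpha, T_i(\alpha)) = X_{I \cup \set{i}}$, taking $p$-th powers and applying the tower property removes the conditioning and gives the claimed step.

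The only non-mechanical point is verifying that the auxiliary value function $g$ inherits the zero-sum property at every stage so that \Cref{lem:sampling-symmetrize} is applicable; as noted above this is immediate from the bijectivity of XOR. Everything else is bookkeeping and an application of the fully random symmetrization lemma once per character.
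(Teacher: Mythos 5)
Your proposal is correct and takes essentially the same approach as the paper: the paper proves the lemma by induction on $c$, conditioning on the last character's table, using the inductive hypothesis to symmetrize the first $c-1$ characters, and then applying \Cref{lem:sampling-symmetrize} once more for the last character; your argument is the same conditioning-and-symmetrize-one-character mechanism, just unrolled as an explicit one-step-at-a-time iteration (with the zero-row-sum property of the conditional value function verified via the bijectivity of XOR at each stage, exactly as the paper does implicitly).
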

\begin{proof}
    We will prove the result by induction on $c$.
    The case $c = 1$ corresponds to \Cref{lem:sampling-symmetrize}.

    Now assume that $c > 1$ and that the results is true for values less than $c$.
    We define the $\sigma$-algebra $\mathcal{G} = \sigma((T(c-1, \alpha))_{\alpha \in \Sigma})$.
    Fix $\mathcal{G}$ and define $v' \colon \Sigma^{c - 1} \times [m] \to \R$ by
    \begin{align*}
        v'(x, j) = \sum_{\alpha \in \Sigma} v(x \cup \set{(c - 1, \alpha)}, T(c - 1, \alpha) \xor j)
        \; .
    \end{align*}
    Clearly, we have that $v'(x, j)$ is $\mathcal{G}$-measurable for $x \in \Sigma^{c - 1}, j \in [m]$
    and $\epcond{v'(x, h(x)}{\mathcal{G}} = 0$ for all $x \in \Sigma^{c - 1}$.
    We fix $\mathcal{G}$ and use the induction hypothesis to get that
    \begin{align*}
        2^{-(c - 1)}\pnormcond{\sum_{x \in \Sigma^{c - 1}} \eps(x) v'(x, h(x)) }{p}{\mathcal{G}}
            \le  \pnormcond{\sum_{x \in \Sigma^{c - 1}} v'(x, h(x)) }{p}{\mathcal{G}}
            \le 2^{c - 1} \pnormcond{\sum_{x \in \Sigma^{c - 1}} \eps(x) v'(x, h(x)) }{p}{\mathcal{G}}
        \; .
    \end{align*}
    So if we unfix $\mathcal{G}$ then we have that
    \begin{align}\label{eq:simple-symmetrize-step}
        2^{-(c - 1)}\pnorm{\sum_{x \in \Sigma^{c - 1}} \eps(x) v'(x, h(x)) }{p}
            \le  \pnorm{\sum_{x \in \Sigma^{c - 1}} v'(x, h(x)) }{p}
            \le 2^{c - 1} \pnorm{\sum_{x \in \Sigma^{c - 1}} \eps(x) v'(x, h(x)) }{p}
        \; .
    \end{align}
    Now we define the $\sigma$-algebra $\mathcal{H} = \sigma\!\left((h(\set{(j, \alpha)}), \eps(\set{(j, \alpha)}))_{j \in [c - 1], \alpha \in \Sigma}\right)$, and define $v'' \colon \Sigma \times [m] \to \R$ by $v''(\alpha, j) = \sum_{x \in \Sigma^{c - 1}} \eps(x)  v(x \cup \set{(c - 1, \alpha)}, h(x) \xor j)$.
    We then get that
    \begin{align*}
        \pnorm{\sum_{x \in \Sigma^{c - 1}} \eps(x) v'(x, h(x)) }{p}
            &= \pnorm{\sum_{x \in \Sigma^{c - 1}} \eps(x) \sum_{\alpha \in \Sigma} v(x \cup \set{(c - 1, \alpha)}, T(c - 1, \alpha) \xor h(x)) }{p}
            \\&= \pnorm{\sum_{\alpha \in \Sigma} v''(\alpha, T(c - 1, \alpha)) }{p}
        \; .
    \end{align*}
    We fix $\mathcal{H}$ and use \Cref{lem:sampling-symmetrize} to get that
    \begin{align*}
        2^{-1} \pnormcond{\sum_{\alpha \in \Sigma} \eps(\set{(c-1, \alpha)}) v''(\alpha, T(c - 1, \alpha)) }{p}{\mathcal{H}}
            &\le \pnormcond{\sum_{\alpha \in \Sigma} v''(\alpha, T(c - 1, \alpha)) }{p}{\mathcal{H}}
            \\&\le 2\pnormcond{\sum_{\alpha \in \Sigma} \eps(\set{(c-1, \alpha)}) v''(\alpha, T(c - 1, \alpha)) }{p}{\mathcal{H}}
        \; .
    \end{align*}
    We unfix ${\mathcal{G}}$ and get that
    \begin{align}\label{eq:simple-symmetrize-last}
        2^{-1} \pnorm{\sum_{\alpha \in \Sigma} \eps(\set{(c-1, \alpha)}) v''(\alpha, T(c - 1, \alpha)) }{p}
            &\le \pnorm{\sum_{\alpha \in \Sigma} v''(\alpha, T(c - 1, \alpha)) }{p}
            \\&\le 2\pnorm{\sum_{\alpha \in \Sigma} \eps(\set{(c-1, \alpha)}) v''(\alpha, T(c - 1, \alpha)) }{p}
        \; .
    \end{align}
    We now note that
    \begin{align*}
        \sum_{\alpha \in \Sigma} \eps(\set{(c-1, \alpha)}) v''(\alpha, T(c - 1, \alpha))
7            &= \sum_{\alpha \in \Sigma} \eps(\set{(c-1, \alpha)}) \sum_{x \in \Sigma^{c - 1}} \eps(x)  v(x \cup \set{(c - 1, \alpha)}, h(x) \xor T(c - 1, \alpha))
            \\&= \sum_{x \in \Sigma^c} \eps(x) v(x, h(x))
        \; .
    \end{align*}
    Thus combining \cref{eq:simple-symmetrize-step} and \cref{eq:simple-symmetrize-last} finishes the proof.
\end{proof}

We can then generalize a result by Aamand et~al.~\cite{aamand2020}.
The previous bound was only valid for $p = O(1)$ constant while we expand the applicability to all $p \ge 2$.
Surprisingly, the main insight is that by symmetrizing, the combinatorial arguments become much simpler.

\begin{lemma}\label{lem:simple-tab-gaussian}
    Let $h \colon \Sigma^c \to [m]$ be a simple tabulation function, $\eps \colon \Sigma^c \to \set{-1, 1}$ be a simple tabulation sign function, and $v_i \colon \Sigma^{c} \times [m] \to \R$ be value functions for $i \in [k]$.
    Then for every $p \ge 2$,
    \begin{align}\begin{split}
        &\pnorm{\sum_{x_0, \ldots, x_{k - 1} \in \Sigma^{c}} \sum_{\substack{j_0, \ldots, j_{k - 1} \in [m] \\ \bigxor_{i \in [k]} j_i = 0}} \prod_{i \in [k]} \eps(x_i) v_i(x_i, j_i \xor h(x_i)) }{p}
            \\&\qquad\qquad\le \sqrt{pk}^{c k} \prod_{i \in [k]} \norm{v_i}{2} \left( \frac{\sum_{x \in \Sigma^c} \norm{v_i[x]}{1}^2}{\sum_{x \in \Sigma^c} \norm{v_i[x]}{2}^2} \right)^{1/2 - 1/k}
        \; .
    \end{split}\end{align}
\end{lemma}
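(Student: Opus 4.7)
Denote by $V$ the random variable on the LHS. The plan is to diagonalize in Fourier space. Expanding each $v_i$ in the character basis of $[m] \cong (\Z/2\Z)^\ell$, so $v_i(x, j) = \sum_\chi \hat v_i(x, \chi) \chi(j)$, and using that
$\sum_{\bigxor_i j_i = 0} \prod_i \chi_i(j_i) = m^{k-1} \indicator{\chi_0 = \cdots = \chi_{k-1}}$
(since each character of $(\Z/2\Z)^\ell$ is self-inverse), the product structure $\sum_{\vec x} \prod_i f_i(x_i) = \prod_i \sum_{x_i} f_i(x_i)$ yields the key factorization
\[
    V \;=\; m^{k-1} \sum_\chi \prod_{i \in [k]} Z_i(\chi), \qquad \text{where } Z_i(\chi) := \sum_{x \in \Sigma^c} \eps(x)\, \hat v_i(x, \chi)\, \chi(h(x)).
\]

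For each fixed $\chi$, the quantity $\eps(x)\chi(h(x)) = \prod_{\alpha \in x} T_\eps(\alpha)\, \chi(T(\alpha))$ is a monomial of degree $c$ in the signs $\sigma_\chi(\alpha) := T_\eps(\alpha)\, \chi(T(\alpha))$ (a Rademacher when $\chi \ne 0$, since then $T_\eps(\alpha)$ and $\chi(T(\alpha))$ are independent $\pm 1$ variables; for $\chi = 0$ it reduces to $T_\eps(\alpha)$). Thus $Z_i(\chi)$ is a Rademacher chaos of degree $c$, and Bonami--Beckner hypercontractivity gives $\pnorm{Z_i(\chi)}{pk} \le (pk-1)^{c/2} \pnorm{Z_i(\chi)}{2} = (pk-1)^{c/2} \sqrt{\sum_x \hat v_i(x,\chi)^2}$. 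Applying the triangle inequality over $\chi$ and H\"older's inequality with equal exponents $pk$ over the $k$ factors yields
\[
    \pnorm{V}{p} \;\le\; m^{k-1}(pk-1)^{ck/2} \sum_\chi \prod_{i \in [k]} \sqrt{\sum_x \hat v_i(x, \chi)^2}.
\]

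A further H\"older step with equal exponents $k$ over $\chi$, together with the elementary estimate $\sum_\chi a_\chi^{k/2} \le (\max_\chi a_\chi)^{k/2-1} \sum_\chi a_\chi$ where $a_\chi := \sum_x \hat v_i(x, \chi)^2$, reduces each inner sum. Plancherel gives $\sum_\chi a_\chi = m^{-1} \norm{v_i}{2}^2$, while $|\hat v_i(x, \chi)| \le m^{-1} \norm{v_i[x]}{1}$ gives $\max_\chi a_\chi \le m^{-2} \sum_x \norm{v_i[x]}{1}^2$, whence
$\bigl(\sum_\chi a_\chi^{k/2}\bigr)^{1/k} \le m^{1/k - 1}\, \norm{v_i}{2}^{2/k} \bigl(\sum_x \norm{v_i[x]}{1}^2\bigr)^{1/2 - 1/k}$.
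The $m$-factors cancel since $m^{k-1} \cdot m^{k(1/k-1)} = 1$, leaving
\[
    \pnorm{V}{p} \;\le\; (pk-1)^{ck/2} \prod_{i \in [k]} \norm{v_i}{2} \left(\frac{\sum_x \norm{v_i[x]}{1}^2}{\sum_x \norm{v_i[x]}{2}^2}\right)^{1/2 - 1/k},
\]
and $(pk-1)^{ck/2} \le \sqrt{pk}^{ck}$ finishes the stated bound.

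The only delicate step is verifying that for each fixed $\chi$ the signs $(\sigma_\chi(\alpha))_\alpha$ are independent Rademachers; this uses that the tabulation tables $T$ and $T_\eps$ are independent and have entries independent across position characters, together with the fact that any nontrivial character of $(\Z/2\Z)^\ell$ applied to a uniform variable yields a Rademacher. Because Bonami--Beckner holds for all real $p \ge 2$, no even-integer restriction is needed along the way. I expect the main difficulty to be purely bookkeeping: correctly tracking the $m$-normalizations through the Fourier decomposition and the H\"older cascade so the factors cancel as claimed.
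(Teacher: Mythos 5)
Your Fourier/hypercontractivity argument is correct (for $k\ge 2$; see the caveat at the end) and is a genuinely different route from the paper's. The paper's proof of \Cref{lem:simple-tab-gaussian} fixes $h$, expands the $q$th moment for even integers $q$, bounds the inner function (which depends on the $x_i$'s only through $\bigxor_i h(x_i)$) by a product of per-key weights via iterated Cauchy--Schwarz, invokes \Cref{lem:weighted-sum-of-dependence} on the resulting sum over dependency tuples $\bigxor x_i^{(j)}=\emptyset$, and finally upgrades from even $q$ to real $p$ by Jensen. Your approach instead diagonalizes the constraint $\bigxor_i j_i=0$ in the character basis of $(\Z/2\Z)^\ell$ (using $m=2^\ell$, which holds in the tabulation setting), obtaining $V=m^{k-1}\sum_\chi\prod_i Z_i(\chi)$, identifies each $Z_i(\chi)=\sum_x \hat v_i(x,\chi)\prod_{\alpha\in x}\sigma_\chi(\alpha)$ as a degree-$c$ Rademacher chaos in the iid signs $\sigma_\chi(\alpha)=T_\eps(\alpha)\chi(T(\alpha))$, and finishes with Bonami--Beckner, the triangle inequality in $\chi$, Hölder, and Plancherel. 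Your route avoids the even-integer detour entirely and never conditions on $h$; the $m$-normalizations cancel as you claim, and $(pk-1)^{ck/2}\le\sqrt{pk}^{\,ck}$ matches the stated constant. What the paper's approach gains is that it does not rely on $m$ being a power of two, needs no Fourier machinery, and reuses \Cref{lem:weighted-sum-of-dependence}, a combinatorial lemma deployed elsewhere in the paper. One small note: the elementary estimate $\sum_\chi a_\chi^{k/2}\le(\max_\chi a_\chi)^{k/2-1}\sum_\chi a_\chi$ requires $k\ge 2$ (for $k=1$ it reverses direction). This matches the paper's own implicit assumption: its Cauchy--Schwarz step picks two distinct indices $i_1\ne i_2\in[k]$, and all uses of the lemma in the paper have $k\ge 2$, so this is not a real restriction.
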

\begin{proof}
    We will argue that for every even integer $q \ge 2$,
    \begin{align}\begin{split}\label{eq:simple-tab-gaussian-helper}
        &\pnorm{\sum_{x_0 \in \Sigma^{I_0}, \ldots, x_{k - 1} \in \Sigma^{I_{k - 1}}} \sum_{\substack{j_0, \ldots, j_{k - 1} \in [m] \\ \bigxor_{i \in [k]} j_i = 0}} \prod_{i \in [k]} \eps(x_i) v_i(x_i, j_i \xor h(x_i)) }{q}
            \\&\qquad\qquad\le \sqrt{\frac{qk}{2}}^{c k} \prod_{i \in [k]} \norm{v_i}{2} \left( \frac{\sum_{x \in \Sigma^c} \norm{v_i[x]}{1}^2}{\sum_{x \in \Sigma^c} \norm{v_i[x]}{2}^2} \right)^{1/2 - 1/k}
    \end{split}\end{align}
    We claim that the result follows from this.
    Let $p \ge 2$ be a real number and let $q \ge 2$ be the unique even number such that $q \le p < q + 2$.
    Since $q \ge 2$ then $p \le 2q$ and we can then use Jensen's inequality to get that
    \begin{align*}
        &\pnorm{\sum_{x_0 \in \Sigma^{I_0}, \ldots, x_{k - 1} \in \Sigma^{I_{k - 1}}} \sum_{\substack{j_0, \ldots, j_{k - 1} \in [m] \\ \bigxor_{i \in [k]} j_i = 0}} \prod_{i \in [k]} \eps(x_i) v_i(x_i, j_i \xor h(x_i)) }{p}
            \\&\qquad\qquad\le \pnorm{\sum_{x_0 \in \Sigma^{I_0}, \ldots, x_{k - 1} \in \Sigma^{I_{k - 1}}} \sum_{\substack{j_0, \ldots, j_{k - 1} \in [m] \\ \bigxor_{i \in [k]} j_i = 0}} \prod_{i \in [k]} \eps(x_i) v_i(x_i, j_i \xor h(x_i)) }{2q}
            \\&\qquad\qquad\le \sqrt{\frac{2q k}{2}}^{c k} \prod_{i \in [k]} \norm{v_i}{2}^2 \left( \frac{\sum_{x \in \Sigma^c} \norm{v_i[x]}{1}}{\sum_{x \in \Sigma^c} \norm{v_i[x]}{2}^2} \right)^{1/2 - 1/k}
            \\&\qquad\qquad\le \sqrt{pk}^{c k} \prod_{i \in [k]} \norm{v_i}{2} \left( \frac{\sum_{x \in \Sigma^c} \norm{v_i[x]}{1}^2}{\sum_{x \in \Sigma^c} \norm{v_i[x]}{2}^2} \right)^{1/2 - 1/k}
        \; .
    \end{align*}

    All we need to do now is to prove \cref{eq:simple-tab-gaussian-helper}.
    Let $q \ge 2$ be an even integer.
    The goal is to apply \Cref{lem:weighted-sum-of-dependence} to prove the claim.
    First we define $f \colon \prod_{i \in [k]} \Sigma^{I_i} \to \R$ by
    \[
        f(x_0, \ldots, x_{k - 1}) = \sum_{\substack{j_0, \ldots, j_{k - 1} \in [m] \\ \bigxor_{i \in [k]} j_i = 0}} \prod_{i \in [k]} v_i(x_i, j_i \xor h(x_i))
            \; .
    \]
    We then want to bound
    \begin{align*}
        \pnorm{\sum_{x_0, \ldots, x_{k - 1} \in \Sigma^{c}} \left( \prod_{i \in [k]} \eps(x_i) \right) f(x_0, \ldots, x_{k - 1}) }{q}
    \end{align*}
    If we fix $h$ then we get that
    \begin{align*}
        &\pnormcond{\sum_{x_0, \ldots, x_{k - 1} \in \Sigma^{c}} \left( \prod_{i \in [k]} \eps(x_i) \right) f(x_0, \ldots, x_{k - 1}) }{q}{h}^q
            \\&\qquad\qquad= \sum_{\substack{x^{(0)}_0, \ldots, x^{(0)}_{k - 1}, \ldots,  x^{(q-1)}_0, \ldots, x^{(q-1)}_{k - 1} \in \Sigma^{c} \\ \bigxor_{j \in [q], i \in [k]} x^{(j)}_i = \emptyset}} \prod_{j \in [q]} f(x^{(j)}_0, \ldots, x^{(j)}_{k - 1})
        \; .
    \end{align*}
    Now we want to bound $f$ to a form such that we can use \Cref{lem:weighted-sum-of-dependence}.
    This will be done by use of the Cauchy-Schwartz inequality.
    \begin{align*}
        f(x_0, \ldots, x_{k - 1}) 
            &= \sum_{\substack{j_0, \ldots, j_{k - 1} \in [m]] \\ \bigxor_{i \in [k]} j_i = 0}} \prod_{i \in [k]} v_i(x_i, j_i \xor h(x_i))
            \\&= \sum_{j_0, \ldots, j_{k - 3} \in [m]} \left( \prod_{i \in [k - 2]} v_i(x_i, j_i \xor h(x_i)) \right) 
                \\&\cdot\sum_{j_{k - 2} \in [m]} v_{k - 2}(x_{k - 2}, j_{k - 2} \xor h(x_{k - 2})) v_{k - 1}\left(x_{k - 1}, h(x_{k - 1}) \xor \bigxor_{s \in [k - 1]} j_s \right)
            \\&\le \sum_{j_0, \ldots, j_{k - 3} \in [m]} \abs{\prod_{i \in [k - 2]} v_i(x_i, j_i \xor h(x_i))} \norm{v_{k - 2}[x_{k - 2}]}{2} \norm{v_{k - 1}[x_{k - 1}]}{2}
            \\&= \left( \prod_{i \in [k - 2]} \norm{v_i[x_i]}{1} \right) \norm{v_{k - 2}[x_{k - 2}]}{2} \norm{v_{k - 1}[x_{k - 1}]}{2}
            \\&= \left( \prod_{i \in [k - 2]} \frac{\norm{v_i[x_i]}{1}}{\norm{v_i[x_i]}{2}} \right) \prod_{i \in [k]} \norm{v_{i}[x_{i}]}{2}
        \; .
    \end{align*}
    Similarly, for all $i_1 \neq i_2 \in [k]$ we can prove that
    \begin{align*}
        f(x_0, \ldots, x_{k - 1}) 
            &\le \left( \prod_{i \in [k] \setminus \set{i_1, i_2}} \frac{\norm{v_i[x_i]}{1}}{\norm{v_i[x_i]}{2}} \right) \prod_{i \in [k]} \norm{v_{i}[x_{i}]}{2}
        \; .
    \end{align*}
    This implies that
    \begin{align*}
        f(x_0, \ldots, x_{k - 1}) 
            \le \prod_{i \in [k]} \norm{v_{i}[x_{i}]}{2} \left( \frac{\norm{v_i[x_i]}{1}}{\norm{v_i[x_i]}{2}} \right)^{1 - 2/k}
    \end{align*}
    We are now ready to use \Cref{lem:weighted-sum-of-dependence}.
    \begin{align*}
        &\pnormcond{\sum_{x_0, \ldots, x_{k - 1} \in \Sigma^{c}} \left( \prod_{i \in [k]} \eps(x_i) \right) f(x_0, \ldots, x_{k - 1}) }{q}{h}^q
            \\&\qquad\qquad\le \sum_{\substack{x^{(0)}_0, \ldots, x^{(0)}_{k - 1}, \ldots,  x^{(q-1)}_0, \ldots, x^{(q-1)}_{k - 1} \in \Sigma^{c} \\ \bigxor_{j \in [q], i \in [k]} x^{(j)}_i = \emptyset}} \prod_{j \in [q], i \in [k]} \norm{v_{i}[x^{(j)}_{i}]}{2} \left( \frac{\norm{v_i[x^{(j)}_i]}{1}}{\norm{v_i[x^{(j)}_i]}{2}} \right)^{1 - 2/k} 
            \\&\qquad\qquad\le \sqrt{\tfrac{qk}{2}}^{q c k } \left( \prod_{i \in [k]} \sqrt{\sum_{x \in \Sigma^{c}} \norm{v_{i}[x]}{2}^2 \left( \frac{\norm{v_i[x]}{1}^2}{\norm{v_i[x]}{2}^2} \right)^{1 - 2/k}} \right)^{q}
        \; .
    \end{align*}
    Now we define the random variables $R_i$ by $\prb{R_i = \frac{\norm{v_i[x]}{1}^2}{\norm{v_i[x]}{2}^2}} = \frac{\norm{v_i[x]}{2}^2}{\sum_{x \in \Sigma^c} \norm{v_i[x]}{2}^2}$, and note that
    \begin{align*}
        \sum_{x \in \Sigma^{c}} \norm{v_{i}[x]}{2}^2 \left( \frac{\norm{v_i[x]}{1}^2}{\norm{v_i[x]}{2}^2} \right)^{1 - 2/k}
            &= \left( \sum_{x \in \Sigma^{c}} \norm{v_{i}[x]}{2}^2 \right) \ep{R_i^{1 - 2/k}}
            \\&\le \left( \sum_{x \in \Sigma^{c}} \norm{v_{i}[x]}{2}^2 \right) \ep{R_i}^{1 - 2/k}
            \\&= \left( \sum_{x \in \Sigma^{c}} \norm{v_{i}[x]}{2}^2 \right) \left( \frac{\sum_{x \in \Sigma^c} \norm{v_i[x]}{1}^2}{\sum_{x \in \Sigma^c} \norm{v_i[x]}{2}^2} \right)^{1 - 2/k}
            \\&= \norm{v_{i}}{2}^2\left( \frac{\sum_{x \in \Sigma^c} \norm{v_i[x]}{1}^2}{\sum_{x \in \Sigma^c} \norm{v_i[x]}{2}^2} \right)^{1 - 2/k}
        \; .
    \end{align*}
    The inequality follows by Jensen's inequality.
    This implies that
    \begin{align*}
        &\pnormcond{\sum_{x_0, \ldots, x_{k - 1} \in \Sigma^{c}} \left( \prod_{i \in [k]} \eps(x_i) \right) f(x_0, \ldots, x_{k - 1}) }{q}{h}^q
            \\&\qquad\qquad\le \sqrt{\tfrac{qk}{2}}^{q c k } \left( \prod_{i \in [k]} \norm{v_{i}}{2} \left( \frac{\sum_{x \in \Sigma^c} \norm{v_i[x]}{1}^2}{\sum_{x \in \Sigma^c} \norm{v_i[x]}{2}^2} \right)^{1/2 - 1/k} \right)^{q}
        \; .
    \end{align*}
    Now taking the $q$'th root, give us that
    \begin{align*}
        &\pnorm{\sum_{x_0, \ldots, x_{k - 1} \in \Sigma^{c}} \left( \prod_{i \in [k]} \eps(x_i) \right) f(x_0, \ldots, x_{k - 1})}{q}
            \\&\qquad\qquad= \pnorm{\pnormcond{\sum_{x_0, \ldots, x_{k - 1} \in \Sigma^{c}} \left( \prod_{i \in [k]} \eps(x_i) \right) f(x_0, \ldots, x_{k - 1}) }{q}{h}}{q}
            \\&\qquad\qquad\le \pnorm{\sqrt{\tfrac{qk}{2}}^{c k } \left( \prod_{i \in [k]} \norm{v_{i}}{2} \left( \frac{\sum_{x \in \Sigma^c} \norm{v_i[x]}{1}^2}{\sum_{x \in \Sigma^c} \norm{v_i[x]}{2}^2} \right)^{1/2 - 1/k} \right)}{q}
            \\&\qquad\qquad= \sqrt{\tfrac{qk}{2}}^{c k } \prod_{i \in [k]} \norm{v_{i}}{2} \left( \frac{\sum_{x \in \Sigma^c} \norm{v_i[x]}{1}^2}{\sum_{x \in \Sigma^c} \norm{v_i[x]}{2}^2} \right)^{1/2 - 1/k}
        \; ,
    \end{align*}
    which finishes the proof of the lemma.
\end{proof}

\subsubsection{Bounding the Sum of Squares}

The goal of this section is to prove \Cref{lem:simple-chaos} from which we then get a bound of sum of squares of simple tabulation hashing.
We start by proving a result for simple tabulation hashing that will serve as the base for the proof.


\restateLemSimpleTabHoeffding
\begin{proof}
    The proof will be by induction on $c$.
    For $c = 1$ the result follows by \Cref{lem:sampling-hoeffding}.

    Now we assume that the result is true for $c - 1$.
    We define $v' \colon \Sigma^{c - 1} \times [m] \to \R$ by $v'(x, j) = \sum_{\alpha \in \Sigma} \eps_{c-1}(\alpha) v(x \cup \set{(c-1, \alpha)}, j \xor T(c-1, \alpha))$.
    The induction hypothesis then give us that
    \begin{align}\begin{split}\label{eq:simple-hoeffding-induction-step}
        \pnorm{\sum_{x \in \Sigma^c} \eps(x) v(x, h(x))}{p}
            &= \pnorm{\pnormcond{\sum_{x \in \Sigma^{c - 1}} \eps(x) v'(x, h(x))}{p}{T(\set{c - 1}\times\Sigma)}}{p}
            \\&\le \pnorm{\sqrt{K_{c - 1} p \left(\max\!\set{p, \log(m)}\right)^{c - 2} \frac{\sum_{x \in \Sigma^{c - 1}} \norm{v'[x]}{\infty}^2}{\log\!\left( \frac{e^2 m \sum_{x \in \Sigma^{c - 1}} \norm{v'[x]}{\infty}^2}{\sum_{x \in \Sigma^{c - 1}} \norm{v'[x]}{2}^2} \right)^{c - 1}}}}{p}
            \\&\le \sqrt{K_{c - 1} p \left(\max\!\set{p, \log(m)}\right)^{c - 2}} \pnorm{\frac{\sum_{x \in \Sigma^{c - 1}} \norm{v'[x]}{\infty}^2}{\log\!\left( \frac{e^2 m \sum_{x \in \Sigma^{c - 1}} \norm{v'[x]}{\infty}^2}{\sum_{x \in \Sigma^{c - 1}} \norm{v'[x]}{2}^2} \right)^{c - 1}}}{p/2}^{1/2}
        \; .
    \end{split}\end{align}
    We define the function $f \colon \R_{\ge 0}^2 \to \R_{\ge 0}$ by
    \begin{align*}
        f(x, y) = \begin{cases}
            0                                                    &\text{if $y = 0$} \\
            \frac{x}{\log\!\left( \frac{e^2 x}{y} \right)^{c - 1}} &\text{if $0 < y \le x$} \\
            \frac{x}{2^{c-1}}                                    &\text{otherwise}
        \end{cases}
        \; .
    \end{align*}
    Clearly, $\sum_{x \in \Sigma^{c - 1}} \norm{v'[x]}{\infty}^2 \ge \tfrac{1}{m} \sum_{x \in \Sigma^{c - 1}} \norm{v'[x]}{2}^2$, hence we have that
    \begin{align*}
        \frac{\sum_{x \in \Sigma^{c - 1}} \norm{v'[x]}{\infty}^2}{\log\!\left( \frac{e^2 m \sum_{x \in \Sigma^{c - 1}} \norm{v'[x]}{\infty}^2}{\sum_{x \in \Sigma^{c - 1}} \norm{v'[x]}{2}^2} \right)^{c - 1}}
            = f\left(\sum_{x \in \Sigma^{c - 1}} \norm{v'[x]}{\infty}^2, \tfrac{1}{m} \sum_{x \in \Sigma^{c - 1}} \norm{v'[x]}{2}^2 \right)
        \; .
    \end{align*}
    It is easy to check that $f(\lambda x, \lambda y) = \lambda f(x, y)$ so by \Cref{lem:fn-moment} we get that
    \begin{align}\begin{split}\label{eq:simple-hoeffding-f}
        &\pnorm{f\left(\sum_{x \in \Sigma^{c - 1}} \norm{v'[x]}{\infty}^2, \tfrac{1}{m} \sum_{x \in \Sigma^{c - 1}} \norm{v'[x]}{2}^2 \right)}{p/2}
            \\&\qquad\qquad\qquad\le 2^{1/p} f\left(\pnorm{\sum_{x \in \Sigma^{c - 1}} \norm{v'[x]}{\infty}^2}{p/2}, \tfrac{1}{m} \pnorm{\sum_{x \in \Sigma^{c - 1}} \norm{v'[x]}{2}^2}{p/2} \right)
            \\&\qquad\qquad\qquad\le \sqrt{2} f\left(\pnorm{\sum_{x \in \Sigma^{c - 1}} \norm{v'[x]}{\infty}^2}{p/2}, \tfrac{1}{m} \pnorm{\sum_{x \in \Sigma^{c - 1}} \norm{v'[x]}{2}^2}{p/2} \right)
        \; .
    \end{split}\end{align}

    We define $v_x \colon \Sigma \times [m] \to \R$ for every $x \in \Sigma^{c - 1}$ by $v_x(\alpha, j) = v(x \cup \set{({c-1}, \alpha)}, j)$.
    We then have that $v'(x, j) = \sum_{\alpha \in \Sigma} \eps_{c-1}(\alpha) v_x(\alpha, j \xor T(c - 1, \alpha))$.
    
    Let $\bar{p} = \max\!\set{p, \log(m)}$.
    \begin{align*}
        \pnorm{\sum_{x \in \Sigma^{c - 1}} \norm{v'[x]}{\infty}^2}{p/2}
            &= \pnorm{\sum_{x \in \Sigma^{c - 1}} \max_{j \in [m]} v'(x, j)^2}{p/2}
            \\&\le \sum_{x \in \Sigma^{c - 1}}\pnorm{\max_{j \in [m]} v'(x, j)^2}{p/2}
            \\&= \sum_{x \in \Sigma^{c - 1}} \pnorm{\max_{j \in [m]} \abs{v'(x, j)}}{p}^2
            \\&\le \sum_{x \in \Sigma^{c - 1}} \pnorm{\max_{j \in [m]} \abs{v'(x, j)}}{\bar{p}}^2
            \\&\le \sum_{x \in \Sigma^{c - 1}} \left(\sum_{j \in [m]} \pnorm{v'(x, j)}{\bar{p}}^{\bar{p}} \right)^{2/\bar{p}}
            \\&\le \sum_{x \in \Sigma^{c - 1}} \left(m \max_{j \in [m]} \pnorm{v'(x, j)}{\bar{p}}^{\bar{p}} \right)^{2/\bar{p}}
            \\&\le e \sum_{x \in \Sigma^{c - 1}} \max_{j \in [m]} \pnorm{v'(x, j)}{\bar{p}}^2
    \end{align*}
    Now we will use that $v'(x, j) = \sum_{\alpha \in \Sigma} \eps(\set{(c-1,\alpha)}) v_x(\alpha, j \xor T(c - 1, \alpha))$ and \Cref{lem:sampling-hoeffding}.
    \begin{align*}
        \pnorm{v'(x, j)}{\bar{p}}^2
            = \pnorm{\sum_{\alpha \in \Sigma} \eps_{c-1}(\alpha) v_x(\alpha, j \xor T(c - 1, \alpha))}{\bar{p}}^2
            \le C_1 \bar{p} \frac{\sum_{\alpha \in \Sigma} \norm{v_x[\alpha]}{\infty}^2}{\log \left( \frac{e^2 m \sum_{\alpha \in \Sigma} \norm{v_x[\alpha]}{\infty}^2}{\sum_{\alpha \in \Sigma} \norm{v_x[\alpha]}{2}^2} \right)} 
    \end{align*}
    So we have that
    \begin{align*}
        \pnorm{\sum_{x \in \Sigma^{c - 1}} \norm{v'[x]}{\infty}^2}{p/2}
            &\le e \sum_{x \in \Sigma^{c - 1}} \max_{j \in [m]} \pnorm{v'(x, j)}{\bar{p}}^2
            \\&\le C_1 e \bar{p} \sum_{x \in \Sigma^{c - 1}} \frac{\sum_{\alpha \in \Sigma} \norm{v_x[\alpha]}{\infty}^2}{\log \left( \frac{e^2 m \sum_{\alpha \in \Sigma} \norm{v_x[\alpha]}{\infty}^2}{\sum_{\alpha \in \Sigma} \norm{v_x[\alpha]}{2}^2} \right)} 
        \; .    
    \end{align*}
    Now we use \Cref{lem:sum-concave} to get that
    \begin{align}\begin{split}\label{eq:simple-hoeffding-max}
        \pnorm{\sum_{x \in \Sigma^{c - 1}} \norm{v'[x]}{\infty}^2}{p/2}
            &\le C_1 e \bar{p} \sum_{x \in \Sigma^{c - 1}} \frac{\sum_{\alpha \in \Sigma} \norm{v_x[\alpha]}{\infty}^2}{\log \left( \frac{e^2 m \sum_{\alpha \in \Sigma} \norm{v_x[\alpha]}{\infty}^2}{\sum_{\alpha \in \Sigma} \norm{v_x[\alpha]}{2}^2} \right)} 
            \\&\le C_1 e \bar{p} \frac{\sum_{x \in \Sigma^{c - 1}} \sum_{\alpha \in \Sigma} \norm{v_x[\alpha]}{\infty}^2}{\log \left( \frac{e^2 m \sum_{x \in \Sigma^{c - 1}} \sum_{\alpha \in \Sigma} \norm{v_x[\alpha]}{\infty}^2}{\sum_{x \in \Sigma^{c - 1}} \sum_{\alpha \in \Sigma} \norm{v_x[\alpha]}{2}^2} \right)} 
            \\&= C_1 e \bar{p} \frac{\sum_{x \in \Sigma^{c}} \norm{v[x]}{\infty}^2}{\log \left( \frac{e^2 m \sum_{x \in \Sigma^{c}} \norm{v[x]}{\infty}^2}{\sum_{x \in \Sigma^{c}} \norm{v[x]}{2}^2} \right)} 
    \end{split}\end{align}

    We again use that $v'(x, j) = \sum_{\alpha \in \Sigma} \eps_{c-1}(\alpha) v_x(\alpha, j \xor T(c - 1, \alpha))$ to get that
    \begin{align*}
        \pnorm{\sum_{x \in \Sigma^{c - 1}} \norm{v'[x]}{2}^2}{p/2} 
            &= \pnorm{\sum_{x \in \Sigma^{c - 1}} \sum_{j \in [m]} v'(x, j)^2}{p/2}
            \\&\le \sum_{x \in \Sigma^{c - 1}} \pnorm{\sum_{j \in [m]} v'(x, j)^2}{p/2}
            \\&= \sum_{x \in \Sigma^{c - 1}} \pnorm{\sum_{j \in [m]} \left( \sum_{\alpha \in \Sigma} \eps_{c-1}(\alpha) v_x(\alpha, j \xor T(c - 1, \alpha)) \right)^2}{p/2}
            \\&= \sum_{x \in \Sigma^{c - 1}} \pnorm{\sum_{j \in [m]} \sum_{\alpha, \beta \in \Sigma} \eps_{c-1}(\alpha) \eps_{c-1}(\beta) v_x(\alpha, j \xor T(c - 1, \alpha)) v_x(\beta, j \xor T(c - 1, \beta))}{p/2}
    \end{align*}
    Now we can use \Cref{lem:simple-tab-gaussian} to get that
    \begin{align*}
        \pnorm{\sum_{j \in [m]} \sum_{\alpha, \beta \in \Sigma} \eps_{c-1}(\alpha) \eps_{c-1}(\beta) v_x(\alpha, j \xor T(c - 1, \alpha)) v_x(\beta, j \xor T(c - 1, \beta))}{p/2}
            \le 2 p \sum_{\alpha \in \Sigma} \norm{v_x[\alpha]}{2}^2
    \end{align*}
    This implies that
    \begin{align}\label{eq:simple-hoeffding-squares}
        \pnorm{\sum_{x \in \Sigma^{c - 1}} \norm{v'[x]}{2}^2}{p/2}
            \le 2 p \sum_{x \in \Sigma^{c - 1}} \sum_{\alpha \in \Sigma} \norm{v_x[\alpha]}{2}^2
            = 2 p \sum_{x \in \Sigma^{c}} \norm{v[x]}{2}^2
            \le 2 \bar{p} \sum_{x \in \Sigma^{c}} \norm{v[x]}{2}^2
        \; .
    \end{align}

    Combining \cref{eq:simple-hoeffding-induction-step}, \cref{eq:simple-hoeffding-f}, \cref{eq:simple-hoeffding-max}, and \cref{eq:simple-hoeffding-squares} we get that
    \begin{align*}
        \pnorm{\sum_{x \in \Sigma^c} \eps(x) v(x, h(x))}{p}
            \le &\sqrt{K_{c - 1} p \left(\max\!\set{p, \log(m)}\right)^{c - 2}} 
            \\&\cdot f\left(C_1 e \bar{p} \frac{\sum_{x \in \Sigma^{c}} \norm{v[x]}{\infty}^2}{\log \left( \frac{e^2 m \sum_{x \in \Sigma^{c}} \norm{v[x]}{\infty}^2}{\sum_{x \in \Sigma^{c}} \norm{v[x]}{2}^2} \right)}, \frac{1}{m} 2 \bar{p} \sum_{x \in \Sigma^{c}} \norm{v[x]}{2}^2 \right)^{1/2}
        \; .    
    \end{align*}
    We will now argue that
    \begin{align*}
        f\left(C_1 e \bar{p} \frac{\sum_{x \in \Sigma^{c}} \norm{v[x]}{\infty}^2}{\log \left( \frac{e^2 m \sum_{x \in \Sigma^{c}} \norm{v[x]}{\infty}^2}{\sum_{x \in \Sigma^{c}} \norm{v[x]}{2}^2} \right)}, \frac{1}{m} 2 \bar{p} \sum_{x \in \Sigma^{c}} \norm{v[x]}{2}^2 \right)
            \le L c \frac{\bar{p}}{\log \left( \frac{e^2 m \sum_{x \in \Sigma^{c}} \norm{v[x]}{\infty}^2}{\sum_{x \in \Sigma^{c}} \norm{v[x]}{2}^2} \right)^c}
        \; ,
    \end{align*}
    which will finish the proof.

    We will use \Cref{lem:log-vs-poly} to get that
    \begin{align*}
        \frac{\left( \frac{e^2 m \sum_{x \in \Sigma^{c}} \norm{v[x]}{\infty}^2}{\sum_{x \in \Sigma^{c}} \norm{v[x]}{2}^2} \right)^{1/c}}{\log \left( \frac{e^2 m \sum_{x \in \Sigma^{c}} \norm{v[x]}{\infty}^2}{\sum_{x \in \Sigma^{c}} \norm{v[x]}{2}^2} \right)}
            \ge \frac{e}{c}
        \; .
    \end{align*}
    This implies that
    \begin{align*}
        \frac{C_1 e \bar{p} \frac{\sum_{x \in \Sigma^{c}} \norm{v[x]}{\infty}^2}{\log \left( \frac{e^2 m \sum_{x \in \Sigma^{c}} \norm{v[x]}{\infty}^2}{\sum_{x \in \Sigma^{c}} \norm{v[x]}{2}^2} \right)}}{\frac{1}{m} 2 \bar{p} \sum_{x \in \Sigma^{c}} \norm{v[x]}{2}^2}
            \ge \frac{C_1 }{2 c} \left( \frac{e^2 m \sum_{x \in \Sigma^{c}} \norm{v[x]}{\infty}^2}{\sum_{x \in \Sigma^{c}} \norm{v[x]}{2}^2} \right)^{1 - 1/c}
        \; .
    \end{align*}
    We then get that
    \begin{align*}
        f\left(C_1 e \bar{p} \frac{\sum_{x \in \Sigma^{c}} \norm{v[x]}{\infty}^2}{\log \left( \frac{e^2 m \sum_{x \in \Sigma^{c}} \norm{v[x]}{\infty}^2}{\sum_{x \in \Sigma^{c}} \norm{v[x]}{2}^2} \right)}, \frac{1}{m} 2 \bar{p} \sum_{x \in \Sigma^{c}} \norm{v[x]}{2}^2 \right)
            &\le f\left(2 e c \bar{p} \frac{\sum_{x \in \Sigma^{c}} \norm{v[x]}{\infty}^2}{\log \left( \frac{e^2 m \sum_{x \in \Sigma^{c}} \norm{v[x]}{\infty}^2}{\sum_{x \in \Sigma^{c}} \norm{v[x]}{2}^2} \right)}, \frac{1}{m} 2 \bar{p} \sum_{x \in \Sigma^{c}} \norm{v[x]}{2}^2 \right)
            \\&= 2 e c \frac{\bar{p}}{\log \left( \left(\frac{e^2 m \sum_{x \in \Sigma^{c}} \norm{v[x]}{\infty}^2}{\sum_{x \in \Sigma^{c}} \norm{v[x]}{2}^2} \right)^{1 - 1/c} \right)^c}
            \\&= 2 e c \frac{\bar{p}}{\log \left( \frac{e^2 m \sum_{x \in \Sigma^{c}} \norm{v[x]}{\infty}^2}{\sum_{x \in \Sigma^{c}} \norm{v[x]}{2}^2}  \right)^c \left(1 - \frac{1}{c}\right)^c}
            \\&\le 8 e c \frac{\bar{p}}{\log \left( \frac{e^2 m \sum_{x \in \Sigma^{c}} \norm{v[x]}{\infty}^2}{\sum_{x \in \Sigma^{c}} \norm{v[x]}{2}^2}  \right)^c}
        \; .
    \end{align*}
    This finishes the proof.
\end{proof}

We will now expand the result of \Cref{lem:simple-tab-hoeffding} to chaoses of simple tabulation hashing.
For this we need the following decoupling lemma of de la Pena et al.~\cite{delapena1994}.

\begin{lemma}[Decoupling~\cite{delapena1994}]\label{lem:general-decoupling}
    Let $(f^{(j)}_{i_0, \ldots, i_{k - 1}})_{i_0, \ldots, i_{k - 1} \in [n], j \in [m]}$ be a multiindexed array of real numbers, and assume that $f^{(j)}_{i_0, \ldots, i_{k - 1}} = 0$ if $i_l = i_{l'}$ for some $l \neq l'$.
    Let $(X^{(j)}_i)_{i \in [n]}$ be a sequence of independent and identically distributed random variables for every $j \in [m]$.

    Define $(Y^{(j)}_{i, l})_{i \in [n], l \in [k]}$ to be a sequence of independent and identically distributed random variables which has the same distribution as $X^{(j)}_0$ for every $j \in [m]$.
    Then for every $p \ge 2$,
    \begin{align*}
        \pnorm{\sum_{j \in [m]} \sum_{i_0, \ldots, i_{k - 1} \in [n]} f^{(d)}_{i_0, \ldots, i_{k - 1}} \prod_{l \in [k]} X^{(j)}_{i_l}}{p}
            \le L_k \pnorm{\sum_{j \in [m]} \sum_{i_0, \ldots, i_{k - 1} \in [n]} f^{(d)}_{i_0, \ldots, i_{k - 1}} \prod_{l \in [k]} Y^{(j)}_{i_l, l}}{p}
        \; ,
    \end{align*}
    where $L_k \le k^k$ if $\ep{X^{(j)}_0} = 0$ for all $j \in [m]$, and $L_k \le (2k + 1)^k$ otherwise.
\end{lemma}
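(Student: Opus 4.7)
The plan is to prove this by the random-partition decoupling scheme of de la Pena, treating the centered case first and reducing the general case to it. The outer sum over $j$ already involves independent families for distinct $j$, so the essential task is to decouple the product $\prod_l X^{(j)}_{i_l}$ into $\prod_l Y^{(j)}_{i_l, l}$ within each $j$; I suppress the index $j$ below.

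For the centered case $\ep{X_0} = 0$, I would introduce i.i.d.\ uniform selectors $\pi_0, \ldots, \pi_{n-1} \in [k]$ independent of the $X_i$'s, and consider the partition-restricted sum
\begin{equation*}
S_\pi = \sum_{\substack{i_0, \ldots, i_{k-1} \\ \pi_{i_l} = l \; \forall l}} f_{i_0, \ldots, i_{k-1}} \prod_l X_{i_l} \, .
\end{equation*}
Because $f$ vanishes on tuples with coinciding indices, each surviving tuple lies in $\{\pi_{i_l} = l \; \forall l\}$ with probability exactly $k^{-k}$, which gives the identity $k^k \ep[\pi]{S_\pi} = S$. Conditionally on $\pi$, the factors $X_{i_l}$ are drawn from the disjoint index groups $G_l := \{i : \pi_i = l\}$, hence they are independent across $l$; since within each group $(X_i)_{i \in G_l}$ is i.i.d., substituting the decoupled copies $Y_{i_l, l}$ for $X_{i_l}$ preserves the conditional distribution of $S_\pi$. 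Applying Jensen's inequality then gives $\pnorm{S}{p} = k^k \pnorm{\ep[\pi]{S_\pi}}{p} \le k^k \pnorm{S_\pi}{p}$, and $S_\pi$ has the same distribution as its partition-restricted decoupled counterpart. A one-coordinate-at-a-time iteration of this scheme, together with a tangent-sequence comparison of the partition-restricted decoupled chaos to the full decoupled $D$, upgrades this to $\pnorm{S}{p} \le k^k \pnorm{D}{p}$, establishing $L_k \le k^k$.

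For the general case, I would write $X_i = (X_i - \mu) + \mu$ and expand each of the $k$ factors, producing $2^k$ sums. Each sum is a lower-order centered chaos in which some factors are replaced by the constant $\mu$ that can be absorbed into a modified coefficient array. Applying the centered bound to every piece and summing via the triangle inequality yields the claimed constant $L_k \le (2k+1)^k$; the additional factor $2^k$ over $k^k$ comes from bookkeeping the combinatorial expansion.

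The main obstacle, and the most technical point, is the passage from the partition-restricted decoupled chaos to the full chaos $D$: the naive Jensen bound $\ep[\pi]{\widetilde{S}_\pi} = k^{-k} D$ points in the wrong direction, so one must decouple a single coordinate at a time, at each step applying a tangent-sequence argument closely analogous to \Cref{lem:martingale-decoupling}. Verifying that the conditional structures align across the $k$ iterations — so that each step produces only an $O(1)$ multiplicative loss and the accumulated constant stays at $k^k$ — is the technical heart of the argument, and is where I would expect to invest most of the work.
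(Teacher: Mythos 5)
The paper does not prove this lemma; it imports it as a black box from de la Peña et al.~\cite{delapena1994}, so there is no in-paper proof to compare against. I will therefore assess your outline on its own merits and against the standard de la Peña argument.

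Your steps through the random-partition identity $S = k^k \ep[\pi]{S_\pi}$, the observation that conditionally on $\pi$ the restricted sum $S_\pi$ is distributionally identical to its decoupled counterpart $\widetilde S_\pi$, and the Jensen step $\pnorm{S}{p} \le k^k \pnorm{\widetilde S_\pi}{p}$ are all correct and are indeed the heart of the de la Peña argument. You also correctly observe that $\ep[\pi]{\widetilde S_\pi} = k^{-k} D$ gives the wrong inequality when comparing $\widetilde S_\pi$ to the full decoupled chaos $D$. But this is precisely where your proposal goes off the rails: you try to patch the gap with a ``one coordinate at a time'' tangent-sequence iteration, which is not how the argument works and would also wreck the constant. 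You have already paid $k^k$ in the first Jensen step; $k$ further tangent-sequence comparisons, each costing some universal $C>1$, would inflate the bound to $C^k k^k$, which overshoots the claimed $L_k \le k^k$. The missing observation is that in the centered case no iteration is needed at all. Condition $D$ on the $\sigma$-algebra $\mathcal G = \sigma\bigl(\pi, (Y_{i,\pi_i})_{i\in[n]}\bigr)$, i.e.\ on the partition and on the one copy of each variable that the partition ``keeps.'' In any term $\prod_{l}Y_{i_l,l}$ of $D$, every factor with $\pi_{i_l}\ne l$ is independent of $\mathcal G$ and has mean zero, so $\epcond{\prod_l Y_{i_l,l}}{\mathcal G}$ vanishes unless $\pi_{i_l}=l$ for all $l$, in which case it equals the term unchanged. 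Hence $\epcond{D}{\mathcal G} = \widetilde S_\pi$ and conditional Jensen gives $\pnorm{\widetilde S_\pi}{p} \le \pnorm{D}{p}$ at no cost, completing the centered bound with $L_k = k^k$ exactly. This is the correct use of the zero-mean hypothesis, and it is the piece your outline misses.

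Your reduction of the general case by writing $X_i = (X_i - \mu) + \mu$ and expanding into $2^k$ lower-order centered chaoses is the right idea; one then has to be slightly careful that when $r$ factors are replaced by the constant $\mu$ the resulting $(k-r)$-decoupled chaos can be embedded into the $k$-decoupled $D$ (integrating out the unused copies), but this is routine. Summing the resulting bounds over $r$ and bounding the combinatorics gives a constant of the announced order, so this part of your plan is fine.
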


If we specialize it to simple tabulation hashing we get the following corollary.

\begin{corollary}\label{cor:simple-decoupling}
    Let $(F_{\alpha_0, \ldots, \alpha_{k - 1}})_{\alpha_0, \ldots, \alpha_{k - 1} \in \Sigma}$ be a multiindexed array of real functions $F_{\alpha_0, \ldots, \alpha_{k - 1}} \colon [m] \to \R$, and assume that $F_{\alpha_0, \ldots, \alpha_{k - 1}} = 0$ if $x_l = x_{l'}$ for some $l \neq l'$.
    Let $h \colon \Sigma \to [m]$ be a fully random function and let $\eps \colon \Sigma \to \set{-1, 1}$ be a fully random sign function.
    Let $h' \colon \Sigma^{k} \to [m]$ be a simple tabulation hash function and let $\eps' \colon \Sigma^k \to \set{-1, 1}$ be a simple tabulation sign function.
    Then for every $p \ge 2$,
    \begin{align*}
        \pnorm{\sum_{\alpha_0, \ldots, \alpha_{k - 1} \in \Sigma} F_{\alpha_0, \ldots, \alpha_{k - 1}}(h(\alpha_0) \xor \ldots \xor h(\alpha_{k - 1})) \prod_{l \in [k]} \eps(\alpha_l) }{p}
            \le k^k \pnorm{\sum_{x \in \Sigma^k} F_{x}(h'(x)) \eps'(x) }{p}
        \; .
    \end{align*}
\end{corollary}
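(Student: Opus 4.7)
The plan is to derive the corollary as a direct consequence of \Cref{lem:general-decoupling}: simple tabulation hashing with $k$ character tables is exactly the ``decoupled'' version of a single fully random function applied in parallel to all $k$ positions. First I would unfold the XOR via indicators,
\[
    F(h(\alpha_0) \xor \cdots \xor h(\alpha_{k-1}))
        = \sum_{j_0, \ldots, j_{k-1} \in [m]} F(j_0 \xor \cdots \xor j_{k-1})
          \prod_{l \in [k]} \indicator{h(\alpha_l) = j_l},
\]
and set $X^{(j)}_\alpha = \eps(\alpha) \indicator{h(\alpha) = j}$. The LHS of the corollary then becomes
\[
    \sum_{\vec\alpha \in \Sigma^k} \sum_{\vec j \in [m]^k}
      F_{\vec\alpha}(j_0 \xor \cdots \xor j_{k-1})
      \prod_{l \in [k]} X^{(j_l)}_{\alpha_l}.
\]
For each fixed $j$, the family $(X^{(j)}_\alpha)_{\alpha \in \Sigma}$ is iid and every $X^{(j)}_\alpha$ has mean zero, since $\eps(\alpha)$ is centered and independent of $h(\alpha)$; the diagonal-vanishing hypothesis is secured by our assumption on $F_{\vec\alpha}$.

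Next I would invoke \Cref{lem:general-decoupling} with constant $L_k \le k^k$ (the mean-zero case) to replace each product $\prod_l X^{(j_l)}_{\alpha_l}$ by $\prod_l Y^{(j_l)}_{\alpha_l, l}$, where $Y^{(j)}_{\alpha, l} = \eps_l(\alpha) \indicator{h_l(\alpha) = j}$ is built from an independent copy $(h_l, \eps_l)$ of $(h, \eps)$ for each position $l \in [k]$. Folding the indicator sum back gives
\[
    \sum_{\vec\alpha} F_{\vec\alpha}\bigl(h_0(\alpha_0) \xor \cdots \xor h_{k-1}(\alpha_{k-1})\bigr) \prod_{l \in [k]} \eps_l(\alpha_l),
\]
and identifying $T(l, \alpha) := h_l(\alpha)$ and $T_{\eps'}(l, \alpha) := \eps_l(\alpha)$ realises $h'$ as a simple tabulation hash function and $\eps'$ as the matching simple tabulation sign function, so this expression is exactly $\sum_{x \in \Sigma^k} F_x(h'(x)) \eps'(x)$.

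The main obstacle is that our decoupled sum is naturally indexed by $\vec j \in [m]^k$, whereas \Cref{lem:general-decoupling} is stated with a single $j \in [m]$. I expect to resolve this either by appealing to the natural vector-valued extension of the de la Pena et al.~result --- viewing $\widetilde X_\alpha = (X^{(j)}_\alpha)_{j \in [m]} \in \R^m$ and decoupling the resulting multilinear form --- or by iterating the lemma one position at a time, conditioning on the randomness already decoupled. Either approach preserves the stated $k^k$ constant and yields the desired inequality.
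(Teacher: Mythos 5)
Your proof takes the same route as the paper's (expand the XOR via indicator sums, invoke Lemma~\ref{lem:general-decoupling} with the mean-zero constant $L_k \le k^k$, then fold the indicators back into a simple tabulation hash), so at that level you are on track. The one thing you flag --- the mismatch between the single outer index $j\in[m]$ in Lemma~\ref{lem:general-decoupling} (where all $k$ product factors share the same superscript $j$) and the sum over $\vec j\in[m]^k$ in the corollary (where each factor picks its own $j_l$) --- is a genuine mismatch in the lemma's literal statement, and it is worth noting that the paper's proof applies the lemma directly without addressing it. Your first proposed fix is exactly right: view each $\widetilde X_\alpha = (\eps(\alpha)\indicator{h(\alpha)=j})_{j\in[m]}\in\R^m$ as an i.i.d.\ vector across $\alpha$, so that the inner sum over $\vec j$ is simply the multilinear form $\sum_{\vec j} F_{\vec\alpha}(\oplus_l j_l)\prod_l (\widetilde X_{\alpha_l})_{j_l}$ evaluated on $(\widetilde X_{\alpha_0},\ldots,\widetilde X_{\alpha_{k-1}})$, and then the vector-valued de la Pe\~na decoupling (which is what the cited reference actually proves) yields the $k^k$ constant directly. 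Your second proposed fix (iterating one position at a time) would require some care to verify that the constants compound to $k^k$ rather than something worse, so if you commit this to paper you should stick with the vector-valued route. Overall your proof is correct and slightly more rigorous than the paper's own, precisely because you isolate and resolve the indexing point that the paper silently skips.
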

\begin{proof}
    We start by noticing that we can write the expression as follows,
    \begin{align*}
        &\pnorm{\sum_{\alpha_0, \ldots, \alpha_{k - 1} \in \Sigma} F_{\alpha_0, \ldots, \alpha_{k - 1}}(h(\alpha_0) \xor \ldots \xor h(\alpha_{k - 1})) \prod_{l \in [k]} \eps(\alpha_l) }{p}
            \\&\qquad\qquad\qquad= \pnorm{\sum_{\alpha_0, \ldots, \alpha_{k - 1} \in \Sigma} \sum_{j_0, \ldots, j_{k - 1}} F_{\alpha_0, \ldots, \alpha_{k - 1}}(j_0 \xor \ldots \xor j_{k - 1}) \prod_{l \in [k]} \eps(\alpha_l) \indicator{h(\alpha_l) = j_l} }{p}
        \; .
    \end{align*}
    We can then use \Cref{lem:general-decoupling} to get that
    \begin{align*}
        &\pnorm{\sum_{\alpha_0, \ldots, \alpha_{k - 1} \in \Sigma} \sum_{j_0, \ldots, j_{k - 1}} F_{\alpha_0, \ldots, \alpha_{k - 1}}(j_0 \xor \ldots \xor j_{k - 1}) \prod_{l \in [k]} \eps(\alpha_l) \indicator{h(\alpha_l) = j_l} }{p}
            \\&\qquad\qquad\qquad\le k^k \pnorm{\sum_{\alpha_0, \ldots, \alpha_{k - 1} \in \Sigma} \sum_{j_0, \ldots, j_{k - 1}} F_{\alpha_0, \ldots, \alpha_{k - 1}}(j_0 \xor \ldots \xor j_{k - 1}) \prod_{l \in [k]} \eps'(\set{l, \alpha_l}) \indicator{T(l, \alpha_l) = j_l} }{p}
        \; .
    \end{align*}
    We can then finish the proof by reversing the rewriting,
    \begin{align*}
        &\pnorm{\sum_{\alpha_0, \ldots, \alpha_{k - 1} \in \Sigma} \sum_{j_0, \ldots, j_{k - 1}} F_{\alpha_0, \ldots, \alpha_{k - 1}}(j_0 \xor \ldots \xor j_{k - 1}) \prod_{l \in [k]} \eps'(\set{l, \alpha_l}) \indicator{T(l, \alpha_l) = j_l} }{p}
            \\&\qquad\qquad\qquad= \pnorm{\sum_{\alpha_0, \ldots, \alpha_{k - 1} \in \Sigma} F_{\alpha_0, \ldots, \alpha_{k - 1}}(T(0, \alpha_0) \xor \ldots \xor T(k - 1, \alpha_{k  - 1})) \eps'(\alpha_0, \ldots, \alpha_{k - 1}) }{p}
            \\&\qquad\qquad\qquad= \pnorm{\sum_{x \in \Sigma^k} F_{x}(h'(x)) \eps'(x) }{p}
        \; .
    \end{align*}
\end{proof}

We can now prove our result for chaoses of simple tabulation hashing.

\begin{lemma}\label{lem:simple-chaos}
    Let $h \colon \Sigma^c \to [m]$ be a simple tabulation function, $\eps \colon \Sigma^c \to \set{-1, 1}$ be a simple tabulation sign function, and $v_i \colon \Sigma^c \times [m] \to \R$ be value function for $i \in [k]$.
    For every real number $p \ge 2$,
    \begin{align*}\begin{split}
        &\pnorm{\sum_{x_0, \ldots, x_{k - 1} \in \Sigma^{c}} \sum_{\substack{j_0, \ldots, j_{k - 1} \in [m] \\ \bigxor_{i \in [k]} j_i = 0}} \prod_{i \in [k]} \eps(x_i) v_i(x_i, j_i \xor h(x_i)) }{p}
            \\&\qquad\qquad\qquad\le \left( \frac{L c k^3 \max\!\set{p, \log(m)}}{\log\!\left( \prod_{i \in [k]} \left( \frac{e^2 m \sum_{x \in \Sigma^c} \norm{v_i[x]}{\infty}^2}{\sum_{x \in \Sigma^c} \norm{v_i[x]}{2}^2} \right)^{1/k} \right)} \right)^{ck/2} \prod_{i \in [k]} \norm{v_i}{2} \left( \frac{\norm{v_i}{1}}{\norm{v_i}{2}} \right)^{1 - 2/k} 
        \; ,
    \end{split}\end{align*}
    where $L$ is a universal constant.
\end{lemma}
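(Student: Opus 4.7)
The plan is to decouple the $k$ appearances of the simple tabulation hash $h$, and then iteratively peel off the independent copies using \Cref{lem:simple-tab-hoeffding}.

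First, I would apply the decoupling tool \Cref{cor:simple-decoupling} one character position at a time (once for each $l \in [c]$) to replace the shared pair $(h,\eps)$ by $k$ independent simple tabulation pairs $(h_i,\eps_i)_{i\in[k]}$, each mapping $\Sigma^c\to[m]$. Diagonal terms (where two keys share a position character) would be separated out and treated by reduction to a lower-dimensional chaos. After decoupling, the quantity to control is
\[
\sum_{\substack{j_0,\ldots,j_{k-1}\in[m]\\ \bigxor_{i\in[k]} j_i=0}}\ \prod_{i\in[k]} A_i(j_i), \qquad A_i(j) = \sum_{x\in\Sigma^c}\eps_i(x)\,v_i(x,\,j\xor h_i(x)),
\]
and the factors $A_i$ are now mutually independent, each being a hash-based sum for a single simple tabulation.

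Next, I would peel off one copy at a time. Conditioning on $(h_i,\eps_i)_{i\ge 1}$, I rewrite the sum as $\sum_{x_0}\eps_0(x_0)\,w(x_0, h_0(x_0))$ where $w(x,j') = \sum_{j_0} v_0(x, j_0\xor j')\,G(j_0)$ and $G(j_0) = \sum_{j_1,\ldots,j_{k-1}:\ \bigxor_{i\ge1}j_i=j_0}\prod_{i\ge 1}A_i(j_i)$. Because $w(x,\cdot)$ is the group convolution of $v_0[x]$ with $G$ over $(\Z/2\Z)^l$, Young's inequality supplies the two key bounds $\norm{w[x]}{\infty}\le \norm{v_0[x]}{2}\norm{G}{2}$ and $\norm{w[x]}{2}\le \norm{v_0[x]}{1}\norm{G}{2}$. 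Feeding these into \Cref{lem:simple-tab-hoeffding} gives a conditional $p$-norm bound of order
\[
\left(\frac{K_c\,p\,\max\{p,\log m\}^{c-1}}{\log\bigl(e^2 m \sum_x\norm{v_0[x]}{\infty}^2/\sum_x\norm{v_0[x]}{2}^2\bigr)^c}\right)^{1/2} \norm{v_0}{2}\,\norm{G}{2}.
\]
Expanding $\norm{G}{2}^2 = \sum_{\delta_1,\ldots,\delta_{k-1}:\,\bigxor \delta_i=0}\prod_{i\ge 1}\sum_{j_i} A_i(j_i)A_i(j_i\xor\delta_i)$ exhibits the same chaos structure but with $k-1$ functions, each a pair autocorrelation. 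I would then recurse on this reduced chaos, alternately passing between $p/2$-th and $p$-th moments and invoking the same peeling.

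The main obstacle is the bookkeeping. At each of the $k$ peelings the Young routing has to be chosen symmetrically across the copies so that the accumulated exponent on each ratio $\norm{v_i}{1}/\norm{v_i}{2}$ comes out to exactly $1-2/k$, and so that the $k$ separate log denominators combine into the single geometric-mean log $\log\bigl(\prod_i(e^2 m\sum_x\norm{v_i[x]}{\infty}^2/\sum_x\norm{v_i[x]}{2}^2)^{1/k}\bigr)$ appearing in the statement. I expect \Cref{lem:fn-moment} to compose the moments through the conditioning and \Cref{lem:log-vs-poly} to tidy up the resulting log-versus-polynomial ratios and deliver the final constant $L c k^3$ inside the outer exponent $ck/2$.
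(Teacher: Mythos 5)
The main gap is in your recursion. After the first peeling you need $\pnorm{\norm{G}{2}}{p}=\pnorm{\norm{G}{2}^2}{p/2}^{1/2}$, and you expand $\norm{G}{2}^2=\sum_{\bigxor\delta_i=0}\prod_{i\ge1}B_i(\delta_i)$ with $B_i(\delta)=\sum_j A_i(j)A_i(j\xor\delta)$. But the $B_i$ are \emph{not} of the form required by \Cref{lem:simple-tab-hoeffding}: writing $B_i(\delta)=\sum_{x,y\in\Sigma^c}\eps_i(x)\eps_i(y)\,\tilde v_i((x,y),\delta\xor h_i(x)\xor h_i(y))$ shows that $(x,y)$ plays the role of a key in $\Sigma^{2c}$, but the two ``halves'' of the key share the same tables $T_i(l,\cdot)$ and the same sign tables, so $\tilde h_i(x,y)=h_i(x)\xor h_i(y)$ is not a simple tabulation function on $\Sigma^{2c}$. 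The chaos you recurse on is therefore not of the same type you started with; you would have to decouple all over again at each level, the key space doubling every time, and you have not shown that this iterated decoupling/peeling closes to the claimed bound. The concern you raise about routing Young's exponents ``symmetrically'' is real — the remaining factors after a few peelings are no longer symmetric in the original $v_i$'s — and the geometric-mean log denominator in the target does not obviously emerge from averaging $k$ separate applications of \Cref{lem:simple-tab-hoeffding}, each of which produces its own (different) log.

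The paper avoids all of this by never peeling. It first splits the sum according to the pattern of coincident position characters (your ``diagonal'' terms), decouples once per pattern via \Cref{cor:simple-decoupling}, and then observes that the result is a \emph{single} hash-based sum for a simple tabulation function on the product alphabet with $ck-\sum_i t_i$ characters. It then applies \Cref{lem:simple-tab-hoeffding} exactly once, and controls $\sum_{x}\norm{v'[x]}{\infty}^2$ and $\sum_x\norm{v'[x]}{2}^2$ for the joint value function $v'$ by the repeated Cauchy–Schwarz computation from the proof of \Cref{lem:simple-tab-gaussian} (this is where the exponent $1-2/k$ and the geometric-mean log appear, in one shot). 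Your idea of using Young's inequality on the group convolution is essentially the same observation the paper uses via Cauchy–Schwarz, but applying it globally rather than one factor at a time is what makes the proof go through without an unresolved recursion. If you want to salvage your route, you would need to formulate and prove an intermediate lemma that bounds $\pnorm{\norm{G}{2}^2}{p/2}$ directly for the $2c$-character shared-table chaos, which is a nontrivial extra step.
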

\begin{proof}
    For every $j \in [c]$ we define $\pi_j \colon \Sigma^c \to \set{i} \times \Sigma$ to be the projection onto the $i$'th position character, i.e., for a key $x = \set{(0, \alpha_0), \ldots, (c - 1, \alpha_{c - 1})}$ we have that $\pi_i(x) = (i, \alpha_{i})$.

    We define $\bar{p} = \max\!\set{p, \log(m)}$ to ease notation.

    We make the observation that $\sum_{\substack{j_0, \ldots, j_{k - 1} \in [m] \\ \bigxor_{i \in [k]} j_i = 0}} \prod_{i \in [k]} \eps(x_i) v_i(x_i, j_i \xor h(x_i))$ depends only on $\bigxor_{i \in [k]} h(x_i)$.
    More precisely, we note that if we define $v' \colon \Sigma^{ck} \times [m] \to \R$ by,
    \begin{align*}
        v'((x_0, \ldots, x_{k - 1}), j) = \sum_{\substack{j_0, \ldots, j_{k - 1} \in [m] \\ \bigxor_{i \in [k]} j_i = j}} \prod_{i \in [k]} v_i(x_i, j_i) 
    \end{align*}
    then we have that
    \begin{align*}
        \sum_{\substack{j_0, \ldots, j_{k - 1} \in [m] \\ \bigxor_{i \in [k]} j_i = 0}} \prod_{i \in [k]} \eps(x_i) v_i(x_i, j_i \xor h(x_i))
            = v'((x_0, \ldots, x_{k - 1}), \bigxor_{i \in [k]} h(x_i))
    \end{align*} 
    This implies that we can the expression into sub-expressions depending on the number of distinct characters at each position.

    Let $t_0, \ldots, t_{c - 1}$ be even integers less than $k$.
    Now fix pairs $((s^{(i)}_l, r^{(i)}_l))_{l \in [t_i/2]}$ for $i \in [k]$ and define the set $X$ by,
    \begin{align*}        
        X = \Big\{(x_0, \ldots, x_{k - 1}) \in \left(\Sigma^{c}\right)^k \; \Big| \; \forall i \in [c] \colon \Big(\forall &l \in [t_i] \colon \pi_{i}(x_{s^{(i)}_l}) = \pi_{i}(x_{r^{(i)}_l}) 
            \\&\wedge (\pi_{i}(x_j))_{j \in [k] \setminus \bigcup_{l \in [t_i]} \set{s^{(i)}_l, r^{(i)}_l}} \text{ are all distinct} \Big) \Big\}
        \; .
    \end{align*}
    We define $T^{(i)} \colon [c] \times \Sigma \to [m]$ to be independent copies of $T$ for $i \in [k]$, and similarly, define $\eps^{(i)} \colon \Sigma \to \set{-1, 1}$ to be independent copies of $\eps$ for $i \in [k]$.
    We define the set, 
    \[
        R_i = \setbuilder{v \in [c]}{i \not\in \bigcup_{l \in [t_i/2]} \set{s^{(v)}_l, r^{(v)}_l}}
        \; ,
    \]
    for $i \in [k]$.
    We now use \Cref{cor:simple-decoupling} to get that
    \begin{align*}
        &\pnorm{\sum_{x_0, \ldots, x_{k - 1} \in X} v'((x_0, \ldots, x_{k - 1}), \bigxor_{i \in [k]} h(x_i)) \prod_{i \in [k]} \eps(x_i)}{p}
            \\&\qquad\le \prod_{i \in [c]} (k - t_i)^{k - t_i} \pnorm{\sum_{x_0, \ldots, x_{k - 1} \in X} v'((x_0, \ldots, x_{k - 1}), \bigxor_{i \in [k]} \bigxor_{l \in R_i} T^{(i)}(l, \pi_l(x_i))) \prod_{i \in [k]} \prod_{l \in R_i} \eps^{(i)}(l, \pi_l(x_i)) }{p}
    \end{align*}
    This corresponds to a simple tabulation function with $ck - \sum_{i \in [k]} t_i$ characters.
    We can then use \Cref{lem:simple-tab-hoeffding} to get that
    \begin{align*}
        &\pnorm{\sum_{x_0, \ldots, x_{k - 1} \in X} v'((x_0, \ldots, x_{k - 1}), \bigxor_{i \in [k]} \bigxor_{l \in R_i} T^{(i)}(l, \pi_l(x_i))) \prod_{i \in [k]} \prod_{l \in R_i} \eps^{(i)}(l, \pi_l(x_i)) }{p}
            \\&\qquad\qquad\le \sqrt{\left(L (ck - \sum_{i \in [k]} t_i) \bar{p}\right)^{ck - \sum_{i \in [k]} t_i} \frac{\sum_{x \in X} \norm{v'[x]}{\infty}^2}{\log\!\left( \frac{e^2 m \sum_{x \in X} \norm{v'[x]}{\infty}^2}{\sum_{x \in X} \norm{v'[x]}{2}^2} \right)^{ck - \sum_{i \in [k]} t_i}}}
    \end{align*}
    Now repeated use of Cauchy-Schwartz as in the proof of \Cref{lem:simple-tab-gaussian} implies that
    \begin{align*}
        \sum_{x \in X} \norm{v'[x]}{\infty}^2
            &\le \prod_{i \in [k]} \norm{v_{i}}{2}^2 \left( \frac{\sum_{x \in \Sigma^c} \norm{v_i[x]}{1}^2}{\sum_{x \in \Sigma^c} \norm{v_i[x]}{2}^2} \right)^{1 - 2/k} \; , \\
        \sum_{x \in X} \norm{v'[x]}{2}^2
            &\le \prod_{i \in [k]} \norm{v_{i}}{2}^2 \left( \frac{\sum_{x \in \Sigma^c} \norm{v_i[x]}{1}^2}{\sum_{x \in \Sigma^c} \norm{v_i[x]}{2}^2} \right)^{1 - 1/k} \; .
    \end{align*}
    We then get that
    \begin{align*}
        &\pnorm{\sum_{x_0, \ldots, x_{k - 1} \in X} v'((x_0, \ldots, x_{k - 1}), \bigxor_{i \in [k]} \bigxor_{l \in R_i} T^{(i)}(l, \pi_l(x_i))) \prod_{i \in [k]} \prod_{l \in R_i} \eps^{(i)}(l, \pi_l(x_i)) }{p}
            \\&\qquad\qquad\le \left( \frac{L (ck - \sum_{i \in [k]} t_i) \bar{p}}{\log\!\left( \prod_{i \in [k]} \left( \frac{e^2 m \sum_{x \in \Sigma^c} \norm{v_i[x]}{2}^2}{\sum_{x \in \Sigma^c} \norm{v_i[x]}{1}^2} \right)^{1/k} \right)}\right)^{(ck - \sum_{i \in [k]} t_i)/2}
             \prod_{i \in [k]} \norm{v_{i}}{2} \left( \frac{\sum_{x \in \Sigma^c} \norm{v_i[x]}{1}^2}{\sum_{x \in \Sigma^c} \norm{v_i[x]}{2}^2} \right)^{1/2 - 1/k}
    \end{align*}
    Now we note that given $(t_i)_{i \in [k]}$ we can choose the pairs $((s^{(i)}_l, r^{(i)}_l))_{l \in [t_i/2]}$ for $i \in [k]$ in $\prod_{i \in [k]} \binom{k}{t_i} (t_i - 1)!!$ ways.
    Summing all the possible values for $(t_i)_{i \in [k]}$ we get that
    \begin{align*}
        &\pnorm{\sum_{x_0, \ldots, x_{k - 1} \in \Sigma^{c}} \sum_{\substack{j_0, \ldots, j_{k - 1} \in [m] \\ \bigxor_{i \in [k]} j_i = 0}} \prod_{i \in [k]} \eps(x_i) v_i(x_i, j_i \xor h(x_i)) }{p}
            \\&\qquad\le \sum_{t_0, \ldots, t_{k - 1}} \left( \prod_{i \in [k]} \binom{k}{t_i} (t_i - 1)!! (k - t_i)^{k - t_i} \right) 
            \\&\qquad\qquad\cdot\left( \frac{L (ck - \sum_{i \in [k]} t_i) \bar{p}}{\log\!\left( \prod_{i \in [k]} \left( \frac{e^2 m \sum_{x \in \Sigma^c} \norm{v_i[x]}{2}^2}{\sum_{x \in \Sigma^c} \norm{v_i[x]}{1}^2} \right)^{1/k} \right)}\right)^{(ck - \sum_{i \in [k]} t_i)/2}
            \prod_{i \in [k]} \norm{v_{i}}{2} \left( \frac{\sum_{x \in \Sigma^c} \norm{v_i[x]}{1}^2}{\sum_{x \in \Sigma^c} \norm{v_i[x]}{2}^2} \right)^{1/2 - 1/k}
            \\&\qquad\le \left( \frac{L_2 c k^3 \bar{p}}{\log\!\left( \prod_{i \in [k]} \left( \frac{e^2 m \sum_{x \in \Sigma^c} \norm{v_i[x]}{2}^2}{\sum_{x \in \Sigma^c} \norm{v_i[x]}{1}^2} \right)^{1/k} \right)}\right)^{ck/2}
            \prod_{i \in [k]} \norm{v_{i}}{2} \left( \frac{\sum_{x \in \Sigma^c} \norm{v_i[x]}{1}^2}{\sum_{x \in \Sigma^c} \norm{v_i[x]}{2}^2} \right)^{1/2 - 1/k}
    \end{align*}
    This finishes the proof.
\end{proof}

It now becomes easy to prove \Cref{lem:simple-sum-squares}.

\restateLemSimpleSumSquares
\begin{proof}
    This follows by \Cref{lem:simple-chaos} since,
    \begin{align*}
        \pnorm{\sum_{j \in [m]} \left( \sum_{x \in \Sigma^c} \eps(x) v(x, h(x)) \right)^2 }{p}
            = \pnorm{\sum_{x, y \in \Sigma^c} \sum_{j \in [m]}  \eps(x)\eps(y) v(x, h(x)) v(y, h(y)) }{p}
        \; .
    \end{align*}
\end{proof}

\subsubsection{Concentration Result for Simple Tabulation Hashing}

We are now ready to prove the main result of the section.
Note that by using \Cref{lem:simple-sym}, the result can be extended to the case without symmetrization, which proves \Cref{thm:simple-tab-moments}.
We warn the reader that the proof is long and technical.


\restateThmSimpleTabMoments
\begin{proof}
    We will prove the result by induction on $c$.
    For $c = 1$ it corresponds to using a fully random hash function, and the result follows by \Cref{thm:sampling-moments}.

    Now we assume that $c > 1$ and that the result is true for values less than $c$.
    We note that without loss of generality we can assume that $M_v = 1$.
    Let $w \colon \Sigma^c \to \R$ be a function defined by $w(x) = \norm{v[x]}{2}^2$ and for $X \subseteq \Sigma^c$ we overload the notation of $w$ to write $w(X) = \sum_{x \in X} w(x)$.
    Furthermore, we define $w_\infty(X) = \max_{x \in X} w(x)$.

    Now applying \Cref{lem:group-of-keys} we obtain an ordering of position characters $\set{\alpha_0, \ldots, \alpha_{r - 1}} = [c] \times \Sigma$ where $r = c\abs{\Sigma}$,
    satisfying that the groups $G_i = \setbuilder{x \in \Sigma^{c}}{\alpha_i \in x \wedge x \subseteq \set{\alpha_0, \ldots, \alpha_i}}$ has the property that $w(G_i) \le w(\Sigma^c)^{1 - 1/c} w_{\infty}(\Sigma^c)^{1/c}$ for every $i \in [r]$.
    
    We define the random variables 
    \begin{align*}
        X_i^{(j)} &= \sum_{x \in G_i} \eps(x \setminus \set{\alpha_i}) v(x, j \xor h(x \setminus \set{\alpha_i}) )
            \; , \\
        Y_i &= \eps(\alpha_i) X_i^{(h(\alpha_i))} 
        \; ,
    \end{align*}
    for all $i \in [r], j \in [m]$.
    With this notation we have that
    \begin{align*}
        V^{\text{simple}}_v = V = \sum_{i \in [r]} Y_i
        \; .
    \end{align*}
    We let $(\mathcal{F}_i)_{i \in [r]}$ be a filtration where $\mathcal{F}_i = \sigma((h(\alpha_k), \eps(\alpha_k))_{k \in [i + 1]})$  for $i \in [r]$.
    It is easy to see that $X_i^{(j)}$ is $\mathcal{F}_{i - 1}$-measurable, $Y_i^{(j)}$ is $\mathcal{F}_{i}$-measurable, and $\epcond{Y_i^{(j)}}{\mathcal{F}_{i - 1}} = 0$ for all $j \in [m], i \in [r]$.
    We thus have that $(Y_i, \mathcal{F}_i)_{i \in [r]}$ is a martingale difference sequence.
    We furthermore notice that
    \[
        \varcond{Y_i}{\mathcal{F}_{i}} = \frac{1}{m} \sum_{k \in [m]} \left(X^{(k)}_i\right)^2
        \; ,
    \]
    for all $j \in [m], i \in [r]$.

    Let $h' \colon \Sigma^c \to [m]$ be a simple tabulation hash function independent of $h$, and $\eps' \colon \Sigma^c \to [m]$ be a simple tabulation sign function independent of $\eps$.
    We define the random variables $Z_i^{(j)} = \eps'(\alpha_i) X^{(j \xor h'(\alpha_i))}$.
    We can now easily check that $(Z_i)_{i \in [r]}$ satisfies the properties needed for \Cref{lem:martingale-decoupling}:
    \begin{enumerate}
        \item $(Y_i \mid \mathcal{F}_{i - 1})$ and $(Z_i \mid \mathcal{F}_{i - 1})$ have the same distribution for every $i \in [r]$.
        \item The sequence $(Z_i)_{i \in [r]}$ is conditionally independent given $\mathcal{F}_{r - 1}$.
        \item $(Z_i \mid \mathcal{F}_{i - 1})$ and $(Z_i \mid \mathcal{F}_{r - 1})$ have the same distribution for every $i \in [r]$.
        \item $(Y_i \mid \mathcal{F}_{i - 1})$ and $(Y_i \mid \sigma(\mathcal{F}_{i - 1}, (Z_k)_{k \in [i + 1]})$ have the same distribution for every $i \in [r]$.
    \end{enumerate}
    Now \Cref{lem:martingale-decoupling} then implies that
    \begin{align}\label{eq:simple-concentration-decoupling}
        \pnorm{\sum_{i \in [r]} Y_i}{p}
            &\le M\pnorm{\sum_{i \in [r]} Z_i}{p}
        \; .
    \end{align}

    We now use that $(Z_i)_{i \in [r]}$ are conditionally independent given $h$, so fixing $h$ and using \Cref{thm:sampling-moments} we get that
    \begin{align}\label{eq:simple-concentration-sampling}
        \pnormcond{\sum_{i \in [r]} Z_i}{p}{h}
            \le 16e \Psi_p\left( \max_{i \in [r], j \in [m]} \abs{X^{(j)}_i}, \tfrac{\sum_{i \in [r], j \in [m]} \left( X^{(j)}_i \right)^2}{m} \right)
        \; .
    \end{align}
    Since $\Psi_p(\lambda M, \lambda^2 \sigma^2) = \lambda \Psi_p(M, \sigma^2)$, we then use \Cref{lem:fn-moment} to get that
    \begin{align}\begin{split}\label{eq:simple-concentration-inner}
        \pnorm{\Psi_p\left( \max_{i \in [r], j \in [m]} \abs{X^{(j)}_i}, \tfrac{\sum_{i \in [r], j \in [m]} \left( X^{(j)}_i \right)^2}{m} \right)}{p}
            &\le 2^{1/p} \Psi_p\left( \pnorm{\max_{i \in [r], j \in [m]} \abs{X^{(j)}_i}}{p}, \frac{1}{m}\pnorm{\sum_{i \in [r], j \in [m]} \left( X^{(j)}_i \right)^2}{p/2}^{1/2} \right)
            \\&\le \sqrt{2} \Psi_p\left( \pnorm{\max_{i \in [r], j \in [m]} \abs{X^{(j)}_i}}{p}, \frac{1}{m}\pnorm{\sum_{i \in [r], j \in [m]} \left( X^{(j)}_i \right)^2}{p/2} \right)
        \; .
    \end{split}\end{align}

    We set $\bar{p} = \max\!\set{p, \log(m) + \log\!\left(\frac{w(\Sigma^c)}{w_\infty(\Sigma^c)}\right)/c}$.
    With this notation we have that
    \[
        \gamma_p = \frac{\bar{p}}{\log\!\left(\min_{x \in \Sigma^c} \tfrac{e^2 m \sum_{j \in [m]} v(x, j)^2}{\left(\sum_{j \in [m]} \abs{v(x, j)} \right)^2}\right)}
        \; .
    \]

    We start by bounding $\pnorm{\max_{i \in [r], j \in [m]} \abs{X^{(j)}_i}}{p}$.
    By the induction hypothesis we get that
    \begin{align*}
        \pnorm{\max_{i \in [r], j \in [m]} \abs{X^{(j)}_i}}{p}
            &\le \pnorm{\max_{i \in [r], j \in [m]} \abs{X^{(j)}_i}}{\bar{p}}
            \\&\le \left(\sum_{i \in [r], j \in [m]} \pnorm{X^{(j)}_i}{\bar{p}}^{\bar{p}} \right)^{1/\bar{p}}
            \\&\le \left(m \sum_{i \in [r]} \max_{j \in [m]} \pnorm{X^{(j)}_i}{\bar{p}}^{\bar{p}} \right)^{1/\bar{p}}
            \\&\le \left(m \sum_{i \in [r]} L_1^{p} \Psi_{\bar{p}}\left(K_{c - 1} \gamma_p^{c - 2},
                K_{c - 1} \frac{\gamma_p^{c - 2} w(G_i)}{m} \right)^{\bar{p}} \right)^{1/\bar{p}}
            \\&\le L_1 \left(m \sum_{i \in [r]} \Psi_{\bar{p}}\left(K_{c - 1} \gamma_p^{c - 2},
            K_{c - 1} \frac{\gamma_p^{c - 2} w(G_i)}{m} \right)^{\bar{p}} \right)^{1/\bar{p}}
        \; .
    \end{align*}
    An easy observation is that $\Psi_{\bar{p}}(M, \sigma^2)^{\bar{p}}$ is a convex function in $\sigma^2$.
    So using that $\max_{i \in [r]} w(G_i) \le w(\Sigma^c)^{1 - 1/c} w_{\infty}(\Sigma^c)^{1/c}$ and that $\sum_{i \in [r]} w(G_i) = w(\Sigma^c)$ we get that
    \begin{align}\begin{split}\label{eq:simple-concentration-max}
        \pnorm{\max_{i \in [r], j \in [m]} \abs{X^{(j)}_i}}{p}
            &\le L_1 \left(m \left(\tfrac{w(\Sigma^c)}{w_{\infty}(\Sigma^c)}\right)^{1/c} \right)^{1/\bar{p}}
                \Psi_{\bar{p}}\left(K_{c - 1} \gamma_p^{c - 2},
                K_{c - 1} \frac{\gamma_p^{c - 2} w(\Sigma^c)^{1 - 1/c} w_{\infty}(\Sigma^c)^{1/c}}{m} \right)
            \\&\le L_1 e \Psi_{\bar{p}}\left(K_{c - 1} \gamma_p^{c - 2}, K_{c - 1} \frac{\gamma_p^{c - 2} w(\Sigma^c)^{1 - 1/c} w_{\infty}(\Sigma^c)^{1/c}}{m} \right)
        \; .
    \end{split}\end{align}
    The last inequality follows since $\bar{p} \ge \log(m) + \log\!\left(\tfrac{w(\Sigma^c)}{w_{\infty}(\Sigma^c)}\right)/c$.

    We will bound $\pnorm{\sum_{i \in [r], j \in [m]} \left( X^{(j)}_i \right)^2}{p/2}$ by using the triangle inequality and \Cref{lem:simple-sum-squares}.
    \begin{align}\begin{split}\label{eq:simple-concentration-squares}
        \pnorm{\sum_{i \in [r], j \in [m]} \left( X^{(j)}_i \right)^2}{p/2}
            &\le \sum_{i \in [r]} \pnorm{\sum_{j \in [m]} \left( X^{(j)}_i \right)^2}{p/2}
            \\&\le  K'_{c - 1} \max\!\set{1, \left( \frac{p/2}{\log\!\left(\tfrac{e^2 m \sum_{x \in \Sigma^c} \norm{v[x]}{2}^2}{\sum_{x \in \Sigma^c} \norm{v[x]}{1}^2}\right)} \right)^c } \sum_{i \in [r]} w(G_i)
            \\&\le  K'_{c - 1} \max\!\set{1, \left( \frac{p}{\log\!\left( \min_{x \in \Sigma^c} \tfrac{e^2 m \norm{v[x]}{2}^2}{\norm{v[x]}{1}^2}\right)} \right)^c } \sum_{i \in [r]} w(G_i)
            \\&\le K'_{c - 1} \gamma_p^{c - 1} w(\Sigma^c)
            \\&\le K_c \gamma_p^{c - 1} w(\Sigma^c)
        \; .
    \end{split}\end{align}
    Here $K'_{c - 1}$ is the constant depending on $c - 1$ which we get from \Cref{lem:simple-sum-squares}.
    
    Now combining \cref{eq:simple-concentration-inner}, \cref{eq:simple-concentration-max}, and \cref{eq:simple-concentration-squares} we get that
    \begin{align}\begin{split}\label{eq:simple-concentration-psi-bound}
        &\pnorm{\Psi_p\left( \max_{i \in [r], j \in [m]} \abs{X^{(j)}_i}, \tfrac{\sum_{i \in [r], j \in [m]} \left( X^{(j)}_i \right)^2}{m} \right)}{p}
            \\&\qquad\le \sqrt{2} \Psi_p\left(L_1 e \Psi_{\bar{p}}\left(K_{c - 1} \gamma_p^{c - 2}, K_{c - 1} \frac{\gamma_p^{c - 2} w(\Sigma^c)^{1 - 1/c} w_{\infty}(\Sigma^c)^{1/c}}{m} \right), K_c \gamma_p^{c - 1} \frac{w(\Sigma^c)}{m}\right)
            \\&\qquad= \sqrt{2} \Psi_p\left(L_1 e \Psi_{\bar{p}}\left(K_{c - 1} \gamma_p^{c - 2}, K_{c - 1} \frac{\gamma_p^{c - 2} w(\Sigma^c)^{1 - 1/c} w_{\infty}(\Sigma^c)^{1/c}}{m} \right), K_c \gamma_p^{c - 1} \sigma_v^2\right)
        \; .
    \end{split}\end{align}

    Now we will consider two cases depending on $w(\Sigma^c)$.

    \paragraph{Case 1. $w(\Sigma^c) \le \left(\bar{p} e^{-2} K_{c - 1} \gamma_p^{c - 2} \right)^{c/(c - 1)} m \left( \frac{m}{w_\infty(\Sigma^c)} \right)^{1/(2c - 1)}$.}
    In this case we will show that, 
    \begin{align}\label{eq:simple-concentration-case-one}      
        L_1 e \Psi_{\bar{p}}\left(K_{c - 1} \gamma_p^{c - 2}, K_{c - 1} \frac{\gamma_p^{c - 2} w(\Sigma^c)^{1 - 1/c} w_{\infty}(\Sigma^c)^{1/c}}{m} \right)
            \le K_c \gamma_p^{c - 1}
        \; .
    \end{align}
    We first notice that
    \begin{align}\begin{split}\label{eq:simple-concentration-case-one-calc}
        \frac{ K_{c - 1}\frac{\gamma_p^{c - 2} w(\Sigma^c)^{1 - 1/c} w_{\infty}(\Sigma^c)^{1/c}}{m}}{ K_{c - 1}^2 \gamma_p^{2c - 4}}
            &= \frac{w(\Sigma^c)^{1 - 1/c} w_{\infty}(\Sigma^c)^{1/c}}{m K_{c - 1} \gamma_p^{c - 2}}
            \\&\le \frac{\left(\bar{p} e^{-2} K_{c - 1} \gamma_p^{c - 2} \right) m^{(2c - 2)/(2c - 1)} w_{\infty}(\Sigma^c)^{-(c - 1)/c(2c - 1)} w_{\infty}(\Sigma^c)^{1/c}}{m K_{c - 1} \gamma_p^{c - 2}}
            \\&= e^{-2} \bar{p} \left(\frac{w_{\infty}(\Sigma^c)}{m} \right)^{1/(2c - 1)}
            \\&= e^{-2} \bar{p} \left(\max_{x \in \Sigma^c} \frac{\norm{v[x]}{2}^2}{m} \right)^{1/(2c - 1)}
            \\&\le e^{-2} \bar{p} \left(\max_{x \in \Sigma^c} \frac{\norm{v[x]}{1}^2}{m \norm{v[x]}{2}^2} \right)^{1/(2c - 1)}
            \\&\le e^{-2} \bar{p}
        \; .
    \end{split}\end{align}

    Now by \Cref{eq:psi-bernstein} we get that
    \begin{align*}
        L_1 e \Psi_{\bar{p}}\left(K_{c - 1} \gamma_p^{c - 2}, K_{c - 1} \frac{\gamma_p^{c - 2} w(\Sigma^c)^{1 - 1/c} w_{\infty}(\Sigma^c)^{1/c}}{m} \right)
            &\le L_1 e \frac{\bar{p}}{e \log \left( \frac{\bar{p} m K_{c - 1}^2 \gamma_p^{2c - 4}}{\gamma_p^{c - 2} w(\Sigma^c)^{1 - 1/c} w_{\infty}(\Sigma^c)^{1/c}} \right) } K_{c - 1} \gamma_p^{c - 2}
            \\&\le L_1\frac{\bar{p}}{\log \left( e^2 \left( \min_{x \in \Sigma^c} \frac{m \norm{v[x]}{2}^2}{\norm{v[x]}{1}^2} \right)^{1/(2c - 1)} \right) } K_{c - 1} \gamma_p^{c - 2}
            \\&\le 2L_1 c K_{c - 1} \gamma_p^{c - 1}
            \\&\le K_{c} \gamma_p^{c - 1}
        \; .
    \end{align*}

    Where we have used that $\frac{ K_{c - 1}\frac{\gamma_p^{c - 2} w(\Sigma^c)^{1 - 1/c} w_{\infty}(\Sigma^c)^{1/c}}{m}}{ K_{c - 1}^2 \gamma_p^{2c - 4}} \le e^{-2} \bar{p} \left(\max_{x \in \Sigma^c} \frac{\norm{v[x]}{1}^2}{m \norm{v[x]}{2}^2} \right)^{1/(2c - 1)}$ which follows from \cref{eq:simple-concentration-case-one-calc}, and that $2L_1 c K_{c - 1} \le K_c$.

     Now combining \cref{eq:simple-concentration-decoupling}, \cref{eq:simple-concentration-sampling}, \cref{eq:simple-concentration-psi-bound}, and \cref{eq:simple-concentration-case-one} we get the result. 

    \paragraph{Case 2. $w(\Sigma^c) > \left(\bar{p} e^{-2} K_{c - 1} \gamma_p^{c - 2} \right)^{c/(c - 1)} m \left( \frac{m}{w_{\infty}(\Sigma^c)} \right)^{1/(2c - 1)}$.}
    In this case we will argue that
    \begin{align}\begin{split}\label{eq:simple-concentration-case-two}
        &\Psi_p\left(L_1 e \Psi_{\bar{p}}\left(K_{c - 1} \gamma_p^{c - 2}, K_{c - 1} \frac{\gamma_p^{c - 2} w(\Sigma^c)^{1 - 1/c} w_{\infty}(\Sigma^c)^{1/c}}{m} \right), K_c \gamma_p^{c - 1} w(\Sigma^c)\right)
            \\&\qquad\qquad\le \tfrac{1}{2} \sqrt{p} \sqrt{K_c \gamma_p^{c - 1} w(\Sigma^c)}
        \; .
    \end{split}\end{align}
    
    We use \cref{eq:psi-bernstein} to get that
    \begin{align*}
        &L_1 e \Psi_{\bar{p}}\left( K_{c - 1} \gamma_p^{c - 2}, K_{c - 1} \frac{\gamma_p^{c - 2} w(\Sigma^c)^{1 - 1/c} w_{\infty}(\Sigma^c)^{1/c}}{m} \right)
            \\&\qquad\qquad\le \max\!\set{L_1 \tfrac{e}{2} \sqrt{\bar{p} K_{c - 1} \frac{\gamma_p^{c - 2} w(\Sigma^c)^{1 - 1/c} w_{\infty}(\Sigma^c)^{1/c}}{m}},
                L_1 \tfrac{1}{2} \bar{p} K_{c - 1} \gamma_p^{c - 2}}
        \; .
    \end{align*}
    We apply \cref{eq:psi-bernstein} again to obtain that
    \begin{align*}
        &\Psi_p\left(L_1 e \Psi_{\bar{p}}\left(K_{c - 1} \gamma_p^{c - 2}, K_{c - 1} \frac{\gamma_p^{c - 2} w(\Sigma^c)^{1 - 1/c} w_{\infty}(\Sigma^c)^{1/c}}{m} \right), K_c \gamma_p^{c - 1} w(\Sigma^c)\right)
            \\&\qquad\qquad\le \max\!\set{\tfrac{1}{2} \sqrt{p} \sqrt{K_c \gamma_p^{c - 1} \frac{w(\Sigma^c)}{m}}, \tfrac{1}{2e}p L_1 e \Psi_{\bar{p}}\left(K_{c - 1} \gamma_p^{c - 2}, K_{c - 1} \frac{\gamma_p^{c - 2} w(\Sigma^c)^{1 - 1/c} w_{\infty}(\Sigma^c)^{1/c}}{m} \right)}
        \; .
    \end{align*}
    Combining the two estimates give us that
    \begin{align*}
        &\Psi_p\left(L_1 e \Psi_{\bar{p}}\left(K_{c - 1} \gamma_p^{c - 2}, K_{c - 1} \frac{\gamma_p^{c - 2} w(\Sigma^c)^{1 - 1/c} w_{\infty}(\Sigma^c)^{1/c}}{m} \right), K_c \gamma_p^{c - 1} w(\Sigma^c)\right)
            \\&\qquad\qquad\le \max\!\set{\tfrac{1}{2} \sqrt{p} \sqrt{K_c \gamma_p^{c - 1} \frac{w(\Sigma^c)}{m}}, \tfrac{L_1}{4} p \sqrt{\bar{p} K_{c - 1} \frac{\gamma_p^{c - 2} w(\Sigma^c)^{1 - 1/c} w_{\infty}(\Sigma^c)^{1/c}}{m}},
            \tfrac{L_1}{4e} p \bar{p} K_{c - 1} \gamma_p^{c - 2}}
    \end{align*}
    We will show that the max is equal to $\tfrac{1}{2} \sqrt{p} \sqrt{K_c \gamma_p^{c - 1} \frac{w(\Sigma^c)}{m}}$ which will show \cref{eq:simple-concentration-case-two}.

    First we show that $\tfrac{1}{2} \sqrt{p} \sqrt{K_c \gamma_p^{c - 1} \frac{w(\Sigma^c)}{m}} \ge \tfrac{L_1}{4} p \sqrt{\bar{p} K_{c - 1} \frac{\gamma_p^{c - 2} w(\Sigma^c)^{1 - 1/c} w_{\infty}(\Sigma^c)^{1/c}}{m}}$.
    We note that this equivalent with showing that $\frac{\tfrac{1}{4} p K_c \gamma_p^{c - 1} \frac{w(\Sigma^c)}{m}}{\tfrac{L_1^2}{16} p^2 \bar{p} K_{c - 1} \frac{\gamma_p^{c - 2} w(\Sigma^c)^{1 - 1/c} w_{\infty}(\Sigma^c)^{1/c}}{m}} \ge 1$,
    \begin{align*}
        \frac{\tfrac{1}{4} p K_c \gamma_p^{c - 1} \frac{w(\Sigma^c)}{m}}{\tfrac{L_1^2}{16} p^2 \bar{p} K_{c - 1} \frac {\gamma_p^{c - 2} w(\Sigma^c)^{1 - 1/c} w_{\infty}(\Sigma^c)^{1/c}}{m}}
            &= \frac{4 K_c}{L_1^2 K_{c - 1}} \cdot \frac {\gamma_p w(\Sigma^c)^{1/c}}{p \bar{p} w_{\infty}(\Sigma^c)^{1/c}}
            \\&> \frac{4 K_c}{L_1^2 K_{c - 1}} \cdot \frac {\gamma_p \left(\bar{p} e^{-2} K_{c - 1} \gamma_p^{c - 2} \right)^{1/(c - 1)} m^{1/c} \left( \frac{m}{w_{\infty}(\Sigma^c)} \right)^{1/(c(2c - 1))}}{\bar{p}^2 w_{\infty}(\Sigma^c)^{1/c}}
            \\&= \frac{4K_c}{L_1^2 K_{c - 1}} \frac{\left(e^{-2} K_{c - 1} \right)^{1/(c - 1)} }{ \log\!\left( \min_{x \in \Sigma^c} \frac{e^2 m \norm{v[x]}{2}^2}{\norm{v[x]}{1}^2} \right)^{2 - 1/(c -1)}}
                \left( \frac{m}{w_{\infty}(\Sigma^c)} \right)^{2/(2c - 1)}
            \\&\ge \frac{4 K_c}{e^2 L_1^2 K_{c - 1}^{1 - 1/(c - 1)}} \frac{1 }{ \log\!\left( \min_{x \in \Sigma^c} \frac{e^2 m \norm{v[x]}{2}^2}{\norm{v[x]}{1}^2} \right)^{2 - 1/(c -1)}}
                \left( \min_{x \in \Sigma^c} \frac{e^2 m \norm{v[x]}{2}^2}{\norm{v[x]}{1}^2} \right)^{2/(2c - 1)}
    \end{align*}
    Now by using \Cref{lem:log-vs-poly} we get that
    \begin{align*}
        &\frac{4 K_c}{e^2 L_1^2 K_{c - 1}^{1 - 1/(c - 1)}} \frac{1 }{ \log\!\left( \min_{x \in \Sigma^c} \frac{e^2 m \norm{v[x]}{2}^2}{\norm{v[x]}{1}^2} \right)^{2 - 1/(c -1)}}
                \left( \min_{x \in \Sigma^c} \frac{e^2 m \norm{v[x]}{2}^2}{\norm{v[x]}{1}^2} \right)^{2/(2c - 1)}
            \\&\qquad\qquad\qquad\ge \frac{4 K_c}{e^2 L_1^2 K_{c - 1}^{1 - 1/(c - 1)}} \left( \frac{2e}{(2 - 1/(c - 1))(2c - 1)} \right)^{2 - 1/(c - 1)}
            \\&\qquad\qquad\qquad\ge \frac{4 K_c}{e^2 L_1^2 K_{c - 1}^{1 - 1/(c - 1)}} \left( \frac{e}{2 c} \right)^{2}
            \\&\qquad\qquad\qquad= \frac{K_c}{L_1^2 K_{c - 1}^{1 - 1/(c - 1)}} c^{-2}
        \; .
    \end{align*}
    Now $K_c = \left( L_2 c \right)^c$ and $K_{c - 1} \le \left( L_2 c \right)^{c - 1}$ so we get that
    \begin{align*}
        \frac{K_c}{L_1^2 K_{c - 1}^{1 - 1/(c - 1)}} c^{-2}
            \ge \frac{\left( L_2 c \right)^c}{L_1^2 \left( L_2 c \right)^{c - 2}} c^{-2}
            = \left( \frac{L_2}{L_1} \right)^2
            \ge 1
        \; .
    \end{align*}
    The last inequality follows by choosing $L_2 \ge L_1$.
    Combining it all we get that
    \begin{align*}
        \frac{\tfrac{1}{4} p K_c \gamma_p^{c - 1} \frac{w(\Sigma^c)}{m}}{\tfrac{L_1^2}{16} p^2 \bar{p} K_{c - 1} \frac {\gamma_p^{c - 2} w(\Sigma^c)^{1 - 1/c} w_{\infty}(\Sigma^c)^{1/c}}{m}}
            \ge 1
        \; .
    \end{align*}
    This implies that $\tfrac{1}{2} \sqrt{p} \sqrt{K_c \gamma_p^{c - 1} \frac{w(\Sigma^c)}{m}} \ge \tfrac{L_1}{4} p \sqrt{\bar{p} K_{c - 1} \frac {\gamma_p^{c - 2} w(\Sigma^c)^{1 - 1/c} w_{\infty}(\Sigma^c)^{1/c}}{m}}$
    as we wanted.

    Next we show that $\tfrac{1}{2} \sqrt{p} \sqrt{K_c \gamma_p^{c - 1} \frac{w(\Sigma^c)}{m}} \ge \tfrac{L_1}{4 e} p \bar{p}  K_{c - 1} \gamma_p^{c - 2}$.
    Again we note that this equivalent with showing that $\frac{\tfrac{1}{4} K_c p \gamma_p^{c - 1} \frac{w(\Sigma^c)}{m}}{\tfrac{L_1^2}{16 e^2} p^2 \bar{p}^2 K_{c - 1}^2 \gamma_p^{2c - 4}} \ge 1$,
    \begin{align*}
        \frac{\tfrac{1}{4} K_c p \gamma_p^{c - 1} \frac{w(\Sigma^c)}{m}}{\tfrac{L_1^2}{16 e^2} p^2 \bar{p}^2 K_{c - 1}^2 \gamma_p^{2c - 4}}
            &= \frac{4 e^2 K_c}{L_1^2 K_{c - 1}^2} \frac{  w(\Sigma^c)}{m  p \bar{p}^2 \gamma_p^{c - 3}}
            \\&> \frac{4 e^2 K_c}{L_1^2 K_{c - 1}^2} \frac{ \left(\bar{p} e^{-2} K_{c - 1} \gamma_p^{c - 2} \right)^{c/(c - 1)} m \left( \frac{m}{w_{\infty}(\Sigma^c)} \right)^{1/(2c - 1)} }{m  \bar{p}^3 \gamma_p^{c - 3}}
            \\&\ge \frac{4 K_c}{e L_1^2 K_{c - 1}^{1 - 1/(c - 1)}} \frac{ 1 }{\log\!\left( \min_{x \in \Sigma^c} \frac{e^2 m \norm{v[x]}{2}^2}{\norm{v[x]}{1}^2} \right)^{2 - 1/(c -1)}} \left( \min_{x \in \Sigma^c} \frac{e^2 m \norm{v[x]}{2}^2}{\norm{v[x]}{1}^2} \right)^{1/(2c - 1)}
            \\&\ge \frac{4 K_c}{e L_1^2 K_{c - 1}^{1 - 1/(c - 1)}} \frac{ 1 }{\log\!\left( \min_{x \in \Sigma^c} \frac{e^2 m \norm{v[x]}{2}^2}{\norm{v[x]}{1}^2} \right)^{2 - 1/(c -1)}} \left( \min_{x \in \Sigma^c} \frac{e^2 m \norm{v[x]}{2}^2}{\norm{v[x]}{1}^2} \right)^{1/(2c - 1)}
    \end{align*}
    Again we use \Cref{lem:log-vs-poly} to obtain that
    \begin{align*}
        &\frac{4 K_c}{e L_1^2 K_{c - 1}^{1 - 1/(c - 1)}} \frac{ 1 }{\log\!\left( \min_{x \in \Sigma^c} \frac{e^2 m \norm{v[x]}{2}^2}{\norm{v[x]}{1}^2} \right)^{2 - 1/(c -1)}} \left( \min_{x \in \Sigma^c} \frac{e^2 m \norm{v[x]}{2}^2}{\norm{v[x]}{1}^2} \right)^{1/(2c - 1)}
            \\&\qquad\qquad\ge \frac{4 K_c}{e L_1^2K_{c - 1}^{1 - 1/(c - 1)}} \left( \frac{e}{(2c - 1)(2 - 1/(c - 1))} \right)^{2 - 1/(c - 1)}
            \\&\qquad\qquad\ge \frac{4 K_c}{e L_1^2 K_{c - 1}^{1 - 1/(c - 1)}} \left( \frac{e}{4c} \right)^{2 - 1/(c - 1)}
            \\&\qquad\qquad\ge \frac{4 K_c}{L_1^2 K_{c - 1}^{1 - 1/(c - 1)}} \left( \frac{1}{4c} \right)^{2 - 1/(c - 1)}
            \\&\qquad\qquad\ge \frac{K_c}{4 L_1^2 K_{c - 1}^{1 - 1/(c - 1)}} c^{-2}
        \; .    
    \end{align*}
    Now $K_c = \left( L_2 c \right)^c$ and $K_{c - 1} \le \left( L_2 c \right)^{c - 1}$ so we get that
    \begin{align*}
        \frac{ K_c}{4 L_1^2 K_{c - 1}^{1 - 1/(c - 1)}} c^{-2}
            \ge \frac{\left( L_2 c \right)^c}{4 L_1^2 \left( L_2 c \right)^{c - 2}} c^{-2}
            = \left( \frac{L_2}{2 L_1} \right)^2
            \ge 1
        \; .
    \end{align*}
    The last inequality follows by choosing $L_2 \ge 2 L_1$.
    Combining it all we get that
    \begin{align*}
        \frac{\tfrac{1}{4} K_c p \gamma_p^{c - 1} \frac{w(\Sigma^c)}{m}}{\tfrac{L_1^2}{16 e^2} p^2 \bar{p}^2 K_{c - 1}^2 \gamma_p^{2c - 4}}
            \ge 1
        \; .
    \end{align*}
    This implies that $\tfrac{1}{2} \sqrt{p} \sqrt{K_c \gamma_p^{c - 1} \frac{w(\Sigma^c)}{m}} \ge \tfrac{L_1}{4 e} p \bar{p} K_{c - 1} \gamma_p^{c - 2}$.

    This proves \cref{eq:simple-concentration-case-two} and combining this with \cref{eq:psi-lower-bound} we get that
    \begin{align}\begin{split}\label{eq:simple-concentration-case-two-end}
        &\Psi_p\left(L_1 e \Psi_{\bar{p}}\left(K_{c - 1} \gamma_p^{c - 2}, K_{c - 1} \frac{\gamma_p^{c - 2} w(\Sigma^c)^{1 - 1/c} w_{\infty}(\Sigma^c)^{1/c}}{m} \right), K_c \gamma_p^{c - 1} w(\Sigma^c)\right)
            \\&\qquad\qquad\le \Psi_p\left(K_c \gamma_p^{c - 1},  K_c \gamma_p^{c - 1} \frac{w(\Sigma^c)}{m}\right)
            \\&\qquad\qquad= \Psi_p\left(K_c \gamma_p^{c - 1},  K_c \gamma_p^{c - 1} \sigma_v^2\right)
        \; .
    \end{split}\end{align}

    Now combining \cref{eq:simple-concentration-decoupling}, \cref{eq:simple-concentration-sampling}, \cref{eq:simple-concentration-psi-bound}, and \cref{eq:simple-concentration-case-two-end} we get the result. 
\end{proof}


\subsection{Concentration Results for Mixed Tabulation Hashing}

In this section, we will prove two different concentration results for mixed tabulation hashing.
The first is a version of Khintchine's inequality for mixed tabulation hashing, and the proof is the simpler of the two.
The other result is a strengthening of \Cref{thm:simple-tab-moments} by using the strength of mixed tabulation hashing.

We will first introduce some notation.
\begin{definition}
    Let $h \colon \Sigma^c \to [m]$ be a mixed tabulation function with $d$ derived characters, and let $h_1 \colon \Sigma^c \to [m]$, $h_2 \colon \Sigma^c \to \Sigma$, and $h_3 \colon \Sigma^d \to [m]$ be the three simple tabulation function defining $h$, i.e., $h(x) = h_1(x) \xor h_3(h_2(x))$.

    Let $\eps_1 \colon \Sigma^c \to \set{-1, 1}$ and $\eps_3 \colon \Sigma^d \to \set{-1, 1}$ be independent simple tabulation sign functions.
    We define $\eps \colon \Sigma^c \to \set{-1, 1}$ by
    \[
        \eps(x) = \eps_1(x) \eps_3(h_2(x))
        \; .
    \]
    We say that $\eps$ is a mixed tabulation sign function associated with $h$.
\end{definition}

\begin{theorem}\label{thm:mixed-Khintchine}
    Let $\eps \colon \Sigma^c \to \set{-1, 1}$ be a mixed tabulation sign function with $d \ge 1$ derived characters, and let $w \colon \Sigma^c \to \R$ be a weight function. For all $p \ge 2$ then,
    \begin{align}
        \pnorm{\sum_{x \in \Sigma^c} w(x)\eps(x)}{p}
            \le \sqrt{e} K_c \sqrt{p} \gamma_p^{c/2} \sqrt{\sum_{x \in \Sigma^c} w(x)^2}
    \end{align}
    Here $K_c$ is as defined in \Cref{lem:simple-sum-squares} and $\gamma_p = \max\!\set{1, \tfrac{p}{\log\!\left( \abs{\Sigma} \right)}}$.
\end{theorem}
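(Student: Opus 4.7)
The plan is to expose the randomness of $\eps_3$ last, apply Khintchine's inequality to reduce the problem to a sum of squares of coefficients grouped by $h_2$-image, and then bound the $p/2$-norm of that sum of squares via \Cref{lem:simple-sum-squares}. I describe the argument in detail for the critical case $d = 1$; the extension to $d \ge 2$ needs only minor modifications.

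Grouping keys by their $h_2$-image, I would rewrite
\[
    \sum_{x \in \Sigma^c} w(x)\,\eps(x) = \sum_{\alpha \in \Sigma} \eps_3(\alpha)\,u(\alpha), \qquad u(\alpha) := \sum_{x \colon h_2(x) = \alpha} w(x)\,\eps_1(x),
\]
so that the coefficients $u(\alpha)$ are $\sigma(\eps_1, h_2)$-measurable while the $(\eps_3(\alpha))_{\alpha \in \Sigma}$ are independent Rademachers, independent of $(\eps_1, h_2)$. Conditioning on $(\eps_1, h_2)$ and applying Khintchine's inequality (\Cref{cor:sum-of-Rademacher}) yields $\pnormcond{\sum_\alpha \eps_3(\alpha) u(\alpha)}{p}{\eps_1, h_2} \le \sqrt{ep}\,\bigl(\sum_\alpha u(\alpha)^2\bigr)^{1/2}$, and taking $p$-th moments then gives
\[
    \pnorm{\sum_x w(x)\,\eps(x)}{p} \le \sqrt{ep}\,\pnorm{\sum_\alpha u(\alpha)^2}{p/2}^{1/2}.
\]

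The sum of squares can be recast as an instance of \Cref{lem:simple-sum-squares}. Defining $v \colon \Sigma^c \times \Sigma \to \R$ by $v(x, j) := w(x)\indicator{j = 0}$, the identity $v(x, h_2(x) \oplus \alpha) = w(x)\indicator{h_2(x) = \alpha}$ gives $u(\alpha) = \sum_x \eps_1(x)\,v(x, h_2(x) \oplus \alpha)$. For this choice of $v$ one has $\norm{v[x]}{2}^2 = \norm{v[x]}{1}^2 = w(x)^2$, so the logarithm in the denominator of \Cref{lem:simple-sum-squares} collapses to $\log(e^2\abs{\Sigma})$, and a short case split shows $\max\{p/2,\log\abs{\Sigma}\}/\log(e^2\abs{\Sigma}) \le \gamma_p$ in every parameter regime. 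Invoking \Cref{lem:simple-sum-squares} with parameter $p/2$ therefore produces $\pnorm{\sum_\alpha u(\alpha)^2}{p/2} \le K_c\,\gamma_p^c\,\sum_x w(x)^2$, and plugging back yields $\pnorm{\sum_x w(x)\,\eps(x)}{p} \le \sqrt{eK_c}\,\sqrt{p}\,\gamma_p^{c/2}\,\sqrt{\sum_x w(x)^2}$, matching the claim after absorbing $\sqrt{K_c}$ into the $c$-dependent constant.

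For $d \ge 2$ the same scheme applies: one exposes a single Rademacher table of $\eps_3$ last, absorbs the remaining tables together with $(\eps_1, h_2)$ into the conditioning to extract the $\sqrt{ep}$ Khintchine factor, and then invokes \Cref{lem:simple-sum-squares} with $m = \abs{\Sigma}^d$, where the ratio $\max\{p/2, d\log\abs{\Sigma}\}/\log(e^2\abs{\Sigma}^d)$ is still dominated by $\gamma_p$ since $d\log\abs{\Sigma}$ appears in both numerator and denominator. The main obstacle is the careful parameter verification for \Cref{lem:simple-sum-squares}: checking that $v(x, j) = w(x)\indicator{j=0}$ collapses $\norm{v[x]}{2}$ and $\norm{v[x]}{1}$ to the same quantity so the log-denominator simplifies, and confirming that the numerator's maximum-versus-$\log$ factor is always dominated by $\gamma_p$ times $\log(e^2 m)$; once this is established, the proof reduces to a chain of applications of already-proven inequalities.
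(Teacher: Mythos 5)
Your proof is essentially the paper's proof, step for step: you condition on everything except the final Rademacher table $\eps_3$, apply Khintchine's inequality (\Cref{cor:sum-of-Rademacher}) to extract the $\sqrt{ep}$ factor, observe that the resulting conditional variance $\sum_{\alpha}u(\alpha)^2$ is exactly the sum-of-squares quantity handled by \Cref{lem:simple-sum-squares} after recasting it with the value function $v(x,j)=w(x)\indicator{j=0}$ on $\Sigma^c\times\Sigma$, and then verify the parameter simplification coming from $\norm{v[x]}{1}=\norm{v[x]}{2}$. The constant you obtain, $\sqrt{e K_c}$, is in fact slightly sharper than the $\sqrt{e}\,K_c$ in the theorem statement (the paper silently weakens $\sqrt{K_c}$ to $K_c$ when presenting the final bound). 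The one cosmetic wrinkle is in your $d\ge 2$ sketch: after conditioning on all but one table of $\eps_3$, the sum-of-squares index $\alpha$ ranges over a single coordinate of the derived characters, so the relevant range in \Cref{lem:simple-sum-squares} is still $\abs{\Sigma}$, not $\abs{\Sigma}^d$; the paper sidesteps this by folding the fixed $d-1$ derived characters into the sign of $w$, which is the cleanest way to reduce to $d=1$ without changing $\sum_x w(x)^2$.
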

\begin{proof}
    We will prove the result for $d = 1$.
    For $d > 1$ we fix the last $d - 1$ derived characters and incorporate them into the weight function.
    This will only change the sign of the weight function for some keys, thus the result follows from the case with $d = 1$.

    We let $\eps_1 \colon \Sigma^c \to \set{-1, 1}$, $h \colon \Sigma^c \to \Sigma$, and $\eps_2 \colon \Sigma \to \set{-1, 1}$ be the three simple tabulation functions used to define $\eps$, i.e., $\eps(x) = \eps_1(x) \eps_2(h(x))$.

    We can now write,
    \begin{align*}
        \pnorm{\sum_{x \in \Sigma^c} w(x)\eps(x)}{p}
            = \pnorm{\sum_{x \in \Sigma^c} w(x) \eps_1(x) \eps_2(h(x)) }{p}
            = \pnorm{\sum_{\alpha \in \Sigma} \eps_2(\alpha) \sum_{x \in \Sigma^c} w(x) \indicator{h(x) = \alpha} \eps_1(x) }{p}
        \; .
    \end{align*}
    We fix $h$ and $\eps_1$ and use \Cref{cor:sum-of-Rademacher} to get that
    \begin{align*}
        \pnormcond{\sum_{\alpha \in \Sigma} \eps_2(\alpha) \sum_{x \in \Sigma^c} w(x) \indicator{h(x) = \alpha} \eps_1(x) }{p}{h, \eps_1}
            \le \sqrt{p} \sqrt{e \sum_{\alpha \in \Sigma} \left( \sum_{x \in \Sigma^c} w(x) \indicator{h(x) = \alpha} \eps_1(x) \right)^2 }
        \; .
    \end{align*}

    We define the value function $v \colon \Sigma^c \times \Sigma \to \R$ by $v(x, \alpha) = w(x) \indicator{\alpha = 0}$.
    We can then write,
    \begin{align*}
        \sqrt{e p} \pnorm{\sqrt{\sum_{\alpha \in \Sigma} \left( \sum_{x \in \Sigma^c} w(x) \indicator{h(x) = \alpha} \eps_1(x) \right)^2 }}{p}
            &= \sqrt{e p} \pnorm{\sum_{\alpha \in \Sigma} \left( \sum_{x \in \Sigma^c} \eps_1(x) V(x, \alpha \xor h(x)) \right)^2}{p/2}^{1/2}
        \; .
    \end{align*}
    Now we use \Cref{lem:simple-sum-squares} to get that
    \begin{align*}
        \pnorm{\sum_{\alpha \in \Sigma} \left( \sum_{x \in \Sigma^c} \eps_1(x) V(x, \alpha \xor h(x)) \right)^2}{p/2}
            &\le  K_c \gamma_{p/2}^c \sum_{x \in \Sigma^c} w(x)^2
            \\&\le  K_c \gamma_{p}^c \sum_{x \in \Sigma^c} w(x)^2
        \; .
    \end{align*}
    Putting it all together, we get that
    \begin{align*}
        \pnorm{\sum_{x \in \Sigma^c} w(x)\eps(x)}{p}
            &\le \sqrt{e p} \left( K_c \gamma_{p}^c \sum_{x \in \Sigma^c} w(x)^2 \right)^{1/2}
            \\&= \sqrt{e} K_c \sqrt{p} \gamma_{p}^{c/2} \sqrt{\sum_{x \in \Sigma^c} w(x)^2}
        \; .
    \end{align*}
\end{proof}

Before proving the next result, we will first argue that we only need the symmetric case, similarly, as we did for simple tabulation.

\begin{lemma}\label{lem:mixed-sym}
    Let $h \colon \Sigma^c \to [m]$ be a mixed tabulation function with $d$ derived characters and $v \colon \Sigma^{c} \times [m] \to \R$ a value function.
    Then for every $p \ge 2$,
    \begin{align*}
        \pnorm{\sum_{x \in \Sigma^c} \left(v(x, h(x)) - \ep{v(x, h(x))} \right)}{p}
            \le 2^{c + d} \pnorm{\sum_{x \in \Sigma^c} \eps(x) v(x, h(x))}{p}
        \; ,
    \end{align*}
    where $\eps \colon \Sigma^c \to \set{-1, 1}$ a mixed tabulation sign function associated with $h$.
\end{lemma}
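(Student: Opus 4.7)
The plan is to reduce the mixed tabulation symmetrization to two successive applications of Lemma~\ref{lem:simple-sym}, once for the outer hash $h_3$ and once for the inner hash $h_1$. First I will reduce to the mean-zero case $\sum_j v(x,j)=0$: since $h_1(x)$ is uniform on $[m]$ and independent of $h_3(h_2(x))$, the marginal $h(x)$ is uniform, so $\ep{v(x,h(x))}=\mu_x:=\tfrac{1}{m}\sum_j v(x,j)$, and replacing $v$ by $\tilde v(x,j)=v(x,j)-\mu_x$ leaves the centered left-hand side unchanged. The correction on the right-hand side is controlled via an iid copy $h'$ of $h$ built from fresh tables: $\eps$ is then independent of $h'$ and $\epcond{v(x,h'(x))}{\eps,h}=\mu_x$, so Jensen and the triangle inequality give $\pnorm{\sum_x \eps(x)\tilde v(x,h(x))}{p}\le 2\pnorm{\sum_x \eps(x)v(x,h(x))}{p}$, a constant cost that is absorbed into the final $2^{c+d}$.

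Under the mean-zero assumption I condition on $h_1,h_2$ and apply Lemma~\ref{lem:simple-sym} to the simple tabulation $h_3$ on $d$ characters. Grouping keys by their $h_2$-value and setting $V(y,j)=\sum_{x:\,h_2(x)=y} v(x,h_1(x)\oplus j)$ preserves $\sum_j V(y,j)=0$ and rewrites $\sum_x v(x,h(x))=\sum_y V(y,h_3(y))$. The lemma then yields
\[
\pnormcond{\sum_y V(y,h_3(y))}{p}{h_1,h_2}\le 2^d\pnormcond{\sum_y \eps_3(y)V(y,h_3(y))}{p}{h_1,h_2}
\]
for a simple tabulation sign function $\eps_3\colon\Sigma^d\to\set{-1,1}$; the right-hand side unfolds to $\sum_x \eps_3(h_2(x))v(x,h(x))$.

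Next, I condition on $h_2,h_3,\eps_3$, write $w_x=\eps_3(h_2(x))$ and $r_x=h_3(h_2(x))$ (both measurable under the conditioning), and view $V'(x,j)=w_x v(x,j\oplus r_x)$ as a mean-zero value function for the simple tabulation $h_1$. A second application of Lemma~\ref{lem:simple-sym} costs $2^c$ and introduces an independent simple tabulation sign function $\eps_1\colon\Sigma^c\to\set{-1,1}$ with $\sum_x \eps_1(x)w_x v(x,h(x))=\sum_x \eps(x)v(x,h(x))$. Unconditioning via $\pnorm{X}{p}^p=\ep{\pnormcond{X}{p}{\cdot}^p}$ and chaining the two inequalities gives the desired factor $2^c\cdot 2^d=2^{c+d}$.

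The main obstacle is the mean-zero reduction at the start: naively subtracting $\mu_x$ alters the right-hand side, so one must lean on the independence of $\eps$ from a fresh iid copy $h'$ of $h$ to reabsorb the correction without destroying the bound. The remainder is a routine two-step conditional application of the simple-tabulation symmetrization, exploiting the structural observation that, conditional on $h_2$, mixed tabulation behaves like simple tabulation on $c+d$ effective positions that factorise cleanly into the $h_1$-layer and the $h_3$-layer.
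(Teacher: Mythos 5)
Your two-step conditional symmetrization (one simple-tabulation layer at a time) is exactly the paper's strategy, just in the opposite order. The gap is in the ``reduce to mean-zero'' step at the start. First, the factor accounting is off: you prove $\pnorm{\sum_x \tilde v(x,h(x))}{p}\le 2^{c+d}\pnorm{\sum_x\eps(x)\tilde v(x,h(x))}{p}$ and then $\pnorm{\sum_x\eps(x)\tilde v(x,h(x))}{p}\le 2\pnorm{\sum_x\eps(x)v(x,h(x))}{p}$, which chains to $2^{c+d+1}$, not the stated $2^{c+d}$; the ``absorbed'' claim has nowhere to absorb the extra factor, and in general $\pnorm{X-\ep{X}}{p}\le\pnorm{X}{p}$ is false (e.g.\ $P(X{=}0)=0.9$, $P(X{=}10)=0.1$, $p=1$), so you cannot hope to avoid the extra factor with this reduction. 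Second, the Jensen/iid-copy argument you sketch implicitly needs $(\eps,h')\stackrel d=(\eps,h)$, i.e.\ $\eps\perp h$; this fails for mixed tabulation because $\eps$ and $h$ both depend on $h_2$, and the conditional law of $\eps$ given $h_2$ genuinely varies with $h_2$. Concretely, with $c=2$, $d=1$, $\Sigma=\set{0,1,2,3}$ and $S=\set{(0,0),(0,1),(1,0),(1,1)}$, one has $\ep{\prod_{x\in S}\eps(x)\mid h_2}=1$ when $h_2$ is constant on $S$ but $=0$ when $h_2$ is injective on $S$ (since $\eps_3\circ h_2$ is not of product form across key characters). So $\eps$ is not independent of $h_2$, hence not of $h$, and the distributional identity underlying your Jensen step does not hold. (There is a simpler route to the factor-$2$ bound — observe $\sum_x\eps(x)\mu_x=\epcond{\sum_x\eps(x)v(x,h(x))}{\sigma(\text{everything but }h_1)}$ and apply Jensen — but this still only gives $2$, not $1$.)

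The paper sidesteps the whole issue by never replacing $v$ by $\tilde v$ globally. It symmetrizes in the order $h_1$ then $h_3$, and uses the fact (implicit in the proof of \Cref{lem:sampling-symmetrize} once one drops the $\sum_j v(x,j)=0$ hypothesis) that the upper symmetrization inequality holds with a centered left-hand side but an \emph{uncentered} right-hand side: $\pnorm{\sum_x(v(x,h(x))-\ep{v(x,h(x))})}{p}\le 2^c\pnorm{\sum_x\eps(x)v(x,h(x))}{p}$. After the first ($h_1$) application, the summand is $\eps_1(x)(v(x,h(x))-\mu_x)$, which has conditional mean zero as a function of $h_3$; applying the one-sided form again absorbs the remaining $-\mu_x$ into the centering and leaves the uncentered $\sum_x\eps(x)v(x,h(x))$ on the right, with total constant $2^{c}\cdot 2^{d}=2^{c+d}$. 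If you want to keep your $h_3$-then-$h_1$ order, the fix is the same: use the one-sided form in the second (\,$h_1$\,) step rather than reducing to mean zero up front, so no transfer inequality is needed and no extra factor of $2$ appears.
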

\begin{proof}
    The result follows by two uses \Cref{lem:simple-sym}.
    Fixing $h_2$ and $h_3$ and using \Cref{lem:simple-sym} we get that
    \begin{align*}
        &\pnormcond{\sum_{x \in \Sigma^c} \left(v(x, h(x)) - \ep{v(x, h(x))} \right)}{p}{h_2, h_3}
            \\&\qquad\qquad= \pnormcond{\sum_{x \in \Sigma^c} \left(v(x, h_1(x) \xor h_3(h_2(x))) - \ep{v(x, h_1(x) \xor h_3(h_2(x)))} \right)}{p}{h_2, h_3}
            \\&\qquad\qquad\le 2^c \pnormcond{\sum_{x \in \Sigma^c} \eps_1(x) \left(v(x, h_1(x) \xor h_3(h_2(x))) - \ep{v(x, h_1(x) \xor h_3(h_2(x)))} \right)}{p}{h_2, h_3}
        \; .
    \end{align*}
    Now we fix $h_1, h_2$, and $\eps_1$ and use \Cref{lem:simple-sym},
    \begin{align*}
        &2^c \pnormcond{\sum_{x \in \Sigma^c} \eps_1(x) \left(v(x, h_1(x) \xor h_3(h_2(x))) - \ep{v(x, h_1(x) \xor h_3(h_2(x)))} \right)}{p}{h_1, h_2, \eps_1}
            \\&\qquad\qquad\le 2^{c + d} \pnormcond{\sum_{x \in \Sigma^c} \eps_1(x) \eps_3(h) v(x, h_1(x) \xor h_3(h_2(x))) }{p}{h_1, h_2, \eps_1}
            \\&\qquad\qquad= 2^{c + d} \pnormcond{\sum_{x \in \Sigma^c} \eps(x) v(x, h(x))}{p}{h_1, h_2, \eps_1}
        \; .
    \end{align*}
\end{proof}

We can now prove the concentration result for mixed tabulation.
The proof is very similar to the proof of \Cref{thm:simple-tab-moments} and is again quite long and technical.

\restateThmMixedTabMoments

\begin{proof}
    We will prove the result for $d = 1$.
    For $d > 1$ we fix the last $d - 1$ derived characters and incorporate them into the value function.
    This does not change the variance and the result then follows from the case with $d = 1$.

    We can assume without loss of generality that $M_v = 1$.

    We let $h_1 \colon \Sigma^c \to [m]$, $h_2 \colon \Sigma^c \to \Sigma$, and $h_3 \colon \Sigma \to [m]$ be the three simple tabulation functions used to define $h$, i.e., $h(x) = h_1(x) \xor h_3(h_2(x))$.
    Similarly, we let $\eps_1 \colon \Sigma^c \to \set{-1, 1}$ and $\eps_3 \colon \Sigma \to \set{-1, 1}$ be the two simple tabulation sign functions used to define $\eps$, i.e., $\eps(x) = \eps_1(x) \eps_3(h_2(x))$.

    We can then write,
    \begin{align*}
        V^{\text{mixed}}_v
            &= \sum_{x \in \Sigma^c} \eps(x) v(x, j \xor h(x))
            \\&= \sum_{x \in \Sigma^c} \eps_1(x) \eps_3(h_2(x)) v(x, j \xor h_1(x) \xor h_3(h_2(x)))
            \\&= \sum_{\alpha \in \Sigma} \eps_3(\alpha)
                \sum_{x \in \Sigma^c} \eps_1(x) \indicator{h_2(x) = \alpha} v(x, j \xor h_1(x) \xor h_3(\alpha))
            \; .
    \end{align*}
    We define the value function $v_{h} \colon \Sigma \times [m] \to \R$ by $v_{h}(\alpha, j) = \sum_{x \in \Sigma^c} \eps_1(x) \indicator{h_2(x) = \alpha} v(x, j \xor h_1(x))$.
    This allows us to write,
    \begin{align*}
        V^{\text{mixed}}_v
            &= \sum_{\alpha \in \Sigma} \eps_3(\alpha) v_{h}(\alpha, j \xor h_3(\alpha))
            \; .
    \end{align*}

    If we fix $h_1$ and $h_2$ then \Cref{thm:sampling-moments} give us that
    \begin{align}\label{eq:mixed-split}
        \pnormcond{V^{\text{mixed}}_v}{p}{h_1, h_2}
            \le 8e \Psi_p\left(\max_{\alpha \in \Sigma, j \in [m]} \abs{v_h(\alpha, j)}, \frac{\sum_{\alpha \in \Sigma, j \in [m]} v_h(\alpha, j)^2}{m} \right)
        \; .
    \end{align}
    As in the proof of \Cref{thm:simple-tab-moments} we will use \Cref{lem:fn-moment}.
    \begin{align}\label{eq:mixed-psi-norm-bound}
        &\pnorm{\Psi_p\left(\max_{\alpha \in \Sigma, j \in [m]} \abs{v_h(\alpha, j)}, \frac{\sum_{\alpha \in \Sigma, j \in [m]} v_h(\alpha, j)^2}{m} \right)}{p}
            \\&\qquad\qquad\qquad\qquad\qquad\le \sqrt{2} \Psi_p\left(\pnorm{\max_{\alpha \in \Sigma, j \in [m]} \abs{v_h(\alpha, j)}}{p}, \frac{1}{m} \pnorm{\sum_{\alpha \in \Sigma, j \in [m]} v_h(\alpha, j)^2}{p/2} \right)
    \end{align}
    Now we want to bound $\pnorm{\max_{\alpha \in \Sigma, j \in [m]} \abs{v_h(\alpha, j)}}{p}$ and $\pnorm{\sum_{\alpha \in \Sigma, j \in [m]} v_h(\alpha, j)^2}{p/2}$.
    We define the value function $v' \colon \Sigma^c \times \left([m] \times \Sigma \right) \to \R$ by
    $v'(x, (j, \alpha)) = \indicator{\alpha = 0} v(x, j)$.
    We then get that 
    \[
        v_h(\alpha, j) = \sum_{x \in \Sigma^c} \eps_1(x) v'(x, (j \xor h_1(x), \alpha \xor h_2(x)))
        \; .
    \]
    Clearly, we have that the support of $v'$ is at most $m$ for all $x \in \Sigma^c$, thus $\frac{\norm{v'[x]}{1}^2}{\norm{v'[x]}{2}^2} \le m$ for all $x \in \Sigma^c$.

    We define $\bar{p} = \max\!\set{q, \log(m \abs{\Sigma})}$.
    This implies that $\gamma_p = \frac{\bar{p}}{\log\!\left(e \abs{\Sigma} \right)}$.

    We can now bound the moments of $\max_{\alpha \in \Sigma, j \in [m]} \abs{v_h(\alpha, j)}$ by using \Cref{thm:simple-tab-moments}.
    \begin{align*}
        \pnorm{\max_{\alpha \in \Sigma, j \in [m]} \abs{v_h(\alpha, j)}}{p}
            &\le \pnorm{\max_{\alpha \in \Sigma, j \in [m]} \abs{v_h(\alpha, j)}}{\bar{p}}
            \\&\le \left(\sum_{\alpha \in \Sigma, j \in [m]} 
                \pnorm{v_h(\alpha, j)}{\bar{p}}^{\bar{p}} \right)^{1/\bar{p}}
            \\&\le \left(\sum_{\alpha \in \Sigma, j \in [m]} 
                \pnorm{\sum_{x \in \Sigma^c} \eps_1(x) v'(x, (j \xor h_1(x), \alpha \xor h_2(x)))}{\bar{p}}^{\bar{p}} \right)^{1/\bar{p}}
            \\&\le \left(
                \sum_{\alpha \in \Sigma, j \in [m]}
                    M_1^{\bar{p}} \Psi_{\bar{p}}\left(L^{(1)}_c \left(\gamma'_p\right)^{c - 1},  L^{(1)}_c \left(\gamma'_p\right)^{c - 1} \sigma_{v'}^2 \right)^{\bar{p}}
            \right)^{1/\bar{p}}
            \\&= \left(m \abs{\Sigma} \right)^{1/\bar{p}} 
                M_1 \Psi_{\bar{p}}\left(L^{(1)}_c \left(\gamma'_p\right)^{c - 1},  L^{(1)}_c \left(\gamma'_p\right)^{c - 1} \sigma_{v'}^2 \right)
            \\&= \left(m \abs{\Sigma} \right)^{1/\bar{p}} 
                M_1 \Psi_{\bar{p}}\left(L^{(1)}_c \left(\gamma'_p\right)^{c - 1},  L^{(1)}_c \left(\gamma'_p\right)^{c - 1} \frac{\sum_{x \in \Sigma^c} \norm{v[x]}{2}^2}{m \abs{\Sigma}} \right)
            \\&\le  e M_1 \Psi_{\bar{p}}\left(L^{(1)}_c \left(\gamma'_p\right)^{c - 1},  L^{(1)}_c \left(\gamma'_p\right)^{c - 1} \frac{\sigma_v^2}{\abs{\Sigma}} \right)
        \; ,
    \end{align*}
    where $L^{(1)}_c$ is a constant depending on $c$ as given by \Cref{thm:simple-tab-moments}, $M_1$ is universal constant, and 
    \[
        \gamma'_p = \max\!\set{\frac{\log(m \abs{\Sigma}) + \log\!\left( \frac{\sum_{x \in \Sigma^c} \norm{v'[x]}{2}^2}{\max_{x \in \Sigma^c} \norm{v'[x]}{2}^2} \right)/c}{\log\!\left(e \abs{\Sigma} \right)}, \frac{p}{\log \left( e \abs{\Sigma} \right)}}
    \]
    We note that $\gamma'_p \le 2 \gamma_p$ since,
    \begin{align*}
        \gamma'_p
            &= \max\!\set{\frac{\log(m \abs{\Sigma}) + \log\!\left( \frac{\sum_{x \in \Sigma^c} \norm{v'[x]}{2}^2}{\max_{x \in \Sigma^c} \norm{v'[x]}{2}^2} \right)/c}{\log\!\left( e \abs{\Sigma} \right)}, \frac{p}{\log\!\left( e \abs{\Sigma} \right)}}
            \\&\le \max\!\set{\frac{\log(m \abs{\Sigma}) + \log(m \abs{\Sigma}^c)/c}{\log \left( e \abs{\Sigma} \right)}, \frac{p}{\log \left( e \abs{\Sigma} \right)}}
            \\&\le 2 \max\!\set{\frac{\log(m \abs{\Sigma})}{\log \left( e \abs{\Sigma} \right)}, \frac{p}{\log \left( e \abs{\Sigma} \right)}}
            \\&= 2 \gamma_p
        \; .
    \end{align*}
    So we have that
    \begin{align}\label{eq:mixed-concentration-max}
        \pnorm{\max_{\alpha \in \Sigma, j \in [m]} \abs{v_h(\alpha, j)}}{p}
            \le e M_1 \Psi_{\bar{p}}\left(2L^{(1)}_c \gamma_p^{c - 1},  2 L^{(1)}_c \gamma_p^{c - 1} \frac{\sigma^2}{\abs{\Sigma}} \right)
        \; .
    \end{align}

    We bound the moments of $\sum_{\alpha \in \Sigma, j \in [m]} v_h(\alpha, j)^2$ by using \Cref{lem:simple-sum-squares}.
    We get that for all $q \ge 2$,
    \begin{align}\begin{split}\label{eq:mixed-concentration-squares}
        \pnorm{\sum_{\alpha \in \Sigma, j \in [m]} v_h(\alpha, j)^2}{p}
            &= \pnorm{ \sum_{\alpha \in \Sigma, j \in [m]} \left( \sum_{x \in \Sigma^c} v'(x, (j \xor h_1(x), \alpha \xor h_2(x))) \right)^2}{p}
            \\&\le  L^{(2)}_c \max\!\set{1, \left( \frac{p}{\log\!\left(e \abs{\Sigma} \right)} \right)^c } \sum_{x \in \Sigma^{c}} \norm{v'[x]}{2}^2
            \\&= L^{(2)}_c \gamma_p^c \sum_{x \in \Sigma^{c}} \norm{v[x]}{2}^2
            \\&\le K_c \gamma_p^c \sum_{x \in \Sigma^{c}} \norm{v[x]}{2}^2
        \; .
    \end{split}\end{align}
    Where $L^{(2)}_c$ is constant depending on $c$ as given by \Cref{lem:simple-sum-squares}.
  
    Now combining \cref{eq:mixed-psi-norm-bound}, \cref{eq:mixed-concentration-max}, and \cref{eq:mixed-concentration-squares}, and we get that
    \begin{align}\begin{split}\label{eq:mixed-concentration-psi-bound}
        &\pnorm{\Psi_p\left(\max_{\alpha \in \Sigma, j \in [m]} \abs{v_h(\alpha, j)}, \frac{\sum_{\alpha \in \Sigma, j \in [m]} v_h(\alpha, j)^2}{m} \right)}{p}
            \\&\qquad\qquad\qquad\le e \Psi_p\left(e M_1\Psi_{\bar{p}}\left(2L^{(1)}_c \gamma_p^{c - 1},  2 L^{(1)}_c \gamma_p^{c - 1} \frac{\sigma^2}{\abs{\Sigma}} \right), K_c \gamma_p^c \sigma^2 \right)
        \; .
    \end{split}\end{align}

    We will now consider three different cases.

    \paragraph*{Case 1. $\sigma^2 \le (2 e^{-3} L_c^{(1)}) \bar{p} \gamma_p^{c - 1} \left(e \abs{\Sigma} \right)^{1 - \tfrac{1}{4(2c + 1)}}$.}
    In this case we will show that
    \begin{align}\label{eq:mixed-concentration-case-one}
        e M_1\Psi_{\bar{p}}\left(2L^{(1)}_c \gamma_p^{c - 1},  2 L^{(1)}_c \gamma_p^{c - 1} \frac{\sigma^2}{\abs{\Sigma}} \right)
            \le K_c \gamma_p^c
        \; .
    \end{align}
    We first see that
    \begin{align}\begin{split}\label{eq:mixed-concentration-case-one-calc}
        \frac{2 L^{(1)}_c \gamma_p^{c - 1} \frac{\sigma^2}{\abs{\Sigma}}}{\left( 2L^{(1)}_c \gamma_p^{c - 1} \right)^2}
            &= \frac{\sigma^2}{2L^{(1)}_c \gamma_p^{c - 1} \abs{\Sigma}}
            \\&\le \frac{(2 e^{-3} L_c^{(1)}) \bar{p} \gamma_p^{c - 1} \left(e \abs{\Sigma} \right)^{1 - \tfrac{1}{4(2c + 1)}}}{2L^{(1)}_c \gamma_p^{c - 1} \abs{\Sigma}}
            \\&\le e^{-2} \bar{p} \abs{\Sigma}^{- \tfrac{1}{4(2c + 1)}}
            \\&\le e^{-2} \bar{p}
        \; .
    \end{split}\end{align}
    
    By \cref{eq:psi-far-out} we get that
    \begin{align*}
        e M_1\Psi_{\bar{p}}\left(2L^{(1)}_c \gamma_p^{c - 1},  2 L^{(1)}_c \gamma_p^{c - 1} \frac{\sigma^2}{\abs{\Sigma}} \right)
            &\le e M_1 \frac{\bar{p}}{\log\!\left( \frac{\bar{p} \left(2L^{(1)}_c \gamma_p^{c - 1}\right)^2}{2 L^{(1)}_c \gamma_p^{c - 1} \frac{\sigma^2}{\abs{\Sigma} } } \right) } 2L^{(1)}_c \gamma_p^{c - 1}
            \\&\le 2 e M_1 L_c^{(1)} \frac{\bar{p}}{\log\!\left( e^2 \abs{\Sigma}^{\tfrac{1}{4(2c + 1)}}  \right)} \gamma_p^{c - 1}
            \\&\le 8 e M_1 L_c^{(1)} (2c + 1) \gamma_p^{c}
            \\&\le 24 e c M_1 L_c^{(1)} \gamma_p^{c}
            \\&\le K_c \gamma_p^{c}
        \; .
    \end{align*}
    Here we have used that $\frac{\bar{p} \left(2L^{(1)}_c \gamma_p^{c - 1}\right)^2}{2 L^{(1)}_c \gamma_p^{c - 1} \frac{\sigma^2}{\abs{\Sigma} } } \ge e^2 \abs{\Sigma}^{\tfrac{1}{4(2c + 1)}}$ which follows from \cref{eq:mixed-concentration-case-one-calc} and that $K_c \ge 24 e c M_1 L_c^{(1)}$.

    Now combining \cref{eq:mixed-split}, \cref{eq:mixed-concentration-psi-bound}, and \cref{eq:mixed-concentration-case-one} we get the result.

    \paragraph*{Case 2. $\sigma^2 > (2 e^{-3} L_c^{(1)}) \bar{p} \gamma_p^{c - 1} \left(e \abs{\Sigma} \right)^{1 - \tfrac{1}{4(2c + 1)}}$ and $p \le \abs{\Sigma}^{1 - \tfrac{1}{2(2c + 1)}}$.}
    We will show that
    \begin{align}\label{eq:mixed-concentration-case-two}
        \Psi_p\left(e M_1\Psi_{\bar{p}}\left(2L^{(1)}_c \gamma_p^{c - 1},  2 L^{(1)}_c \gamma_p^{c - 1} \frac{\sigma^2}{\abs{\Sigma}} \right), K_c \gamma_p^c \sigma^2 \right)
            &\le \sqrt{p} \sqrt{K_c \gamma_p^c \sigma^2}
        \; .
    \end{align}

    We use \cref{eq:psi-bernstein} to get that
    \begin{align*}
        \Psi_{\bar{p}}\left(2L^{(1)}_c \gamma_p^{c - 1},  2 L^{(1)}_c \gamma_p^{c - 1} \frac{\sigma^2}{\abs{\Sigma}} \right)
            &\le \max\!\set{\tfrac{1}{2} \sqrt{\bar{p} 2 L^{(1)}_c \gamma_p^{c - 1} \frac{\sigma^2}{\abs{\Sigma}}},
                \tfrac{L^{(1)}_c}{e} \bar{p} \gamma_p^{c - 1} }
        \; .
    \end{align*}
    We again apply \cref{eq:psi-bernstein} to obtain that
    \begin{align*}
        &\Psi_p\left(e M_1\Psi_{\bar{p}}\left(2L^{(1)}_c \gamma_p^{c - 1},  2 L^{(1)}_c \gamma_p^{c - 1} \frac{\sigma^2}{\abs{\Sigma}} \right), K_c \gamma_p^c \sigma^2 \right)
            \\&\qquad\qquad\le \max\!\set{\tfrac{1}{2} \sqrt{p} \sqrt{ K_c \gamma_p^c \sigma^2}, \tfrac{M_1}{2} \Psi_{\bar{p}}\left(2L^{(1)}_c \gamma_p^{c - 1},  2 L^{(1)}_c \gamma_p^{c - 1} \frac{\sigma^2}{\abs{\Sigma}} \right)}
        \; .
    \end{align*}
    Combining the two estimates give us that
    \begin{align*}
        &\Psi_p\left(e M_1\Psi_{\bar{p}}\left(2L^{(1)}_c \gamma_p^{c - 1},  2 L^{(1)}_c \gamma_p^{c - 1} \frac{\sigma^2}{\abs{\Sigma}} \right), K_c \gamma_p^c \sigma^2 \right)
            \\&\qquad\qquad\le \max\!\set{\tfrac{1}{2} \sqrt{p} \sqrt{ K_c \gamma_p^c \sigma^2}, 
            \tfrac{M_1 \sqrt{L^{(1)}_c}}{2\sqrt{2}} p \sqrt{\bar{p} \gamma_p^{c - 1} \frac{\sigma^2}{\abs{\Sigma}}},
            \tfrac{M_1 L^{(1)}_c}{2e} p \bar{p} \gamma_p^{c - 1}}
    \end{align*}
    We will show that the max is equal to $\tfrac{1}{2} \sqrt{p} \sqrt{ K_c \gamma_p^c \sigma^2}$ which will show \cref{eq:mixed-concentration-case-two}.

    First we show that $\tfrac{1}{2} \sqrt{p} \sqrt{ K_c \gamma_p^c \sigma^2} \ge \tfrac{M_1 \sqrt{L^{(1)}_c}}{2\sqrt{2}} p \sqrt{\bar{p} \gamma_p^{c - 1} \frac{\sigma^2}{\abs{\Sigma}}}$.
    We note that this equivalent with showing that $\frac{\tfrac{K_c}{4} p \gamma_p^{c} \sigma^2}{\tfrac{M_1^2 L^{(1)}_c}{8} p^2 \bar{p} \gamma_p^{c - 1} \frac{\sigma^2}{\abs{\Sigma}}} \ge 1$,
    \begin{align*}
        \frac{\tfrac{K_c}{4} p \gamma_p^{c} \sigma^2}{\tfrac{M_1^2 L^{(1)}_c}{8} p^2 \bar{p} \gamma_p^{c - 1} \frac{\sigma^2}{\abs{\Sigma}}}
            &= \frac{2 K_c}{M_1^2 L^{(1)}_c} \cdot \frac{\gamma_p\abs{\Sigma}}{p \bar{p}}
            = \frac{2 K_c}{M_1^2 L^{(1)}_c} \cdot \frac{\abs{\Sigma}}{p \log\!\left(e\abs{\Sigma}\right)}
            \ge \frac{2 K_c}{M_1^2 L^{(1)}_c} \cdot \frac{\abs{\Sigma}^{\tfrac{1}{2(2c + 1)}}}{\log\!\left(e\abs{\Sigma}\right)}
        \; .
    \end{align*}
    Now by using \Cref{lem:log-vs-poly} we get that
    \begin{align*}
        \frac{2 K_c}{M_1^2 L^{(1)}_c} \cdot \frac{\abs{\Sigma}^{\tfrac{1}{2(2c + 1)}}}{\log\!\left(e\abs{\Sigma}\right)}
            &\ge \frac{2 K_c}{e M_1^2 L^{(1)}_c} \cdot \frac{\left(e \abs{\Sigma} \right)^{\tfrac{1}{2(2c + 1)}}}{\log\!\left(e\abs{\Sigma}\right)}
            \ge \frac{2 K_c}{e M_1^2 L^{(1)}_c} \left( \frac{e}{2(2c + 1)} \right)
            = \frac{K_c}{3 M_1^2 L^{(1)}_c c}
        \; .
    \end{align*}
    Now $K_c = \left( L_2 c \right)^c$ and $L^{(1)}_c = \left( T_1 c \right)^c$ where $T_1$ is universal constant determined by \Cref{thm:simple-tab-moments}.
    Now choosing $L_2$ large enough we get that so we get that
    \begin{align*}
        \frac{K_c}{3 M_1^2 L^{(1)}_c c}
            = \frac{L_2^c}{3 M_1^2 T_1^{c - 1}}
            \ge 1
        \; .
    \end{align*}
    Combining it all we get that
    \begin{align*}
        \frac{\tfrac{K_c}{4} p \gamma_p^{c} \sigma^2}{\tfrac{M_1^2 L^{(1)}_c}{8} p^2 \bar{p} \gamma_p^{c - 1} \frac{\sigma^2}{\abs{\Sigma}}}
            \ge 1
        \; .
    \end{align*}
    This implies that $\tfrac{1}{2} \sqrt{p} \sqrt{ K_c \gamma_p^c \sigma^2} \ge \tfrac{M_1 \sqrt{L^{(1)}_c}}{2\sqrt{2}} p \sqrt{\bar{p} \gamma_p^{c - 1} \frac{\sigma^2}{\abs{\Sigma}}}$ as we wanted.

    Next we show that $\tfrac{1}{2} \sqrt{p} \sqrt{ K_c \gamma_p^c \sigma^2} \ge \tfrac{M_1 L^{(1)}_c}{2e} p \bar{p} \gamma_p^{c - 1}$.
    Again we note that this equivalent with showing that $\frac{\tfrac{K_c}{4} p \gamma_p^{c} \sigma^2}{\tfrac{M_1^2 \left(L_c^{(1)} \right)^2}{4 e^2} p^2 \bar{p}^2 \gamma_p^{2c - 2}} \ge 1$,
    \begin{align*}
        \frac{\tfrac{K_c}{4} p \gamma_p^{c} \sigma^2}{\tfrac{M_1^2 \left(L_c^{(1)} \right)^2}{4 e^2} p^2 \bar{p}^2 \gamma_p^{2c - 2}}
            &= \frac{e^2 K_c}{\left( M_1 L_c^{(1)} \right)^2} \frac{\sigma^2}{p \bar{p}^2 \gamma_p^{c - 2}}
            \\&\ge \frac{e^2 K_c}{\left( M_1 L_c^{(1)} \right)^2} \frac{(2 e^{-3} L_c^{(1)}) \bar{p} \gamma_p^{c - 1} \left( e\abs{\Sigma} \right)^{1 - \tfrac{1}{4(2c + 1)}}}{p \bar{p}^2 \gamma_p^{c - 2}}
            \\&\ge \frac{2 K_c}{e M_1^2 L_c^{(1)}} \frac{\left( e\abs{\Sigma} \right)^{1 - \tfrac{1}{4(2c + 1)}}}{p \log\!\left(e\abs{\Sigma}\right)}
            \\&\ge \frac{2 K_c}{e M_1^2 L_c^{(1)}} \frac{\left( e\abs{\Sigma} \right)^{\tfrac{1}{4(2c + 1)}}}{\log\!\left(e\abs{\Sigma}\right)}
        \; .
    \end{align*}
    Again we use \Cref{lem:log-vs-poly} to obtain that
    \begin{align*}
        \frac{2 K_c}{e M_1^2 L_c^{(1)}} \frac{\left( e\abs{\Sigma} \right)^{\tfrac{1}{4(2c + 1)}}}{\log\!\left(e\abs{\Sigma}\right)}
            \ge \frac{2 K_c}{e M_1^2 L_c^{(1)}} \left( \frac{e}{4(2c + 1)} \right)
            \ge \frac{K_c}{6 e M_1^2 L_c^{(1)} c}
        \; .    
    \end{align*}
    Now $K_c = \left( L_2 c \right)^c$ and $L^{(1)}_c = \left( T_1 c \right)^c$ where $T_1$ is universal constant determined by \Cref{thm:simple-tab-moments}.
    Now choosing $L_2$ large enough we get that so we get that
    \begin{align*}
        \frac{K_c}{6 e M_1^2 L_c^{(1)} c}
            = \frac{L_2^c}{6 e M_1^2 T_1^{c - 1}}
            \ge 1
        \; .
    \end{align*}
    Combining it all we get that
    \begin{align*}
        \frac{\tfrac{K_c}{4} p \gamma_p^{c} \sigma^2}{\tfrac{M_1^2 \left(L_c^{(1)} \right)^2}{4 e^2} p^2 \bar{p}^2 \gamma_p^{2c - 2}}
            \ge 1
        \; .
    \end{align*}
    This implies that $\tfrac{1}{2} \sqrt{p} \sqrt{ K_c \gamma_p^c \sigma^2} \ge \tfrac{M_1 L^{(1)}_c}{2e} p \bar{p} \gamma_p^{c - 1}$.

    This proves \cref{eq:mixed-concentration-case-two} and combining this with \cref{eq:psi-lower-bound} we get that
    \begin{align}\begin{split}\label{eq:mixed-concentration-case-two-end}
        \Psi_p\left(e M_1\Psi_{\bar{p}}\left(2L^{(1)}_c \gamma_p^{c - 1},  2 L^{(1)}_c \gamma_p^{c - 1} \frac{\sigma^2}{\abs{\Sigma}} \right), K_c \gamma_p^c \sigma^2 \right)
            \le \Psi_p\left(K_c \gamma_p^{c},  K_c \gamma_p^{c} \sigma^2\right)
        \; .
    \end{split}\end{align}

    Now combining \cref{eq:mixed-split}, \cref{eq:mixed-concentration-psi-bound}, and \cref{eq:mixed-concentration-case-two-end} we get the result.

    \paragraph*{Case 3. $\sigma^2 > (2 e^{-3} L_c^{(1)}) \bar{p} \gamma_p^{c - 1} \left( e\abs{\Sigma} \right)^{1 - \tfrac{1}{4(2c + 1)}}$ and $p > \left(e\abs{\Sigma}\right)^{1 - \tfrac{1}{2(2c + 1)}}$.}
    In this case we will exploit that $\abs{V^{\text{mixed}}_v} \le \abs{\Sigma}^c$.
    This implies that $\pnorm{V^{\text{mixed}}_v}{p} \le \abs{\Sigma}^c$, so if we can prove that
    \begin{align}\label{eq:mixed-concentration-case-three}
        L_1 \Psi_p\left(K_c \gamma_p^{c},  K_c \gamma_p^{c} \sigma^2 \right)
            \ge \abs{\Sigma}^c
        \; ,
    \end{align}
    then the result follows.

    We use \cref{eq:psi-lower-bound} to get that
    \begin{align*}
        \Psi_p\left(K_c \gamma_p^{c},  K_c \gamma_p^{c} \sigma^2 \right)
            &\ge \sqrt{p} \sqrt{K_c \gamma_p^{c} \sigma^2}
            \\&\ge \sqrt{p^2 K_c (2 e^{-3} L_c^{(1)}) \gamma_p^{2c - 1} \left(e \abs{\Sigma} \right)^{1 - \tfrac{1}{4(2c + 1)}}}
            \\&\ge \sqrt{2 e^{-3} K_c L_c^{(1)}} \frac{\left(e \abs{\Sigma}\right)^{\tfrac{2c + 1}{2} \cdot \left(1 - \tfrac{1}{2(2c + 1)}\right)}}{\log\!\left(e\abs{\Sigma}\right)^{\tfrac{2c - 1}{2}}}
            \\&\ge \sqrt{2 e^{-3} K_c L_c^{(1)}} \frac{\left(e \abs{\Sigma}\right)^{c + \tfrac{1}{4}}}{\log\!\left(e\abs{\Sigma}\right)^{\tfrac{2c - 1}{2}}}
    \end{align*}
    Now we use \Cref{lem:log-vs-poly} to get that
    \[
        \frac{\left(e \abs{\Sigma}\right)^{1/4}}{\log\!\left(e\abs{\Sigma}\right)^{\tfrac{2c - 1}{2}}}
            \ge \left( \frac{e}{2(2c - 1)} \right)^{\tfrac{2c - 1}{2}}
            \ge \left( \frac{e}{4c} \right)^{\tfrac{2c - 1}{2}}
        \; .
    \]
    Combining this we get that
    \begin{align*}
        \Psi_p\left(K_c \gamma_p^{c},  K_c \gamma_p^{c} \sigma^2 \right)
            &\ge \sqrt{2 e^{-3} K_c L_c^{(1)}} \left( e \abs{\Sigma} \right)^c \left( \frac{e}{4c} \right)^{\tfrac{2c - 1}{2}}
            \ge \sqrt{2 e^{-3}} \abs{\Sigma}^c \sqrt{K_c L_c^{(1)} \left( \frac{e}{4c} \right)^{2c - 1}}
        \; .
    \end{align*}
    We have that $K_c L_c^{(1)} = c^{2c - 1} T_1^c T_2^{c - 1}$ where $T_1$ is a universal constant and $T_2$ is a universal determined by \Cref{thm:simple-tab-moments}.
    Choosing $T_1$ large enough we get that $\sqrt{K_c L_c^{(1)} \left( \frac{e}{4c} \right)^{2c - 1}} \ge \tfrac{1}{L_1}$.
    Thus \cref{eq:mixed-concentration-case-three} follows and the result is proven. 
\end{proof}

\section{Lower Bounds}

Similarly, as in the proof of \Cref{lem:psi-relation-poisson}, we will use the following result by Latała~\cite{Latala1997} that gives a tight bound for $p$-norms of sums of independent and identically distributed symmetric variables.
\begin{lemma}[Latała~\cite{Latala1997}]\label{lem:Latala}
    If $X, (X_i)_{i \in [n]}$ are independent and identically distributed symmetric variables and $p \ge 2$ then,
    \begin{align*}
        \pnorm{\sum_{i \in [n]} X_i}{p}
            \le K_1 \sup\setbuilder{\frac{p}{s}\left( \frac{n}{p} \right)^{1/s} \pnorm{X}{s}}{\max\!\set{2, \tfrac{p}{n}} \le s \le p}
        \; ,
    \end{align*}
    and 
    \begin{align*}
        \pnorm{\sum_{i \in [n]} X_i}{p}
            \ge K_2 \sup\setbuilder{\frac{p}{s}\left( \frac{n}{p} \right)^{1/s} \pnorm{X}{s}}{\max\!\set{2, \tfrac{p}{n}} \le s \le p}
        \; .
    \end{align*}
    Here $K_1$ and $K_2$ are universal constants.
\end{lemma}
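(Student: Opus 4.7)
My plan is to prove the matching upper and lower bounds by truncating each $X_i$ at a level $t = t(s)$ calibrated so that $n\Pr(|X|>t)\asymp p/s$, for each $s$ in the range $[\max\{2,p/n\},p]$, and then optimizing over $s$. Symmetry allows writing $X_i = \varepsilon_i|X_i|$ with independent Rademacher signs $\varepsilon_i$; conditional on the magnitudes this reduces the analysis to weighted Rademacher sums, to which Khintchine's inequality (\Cref{cor:sum-of-Rademacher}) applies.

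For the upper bound I would split $X_i = X_i\mathbf{1}_{\{|X_i|\le t\}} + X_i\mathbf{1}_{\{|X_i|>t\}}$. The bounded part has variance $n\,\mathbb{E} X^2\mathbf{1}_{\{|X|\le t\}}$ and sup-norm $t$; a Bernstein-type moment inequality (in the spirit of \Cref{lem:bernoulli-moments}) gives an $L_p$ bound of $\sqrt{p\cdot n\,\mathbb{E} X^2\mathbf{1}_{\{|X|\le t\}}}\vee pt$, and both summands are bounded by $\tfrac{p}{s}(n/p)^{1/s}\|X\|_s$ via Markov at level $s$ and the calibration of $t$. The tail part is controlled by Latala's sharp Hoffmann--J{\o}rgensen-type inequality for iid symmetric sums, which after dyadic layering of the tail also yields $\lesssim \tfrac{p}{s}(n/p)^{1/s}\|X\|_s$. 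Taking the infimum over $t$, equivalently over $s$, converts the $s$-dependent bound into the supremum of the statement.

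For the lower bound, fix $s$. Using that for symmetric $Z$ independent of $Y$ one has $\|Y+Z\|_p \ge \|Y\|_p$ (via Jensen conditional on $Y$, since $\mathbb{E} Z = 0$), one gets $\|\sum_i X_i\|_p \ge \|\sum_i X_i \mathbf{1}_{\{|X_i|>t\}}\|_p$, so only the large terms matter. Conditional on the event that $N := \#\{i : |X_i|>t\}$ satisfies $N \ge \lceil p/s\rceil$ --- an event of constant probability since $\mathbb{E} N \asymp p/s$ --- the Rademacher sum of the large terms has each weight at least $t$, so in the sub-Poisson regime $p \ge N$ its $L_p$-norm is $\gtrsim (p/s)t$ by the explicit bound $\|\sum_{j=1}^{k}\varepsilon_j\|_p \gtrsim k$ valid for $k \le p$. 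Unconditionally this yields $\|\sum_i X_i\|_p \gtrsim (p/s)t$, which equals $\tfrac{p}{s}(n/p)^{1/s}\|X\|_s$ up to constants on the distributions that realize the supremum at $s$ (where $t \asymp \|X\|_s(sn/p)^{1/s}$).

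The main technical obstacle is the sharp Hoffmann--J{\o}rgensen step in the upper bound: the naive iteration loses a factor $p$ per round, so Latala's refinement using an ordering of the partial maxima is required to preserve a universal constant. A secondary subtlety is that the lower bound argument above directly matches only the extremal distribution for each $s$, so handling general distributions requires choosing the threshold $t$ per-$s$ and checking that the obtained bound really pointwise dominates $\tfrac{p}{s}(n/p)^{1/s}\|X\|_s$; this is where Latala's more delicate layering across dyadic scales enters.
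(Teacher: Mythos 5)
The paper does not give a proof of this lemma. It is imported verbatim from Latała~\cite{Latala1997} and used as a black box, so there is no in-paper argument to compare against; what you have attempted is a from-scratch reconstruction of an external deep result.

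As a reconstruction, your sketch has the right overall shape (truncate at a level $t$ calibrated so $n\Pr(|X|>t)\asymp p/s$, Bernstein for the truncated part, Hoffmann--J\o rgensen for the tail, optimize over $s$), but there are concrete gaps. First, the claim that $pt\lesssim \tfrac{p}{s}(n/p)^{1/s}\|X\|_s$ ``via Markov at level $s$ and the calibration'' does not hold: Markov at level $s$ with the calibration gives only $t\lesssim (sn/p)^{1/s}\|X\|_s$, hence $pt\lesssim p\,s^{1/s}(n/p)^{1/s}\|X\|_s$, which exceeds the target $\tfrac{p}{s}(n/p)^{1/s}\|X\|_s$ by a factor of order $s$ --- and $s$ can be as large as $p$. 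This is precisely the regime where Latała's finer combinatorics (the Bell/Stirling-number computation) are needed and a one-line Markov bound is insufficient. Second, you invoke ``Latała's sharp Hoffmann--J\o rgensen-type inequality'' for the tail part of the upper bound and ``Latała's more delicate layering'' for general distributions in the lower bound; since those refinements \emph{are} the substance of~\cite{Latala1997}, this makes the sketch circular rather than a proof. Third, taking the infimum over $t$ (equivalently over $s$) of your per-$s$ upper bound does not automatically yield the stated supremum $\sup_s\tfrac{p}{s}(n/p)^{1/s}\|X\|_s$; showing that the optimal truncation bound is dominated, up to a universal constant, by this particular supremum is an extra step that needs to be argued. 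For the purposes of the paper none of this matters --- the result is cited, not re-derived --- but as a proof attempt these are the points that would need to be closed.
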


We start by proving the lower bound for uniformly random hash functions which will serve as a base for the other lower bounds.
We show the following lemma.

\begin{lemma}
    Let $h \colon U \to [m]$ be a uniformly random function and let $v \colon U \times [m] \to \R$ be a value function defined by,
    \begin{align*}
        v(x, j) = \begin{cases}
            1 - \tfrac{1}{m} &\text{if $j = 0$} \\
            - \tfrac{1}{m}   &\text{otherwise}
        \end{cases}
        \; ,
    \end{align*}
    for all $x \in U$.
    Then for all $2 \le p \le L_1 \abs{U} \log(m)$,
    \begin{align*}
        \pnorm{\sum_{x \in U} v(x, h(x)) }{p}
            \ge L_2 \Psi_p\left(M_v, \sigma_v^2 \right)
        \; ,
    \end{align*}
    where $L_1$ and $L_2$ is a universal constant.
\end{lemma}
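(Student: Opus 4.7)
The key observation is that for this value function, $X_v = \sum_{x \in U} v(x, h(x))$ is exactly $Y - \lambda$, where $Y = |\{x \in U : h(x) = 0\}|$ is $\text{Binomial}(|U|, 1/m)$ with mean $\lambda = |U|/m$. Writing $B_x = \mathds{1}[h(x) = 0]$, the $B_x$ are independent Bernoulli variables with parameter $1/m$, so $X_v = \sum_{x \in U}(B_x - 1/m)$ is a sum of independent, centered variables. Also, one computes directly that $M_v = 1 - 1/m$ and $\sigma_v^2 = (1 - 1/m)\lambda$.

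The plan is first to symmetrize and then invoke the lower bound in \Cref{lem:Latala}. By the same argument used in the proof of \cref{eq:symmetrization} (introduce independent Rademachers $\eps_x$ and independent copies $B_x'$, then use Jensen's inequality and the symmetry of $B_x - B_x'$), one obtains
\[
    \pnorm{X_v}{p} \;\ge\; \tfrac{1}{2}\pnorm{\sum_{x \in U} \eps_x (B_x - 1/m)}{p}.
\]
Applying the lower bound in \Cref{lem:Latala} to the symmetric i.i.d.\ sequence $\eps_x(B_x - 1/m)$ yields
\[
    \pnorm{\sum_{x \in U} \eps_x(B_x - 1/m)}{p}
    \;\ge\; K_2 \sup_{\max\{2,p/|U|\} \le s \le p} \tfrac{p}{s}\bigl(|U|/p\bigr)^{1/s} \pnorm{B_1 - 1/m}{s}.
\]
A direct calculation gives $\pnorm{B_1 - 1/m}{s}^s = (1/m)(1-1/m)^s + (1-1/m)(1/m)^s \ge (1/m)(1-1/m)^s$, hence $\pnorm{B_1 - 1/m}{s} \ge (1 - 1/m)(1/m)^{1/s}$. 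Substituting and simplifying gives
\[
    \pnorm{X_v}{p} \;\ge\; \tfrac{K_2}{2}(1 - 1/m) \sup_{\max\{2,p/|U|\} \le s \le p} \tfrac{p}{s}\bigl(\lambda/p\bigr)^{1/s}.
\]

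The next step is to remove the restriction $s \ge p/|U|$ by showing that, under the assumption $p \le L_1 |U|\log m$ for a sufficiently small universal $L_1$, the unconstrained supremum from \cref{eq:psi-relation} is attained (up to a constant) at some $s^* \ge p/|U|$. Concretely, the optimum of $s \mapsto \tfrac{p}{s}(\lambda/p)^{1/s}$ is at $s^* = \min\{\max(2, \log(p/\lambda)), p\}$, and since $\log(p/\lambda) = \log(pm/|U|) \ge p/|U|$ whenever $p/|U| \le \tfrac{1}{2}\log m$ (using the elementary bound $\log m + \log(p/|U|) \ge p/|U|$), this restriction is the reason for the hypothesis on $p$ in the lemma. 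Thus the restricted supremum equals (a constant times) the unrestricted one, which by \cref{eq:psi-relation} is exactly $\Psi_p(1,\lambda)$.

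Finally, one compares $\Psi_p(1,\lambda)$ with $\Psi_p(M_v,\sigma_v^2) = \Psi_p(1-1/m, (1-1/m)\lambda)$. Inspection of the three cases in the definition of $\Psi_p$ shows that $\Psi_p$ is monotonically increasing in each argument separately; hence $\Psi_p(M_v, \sigma_v^2) \le \Psi_p(1, \lambda)$. Combining everything,
\[
    \pnorm{X_v}{p} \;\ge\; \tfrac{K_2}{2}(1 - 1/m)\, \Psi_p(1, \lambda) \;\ge\; \tfrac{K_2}{4}\, \Psi_p(M_v, \sigma_v^2),
\]
assuming $m \ge 2$, which gives the claimed bound with $L_2 = K_2/4$ and a suitable $L_1$. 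The most delicate point is the bookkeeping at step three: one must verify that the constraint $s \ge p/|U|$ in \Cref{lem:Latala} does not interfere with the optimum of $\tfrac{p}{s}(\lambda/p)^{1/s}$, which is precisely what fixes the regime $p \le L_1 |U|\log m$.
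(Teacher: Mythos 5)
Your proposal is correct and follows essentially the same route as the paper: symmetrize (via independent Rademachers, exactly \Cref{lem:sampling-symmetrize}), invoke the lower bound of Latała (\Cref{lem:Latala}) on the i.i.d.\ symmetric summands, lower-bound $\pnorm{B_1 - 1/m}{s}$ by $(1/m)^{1/s}$ up to a constant, and then observe that the constraint $s \ge p/|U|$ in Latała's theorem is inactive under the hypothesis $p \le L_1|U|\log m$ so that the restricted supremum matches the unconstrained one in \cref{eq:psi-relation}. The only small caveat in your write-up is that the claim ``$\log m + \log(p/|U|) \ge p/|U|$ whenever $p/|U| \le \tfrac12 \log m$'' fails for small $p/|U|$; you implicitly rely on the fact that for $p/|U| \le 2$ the constraint is vacuous, so the inequality only needs to hold for $p/|U| > 2$, where it does indeed follow from $p/|U| \le \tfrac12\log m$. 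Your tracking of the $(1-1/m)$ factor and the comparison $\Psi_p(1,\lambda) \ge \Psi_p(M_v,\sigma_v^2)$ via the monotonicity implicit in \cref{eq:psi-relation} is actually a bit more careful than the paper, which takes $M_v = 1$ rather than $1-1/m$, but the substance is identical.
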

It is easy to check that this implies \Cref{thm:simple-tab-moments-lower-bound}.
\begin{proof}
    We will consider the value function, $v \colon U \times [m] \to \R$, defined by,
    \begin{align*}
        v(x, j) = \begin{cases}
            1 - \tfrac{1}{m} &\text{if $j = 0$} \\
            - \tfrac{1}{m}   &\text{otherwise}
        \end{cases}
        \; ,
    \end{align*}
    We then have that $\sigma_v^2 = \abs{U}\tfrac{1}{m}\left(1 - \tfrac{1}{m}\right)$ and $M_v = 1$.
    The goal is then to prove that
    \begin{align}\label{eq:sampling-moments-lower-bound-goal}
        \pnorm{\sum_{x \in U} v(x, h(x))}{p}
            \ge L_2 \Psi_p(1, \sigma_v^2)
        \; .
    \end{align}

    We then use \Cref{lem:sampling-symmetrize} to get that
    \begin{align*}
        \pnorm{\sum_{x \in U} v(x, h(x))}{p}
            \ge \tfrac{1}{2}\pnorm{\sum_{x \in U} \eps(x) v(x, h(x))}{p}
        \; ,
    \end{align*}
    where $\eps \colon U \to \set{-1, 1}$ is uniformly random sign function.
    We can now use \Cref{lem:Latala},
    \begin{align*}
        \pnorm{\sum_{x \in U} \eps(x) v(x, h(x))}{p}
            \ge K_2 \sup\setbuilder{\frac{p}{s}\left( \frac{\abs{U}}{p} \right)^{1/s} \pnorm{v(x, h(x))}{s}}{\max\!\set{2, \tfrac{p}{\abs{U}}} \le s \le p}
        \; .
    \end{align*}
    We will argue that $\pnorm{v(x, h(x))}{s} \ge \tfrac{1}{2}\left(\tfrac{1}{m}\left(1 - \tfrac{1}{m}\right)\right)^{1/s}$ for all $s \ge 2$,
    \begin{align*}
        \pnorm{v(x, h(x))}{s}
            &= \left( \tfrac{1}{m} \left(1 - \tfrac{1}{m}\right)^{s} + \left(1 - \tfrac{1}{m}\right) \left(\tfrac{1}{m}\right)^s \right)^{1/s}
            \\&= \left( \tfrac{1}{m}\left(1 - \tfrac{1}{m}\right) \right)^{1/s} \left( \left(1 - \tfrac{1}{m}\right)^{s - 1} + \left(\tfrac{1}{m}\right)^{s - 1} \right)^{1/s}
            \\&\ge \left( \tfrac{1}{m}\left(1 - \tfrac{1}{m}\right) \right)^{1/s} \left( \max\!\set{\left(1 - \tfrac{1}{m}\right)^{s - 1}, \left(\tfrac{1}{m}\right)^{s - 1}} \right)^{1/s}
            \\&\ge \tfrac{1}{2}\left( \tfrac{1}{m}\left(1 - \tfrac{1}{m}\right) \right)^{1/s}
        \; .
    \end{align*}
    This implies that
    \begin{align*}
        \pnorm{\sum_{x \in U} \eps(x) v(x, h(x))}{p}
            &\ge \frac{K_2}{2} \sup\setbuilder{\frac{p}{s}\left( \frac{\abs{U} \tfrac{1}{m}\left(1 - \tfrac{1}{m}\right)}{p} \right)^{1/s}}{\max\!\set{2, \tfrac{p}{\abs{U}}} \le s \le p}
            \\&= \frac{K_2}{2} \sup\setbuilder{\frac{p}{s}\left( \frac{\sigma_v^2}{p} \right)^{1/s}}{\max\!\set{2, \tfrac{p}{\abs{U}}} \le s \le p}
        \; .
    \end{align*}
    It is easy to see that the function $s \mapsto \frac{p}{s}\left( \frac{\sigma_v^2}{p} \right)^{1/s}$ is maximized at $s^* = \log\!\left( \frac{p}{\sigma_v^2} \right)$.
    It is easy to check that $\log\!\left( \frac{p}{\sigma_v^2} \right) \ge \frac{p}{\abs{U}}$ when $p \le L_1 \abs{U} \log(m)$ for a sufficiently large universal constant $L_1$.
    We then get that
    \begin{align*}
        \pnorm{\sum_{x \in U} \eps(x) v(x, h(x))}{p}
            &\ge \frac{K_2}{2} \sup\setbuilder{\frac{p}{s}\left( \frac{\sigma_v^2}{p} \right)^{1/s}}{\max\!\set{2, \tfrac{p}{\abs{U}}} \le s \le p}
            \\&= \frac{K_2}{2} \sup\setbuilder{\frac{p}{s}\left( \frac{\sigma_v^2}{p} \right)^{1/s}}{2 \le s \le p}
            \\&= \frac{K_2}{2} \Psi_p(1, \sigma_v^2)
        \; .        
    \end{align*}
    The last equality follows by \cref{eq:psi-relation}.
    This proves \cref{eq:sampling-moments-lower-bound-goal} and finishes the proof.
\end{proof}

Now we are ready to prove the lower bound for simple tabulation hashing.

\restateThmSimpleTabMomentsLowerBound
\begin{proof}
    We will define a value function, $v \colon \Sigma^c \times [m] \to \R$, for which $\min_{x \in \Sigma^c} \frac{\norm{v[x]}{2}^2}{\norm{v[x]}{1}^2} = \frac{1}{4\left(1 - \tfrac{1}{m} \right)}$.
    We then have that
    \[
        \gamma_p = \max\!\set{1, \frac{p}{\log\!\left(\frac{e^2 m}{4\left(1 - \tfrac{1}{m} \right)}\right)}}
            \; .
    \]
    If $\gamma_p = 1$ then the result follow by \Cref{thm:sampling-moments-lower-bound}, so we assume that $\gamma_p > 1$.
    We let $S = [1 + \floor{\gamma_p}]^{c - 1} \times \Sigma$ and formally define the value function, $v \colon \Sigma^c \times [m] \to \R$, by,
    \begin{align*}
        v(x, j) = \begin{cases}
            0                &\text{if $x \not\in S$} \\
            1 - \tfrac{1}{m} &\text{if $x \in S$ and $j = 0$} \\
            - \tfrac{1}{m}   &\text{otherwise}
        \end{cases} \; .
    \end{align*}
    For every $i \in [c - 1]$, we define $A_i$ to be the event that $T(i, \alpha) = T(i, 0)$ for all $\alpha \in [1 + \floor{\gamma_p}]$.
    We note that
    \begin{align*}
        \prb{A_i} = \left(\frac{1}{m}\right)^{\floor{\gamma_p}} \ge \left(\frac{1}{m}\right)^{\gamma_p}
            = \exp\!\left(-p \frac{\log(m)}{\log\!\left(\frac{e^2 m}{4\left(1 - \tfrac{1}{m} \right)}\right)}\right)
            \ge e^{-p}
        \; .
    \end{align*}
    We then get that $\prb{\bigwedge_{i \in [c - 1]} A_i} \ge e^{-p(c - 1)}$.

    If $\bigwedge_{i \in [c - 1]} A_i$ happens then we can set $j^* = \bigxor_{i \in [c - 1]} T(i, 0)$.
    Now we define the value function $v' \colon \Sigma \times [m] \to \R$, by
    \begin{align*}
        v'(\alpha, j) = \begin{cases}
            1 - \tfrac{1}{m} &\text{if $j = j^*$} \\
            - \tfrac{1}{m}   &\text{otherwise}
        \end{cases} \; .
    \end{align*}
    We then get that
    \begin{align*}
        \sum_{x \in \Sigma^c} v(x, h(x))
            = \sum_{x \in S} v(x, h(x))
            = \sum_{\alpha \in \Sigma} \left(1 + \floor{\gamma_p}\right)^{c - 1} v'(\alpha, T(c-1, \alpha))
        \; .
    \end{align*}
    This implies that
    \begin{align*}
        \pnorm{\sum_{x \in \Sigma^c} v(x, h(x))}{p}
            &= \pnorm{\sum_{x \in S} v(x, h(x))}{p}
            \\&= \left( \ep{\indicator{\bigwedge_{i \in [c - 1]} A_i}\left(\sum_{x \in S} v(x, h(x))\right)^p} + \ep{\indicator{\left(\bigwedge_{i \in [c - 1]} A_i\right)^{c}}\left(\sum_{x \in S} v(x, h(x))\right)^p} \right)^{1/p}
            \\&\ge \pnorm{\indicator{\bigwedge_{i \in [c - 1]} A_i}\sum_{x \in S} v(x, h(x))}{p}
            \\&= \prb{\bigwedge_{i \in [c - 1]} A_i}^{1/p} \epcond{\left( \sum_{\alpha \in \Sigma} \left(1 + \floor{\gamma_p}\right)^{c - 1} v'(\alpha, T(c-1, \alpha)) \right)^{p}}{\bigwedge_{i \in [c - 1]} A_i}^{1/p}
            \\&\ge e^{-c} \left(1 + \floor{\gamma_p}\right)^{c - 1} \epcond{\left(\sum_{\alpha \in \Sigma} v'(\alpha, T(c-1, \alpha))\right)^{p}}{\bigwedge_{i \in [c - 1]} A_i}^{1/p}
        \; .
    \end{align*}
    Now we use \Cref{thm:sampling-moments-lower-bound} to get that
    \begin{align*}
        \pnormcond{\sum_{\alpha \in \Sigma} v'(\alpha, T(c-1, \alpha))}{p}{\bigwedge_{i \in [c - 1]} A_i}
            \ge L \Psi_p(1, \sigma_{v'}^2) 
        \; ,
    \end{align*}
    where $\sigma_{v'}^2 = \frac{1}{m}\left(1 - \frac{1}{m}\right)\abs{\Sigma}$.
    Combining it all we have that
    \begin{align*}
        \pnorm{\sum_{x \in \Sigma^c} v(x, h(x))}{p}
            &\ge e^{-c} \left(1 + \floor{\gamma_p}\right)^{c - 1} L \Psi_p\left(1, \frac{1}{m}\left(1 - \frac{1}{m}\right)\abs{\Sigma}\right)
            \\&= e^{-c} L \Psi_p\left(\left(1 + \floor{\gamma_p}\right)^{c - 1}, \left(1 + \floor{\gamma_p}\right)^{2(c - 1)} \frac{1}{m}\left(1 - \frac{1}{m}\right)\abs{\Sigma}\right)
            \\&\ge e^{-c} L \Psi_p\left(\gamma_p^{c - 1}, \gamma_p^{c - 1} \frac{1}{m}\left(1 - \frac{1}{m}\right)\abs{\Sigma} \left(1 + \floor{\gamma_p}\right)^{c - 1}\right)
        \; .
    \end{align*}
    The equality follows since $\Psi_p(\lambda M, \lambda^2 \sigma^2) = \lambda \Psi_p(M, \sigma^2)$.
    We have that $\frac{1}{m}\left(1 - \frac{1}{m}\right)\abs{\Sigma} \left(1 + \floor{\gamma_p}\right)^{c - 1} = \sigma_v^2$ so this finishes the proof.
\end{proof}

\section{Adding a Query Element}
The goal of the section is to prove that the result holds even when you condition on a query element.
We start by proving the result for simple tabulation.

\restateThmSimpleTabQuery
\begin{proof}
    We start by defining a partition of $\Sigma^c \setminus \set{q}$.
    Write $q = (\alpha_0, \ldots, \alpha_{c - 1})$ and for $I \subsetneq [c]$, we define $G_I \subseteq \Sigma^c$ by
    \begin{align*}
        G_I = \setbuilder{x \in \Sigma^c \setminus \set{q}}{\forall i \in I : (i, \alpha_i) \in x \wedge \forall i \not\in I : (i, \alpha_i) \not\in x}
    \end{align*}
    This clearly gives a partition of $\Sigma^c \setminus \set{q}$.
    Using this we obtain,
    \begin{align*}
        \pnormcond{V^{\text{\normalfont simple}}_{v, q}}{p}{h(q)}
            &= \pnormcond{\sum_{x \in \Sigma^c \setminus \set{q}} v(x, h(x), h(q))}{p}{h(q)}
            \\&= \pnormcond{\pnormcond{\sum_{x \in \Sigma^c \setminus \set{q}} v(x, h(x), h(q))}{p}{(h((i, \alpha_i)))_{i \in [c]} }}{p}{h(q)}
            \\&\le \pnormcond{ \sum_{I \subsetneq [c]} \pnormcond{\sum_{x \in G_I} v(x, h(x), h(q))}{p}{(h((i, \alpha_i)))_{i \in [c]} }}{p}{h(q)}
        \; .
    \end{align*}
    Now using \Cref{thm:simple-tab-moments} we get that,
    \begin{align*}
        &\pnormcond{ \sum_{I \subsetneq [c]} \pnormcond{\sum_{x \in G_I} v(x, h(x), h(q))}{p}{(h((i, \alpha_i)))_{i \in [c]} }}{p}{h(q)}
            \\&\le \pnormcond{ \sum_{I \subsetneq [c]} \Psi_p\left( K_{c - \abs{I}} \gamma_p^{c - 1 - \abs{I}} M_{v, q}, K_{c - \abs{I}} \gamma_p^{c - 1 - \abs{I}} \tfrac{1}{m} \sum_{x \in G_I} \sum_{j \in [m]} v(x, j, h(q))  \right) }{p}{h(q)}
            \\&= \sum_{I \subsetneq [c]} \Psi_p\left( K_{c - \abs{I}} \gamma_p^{c - 1 - \abs{I}} M_{v, q}, K_{c - \abs{I}} \gamma_p^{c - 1 - \abs{I}} \tfrac{1}{m} \sum_{x \in G_I} \sum_{j \in [m]} v(x, j, h(q))  \right)
            \\&\le \Psi_p\left( 2^{2c} K_{c} \gamma_p^{c - 1} M_{v, q}, 2^{2c} K_c \gamma_p^{c - 1} \sigma^2_{v, q}  \right)
        \; .
    \end{align*}
    The last bound follows by using \cref{eq:psi-reverse-growth} from \Cref{lem:psi-properties}.
\end{proof}

We now prove the result for mixed tabulation when conditioning on a query element.
\restateThmMixedTabQuery

\begin{proof}
    We start by defining $\mathcal{H} = \sigma((h_1((i, \alpha_i)), h_2((i, \alpha_i)))_{i \in [c]})$.
    Now we get that,
    \begin{align*}
        &\pnormcond{V^{\text{\normalfont mixed}}_{v, q}}{p}{h(q)}
            \\&= \pnormcond{\sum_{x \in \Sigma^c \setminus \set{q}} v(x, h(x), h(q))}{p}{h(q)}
            \\&= \pnormcond{\pnormcond{\sum_{x \in \Sigma^c \setminus \set{q}} v(x, h(x), h(q))}{p}{\mathcal{H} }}{p}{h(q)}
            \\&\le \pnormcond{\pnormcond{\sum_{\alpha \in \Sigma \setminus \set{h_2(q)}} \eps_2(\alpha) \sum_{x \in \Sigma^c} \eps_1(x) v(x, h_1(x) \xor h_3(\alpha), h(q)) \indicator{h_2(x) = \alpha} }{p}{\mathcal{H} }}{p}{h(q)}
                \\&+ \pnormcond{\pnormcond{\sum_{x \in \Sigma^c \setminus \set{q}} \eps_2(h_2(q)) \eps_1(x) v(x, h_1(x) \xor h_3(h_2(q)), h(q)) \indicator{h_2(x) = h_2(q)} }{p}{\mathcal{H} }}{p}{h(q)}
    \end{align*}
    For the second term we use \Cref{thm:simple-tab-query} to get that,
    \begin{align*}
        &\pnormcond{\pnormcond{\sum_{x \in \Sigma^c \setminus \set{q}} \eps_2(h_2(q)) \eps_1(x) v(x, h_1(x) \xor h_3(h_2(q)), h(q)) \indicator{h_2(x) = h_2(q)} }{p}{\mathcal{H} }}{p}{h(q)}
            \\&\qquad\qquad\qquad\le \Psi_p\left( 2^{2c} K_{c} \gamma_p^{c - 1} M_{v, q}, 2^{2c} K_c \gamma_p^{c - 1} \sigma^2_{v, q}  \right)
            \\&\qquad\qquad\qquad\le \Psi_p\left( 2^{2c} K_{c} \gamma_p^{c} M_{v, q}, 2^{2c} K_c \gamma_p^{c} \sigma^2_{v, q}  \right)
    \end{align*}
    For the first term we use the same proof technique as in the proof of \Cref{thm:simple-tab-query}.
    Write $q = (\alpha_0, \ldots, \alpha_{c - 1})$ and for $I \subsetneq [c]$, we define $G_I \subseteq \Sigma^c$ by
    \begin{align*}
        G_I = \setbuilder{x \in \Sigma^c \setminus \set{q}}{\forall i \in I : (i, \alpha_i) \in x \wedge \forall i \not\in I : (i, \alpha_i) \not\in x}
    \end{align*}
    This clearly gives a partition of $\Sigma^c \setminus \set{q}$.
    Using this we obtain,
    \begin{align*}
        &\pnormcond{\sum_{\alpha \in \Sigma \setminus \set{h_2(q)}} \eps_2(\alpha) \sum_{x \in \Sigma^c} \eps_1(x) v(x, h_1(x) \xor h_3(\alpha), h(q)) \indicator{h_2(x) = \alpha} }{p}{\mathcal{H} }
            \\&\le \sum_{I \subsetneq [c]} \pnormcond{\sum_{\alpha \in \Sigma \setminus \set{h_2(q)}} \eps_2(\alpha) \sum_{x \in G_I} \eps_1(x) v(x, h_1(x) \xor h_3(\alpha), h(q)) \indicator{h_2(x) = \alpha} }{p}{\mathcal{H} }
            \\&= \sum_{I \subsetneq [c]} \pnormcond{\sum_{\alpha \in \Sigma \setminus \set{h_2(q)}} \eps_2(\alpha) \sum_{x \in G_I} \eps_1(x) v(x, h_1(x) \xor h_3(\alpha), h'(q)) \indicator{h_2(x) = \alpha} }{p}{\mathcal{H} }
            \\&\le \sum_{I \subsetneq [c]} \pnormcond{\sum_{\alpha \in \Sigma} \eps_2(\alpha) \sum_{x \in G_I} \eps_1(x) v(x, h_1(x) \xor h_3(\alpha), h'(q)) \indicator{h_2(x) = \alpha} }{p}{\mathcal{H} }
        \; .
    \end{align*}
    We then use \Cref{thm:mixed-tab-moments} to get that,
    \begin{align*}
        &\sum_{I \subsetneq [c]} \pnormcond{\sum_{\alpha \in \Sigma} \eps_2(\alpha) \sum_{x \in G_I} \eps_1(x) v(x, h_1(x) \xor h_3(\alpha), h'(q)) \indicator{h_2(x) = \alpha} }{p}{\mathcal{H} }
            \\&\qquad\qquad\qquad\le 2^c \Psi_p\left( K_{c} \gamma_p^{c} M_{v, q}, K_c \gamma_p^{c} \sigma^2_{v, q}  \right)
            \\&\qquad\qquad\qquad\le \Psi_p\left( 2^{2c} K_{c} \gamma_p^{c} M_{v, q}, 2^{2c} K_c \gamma_p^{c} \sigma^2_{v, q}  \right)
        \; .
    \end{align*}
    Now combining everything we get that,
    \begin{align*}
        \pnormcond{V^{\text{\normalfont mixed}}_{v, q}}{p}{h(q)}
            \le 2 \Psi_p\left( 2^{2c} K_{c} \gamma_p^{c} M_{v, q}, 2^{2c} K_c \gamma_p^{c} \sigma^2_{v, q}  \right)
            \le \Psi_p\left( 2^{2c + 2} K_{c} \gamma_p^{c} M_{v, q}, 2^{2c + 2} K_c \gamma_p^{c} \sigma^2_{v, q}  \right)
        \; ,
    \end{align*}
    which finishes the proof.
\end{proof}

\section{Acknowledgement}Research supported by Investigator Grant 16582, Basic Algorithms Research Copenhagen (BARC), from the VILLUM Foundation.

\bibliography{paper}

\appendix
\section{Statistics over \texorpdfstring{$k$}{k}-partitions using mixed tabulation}\label{sec:k-partitions}
We will now describe how mixed tabulation hashing is used for statistics over
$k$-partitions. The description of the
application is largely copied from~\cite{Dahlgaard2015} but now we can use our new strong
concentration bounds for mixed tabulation hashing to obtain stronger results.  

We consider the generic approach where a hash function is used to
$k$-partition a set into $k$ bins. Statistics are computed on each bin,
and then all these statistics are combined so as to get good
concentration bounds. This approach was introduced by Flajolet and
Martin \cite{Flajolet85counting} under the name \emph{stochastic
  averaging} to estimate the number of distinct elements in a data
stream. Today, a more popular estimator of this quantity is the
HyperLogLog counter, which is also based on $k$-partitioning
\cite{Flajolet07hyperloglog,Heule13hyperloglog}. These types of
counters have found many applications, e.g., to estimate the
neighborhood function of a graph with all-distance sketches
\cite{boldi11hyperanf,Cohen14ads}. Later $k$-partitions were used for
set similarity in large-scale machine learning by Li et al.
\cite{li12oneperm,Shrivastava14oneperm,Shrivastava14densify}.
Under the name one-permutation hashing, Li et al.~used $k$-partitions
to gain a factor $k$ in speed within the
classic MinHash framework of Broder et al. \cite{broder97onthe,broder98minwise}.

We will use MinHash for frequency estimation as an
example to illustrate how mixed tabulation yields good statistics over $k$-partitions:
suppose we have a fully random hash function
applied to a set $X$ of red and blue balls.  We want to estimate the
fraction $f$ of red balls. The idea of the MinHash algorithm is to
sample the ball with the smallest hash value. With a fully-random hash
function, this is a uniformly random sample from $X$, and it is red with
probability $f$.
For better concentration, we may use {\em $k$ independent repetitions}: we
repeat the experiment
$k$ times with $k$ independent hash functions. This yields a multiset $S$
of $k$ samples with replacement from $X$.  The fraction of red balls
in $S$ concentrates around $f$ and the error probability falls
exponentially in $k$.

Consider now the alternative experiment based on {\em
  $k$-partitioning}, assuming that $k$ is a power of two.
We use a single hash function, where the first
$\log k$ bits of the hash value partitions $X$ into $k$ bins, and
then the remaining bits are used as a {\em local hash value} within
the bin. We pick the ball with the smallest (local) hash value in each
bin. This is a sample $S$ from $X$ without replacement, and again, the
fraction of red balls in the non-empty bins is concentrated around $f$
with exponential concentration bounds. We note that there are some
differences. We do get the advantage that the samples are without
replacement, which means better concentration. On the other hand, we
may end up with fewer samples if some bins are empty.

The big difference between the two schemes is that the second one runs
$\Omega(k)$ times faster. In the first experiment, each ball
participated in $k$ independent experiments, but in the second one with
$k$-partitioning, each ball picks its bin, and then only
participates in the local experiment for that bin. Thus, time-wise, we get
$k$ experiments for the price of one. Handling each ball, or key, in constant
time is important in applications of high volume streams.

The above approach, however, requires a very powerful hash function.
The main issue is the overall $k$-partitioning distribution between
bins. It could be that if we get a lot of red balls in one bin, then
this would be part of a general clustering of the red balls on a few
bins (examples showing how such systematic clustering can happen with
simple hash functions are given in \cite{patrascu10kwise-lb}). This
clustering would disfavor the red balls in the overall average even if the
sampling in each bin was uniform and independent. This is an issue
of non-linearity, e.g., if there are already
more red than blue balls in a bin, then doubling their number only
increases their frequency by at most $3/2$. We note that $k$-independence
does not by itself suffices to give any guarantees in this situation.

\paragraph*{Using mixed tabulation hashing}
We will now show mixed tabulation can be applied in the above application
using selective full randomness from Theorem \ref{thm:mix-indep}
in tandem with the concentration from \req{eq:mix-tab-Y}. Combined they
give:

\begin{quote}
  Let $h:\Sigma^c\to \{0,1\}^\ell$ be a mixed tabulation hash function
  using $d$ derived characters. Let $M$ be an $\ell$-bit mask with
  don't cares. For a given key set $X\subseteq \Sigma^c$, let $Y$ be
  the set of keys from $X$ with hash values matching $M$. If
  $\Omega(|\Sigma|)=\E[Y]=|\Sigma|(1-\Omega(1))$, then with probability $1 - O(|\Sigma|^{1-\floor{d/2}})$,
  the free bits of the hash values in $Y$ are fully random. Moreover, with this probability,
$|Y|=\E[Y](1\pm O(\sqrt{(\log |\Sigma|)/|\Sigma|}))$.
\end{quote}
For the $k$-partitioning, we need $k$ to be bounded relative to the
alphabet size as $k\leq \frac{|\Sigma|}{4d\ln|\Sigma|}$. Recall
here our hash tables need space $O(|\Sigma|)$ which then
has to be a logarithmic factor bigger than $k$. This
may motivate 16-bit characters rather than 8-bit characters, but
on modern computers this still fits in fast cache.

We a set $X$ of $n$ red and blue balls, and for simplicity in our analysis, we assume that the ratio between
them is at most a factor 2. If $n\leq |\Sigma|/2$, Theorem
\ref{thm:mix-indep} with all bits free states that the balls hash fully-randomly with
high probability. This means that \emph{any analysis} based on
fully-random hashing applies directly. Otherwise let
$q=\ceil{\log(n /(2|\Sigma|/3)}$. We study the
set of balls $Y$ that has $q$ leading zeros in their
local hash value. These $q$ bits are the fixed bits in the mask Theorem \ref{thm:mix-indep}.
The expected size of $Y$ is
between $|\Sigma|/3$ and $2|\Sigma|/3$, so by Theorem \ref{thm:mix-indep}, w.h.p., all
other bits in the hash values of $Y$ are fully random, including
both the global index of $\log k$ bits and the tail of bits in
the local hash value after the $q$ leading zeros. Moreover,
\req{eq:mix-tab-Y}, imply that the size of $Y$ is at most
a factor
\[1\pm O(\sqrt{(\log|\Sigma|)/|\Sigma|})=1\pm O(\sqrt{1/k})\]
from
its mean, and the same holds for the ratio between red and blue balls
in our simple case where there are more than $n/4$ of each.

We claim that, w.h.p, all bins get a ball from $Y$. We
know that $\E[|Y|]\geq |\Sigma|/3\geq 2kd \ln|\Sigma|/3$, so w.h.p.,
$|Y|\geq kd \ln|\Sigma|/2$. Consider now any bin $i$. Since the
bin indices are fully random, the probability
that $i$ gets no ball from $Y$ is at most $(1-1/k)^{kd(\ln|\Sigma|/2)}<1/|\Sigma|^{d/2}$. A union bound now implies that all bins get a ball from
$Y$ with probability $1-k/|\Sigma|^{d/2}\geq 1-|\Sigma|^{1-d/2}$.

Since every bin contains a ball from $Y$ and these are the balls with
$q$ leading zeros in their local hash, the MinHash ball from $X$ in
each bin is from $Y$. Since all keys from $Y$ have at least $q$ leading
zeros in their hash values and all other bits are fully random, this
means that, w.h.p., the result of the experiment is exactly the same as
if we applied it to $Y$ using a fully random hash function, and
our concentration bounds imply that the ratio between red and blue
balls is well-preserved from $X$ to $Y$.

The important point in the above analysis is that it is only the analysis
that needs to know anything about $n$. This is important in more
complicated contexts, e.g., streaming where $n$ is not known in advance,
or in set similarity where red balls represent the intersection of sets
while the blue balls represent their symmetric difference, and where we
use the $k$-partitions to compare sets of many different sizes. We know
that mixed tabulation hashing is almost as good fully random hashing as long
as $k\leq \frac{|\Sigma|}{4d\ln|\Sigma|}$.

The above description is very similar to the one of Dahlgaard et
al.~\cite{Dahlgaard2015}. The difference is that we now have the right concentration bounds
for twisted tabulation, e.g., Dahlgaard et al., had the additional
constraint that $k\leq \frac{|\Sigma|}{\log
  |\Sigma|(\log\!\log|\Sigma|)^2}$, which is completely unnecessary.

\end{document}